\newcommand{\mathleft}{\@fleqntrue\@mathmargin\parindent}
\newcommand{\mathcenter}{\@fleqnfalse}
\newtheorem{theorem}{Theorem}[section]
\newtheorem{lemma}[theorem]{Lemma}
\newtheorem*{remark}{Remark}
\begin{document}
	
\begin{frontmatter}

\title{Comprehensive study of forced convection over a heated elliptical cylinder with varying angle of incidences to uniform free stream}

\author[1]{Raghav Singhal$^{\star}$}
\author[2]{Sailen Dutta}
\author[3]{Jiten C. Kalita}
\address[1]{$^{\star}$ Corresponding author email - \textit{raghav2016@iitg.ac.in} :Deparment of Mathematics, Indian Institute of Technology Guwahati, Assam 781039, India}
\address[2]{Email - \textit{sailen.dutta@alumni.iitg.ac.in} :Deparment of Mechanical Engineering, Indian Institute of Technology Guwahati, Assam 781039, India}
\address[3]{Email - \textit{jiten@iitg.ac.in} :Deparment of Mathematics, Indian Institute of Technology Guwahati, Assam 781039, India}


\begin{abstract}
In this paper we carry out a numerical investigation of forced convection heat transfer from a heated elliptical cylinder in a uniform free stream with angle of inclination $\theta^{\degree}$. Numerical simulations were carried out for $10 \leq Re \leq 120$, $0^{\degree} \leq \theta \leq 180^{\degree}$, and $Pr = 0.71$. Results are reported for both steady and unsteady state regime in terms of streamlines, vorticity contours, isotherms, drag and lift coefficients, Strouhal number, and Nusselt number. In the process, we also propose a novel method of computing the Nusselt number by merely gathering flow information  along the normal to the ellipse boundary.  The critical $Re$ at which which flow becomes unsteady, $Re_c$ is reported for all the values of $\theta$ considered and found to be the same for $\theta$ and $180^\degree -\theta$ for $0^\degree \leq \theta \leq 90^\degree$. In the steady regime, the $Re$ at which flow separation occurs progressively decreases as $\theta$ increases. The surface averaged Nusselt number ($Nu_{\text{av}}$) increases with $Re$, whereas the drag force experienced by the cylinder decreases with $Re$. The transient regime is characterized by periodic vortex shedding, which is quantified by the Strouhal number ($St$). Vortex shedding frequency increases with $Re$ and decreases with $\theta$ for a given $Re$. $Nu_{\text{av}}$ also exhibits a time-varying oscillatory behaviour with a time period which is half the time period of vortex shedding. The amplitude of oscillation of $Nu_{\text{av}}$ increases with $\theta$.
\end{abstract}

\begin{keyword}
	 Inclined elliptic cylinder, forced convection, HOC, immersed interface, vortex shedding
\end{keyword}

\end{frontmatter}

\section{Introduction}
Bodies immersed in fluid flow can be characterized as being streamlined or blunt/bluff, depending on its overall shape and structure. A bluff body can be defined as a body that, as a result of its shape, has separated flow over a substantial part of its surface \cite{NAP5870}; any body, which when kept in fluid flow, the fluid does not touch the whole boundary of the object. Roshko \cite{ROSHKO1993} defined a bluff body as one that resulted in a wide extent of separated flow and is associated with significant drag force as well as vortex-shedding. Flow past bluff bodies is commonly found in nature and engineering applications, for instance flow past an airplane, a submarine, an automobile, or wind blowing past a high-rise building. Thus, over the years, massive research efforts have been undertaken to gain a comprehensive understanding of the fluid flow and heat transfer phenomena past bluff bodies of various cross-sectional geometries. Although much effort has been devoted to analyzing the complex flow physics and thermo-fluid transport phenomenon for a variety of cross-sections (circular, rectangular, square, and elliptical), most of the literature deals with circular geometry. A thorough review of this topic can be found in the works of Williamson \cite{williamson1996}, and the books of Zdravkovich \cite{zdravkovich1997flow1,zdravkovich1997flow2}. 

It is well known that, in general, beyond a critical Reynolds number flow around slender cylindrical bodies exhibits periodic vortex shedding as a result of the B\'{e}nard-von K\'{a}rm\'{a}n instability which then leads to alternate vortex structures known as the von K\'{a}rm\'{a}n vortex street. This phenomena is responsible for fluctuating forces on the body that may cause structural vibrations, acoustic noise emissions, and at times, resonance, which would trigger the failure of structures \cite{akde2006}.  Examples of such cylindrical structures in engineering applications include skyscrapers, towering structures, long-spanned bridges, and wires. The frequency associated with the periodic wake, the forces and moment acting on the body, as well as the heat transfer parameters, are a strong function of the body shape and size, Reynolds number of the flow, and the angle of attack \cite{ranjan2008}. Thus, from an engineering point of view, it is crucial to investigate flow around slender bodies with different shapes. 


Over the years plethora of studies, both numerical and experimental, have been undertaken to investigate forced convection heat transfer over a circular cylinder. Notable among the early studies are the ones carried out by Dennis et.al.\cite{dennis1968_forcedconvection}, Apelt and Ledwich \cite{apelt1979}, and Jafroudi and Yang \cite{jafroudi1986}. Subsequent numerical investigations of impact were undertaken by Lange et.al.\cite{lange1998}, Kieft et.al.\cite{kieft2003wake}, Shi et.al. \cite{shi2004heating}, Bharti et.al.\cite{bharti2007numerical}, Sarkar et.al.\cite{sarkar2011unsteady}. More recently, Cao et.al.\cite{cao2021forced} numerically analyzed forced convection heat transfer around a heated circular cylinder in laminar flow regime ($Re = 20$ - $180$, $Pr=0.7$) from the Lagrangian viewpoint. They computed the Lagrangian coherent structures and employed them to study the convection features around the cylinder at different $Re$'s. Among the experimental studies, the works of Dumouchel et.al.\cite{DUMOUCHEL1998}, Wang et.al. \cite{wang2000relationship}, Kieft et.al.\cite{kieft2003wake}, Nakamura and Igarashi \cite{NAKAMURA2004} stand out. 

The most commonly studied geometry after the circular cylinder is that of a square/rectangular cylinder. Thus, several studies - mostly numerical - exist for the forced convection heat transfer phenomena over a square geometry as well. Notable among them are the works of Sharma and Eswaran \cite{sharma_NHT_2004}, Dhiman et.al.\cite{dhiman2005flow}, Ranjan et.al.\cite{ranjan2008}, Sahu et.al. \cite{sahu2009unsteadyeffects}, Sen et.al.\cite{sen2011_square}, Bai and Alam \cite{bai2018dependence}. Other unusual shapes such as a triangular cylinder (De and Dalal \cite{de_dalal2006}), semi-circular cylinder (Chandra and Chhabra \cite{CHANDRA2011}, Chatterjee et.al.\cite{chatterjee2012}, Bhinder et.al.\cite{bhinder2012flow}), cam-shaped cylinder (Chamoli et.al.\cite{chamoli2019effect}), blunt-headed cylinder (Pawar et.al.\cite{PAWAR2020}) have also garnered the attention of researchers in recent years.

Among the various cross sections/shapes of bluff bodies (cirular, rectangular/square, elliptical) the elliptic geometry has been considered the elementary shape of interest for wings, submarines, rotor blades, and missiles \cite{yoon2016bifurcation}. The problem of flow past an elliptical cylinder has received intermittent attention over the years from scientific community. Lugt and Haussling \cite{lugt1974laminar} numerically investigated laminar flow past an elliptic cylinder at $45^{\degree}$ angle of incidence. The solutions were shown to approach steady and quasi-steady states at $Re = 15$ and $Re = 30$ respectively, while a K\'{a}rm\'{a}n vortex
street developed at $Re = 200$. Patel \cite{patel1981flow} studied the development of K\'{a}rm\'{a}n vortex street for flow past an impulsively started elliptic cylinder for $Re = 200$ at different angles of incidence ($\alpha = 0^{\degree}$, $30^{\degree}$, $45^{\degree}$, $90^{\degree}$) and presented semi-analytical solutions in terms of flow characteristics such as surface pressure and vorticity distributions, the transient development of streamlines and equivorticity lines, and drag coefficient. Jackson \cite{jackson1987finite}, while investigating the critical Reynolds number for the onset of vortex shedding for 2D laminar flow past bluff bodies of different shapes reported that, for an elliptic cylinder, the values of the critical $Re$ and the
corresponding Strouhal number decreased as the angle of incidence increased. Park et.al. \cite{PARK1989} studied the effect of angle of incidence on the unsteady laminar flow past an impulsively started, slender elliptic cylinder for $25 \leq Re \leq 600$. They identified five distinct flow regimes - two steady flow regimes which were demarcated by the presence of a steady separation bubble, and three unsteady regimes which were characterized by the frequency and amplitude of the periodic variations of force coefficients. Johnson et.al. \cite{johnson2001flow} investigated the vortex structures behind 2D elliptic cylinders for $30 \leq Re \leq 300$ and aspect ratio ($AR$) in the range $0.01 - 1$. They reported that as the $AR$ is decreased, the shedding behind the elliptic cylinder changes from steady K\'{a}rm\'{a}n vortex shedding to flow with two distinct regions. The first region is situated directly behind the cylinder and contains two rows of vortices rolling up from the cylinder with a region of relatively dead flow in between. The second region is located further downstream consisting of secondary vortices that results from a strong interaction of the two rows of vortices due to a convective instability. Faruquee et.al.\cite{faruquee2007effects} examined the effect of $AR$ on the flow field of an elliptic cylinder for $0.3 \leq AR \leq 1$ at $Re = 40$ with the cylinder placed with the major axis parallel to the free-stream, and reported various wake parameters, drag coefficient, pressure and velocity distributions in terms of $AR$. They also reported a critical $AR$ of $0.34$ below which no vortices form behind the cylinder. Sen et.al\cite{sen2012steady} calculated the laminar separation Reynolds number ($Re_s$) for $Re \leq 40$, $0^{\degree} \leq \alpha \leq 90^{\degree}$, and $AR = 0.2$, $0.5$, $0.8$, and $1$. Paul et.al. \cite{paul2014onset} presented a numerical study on predicting onset of flow separation and vortex shedding in flow past unconfined 2D elliptical cylinders for various $AR$'s and a wide range of Angles of Attack (AOA). They employed a variety of methods to estimate $Re_s$, critical Reynolds number ($Re_{cr}$), and critical Strouhal number ($St_{cr}$), and proposed functional relationships for $Re_{cr}$ and $St_{cr}$ in terms of $AR$ and AOA. Yoon et.al.\cite{yoon2016bifurcation} investigated the flow around an elliptic cylinder for $20 \leq Re \leq 100$, $0^{\degree} \leq \alpha \leq 90^{\degree}$, and $AR = 0.2$. They reported that the Strouhal number decreased as the angle of incidence increased, and the rate of decrease in the values of the Strouhal number was faster when the value of $Re$ increased. While measuring the variation of the stagnation point, they found that it moved downstream along the lower surface of the cylinder as the angle of incidence increased, and the time-averaged stagnation point is strongly dependent on the angle of incidence and weakly dependent on $Re$. Thus, we see that a number of important studies have been carried out to understand the flow phenomena over an unconfined elliptic cylinder. However, there is a distinct lack of comprehensive studies dealing with heat transfer phenomena w.r.t. to flow past an elliptic cylinder. The current work is attempt to address this issue. 

Over the years, it has been observed that the streamfunction - vorticity ($\psi$-$\zeta$) form of the Navier-Stokes (N-S) equations is preferred over the primitive form for the computation of 2D incompressible viscous flows, owing to the absence of the pressure term in the $\psi$-$\zeta$ form. Recently, Singhal and Kalita\cite{singhal2021novel} developed a new Higher Order Compact  Explicit Jump Immersed Interface Method (HEJIIM) for solving two-dimensional elliptic problems with singular source and discontinuous coefficients in the irregular region on Cartesian mesh. This scheme was shown to maintain its compactness on a nine-point stencil at both regular and irregular points unlike the previous IIM approaches. Further, in order to maintain fourth-order accuracy throughout the computational domain, they modified the explicit jump immersed interface strategy of Wiegmann and Bube \cite{wiegmann2000explicit} to treat the jump across the interface. In a subsequent work, Singhal and Kalita \cite{singhal_transient_2022} proposed a new HOC finite difference Immersed Interface Method (IIM) for 2D transient problems involving bluff bodies immersed in incompressible viscous flows on Cartesian mesh, which like its steady counterpart \cite{singhal2021novel} was shown to maintain its compactness on a nine point stencil at both the regular and irregular points. In this paper we have utilized this recent scheme of Singhal and Kalita \cite{singhal_transient_2022} to simulate and analyze forced convection heat transfer over an elliptic cylinder at an angle incidence .

The manuscript is organized as follows: In section \ref{sec:problem-statement} we lay out the problem, governing equations, and the imposed initial and boundary conditions. In section \ref{sec:Nusselt-drag-lift}, we describe the novel procedure developed to calculate Nusselt number, as well the method used by Singhal and Kalita \cite{singhal_transient_2022} to calculate drag and lift coefficients. The solution procedure is outlined in section \ref{sec:solution-method}. Next, we validate our code by simulating steady state forced convection over a circular cylinder and comparing the present results with well established results in the literature. Grid independence study is also carried out in this section (section \ref{sec:code-validation}). We present our results in section \ref{sec:results}. This section is divided into two subsections: \ref{sec:steady} contains results for steady state, and \ref{sec:transient} the results for transient state. Finally, we conclude this article in section \ref{sec:conclusions}.

\section{Problem statement and governing equations} \label{sec:problem-statement}
Consider a heated elliptical cylinder of aspect ratio $AR (= 2/3)$ placed in a uniform free stream (figure \ref{Fig:forced-convection-ellipse}). The fluid flow is two-dimensional, incompressible and laminar with constant properties. Additionally the effect of gravity is neglected. The free stream velocity is $U_0$ and the fluid Prandtl number ($Pr$) is taken to be $0.71$. The surface of the cylinder is maintained at a constant temperature of $T_s$, whereas the free stream has a temperature $T_{\infty}$. It is assumed that the temperature difference $\Delta T (=T_s-T_{\infty})$ has a negligible effect on the fluid properties.

\begin{figure}[htbp]
	\centering
	\includegraphics[width = \textwidth]{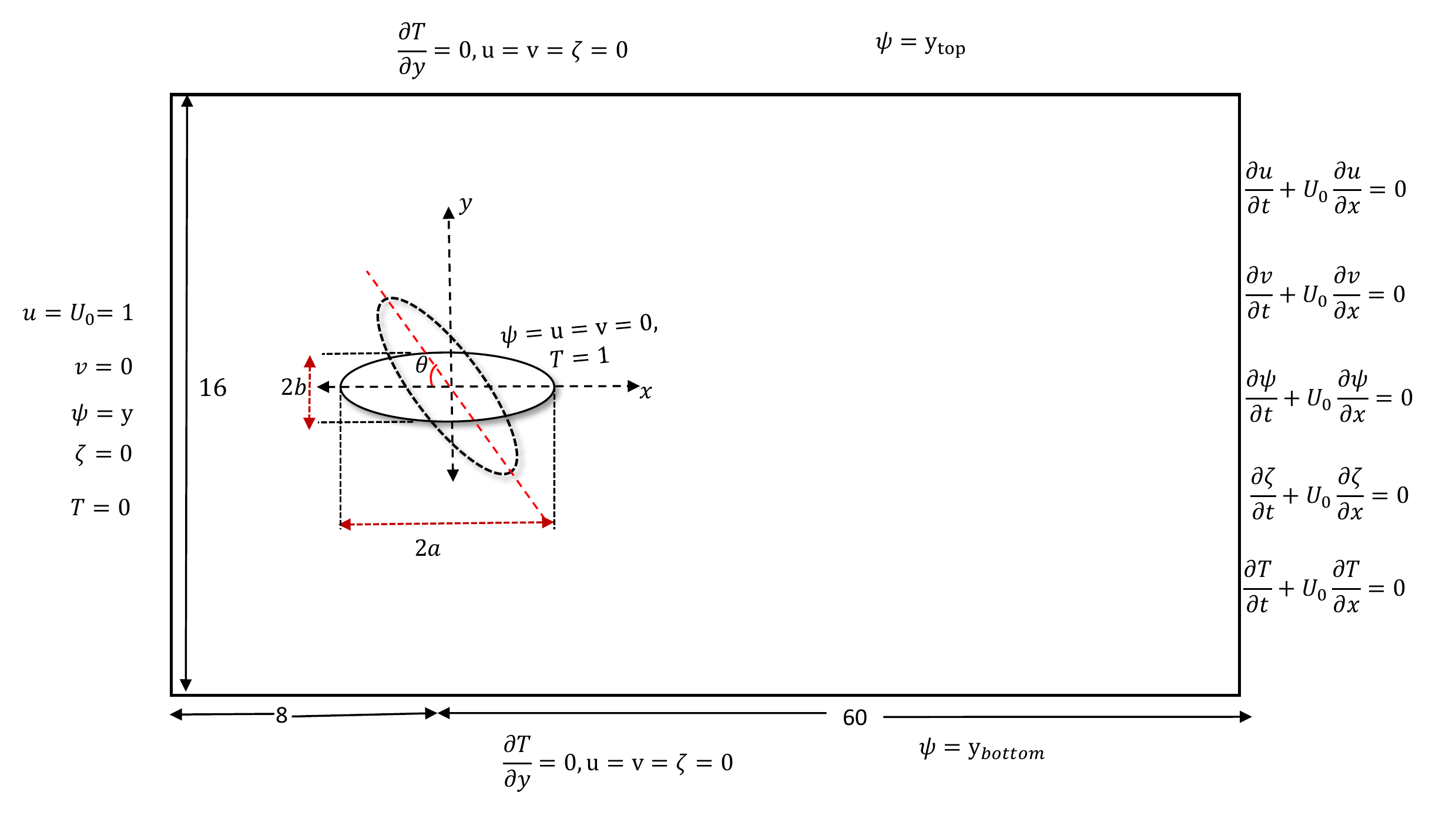}
	\caption{\small{Schematic and boundary conditions for forced convection over an inclined elliptic cylinder.}}\label{Fig:forced-convection-ellipse}
\end{figure}

Under these assumptions the 2D incompressible N-S equations, in streamfunction-vorticity ($\psi$-$\zeta$) form and the energy equation in dimensionless form, are given by

\begin{align}
\nabla^2 \psi & = -\zeta \label{eq:streamfunction} \\
\dfrac{\partial \zeta}{\partial t} + u \dfrac{\partial \zeta}{\partial x} + v \dfrac{\partial \zeta}{\partial y} & = \dfrac{1}{Re} \nabla^2 \zeta \label{eq:vorticity transport} \\
\dfrac{\partial T}{\partial t} + u \dfrac{\partial T}{\partial x} + v \dfrac{\partial T}{\partial y} & = \dfrac{1}{Re Pr} \nabla^2 T \label{eq:temperature}
\end{align}
Here $Re=\dfrac{U_0L}{\nu}$ is the Reynolds number ($U_0$ and $L$ being characteristic velocity and length respectively), and $Pr=\dfrac{\nu}{\alpha}$ is the Prandtl number, where $\nu$ and $\alpha$ are the dynamic viscosity and thermal diffusivity of the fluid respectively. 
The streamfunction ($\psi$) and vorticity ($\zeta$) are defined as follows:
\begin{equation}
	u = \dfrac{\partial \psi}{\partial y}, \quad v = -\dfrac{\partial \psi}{\partial x} \qquad \text{and} \qquad \zeta = \dfrac{\partial v}{\partial x} - \dfrac{\partial u}{\partial y}.
	\label{eq:vel-zeta}
\end{equation}
For the purpose of determining the immersed boundary, we have employed the level set function developed by Sethian and Osher \cite{osher1988fronts}. The level set function for an inclined ellipse is defined as 
\begin{equation}
\phi(x,y, \theta)= \left( \frac{-(x-xc)cos(\theta)+ (y-yc)sin(\theta)}{a} \right)^2 + \left( \frac{(x-xc)sin(\theta)+ (y-yc) cos(\theta)}{b} \right)^2 -1 
\end{equation}
where $a$ and $b$ are major and minor axis, and $(xc, yc)$ is the center of the ellipse, and $\theta$ is an angle which varies from $0 \leq \theta \leq 2 \pi$.  The normal vector is defined as
\begin{equation}
	\mathbf{n} = (n_1, n_2)= \frac{\nabla \phi}{\vert \nabla \phi \vert } = \frac{ \phi_{x} \vec{i} + \phi_{y} \vec{j}}{\sqrt{\phi_{x}^2+\phi_{y}^2}}.
\end{equation} 

\subsection{Initial and boundary conditions}
The following initial and boundary conditions are applied on the non-dimensional variables as follows (figure \ref{Fig:forced-convection-ellipse})
\begin{enumerate}
	\item Initial condition: At time $t=0$, $u=1$, $v=0$, and $T=0$
	
	\item At the inlet of the domain, the fluid flow is uniform with constant temperature i.e., $u=1$, $v=0$, and $T=0$.
	
	\item Convective boundary conditions are applied on the outlet of the domain, i.e.,
	$\dfrac{\partial \Phi}{\partial t} + U_0 \dfrac{\partial \Phi}{\partial x} = 0$, where $\Phi = u$, $v$, $\psi$, $\zeta$, $T$. 
	
	\item Free slip and adiabatic boundary conditions are applied on the top and bottom wall, i.e., $u=v=\zeta=0$, $\dfrac{\partial T}{\partial y}=0$, $\psi=y_T$ at the top boundary and $\psi = y_B$ at the bottom boundary. Here the subscripts $T$ and $B$ denote 'Top' and 'Bottom' respectively.
	
	\item On the surface of the cylinder, no-slip boundary conditions are applied along with constant temperature, i.e., $u=v=\psi=0$ and $T=1$.
\end{enumerate}
\section{Solution methodology}\label{sec:solution-method}
\subsection{Numerical Scheme}
Singhal and Kalita \cite{singhal_transient_2022} have developed an HOC (9,9) scheme for a 2D Parabolic interface problem for the variable $\Phi(x,y,t)$ of the type 
\begin{equation}\label{eq:parabolic_gen}
	\lambda \Phi_t = \nabla. (\beta \nabla \Phi) + \kappa \Phi -f +b\delta\{(x-x^*)(y-y^*)\} \quad \text{in } \Omega \times (0,\infty), \quad (x^*,y^*) \in \Gamma
\end{equation} 
with specified initial and boundary conditions. Here $\Omega$ is an open bounded subset in $\mathbb{R}^2$ and $\bf{x}$ = $(x, y)$ is an interior point in the domain having an interface $\Gamma$ immersed in it, and $(x^*, y^*) \in \Gamma$ is an
interfacial point.

At regular points, the last term in \eqref{eq:parabolic_gen} vanishes, and as such the equation \eqref{eq:parabolic_gen} can be recast in the convection-diffusion-reaction form as

\begin{equation}\label{eq:convection-diffusion-reaction}
	\lambda \Phi_t + \beta_x \Phi_x + \beta_y \Phi_y + \beta \nabla^2 \Phi + \kappa \Phi  = f
\end{equation}

Singhal and Kalita \cite{singhal_transient_2022} used the methodology of Kalita et.al. \cite{kalita_2002} to obtain a high order compact finite difference approximation of equation \eqref{eq:convection-diffusion-reaction} by using  uniform spacings $h$ and $l$ along $x$- and $y$-directions respectively with time step $\Delta t$. The eventual form of the HOC finite difference scheme for equation \eqref{eq:convection-diffusion-reaction} can be written as
\begin{equation}
	\begin{aligned}
		& \lambda \Bigg[1+\dfrac{h^2}{2}\Big(\delta_{xx} + \dfrac{(c-2\beta_x)}{\beta}\delta_x\Big) + \dfrac{l^2}{2}\Big(\delta_{xx} + \dfrac{(d-2\beta_y)}{\beta}\delta_y \Big)\Bigg] (\Phi_{ij}^{n+1}-\Phi_{ij}^n) = \dfrac{\Delta t}{2} (F_{ij}^{n+1}-F_{ij}^n) + \\
		& \dfrac{\Delta t}{2}\Big[A_{ij}\delta^2_x + B_{ij}\delta^2_y + C_{ij}\delta_x + D_{ij}\delta_y + E_{ij}\delta^2_x\delta^2_y + H_{ij}\delta_x\delta^2_y + K_{ij}\delta^2_x\delta_y + L_{ij}\delta_x\delta_y + M_{ij} \Big] (\Phi_{ij}^{n+1}+\Phi_{ij}^n)  
	\end{aligned}
\end{equation}
where $\delta^2_x\delta^2_y, \delta_x, \delta_y, \delta_x\delta_y, \delta_x\delta^2_y, \delta^2_x\delta_y, \delta^2_x\delta^2_y$ are second order accurate central difference operators along $x$- and $y$- directions. The details of the coefficients $A_{ij}, B_{ij}, C_{ij}, D_{ij}, E_{ij}, H_{ij}, K_{ij}, L_{ij}, M_{ij}$ can be found in the work of Singhal and Kalita \cite{singhal_transient_2022}.

\subsection{Solution of system of equations}  
The set of equations that result from discretizing equations \eqref{eq:streamfunction} - \eqref{eq:temperature} can be written in matrix form as
\begin{align}
A_{1} \psi^{n+1} &= f_1 (\zeta^{n}, \tilde{C}_{\psi}^{n}) \label{eq:se1} \\ 
A_{2} \zeta^{n+1} &= f_2 (\zeta^{n}, u^{n+1}, v^{n+1}, Re, \tilde{C}_{\zeta}^{n+1}, \tilde{C}_{\zeta}^{n}) \label{eq:se2} \\ 
A_{3} T^{n+1} &= f_3 (T^{n}, u^{n+1}, v^{n+1}, Re, Pr, \tilde{C}_{T}^{n+1}, \tilde{C}_{T}^{n})  \label{eq:se3}
\end{align}
In above equations, the coefficient matrices $A_1$, $A_2$ and $A_3$ are asymmetric sparse matrices containing a maximum of nine non-zero values on the diagonals in each row. $\tilde{C}_{\psi}^{n}$, $\tilde{C}_{\zeta}^{n}$, $\tilde{C}_{\zeta}^{n+1}$ and $\tilde{C}_{T}^{n}$, $\tilde{C}_{T}^{n+1}$ are the  streamfunction, vorticity and temperature correction vectors respectively at the irregular points corresponding to the $n^{\rm th}$ and $(n+1)^{\rm th}$ time level.  For a grid of size $M \times N$, the matrices $A_1$, $A_2$ and $A_3$ are of order $MN$ and  $\psi^{n+1}$, $\zeta^{n}$, $\zeta^{n+1}$, $T^{n}$, $T^{n+1}$, $u^{n+1}$, $v^{n+1}$, $\tilde{C}_{\psi}^{n}$, $\tilde{C}_{\zeta}^{n}$, $\tilde{C}_{\zeta}^{n+1}$, $\tilde{C}_{T}^{n}$, $\tilde{C}_{T}^{n+1}$ are vectors of length $MN$.  

Note that the discrete values of the velocities at the  $(n+1)^{\rm th}$ time level are contained in the equations \eqref{eq:se2} and \eqref{eq:se3}. However, they are accessible after computing streamfunction from equation \eqref{eq:se1}. The fourth order approximation of the velocities $u$, $v$ are obtained by the method outlined in the work of Kalita et.al.\cite{kalita2001pre}. An outer-inner iteration procedure is used to calculate the solutions to the problems governed by equations \eqref{eq:streamfunction} - \eqref{eq:temperature}. The following steps describe this computational algorithm:
\begin{enumerate}
	\item Initialize $u$, $v$, $\psi$, $\zeta$ and $T$ and apply the appropriate boundary conditions.
	\item Calculate streamfunction jump correction $\tilde{C}_{\psi}$.
	\item Solve equation \eqref{eq:se1} to obtain $\psi$.
	\item Compute $u$ and $v$ by Thomas algorithm from equations \eqref{eq:vel-zeta} \cite{jaiswal2020novel, kalita2008efficient,  kumar2019transformation}.
	\item Calculate vorticity and temperature jump corrections $\tilde{C}_{\zeta}$, $\tilde{C}_{T}$.
	\item Use \eqref{eq:se2} and \eqref{eq:se3} to determine $\zeta$ and $T$.\\
	This comprises an outer iteration.
	\item Once the discrete values of $u$, $v$, $\psi$, $\zeta$ and $T$ are updated, repeat the steps 2-6.
\end{enumerate}

Since $A_1$, $A_2$ and $A_3$ are sparse matrices, solving them requires the use of iterative techniques. Using traditional iterative techniques like Gauss-Seidel is not worthwhile since the coefficient matrices $A_1$, $A_2$, $A_3$ are not diagonally dominant. The inner iterations consist of solving the matrix equations \eqref{eq:se1} - \eqref{eq:se3} at each outer iteration by iterative solvers. The inner iterations are made up of efficient iterative solvers solving equations \eqref{eq:se1}- \eqref{eq:se3} at each time step. In our computations, we employed the Biconjugate gradient stabilized (BiCGStab) \cite{kelley1995iterative} iterative solver, along with Incomplete LU decomposition as a preconditioner, with the help of Lis Library \cite{LIS}. When the residual vectors resulting from equations \eqref{eq:se1}- \eqref{eq:se3} fell below $10^{-9}$, the inner iterations were terminated. We performed all of our calculations on a computer with a 32 GB RAM and an Intel Xeon processor.
\section{Calculation of non-dimensional parameters}\label{sec:Nusselt-drag-lift}
The Nusselt number characterises the rate of heat transfer across the fluid around the heated elliptic cylinder. On the other hand, drag and lift coefficients are dimensionless quantities that is related to the drag and lift generated by a bluff body across the fluid in its neighbourhood. As such they are vital parameters yielding useful information about the heat and fluid flow characteristics for the problem under consideration.  In this section, we describe in brief the procedure for calculating the Nusselt number, and the drag and lift coefficients.
\subsection{Nusselt number}\label{sec:nusselt-calculation}
The quantitative parameter indicating heat transfer, i.e. the local Nusselt number ($Nu$), is defined as
\begin{equation}
	Nu = -\dfrac{\partial T}{\partial n} \label{eq:local_Nusselt}
\end{equation}
where $n$ is the direction normal to the cylinder surface.
\begin{figure}[htbp]
	\centering
	\includegraphics[width = 0.5\textwidth]{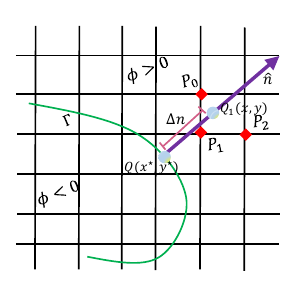}
	\caption{\small{Schematic of Nusselt number computation along the boundary an inclined elliptic cylinder.}}\label{Fig:Schematic for calculating Nusselt number}
\end{figure}

Contrary to the usual approach of resorting to grid-transformation for calculating Nusselt number of bluff bodies, we have calculated it using the following approach, where merely gathering flow information along the normal direction to the boundary of the bluff body suffices. We have divided the interface, i.e., the surface of the cylinder, into $N_P$ number of interfacial points. Now, in order to identify the interfacial points on the interface, we require the polar coordinates of the inclined ellipse, which is obtained as follows: 

Let 
\begin{eqnarray}
	-(x-xc)cos(\theta)+ (y-yc)sin(\theta)  =& a cos(\varphi) \label{polar_1} \\
	(x-xc)sin(\theta)+ (y-yc) cos(\theta)=& b sin(\varphi) \label{polar_2}
\end{eqnarray}
Add both equations \eqref{polar_1} and  \eqref{polar_2} after multiplying   by $cos(\theta)$ in \eqref{polar_1} and  $(-sin(\theta))$ in \eqref{polar_2}, we get
\begin{equation}
	x= xc - a cos(\theta) cos(\varphi) + b sin(\theta) sin(\varphi)
\end{equation}
Similarly, add both equations \eqref{polar_1} and  \eqref{polar_2} after multiplying   by $sin(\theta)$ in \eqref{polar_1} and  $cos(\theta)$ in \eqref{polar_2}, we get
\begin{equation}
	y= yc+ a sin(\theta) cos(\varphi) + b cos(\theta) sin(\varphi)
\end{equation}

Let $Q^\star(x^\star, y^\star)$ be an interfacial point on the bluff body's boundary, and $Q_{1} (x,y)$ be a point in the normal direction of $Q^\star (x^\star, y^\star)$ with $\triangle \mathbf{n}$ being the distance between $Q^\star$ and $Q_1$ along the direction normal to the interface at $Q^\star$ (see figure \ref{Fig:Schematic for calculating Nusselt number}). Then
\begin{equation}
	Q_{1} (x,y)= Q^\star(x^\star, y^\star) +  (n_1, n_2)\triangle \mathbf{n}	\nonumber
\end{equation}
Thus, the local $Nu$ at a point $Q^\star(x^\star, y^\star)$ is given by
	\begin{align}
	\left . Nu \right |_{Q^\star(x^\star, y^\star)}  = & - \left.\dfrac{\partial T}{\partial \mathbf{n}}\right |_{Q^\star(x^\star, y^\star)} \\
	 = & - \dfrac{ T (Q_{1} (x,y)) - T(Q^\star(x^\star, y^\star))}{\triangle \mathbf{n}}
	\end{align}		
Although the approximation of the value of $T$ at the point $Q_1(x,y)$ in the normal direction, it does not have to be a grid point in the computational domain, hence $T (Q_{1} (x,y))$ is unknown.  As such, we compute the value of $T (Q_{1} (x,y))$ using a linear bivariate interpolating polynomial, which is given as follows:

Let $p(a,b)$ be a linear bivariate interpolation polynomial in two variable is defined by\\
\begin{equation}
	p(a,b)=p_{0}+p_{1}a+p_{2}b \label{lbp_1}
\end{equation}
Given three points $P_{0}(a_{0},b_{0})$, $P_{1}(a_{1},b_{1})$, $P_{2}(a_{2},b_{2})$, the Vandermonde matrix on these three nodes is defined as
\begin{center}
	$\mathcal{P}=$
	$\begin{bmatrix}
		1 & a_{0} & b_{0} \\
		1 & a_{1} & b_{1} \\
		1 & a_{2} & b_{2} \\
	\end{bmatrix}$
\end{center}
\begin{lemma}
	Interpolating of $T$ by polynomials $p(a,b)$ on the points $P_{0}$, $P_{1}$ and $P_{2}$ is always possible if and only if $det \mathcal{P} \neq 0$
\end{lemma}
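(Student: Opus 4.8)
The plan is to restate the interpolation requirement as a square linear system in the unknown coefficients and then invoke the standard equivalence between unique solvability, invertibility, and non-vanishing determinant. Writing $T_i$ for the prescribed value of $T$ at $P_i(a_i,b_i)$ and imposing $p(a_i,b_i)=T_i$ for $i=0,1,2$ on the form \eqref{lbp_1}, one obtains
\begin{equation}
\mathcal{P}\begin{bmatrix} p_0 \\ p_1 \\ p_2 \end{bmatrix}=\begin{bmatrix} T_0 \\ T_1 \\ T_2 \end{bmatrix},
\end{equation}
so that finding an interpolating polynomial is exactly the same as solving $\mathcal{P}\mathbf{p}=\mathbf{T}$ for the coefficient vector $\mathbf{p}=(p_0,p_1,p_2)^{T}$. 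Here the phrase ``interpolation is always possible'' is read as: for every data vector $\mathbf{T}\in\mathbb{R}^3$ the system has a solution.

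For the sufficiency direction I would assume $\det\mathcal{P}\neq 0$; then $\mathcal{P}$ is invertible, $\mathbf{p}=\mathcal{P}^{-1}\mathbf{T}$ solves the system for arbitrary $\mathbf{T}$ (and the coefficients may even be written out by Cramer's rule), so the interpolant exists and is unique. For necessity I would argue contrapositively: if $\det\mathcal{P}=0$ then $\mathrm{rank}\,\mathcal{P}\le 2$, so the column space of $\mathcal{P}$ is a proper subspace of $\mathbb{R}^3$; choosing any $\mathbf{T}$ outside it yields data that admits no interpolating polynomial of the form \eqref{lbp_1}, hence interpolation is not always possible. Combining the two implications gives the stated equivalence.

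I do not anticipate a genuine obstacle, since this is a direct instance of the basic solvability theory for square linear systems; the only subtlety worth recording is the geometric interpretation, namely that $\det\mathcal{P}=0$ precisely when the three nodes $P_0,P_1,P_2$ are collinear, the degenerate configuration in which no plane $z=p_0+p_1a+p_2b$ can match arbitrary prescribed heights over those nodes. In the numerical scheme this is avoided automatically by choosing the normal-direction interpolation stencil so that its three nodes are never collinear.
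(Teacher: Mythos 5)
Your proposal is correct and follows essentially the same route as the paper: both reduce the interpolation conditions to the square linear system $\mathcal{P}\mathbb{X}=\mathcal{F}$ and invoke the standard equivalence between solvability for arbitrary data and $\det\mathcal{P}\neq 0$. You merely spell out the two directions (invertibility for sufficiency, a rank/column-space argument for necessity) and the collinearity interpretation more explicitly than the paper, which states the latter as a separate remark.
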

\begin{proof}
	Let us represent the vector of $T$ values at the three points by $\mathcal{F}$ = $[t_{0}, t_{1}, t_{2}]^T$ and define $\mathbb{X}$ = $[p_{0}, p_{1}, p_{2}]^T$. Considering the fact that $p(a,b)$ satisfies $p(a_{i}, b_{i})= t_{i}$ for $i  \in \lbrace 0, 1, 2 \rbrace $ can be expressed as $\mathcal{P}\mathbb{X} = \mathcal{F}$ which provides a solution for an arbitrary $\mathbb{X}$ if and only if $det \mathcal{P} \neq 0$.
\end{proof}
\begin{remark}
	If $P_{0}$, $P_{1}$ and $P_{2}$ are lies on a same line then interpolation by linear polynomials is not possible on these points.
\end{remark}
In the above linear interpolation, the unknown coefficients $p_{0}$, $p_{1}$ and $p_{2}$  are explicitly provided by 
\begin{eqnarray*}
	p_{0}=& (t_{0}a_{1}b_{2} - t_{0}a_{2}b_{1} - t_{1}a_{0}b_{2} + t_{1}a_{2}b_{0} + t_{2}a_{0}b_{1} - t_{2}a_{1}b_{0})/ A\\
	p_{1}=&(t_{0}b_{1} - t_{1}b_{0} - t_{0}b_{2} + t_{2}b_{0} + t_{1}b_{2} - t_{2}b_{1})/A \\
	p_{2}=&  -(t_{0}a_{1} - t_{1}a_{0} - t_{0}a_{2} + t_{2}a_{0} + t_{1}a_{2} - t_{2}a_{1})/A
\end{eqnarray*}
where $A=(a_{0}b_{1} - a_{1}b_{0} - a_{0}b_{2} + a_{2}b_{0} + a_{1}b_{2} - a_{2}b_{1})$.\\

We determined the local Nusselt number at the point $Q_{1} (x,y)$ using the above interpolation formula \eqref{lbp_1} by selecting three nearest grid points. Thus, the local $Nu$ is calculated at $N_P$ points. Note that in our computation of local $Nu$, we have taken $\triangle \mathbf{n} = 0.1$ and $N_P = 201 $. 

The surface averaged Nusselt number is given by 
\begin{equation}
	Nu_{\text{av}} = \dfrac{1}{W}\int_W Nu\,dS \label{eq:Nusselt}
\end{equation}
where $W$ is the surface area of the cylinder. The integral in equation \eqref{eq:Nusselt} is calculated using Simpson's $1/3$ rule.

\subsection{Calculation of Drag and Lift forces}\label{sec:drag-lift-calc}
The drag ($C_D$) and lift ($C_L$) coefficients, which are the non-dimensional form of the drag ($F_D$) and lift ($F_L$) forces, are given by
\begin{equation}
\begin{aligned}
	C_D = -2\int\int_V \dfrac{\partial u}{\partial t} dx dy & +  2 \oint_S \Big(uv + yv\zeta - y\dfrac{\partial v}{\partial t} + \dfrac{1}{Re} y\nabla^2 u\Big) dx \\ &+ 2\oint_S \Bigg[\dfrac{1}{2}(v^2-u^2) - yu\zeta - y\dfrac{\partial v}{\partial t} + \dfrac{1}{Re}\Big(y\nabla^2 v + 2 \dfrac{\partial u}{\partial x} + \dfrac{\partial v}{\partial y} + \dfrac{\partial v}{\partial x} \Big) \Bigg] dy\\
\end{aligned}
\end{equation} \label{eq:drag}

\begin{equation}
	\begin{aligned}
		C_L = -2\int\int_V \dfrac{\partial v}{\partial t} dx dy & + 2\oint_S \Bigg[\dfrac{1}{2}(v^2-u^2) - xv\zeta - x\dfrac{\partial u}{\partial t} + \dfrac{1}{Re}\Big(x\nabla^2 u + \dfrac{\partial u}{\partial y} + \dfrac{\partial v}{\partial x} + 2\dfrac{\partial v}{\partial y} \Big) \Bigg] dx  \\ &+2 \oint_S \Big(-uv + xu\zeta + x\dfrac{\partial u}{\partial t} - \dfrac{1}{Re} x\nabla^2 v\Big) dy\\
	\end{aligned}
\end{equation} \label{eq:lift}
Here $V$ is an arbitrary control volume bounded by a control surface $S$. The expressions given by \eqref{eq:drag} and \eqref{eq:lift} are obtained by utilizing the momentum approach of Noca et.al. \cite{NOCA1999551}, who devised a formula that does not require explicit knowledge of the pressure term. A detailed derivation of the same can be found in the work of Singhal and Kalita \cite{singhal_transient_2022}.

\section{Code Validation and Grid Independence}\label{sec:code-validation}
\subsection{Code validation}
In order to validate our code, firstly we simulate forced convection over a horizontal circular cylinder at low Reynolds numbers. As will be seen shortly, the results from the present computation are an excellent match with well established results in the literature. Note that the computational domain as well as the boundary conditions for this case is the same as shown in figure \ref{Fig:forced-convection-ellipse}. The only difference is that the elliptical cylinder has been replaced by a circular cylinder of characteristic length (diameter) $D=1$. 

\begin{figure}[H]
	\centering
	\includegraphics[width = 0.8\textwidth]{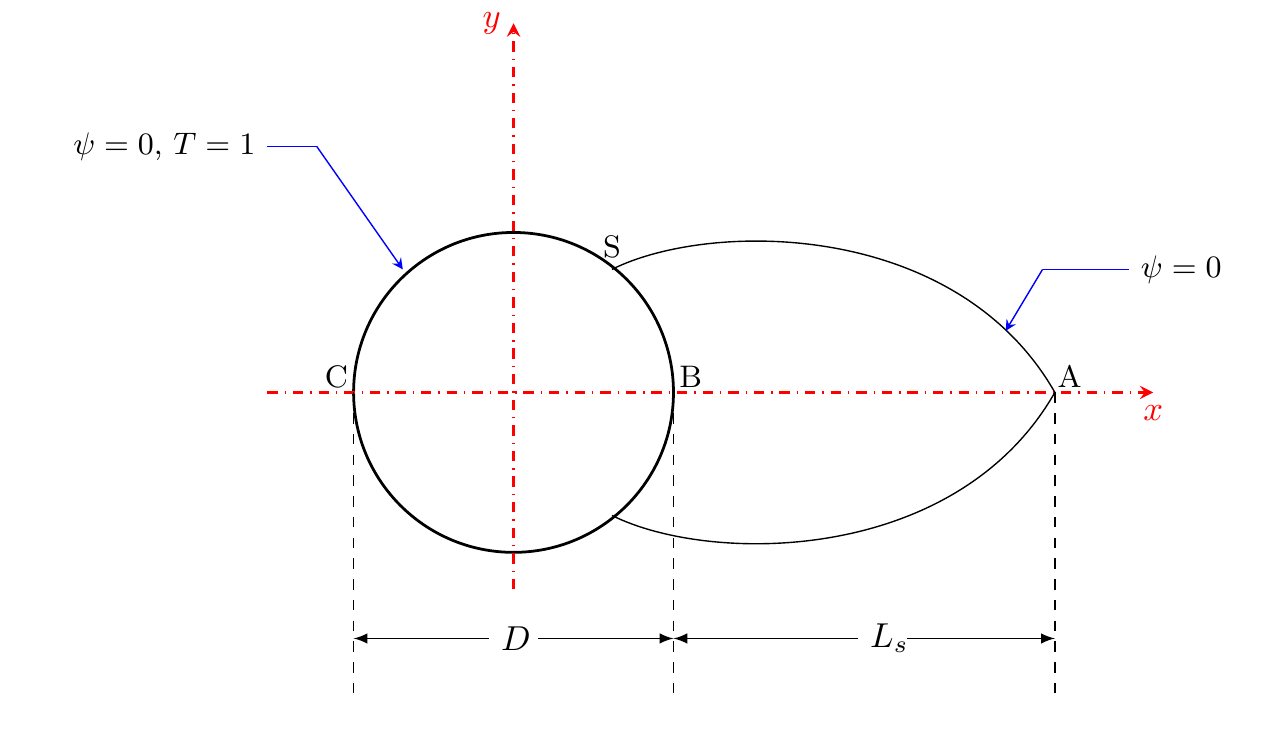}
	\caption{\small{Schematic of wake-bubble geometry for the circular cylinder.}}\label{Fig:fc-circle}
\end{figure}

Figure \ref{Fig:fc-circle} shows the typical wake-bubble geometry of the flow. Points A, B, and C denote wake stagnation point, back stagnation point, and front stagnation point respectively. The eddy length $L_s$ is the distance from the rear of the cylinder to the wake stagnation point. 


\begin{table}[H]
	\centering
	\caption{\small{Comparison of eddy length ($L_s$)}}
	\begin{tabular}{|c|c|c|c|c|}
		\hline
		\multirow{2}[4]{*}{$Re$} & \multicolumn{4}{c|}{$L_s$} \\
		\cline{2-5}          & Present & Biswas and Sarkar \cite{biswas2009} & Takami and Keller \cite{takami1969}& Dennis and Chang \cite{dennis1970} \\
		\hline
		15    & 1.224 & 1.189 & 1.162 & -- \\
		\hline
		20    & 1.831 & 1.865 & 1.844 & 1.88 \\ 
		\hline
		30    & 3.225 & 3.226 & 3.223  & --  \\
		\hline
		35    & 3.859 & 3.793 &  --     & -- \\
		\hline
		40    & 4.455 & 4.424 & 4.650   & 4.69 \\
		\hline
	\end{tabular}%
	\label{tab:circ-comparison-ls}%
\end{table}%


\begin{table}[H]
	\centering
	\caption{\small{Comparison of surface averaged Nusselt number ($Nu_{\text{av}}$)}}
	\begin{tabular}{|c|c|c|c|c|}
		\hline
		\multirow{2}[4]{*}{$Re$} & \multicolumn{4}{c|}{$Nu_{\text{av}}$} \\
		\cline{2-5}          & Present & Biswas and Sarkar \cite{biswas2009} & Jafroudi and Yang \cite{jafroudi1986} & Apelt and Ledwich \cite{apelt1979} \\
		\hline
		15    & 2.2103 & 2.1809 & 2.176 & 2.193 \\
		\hline
		20    & 2.4617 & 2.4483 & 2.433 & -- \\ 
		\hline
		30    & 2.9287 & 2.8877 & 2.850  & --  \\
		\hline
		35    & 3.1281 & 3.0772 &  --     & -- \\
		\hline
		40    & 3.2492 & 3.2351 & 3.2   & 3.255 \\
		\hline
	\end{tabular}%
	\label{tab:circ-comparison-nu}%
\end{table}

\begin{figure}[H]
	\centering
	\begin{subfigure}{0.4\textwidth}
		\includegraphics[width=\linewidth]{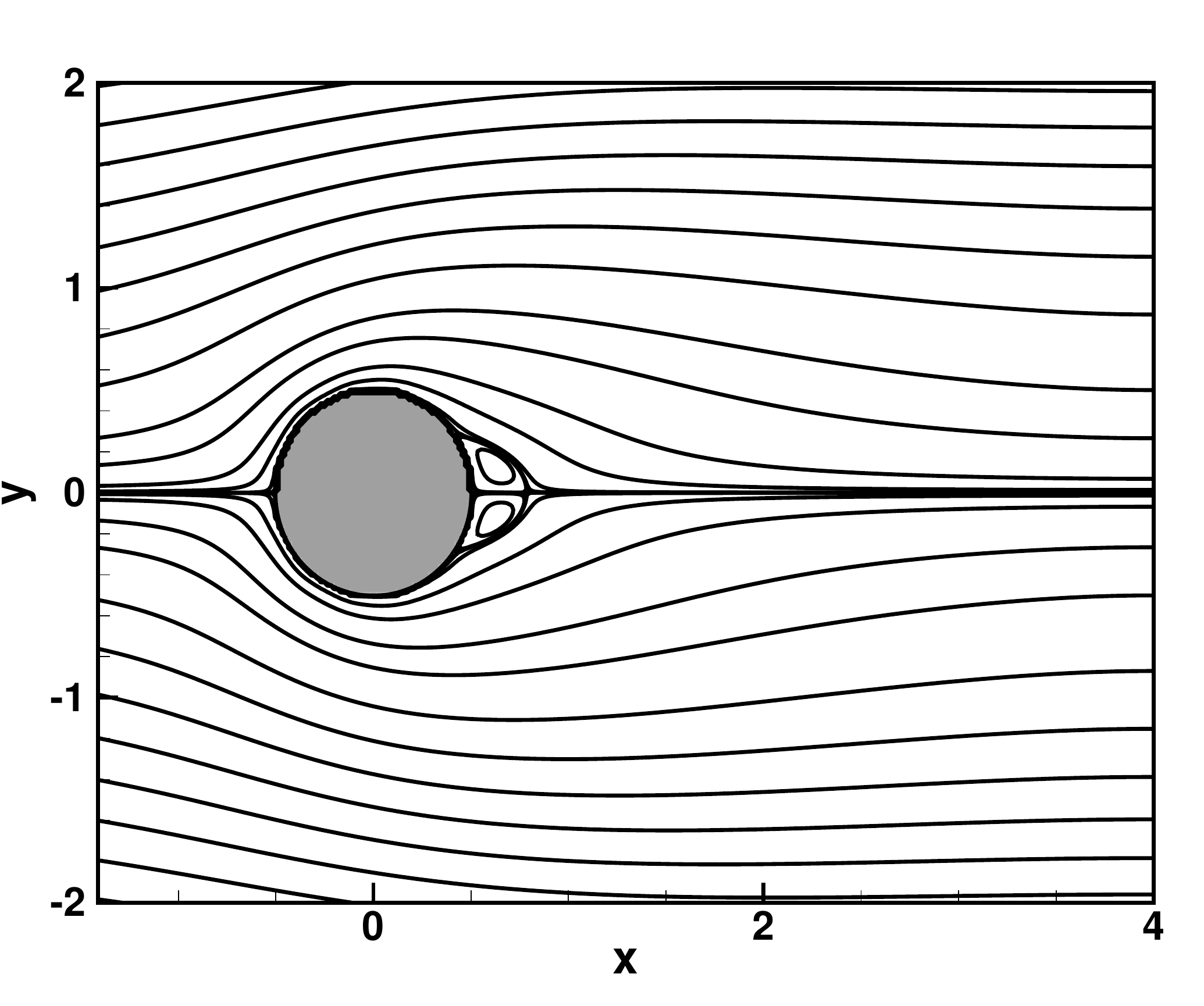} 
		\caption{$Re=10$}
	\end{subfigure} 
\begin{subfigure}{0.4\textwidth}
	\includegraphics[width=\linewidth]{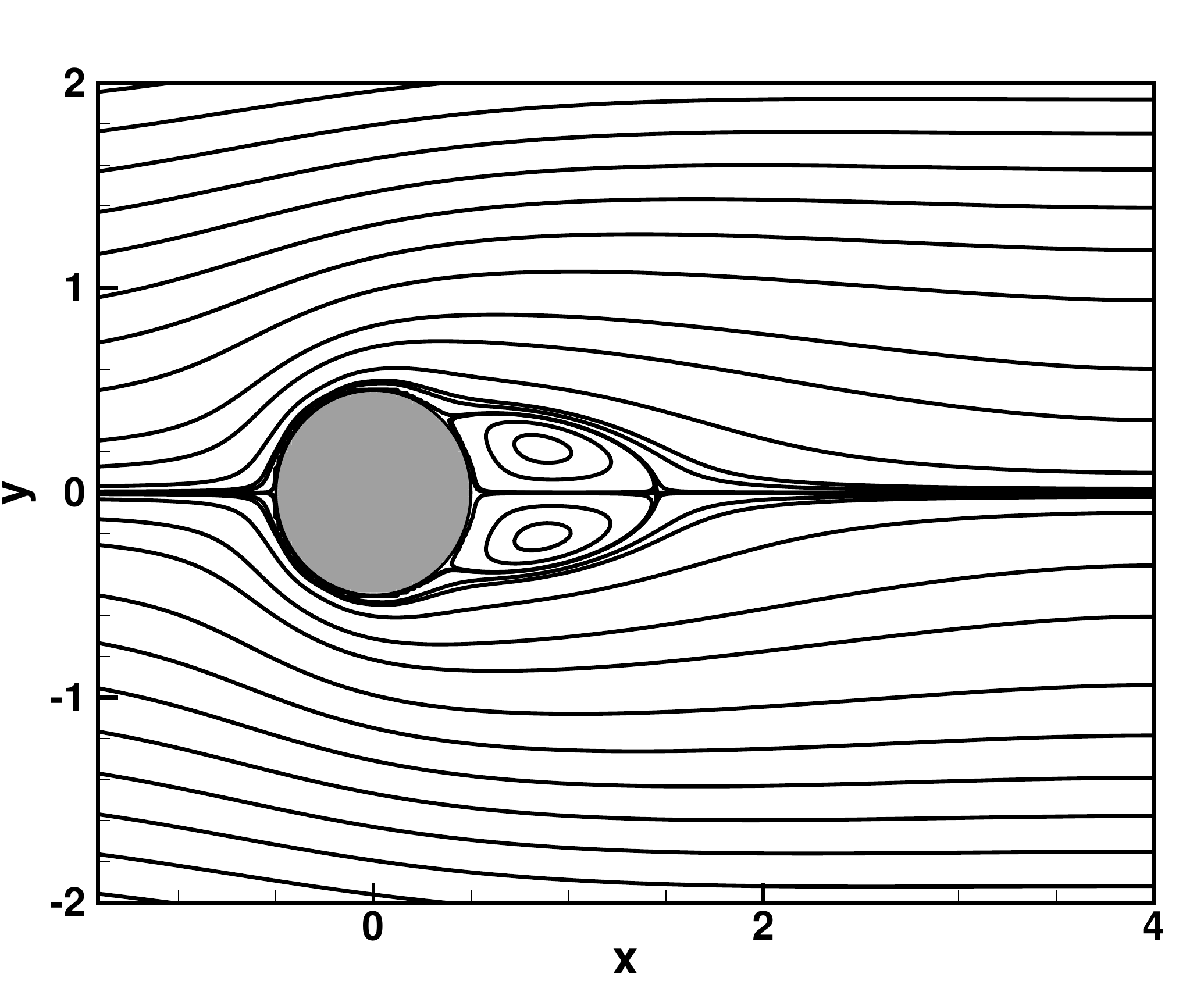} 
	\caption{$Re=20$}
\end{subfigure} 
\begin{subfigure}{0.4\textwidth}
	\includegraphics[width=\linewidth]{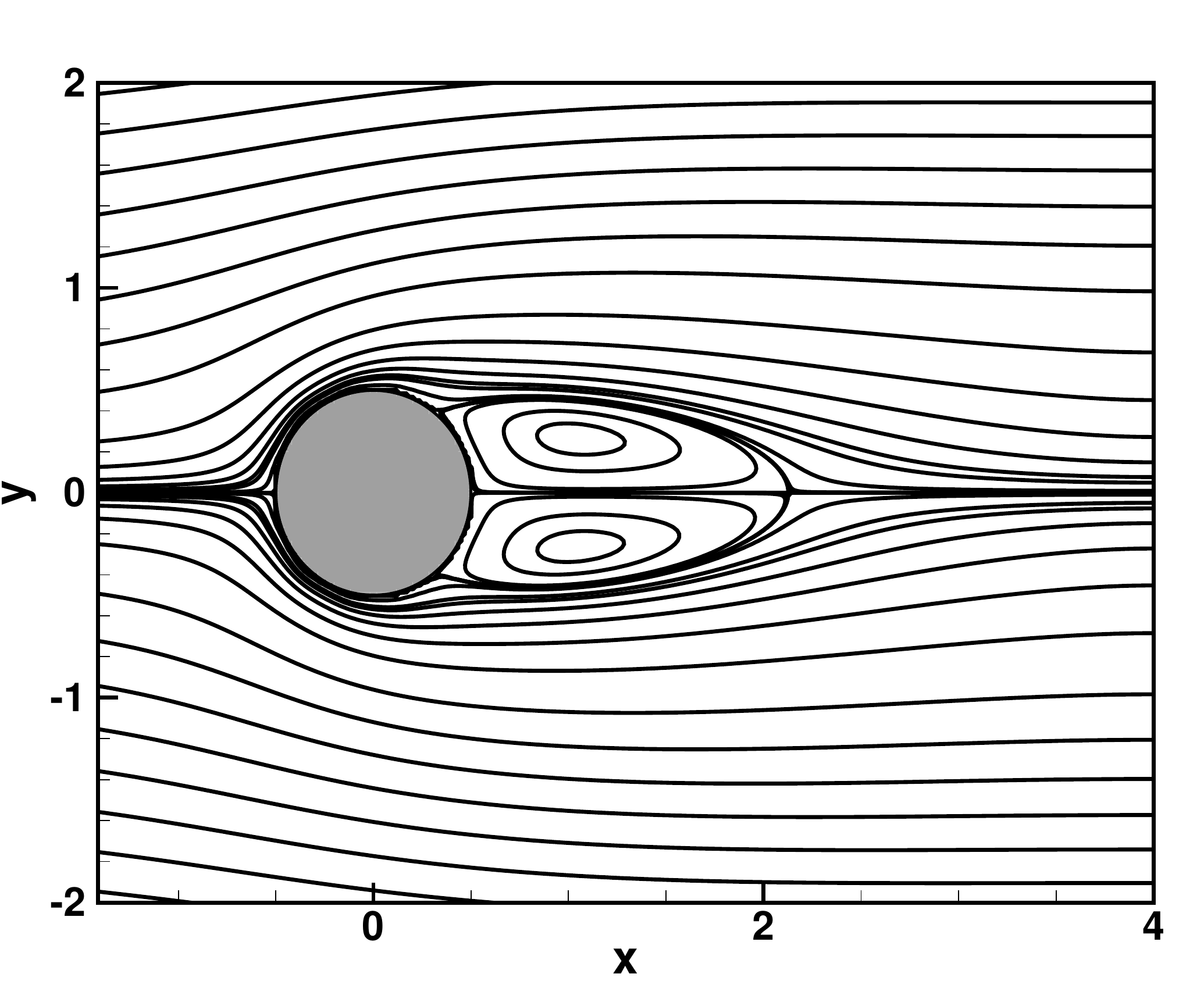} 
	\caption{$Re=30$}
\end{subfigure} 
\begin{subfigure}{0.4\textwidth}
	\includegraphics[width=\linewidth]{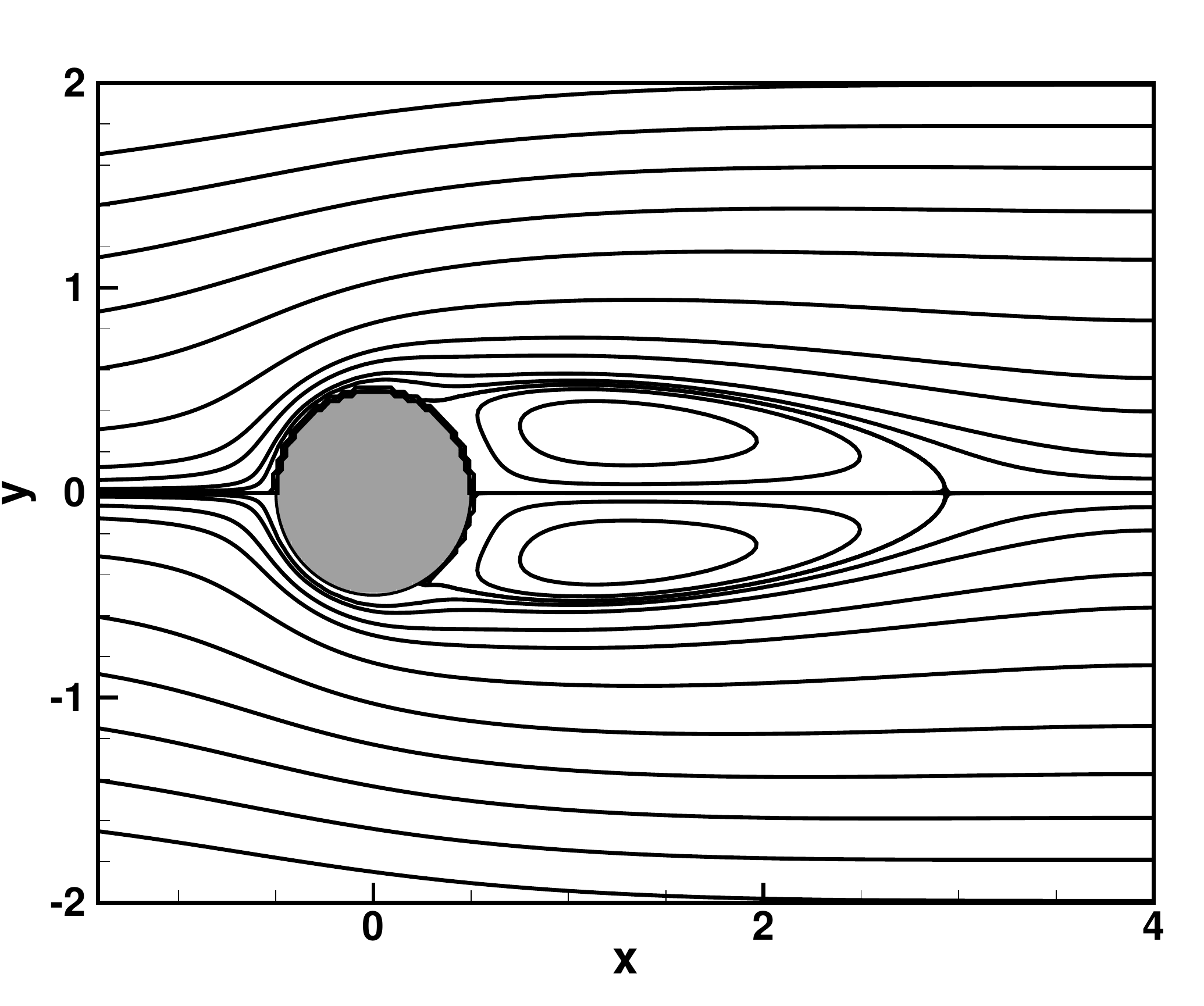} 
	\caption{$Re=40$}
\end{subfigure}
\caption{\small{Steady state streamlines for (a)$Re=10$, (b)$Re=20$, (c)$Re=30$, and (d)$Re=40$.}}
\label{Fig:psi-circ}
\end{figure}

\begin{figure}[H]
	\centering
\begin{subfigure}{0.4\textwidth}
	\includegraphics[width=\linewidth]{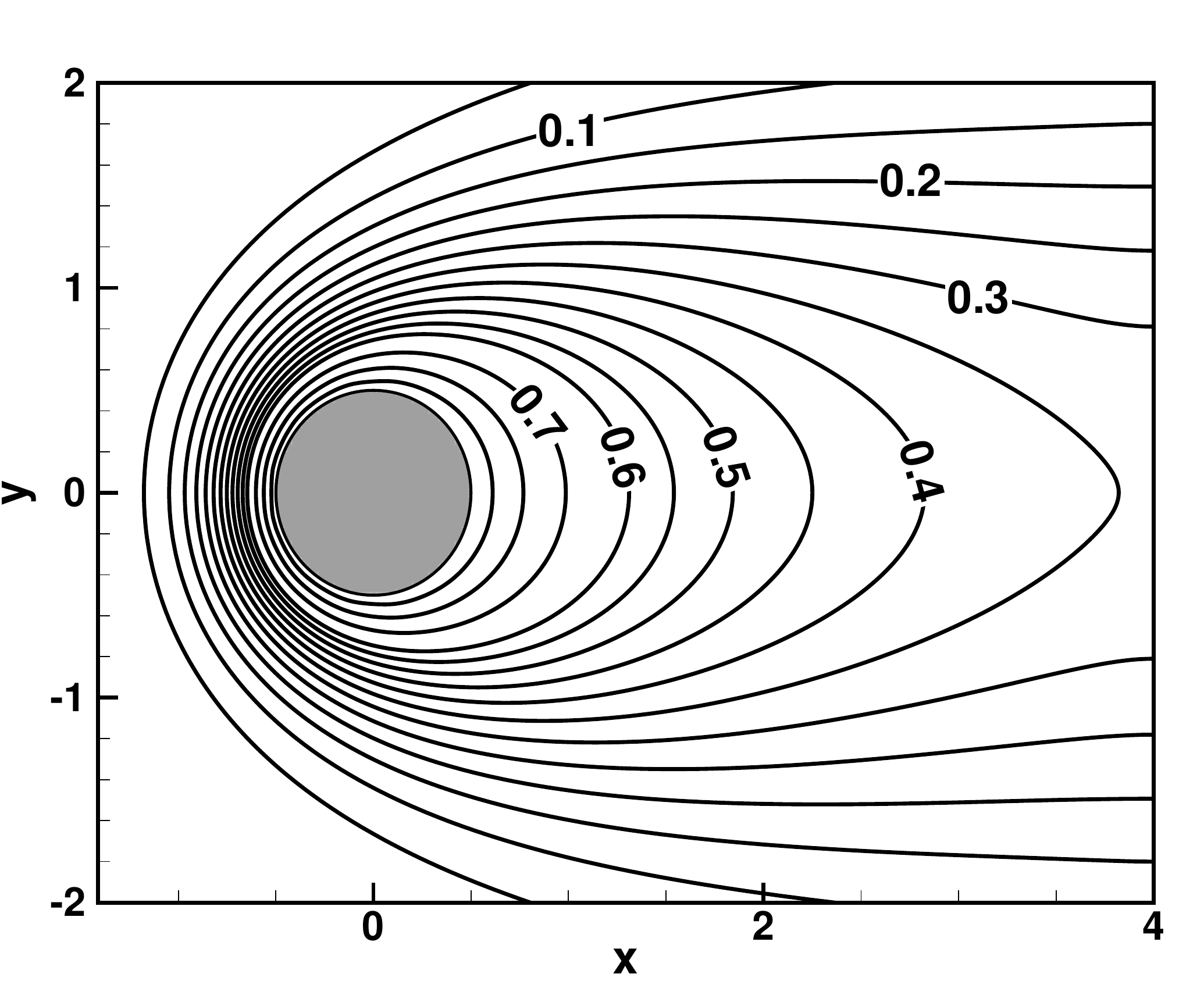} 
	\caption{$Re=10$}
\end{subfigure} 
\begin{subfigure}{0.4\textwidth}
	\includegraphics[width=\linewidth]{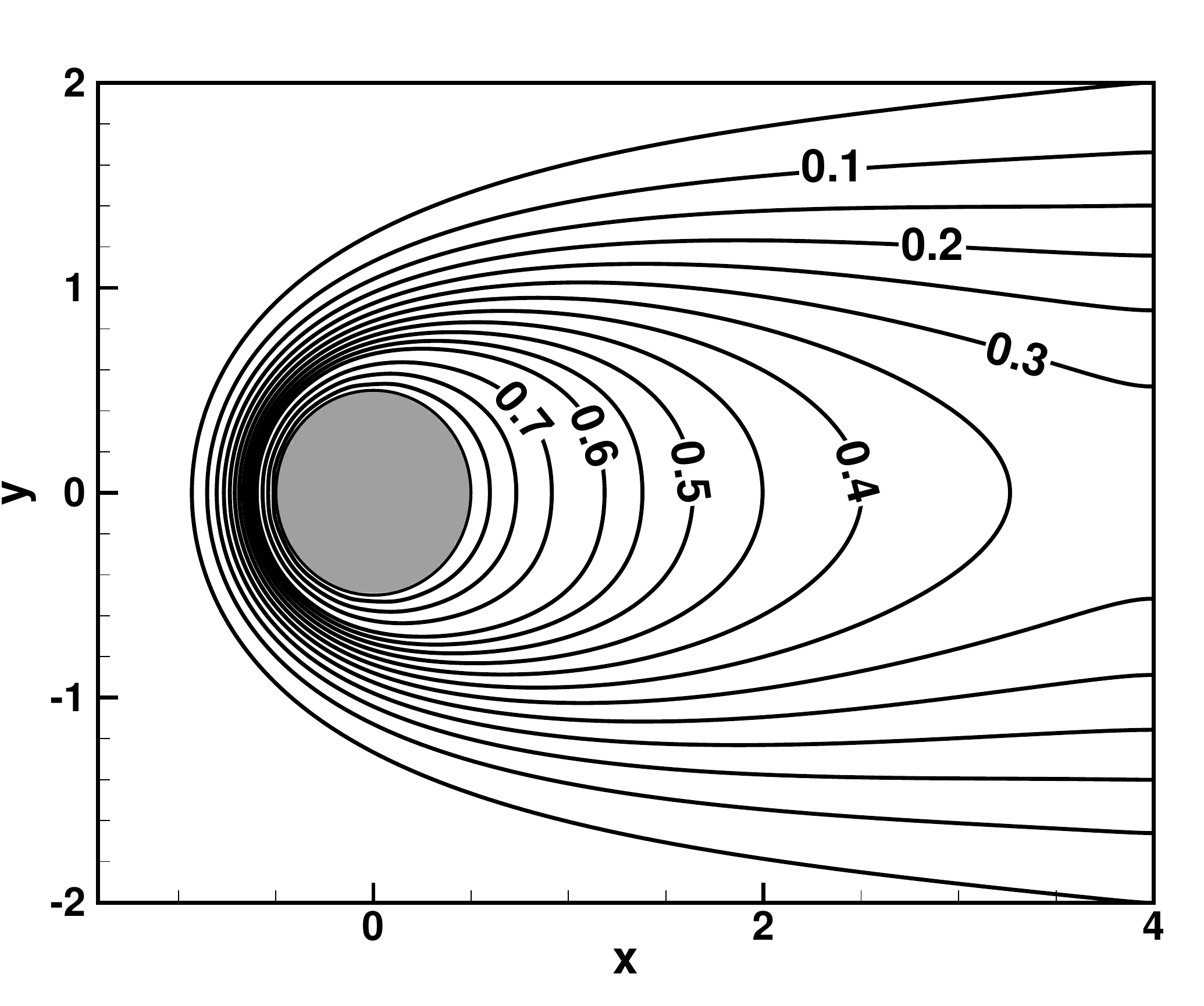} 
	\caption{$Re=20$}
\end{subfigure} 
\begin{subfigure}{0.4\textwidth}
	\includegraphics[width=\linewidth]{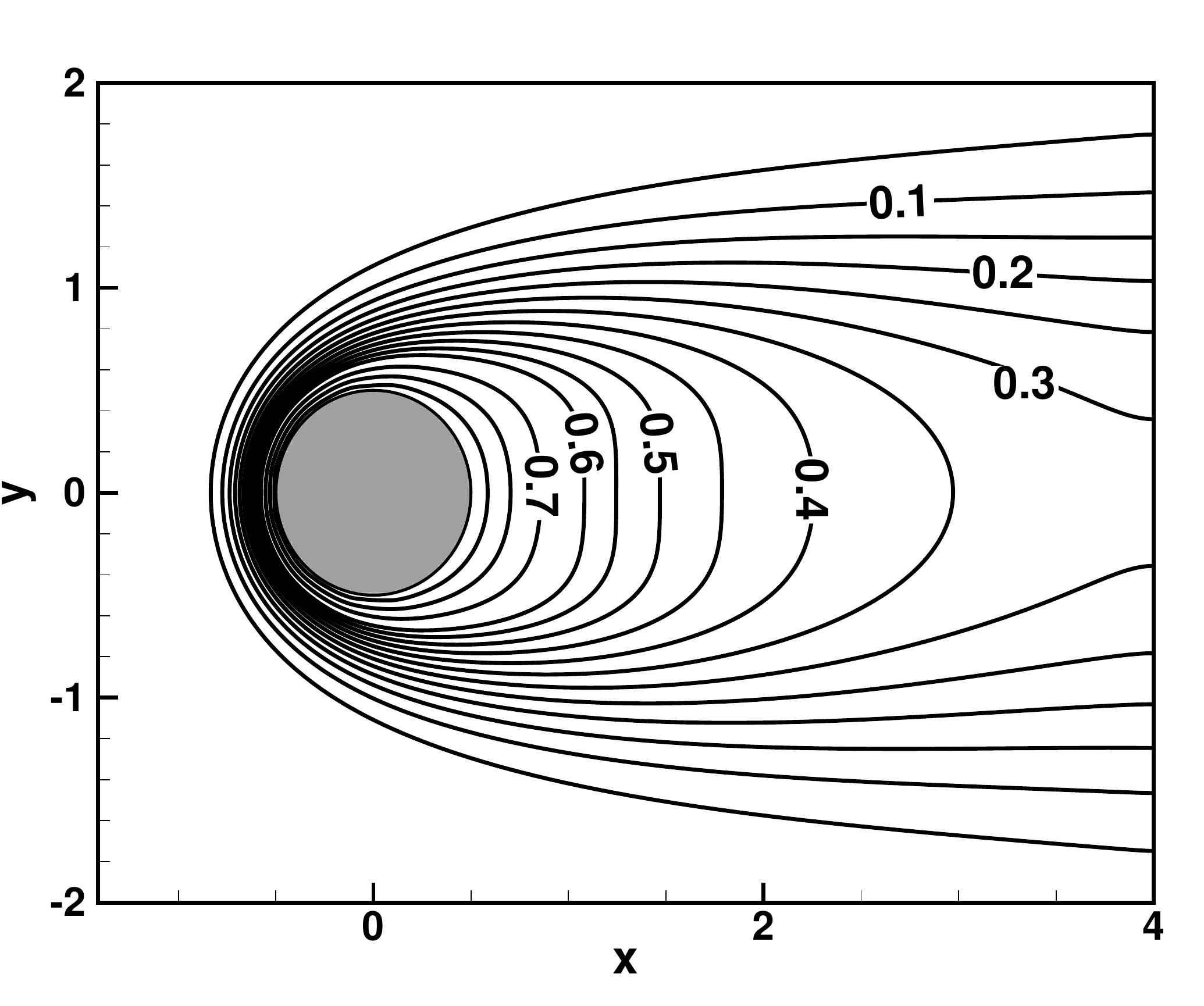} 
	\caption{$Re=30$}
\end{subfigure} 
\begin{subfigure}{0.4\textwidth}
	\includegraphics[width=\linewidth]{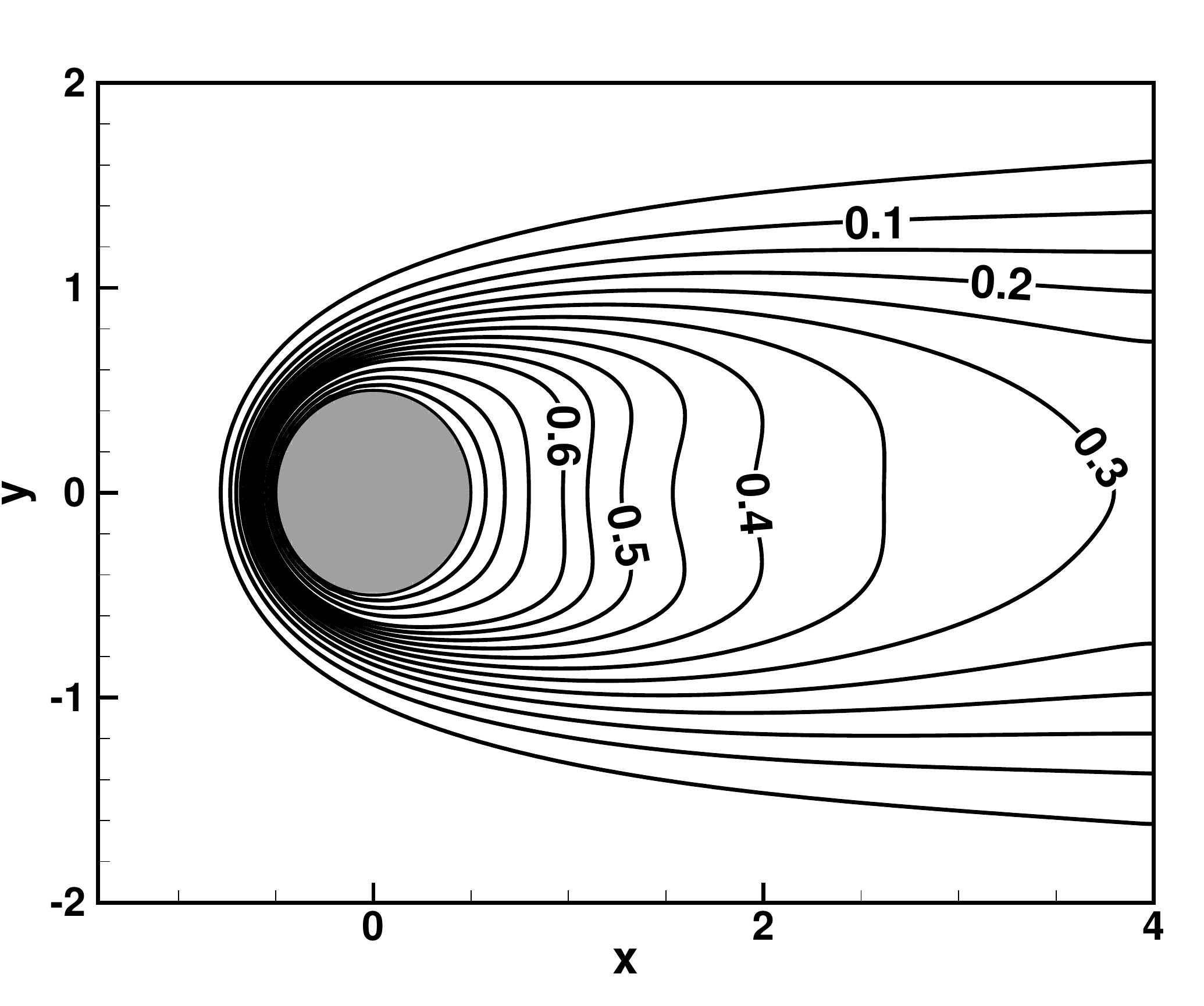} 
	\caption{$Re=40$}
\end{subfigure}
	\caption{\small{Steady state isotherms for (a)$Re=10$, (b)$Re=20$, (c)$Re=30$, and (d)$Re=40$.}.}
	\label{Fig:T-circ}
\end{figure}

For this comparison exercise, simulations are carried out for $Re=15$, $20$, $30$, $35$, and $40$. Previous works (Takami and Keller \cite{takami1969}, Dennis and Chang \cite{dennis1970}, Apelt and Ledwich \cite{apelt1979}, Jafroudi and Yang \cite{jafroudi1986}) have shown that the flow is steady for these values of $Re$'s. In the present case steady-state has been reached through time marching. The values of eddy length ($L_s$), and surface averaged Nusselt number ($Nu_{\text{av}}$) from the present computation have been compared with well established results in tables \ref{tab:circ-comparison-ls}, and \ref{tab:circ-comparison-nu} respectively. One can see that in all the cases, excellent match has been obtained. Figures \ref{Fig:psi-circ} and \ref{Fig:T-circ} (a)-(d) show the streamlines and isotherms for $Re=10 - 40$. One can clearly see from the figures \ref{Fig:psi-circ} (a)-(d)  that the eddy length increases linearly with $Re$. The isotherms are symmetrical about the $x$-axis in the wake region. Figures \ref{Fig:T-circ} (a)-(d) also reveal that the isotherms become steeper with $Re$ in the near wake region. This implies that with an increase in fluid velocity sets a higher temperature gradient resulting in enhanced heat transfer from the cylinder surface. This is evident from the values of $Nu_{\text{av}}$ in table \ref{tab:circ-comparison-nu} as well. The streamlines and isotherms resulting from our computation are very similar to the simulations of \cite{biswas2009}.

\subsection{Grid independence}
In order to establish grid independence of the computed data, we compare the steady state streamlines and isotherms at three different grid sizes for $Re = 40$, and $\theta = 0^{\degree}$.  The three different grid sizes used for this exercise are $319 \times 161$, $463 \times 265$, and $621 \times 353$. As seen from figure \ref{Fig:grid-ind}, the overlapping of contours for streamlines and isotherms at grid sizes $463 \times 265$, and $621 \times 353$ clearly indicate grid independence of the computed data. Thus, all our computations in this work have been carried out on a grid of size $463 \times 265$.
\begin{figure}[H]
	\centering
	\begin{subfigure}{0.5\textwidth}
		\includegraphics[width=\linewidth]{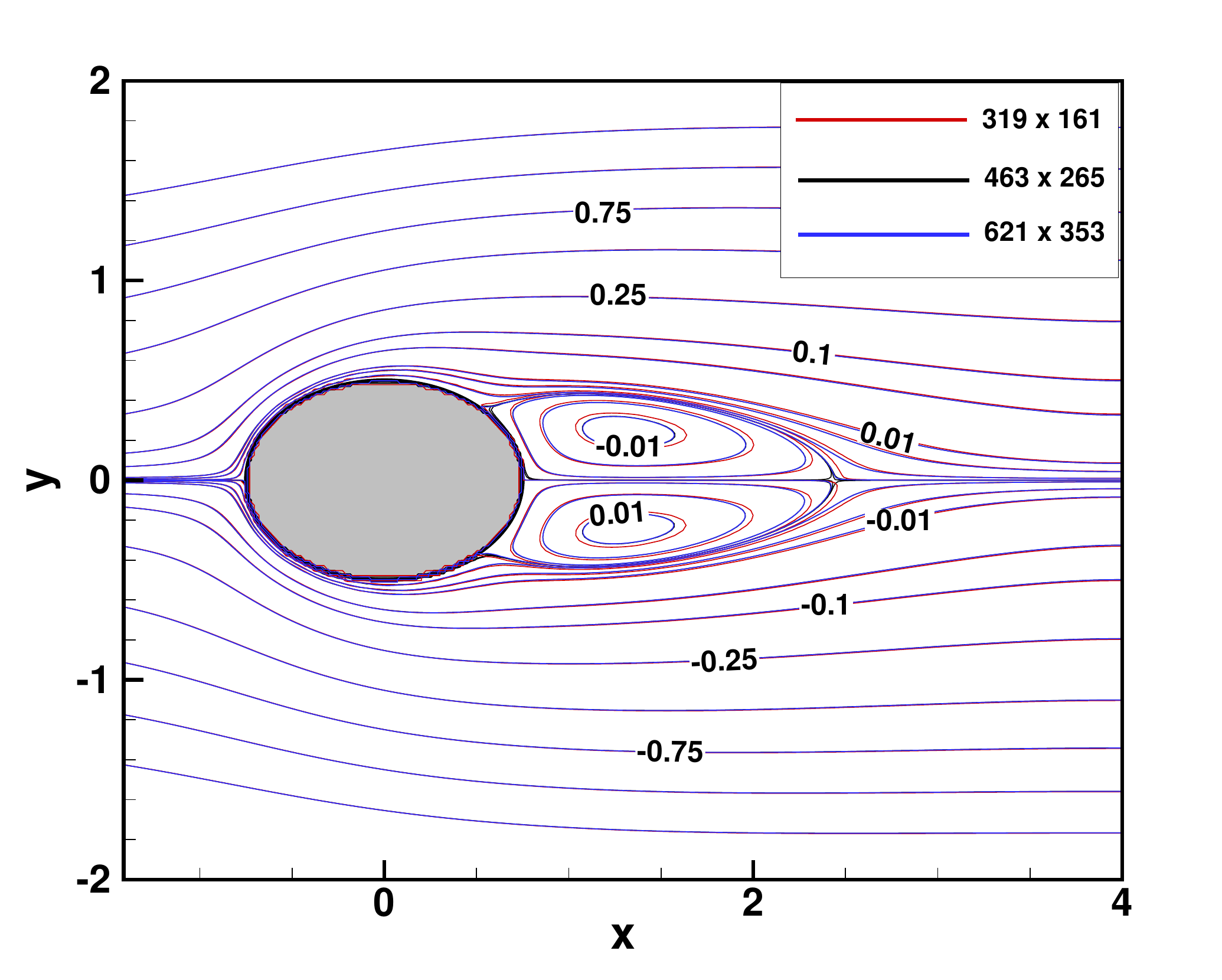} 
		\caption{Streamlines}
	\end{subfigure}\hfil 
	\begin{subfigure}{0.5\textwidth}
		\includegraphics[width=\linewidth]{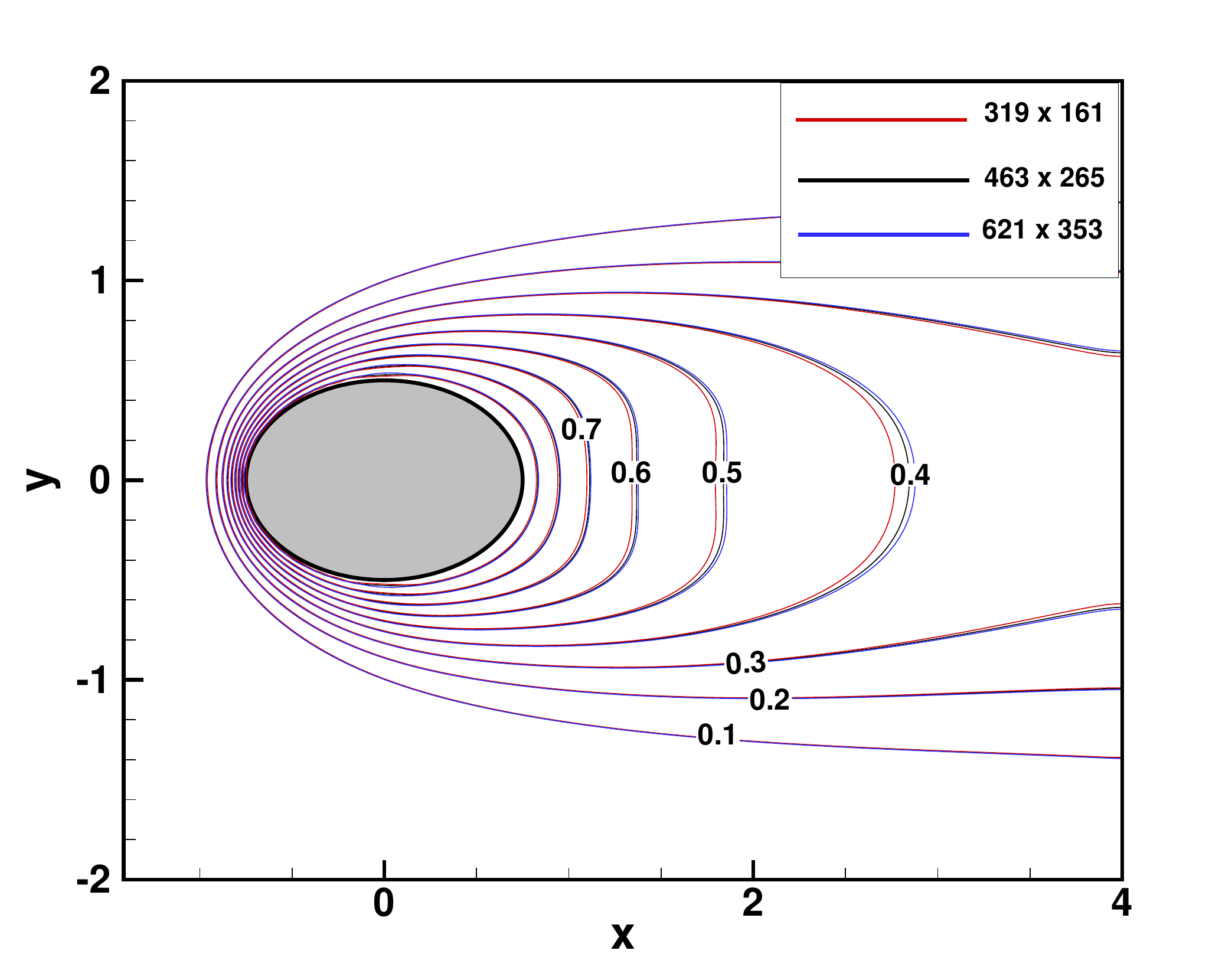} 
		\caption{Isotherms}
	\end{subfigure}\hfil 
	\caption{\small{Steady state (a) streamlines and (b) isotherms for $\theta=0^{\degree}$ and $Re=40$, at three different grids viz. $319 \times 161$, $463 \times 265$, and $621 \times 353$.}}
	\label{Fig:grid-ind}
\end{figure}

\section{Results and discussion}\label{sec:results}
In this section, we document our results from extensive computations that we have carried out and simultaneously, the analysis of the same. For the problem considered in this work, there are two parameters viz. angle of attack ($\theta$) and Reynolds number ($Re$), which are crucial for the study.  Their values have been varied and the subsequent effects on the flow and heat transfer characteristics have been investigated thoroughly. While the angle of attack is varied in increments of $15^{\degree}$ in the range $0^{\degree} \leq \theta < 180^{\degree}$, the Reynolds number is varied in increments of $10$ in the range $10 \leq Re \leq 130$.  Firstly, we present the steady state results, and then the transient ones. 
\subsection{Steady state}\label{sec:steady}
Computations are carried out for $0^{\degree} \leq \theta < 180^0$, and $10 \leq Re \leq Re_{c}$, where $Re_{c}$ denotes the critical $Re$ at which the flow transitions from steady to unsteady state. As the angle of attack changes, the value of $Re_{c}$ also changes.

Figures \ref{Fig:psi-steady-zerodeg} and \ref{Fig:isotherms-steady-zerodeg} show the streamlines and isotherms respectively for $\theta = 0^{\degree}$. $Re_{c}$ for $\theta = 0^{\degree}$ is in the range $59 \leq Re < 60$. For all the $Re$'s considered in this range, the steady recirculation bubble, consisting of two counter-rotating vortices that elongate as $Re$ is increased, remains symmetric about the $x$-axis (figures \ref{Fig:psi-steady-zerodeg} (a)-(f)). The upper vortex rotates in clockwise direction, whereas the lower vortex rotates in counter-clockwise direction. The isotherms are more evenly spread out at $Re = 10$ (figure \ref{Fig:isotherms-steady-zerodeg} (a)) denoting negligible convective heat transfer. As $Re$ increases, the isotherms become more clustered both upstream and downstream of the cylinder, and one can observe the formation of thermal boundary layer on the surface of the cylinder which becomes thinner with increasing $Re$ (figures \ref{Fig:isotherms-steady-zerodeg} (b) - (f)). The thinning of the thermal boundary layer is most prominent near the leading edge of the cylinder. Finally, a slight distortion in the isotherms can be seen when $Re=40$ (figure \ref{Fig:isotherms-steady-zerodeg} (d)), which increases as $Re$ increases (figures \ref{Fig:isotherms-steady-zerodeg} (e) - (f)). Note that the isotherms also appear symmetric about the $x$-axis since the flow is symmetric about the line $y=0$ for $\theta = 0^{\degree}$.
\begin{figure}[H]
	\centering
	\begin{subfigure}{0.3\textwidth}
		\includegraphics[width=\linewidth]{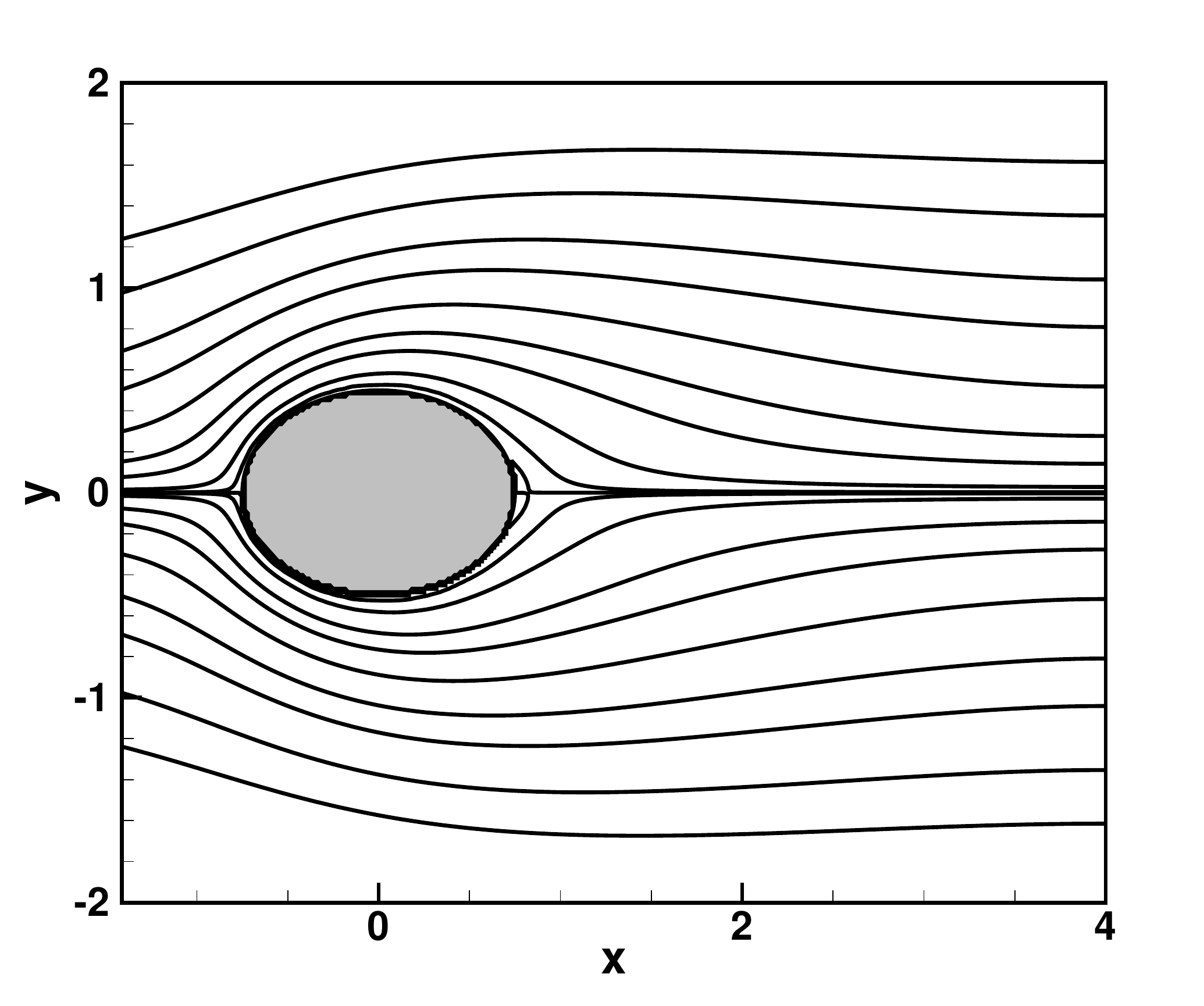} 
		\caption{$Re=10$}
	\end{subfigure}\hfil 
	\begin{subfigure}{0.3\textwidth}
		\includegraphics[width=\linewidth]{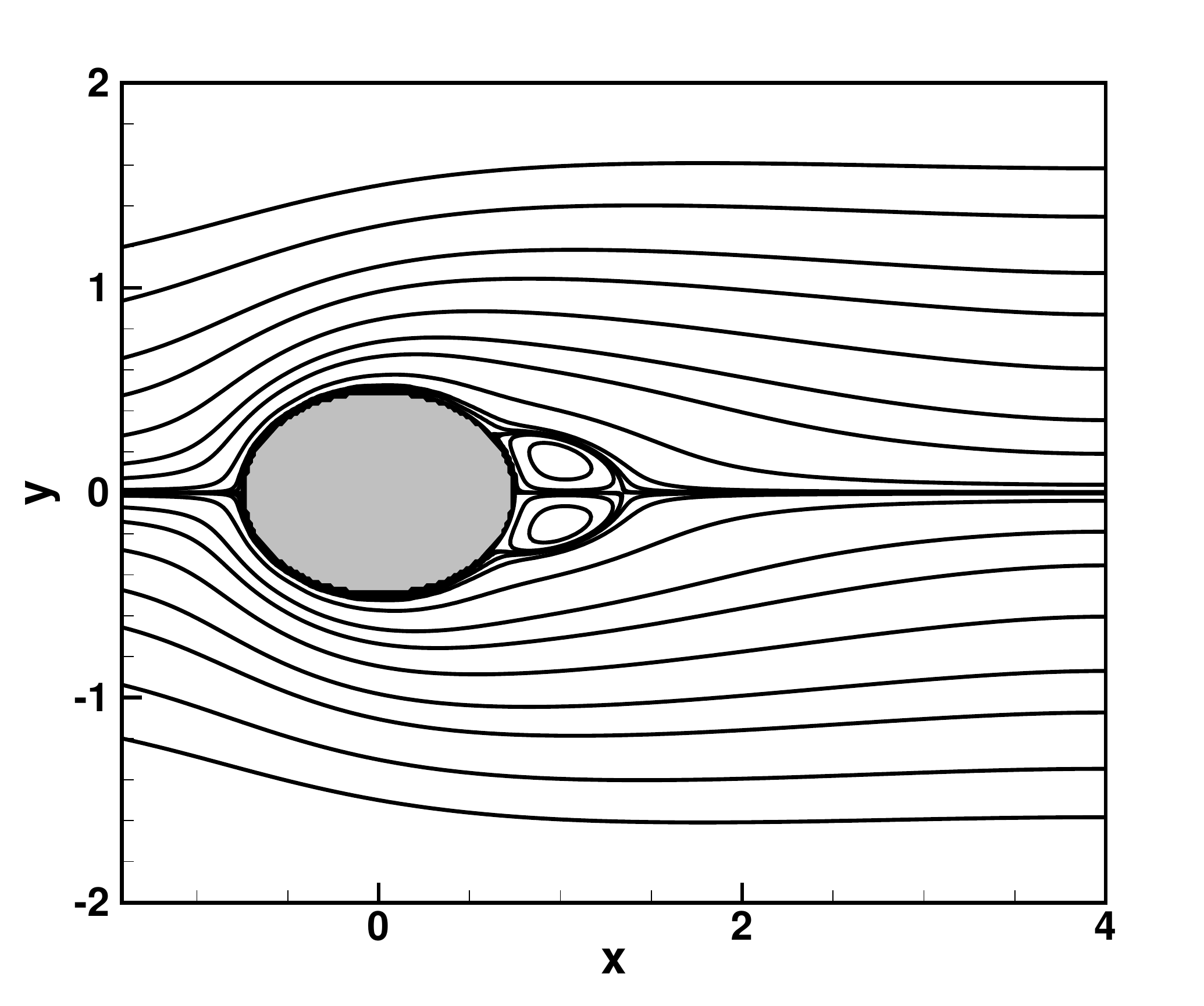} 
		\caption{$Re=20$}
	\end{subfigure}\hfil 
	\begin{subfigure}{0.3\textwidth}
		\includegraphics[width=\linewidth]{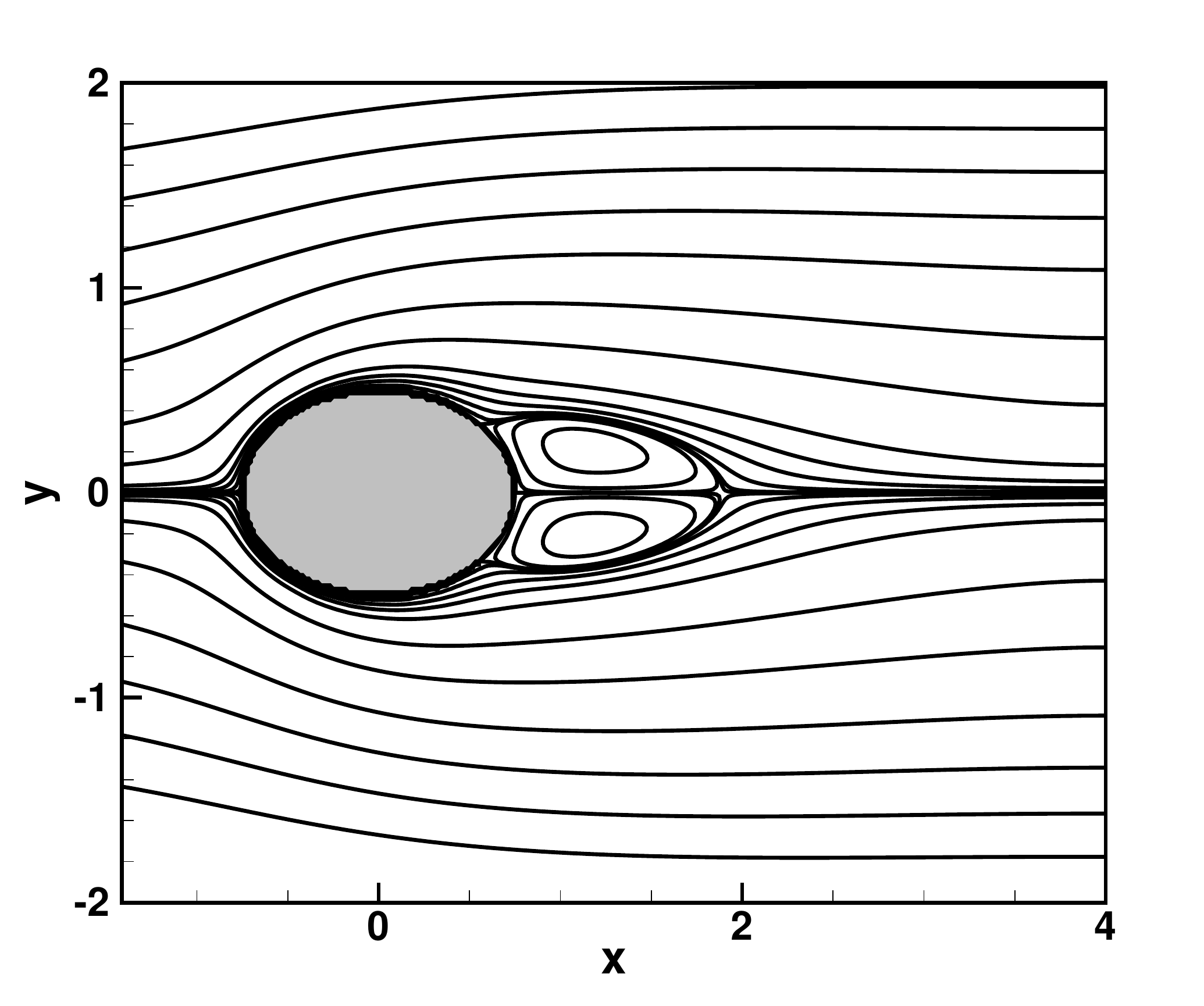} 
		\caption{$Re=30$}
	\end{subfigure}\hfil 
	\begin{subfigure}{0.3\textwidth}
		\includegraphics[width=\linewidth]{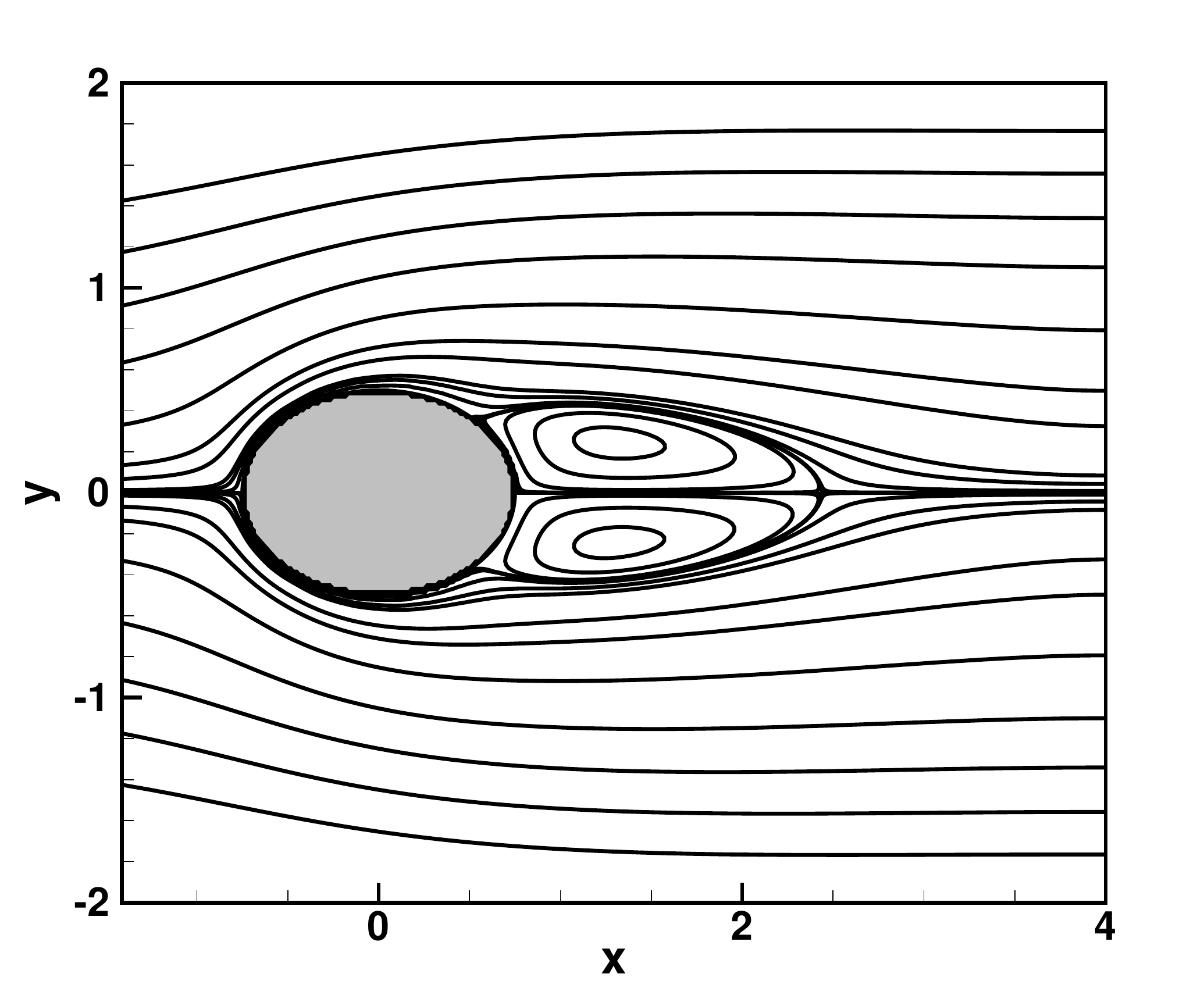} 
		\caption{$Re=40$}
	\end{subfigure}\hfil 
	\begin{subfigure}{0.3\textwidth}
		\includegraphics[width=\linewidth]{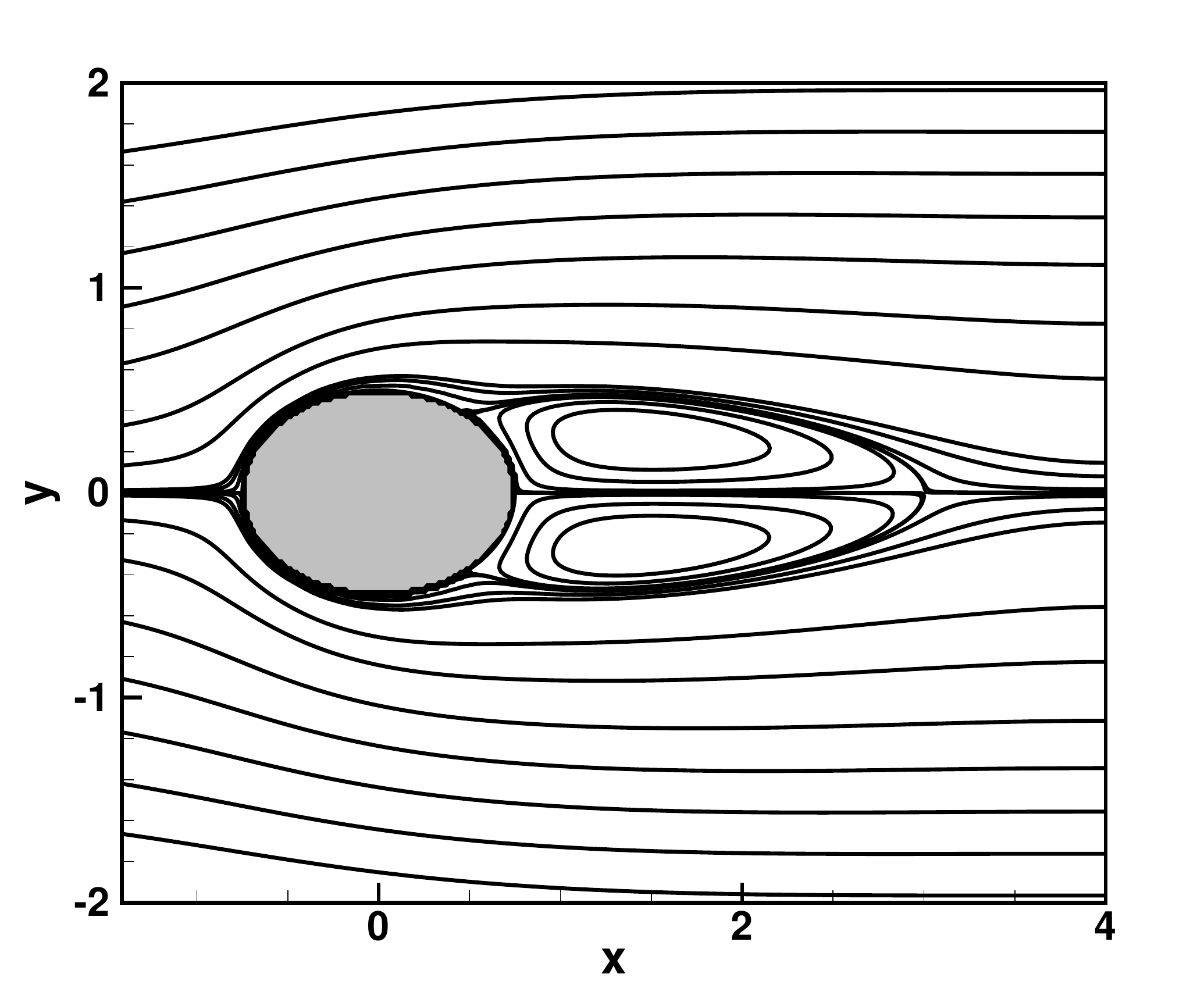} 
		\caption{$Re=50$}		
	\end{subfigure}\hfil 
\begin{subfigure}{0.3\textwidth}
	\includegraphics[width=\linewidth]{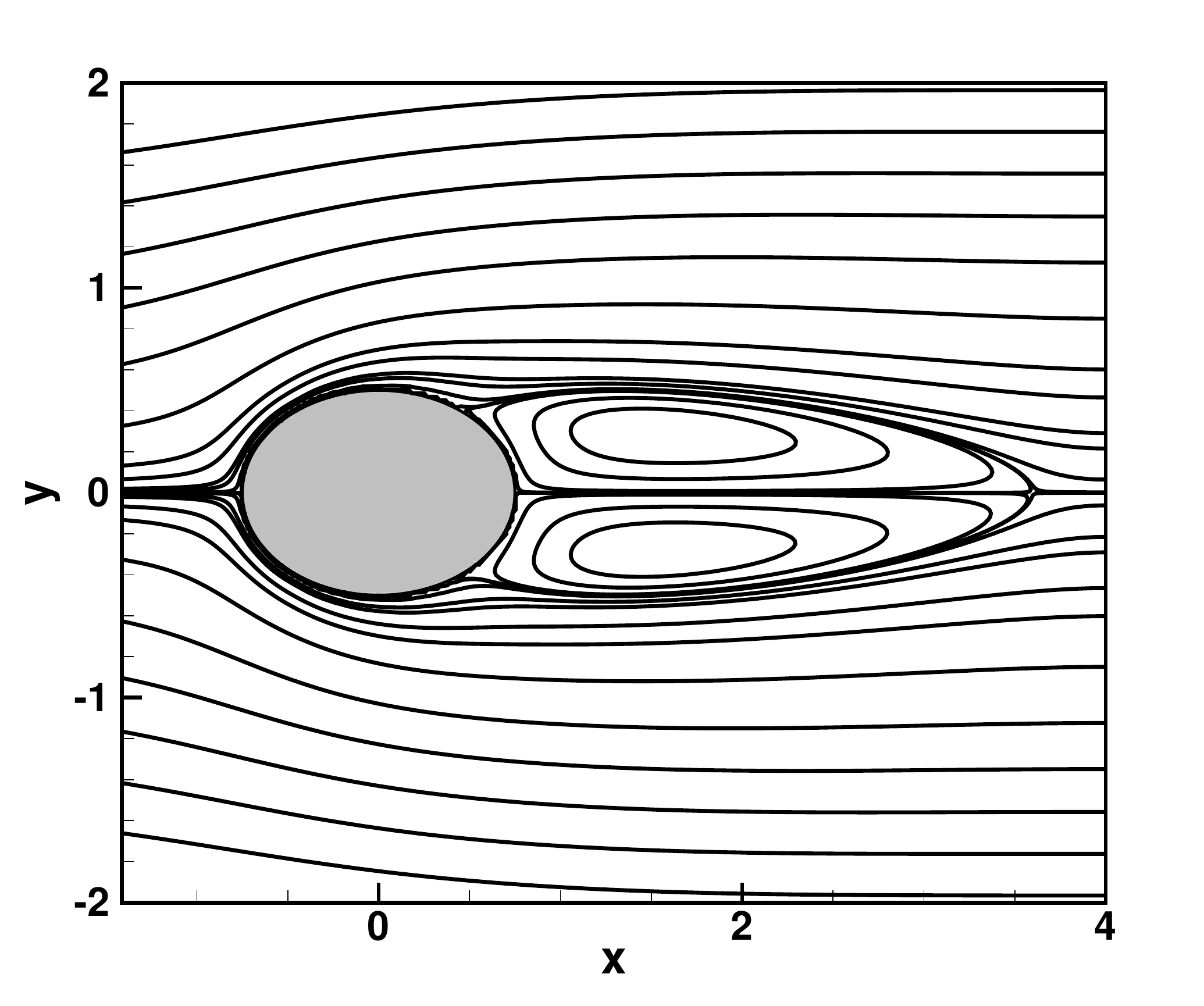} 
	\caption{$Re=59$}		
\end{subfigure}\hfil 
\caption{\small{Steady state streamlines for $\theta=0^{\degree}$ and (a)$Re=10$, (b)$Re=20$, (c)$Re=30$, (d)$Re=40$, and (e)$Re=50$, and (f) $Re=59$.}}
\label{Fig:psi-steady-zerodeg}
\end{figure}

\begin{figure}[H]
	\centering
	\begin{subfigure}{0.3\textwidth}
		\includegraphics[width=\linewidth]{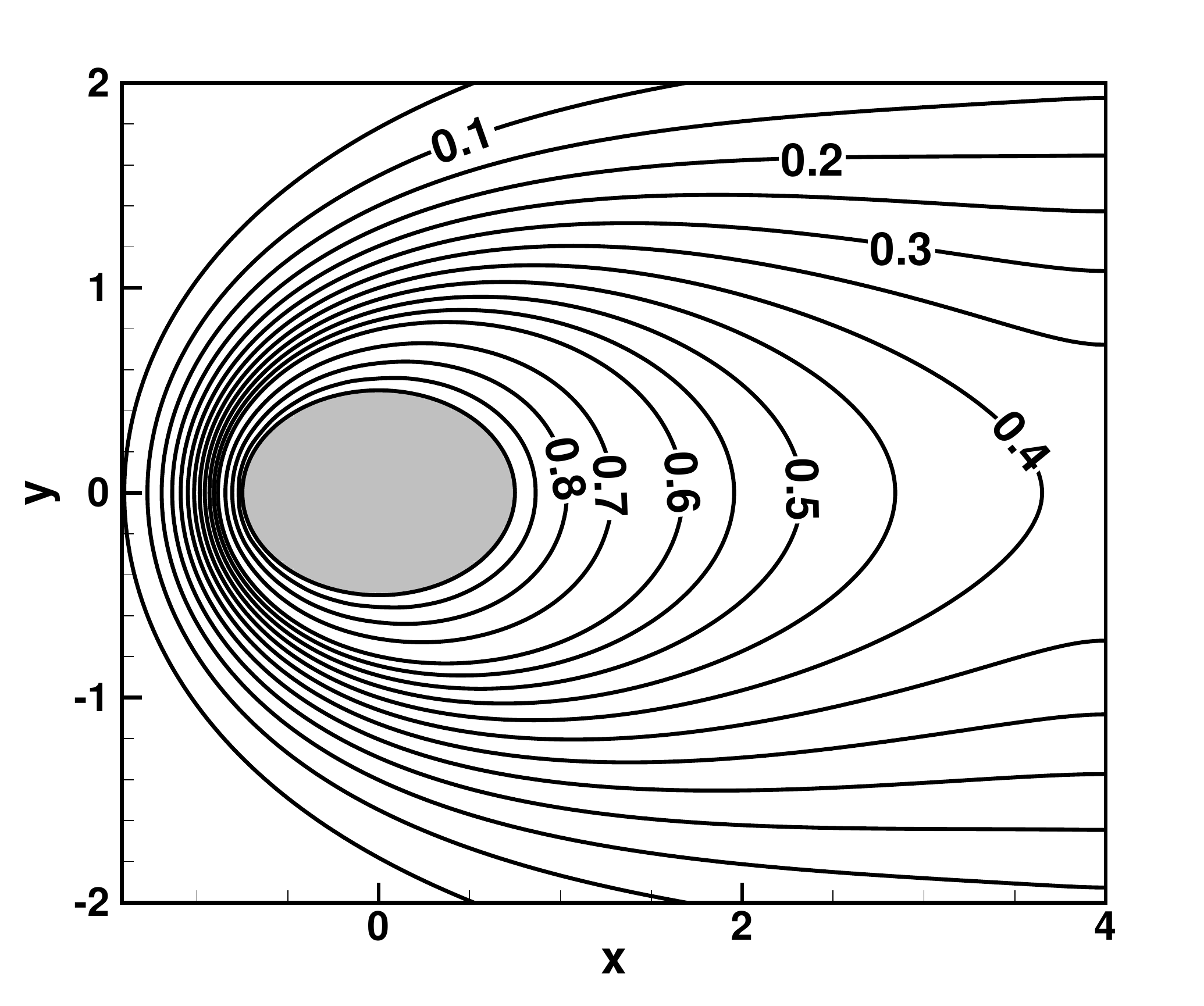} 
		\caption{}
	\end{subfigure}\hfil
	\begin{subfigure}{0.3\textwidth}
		\includegraphics[width=\linewidth]{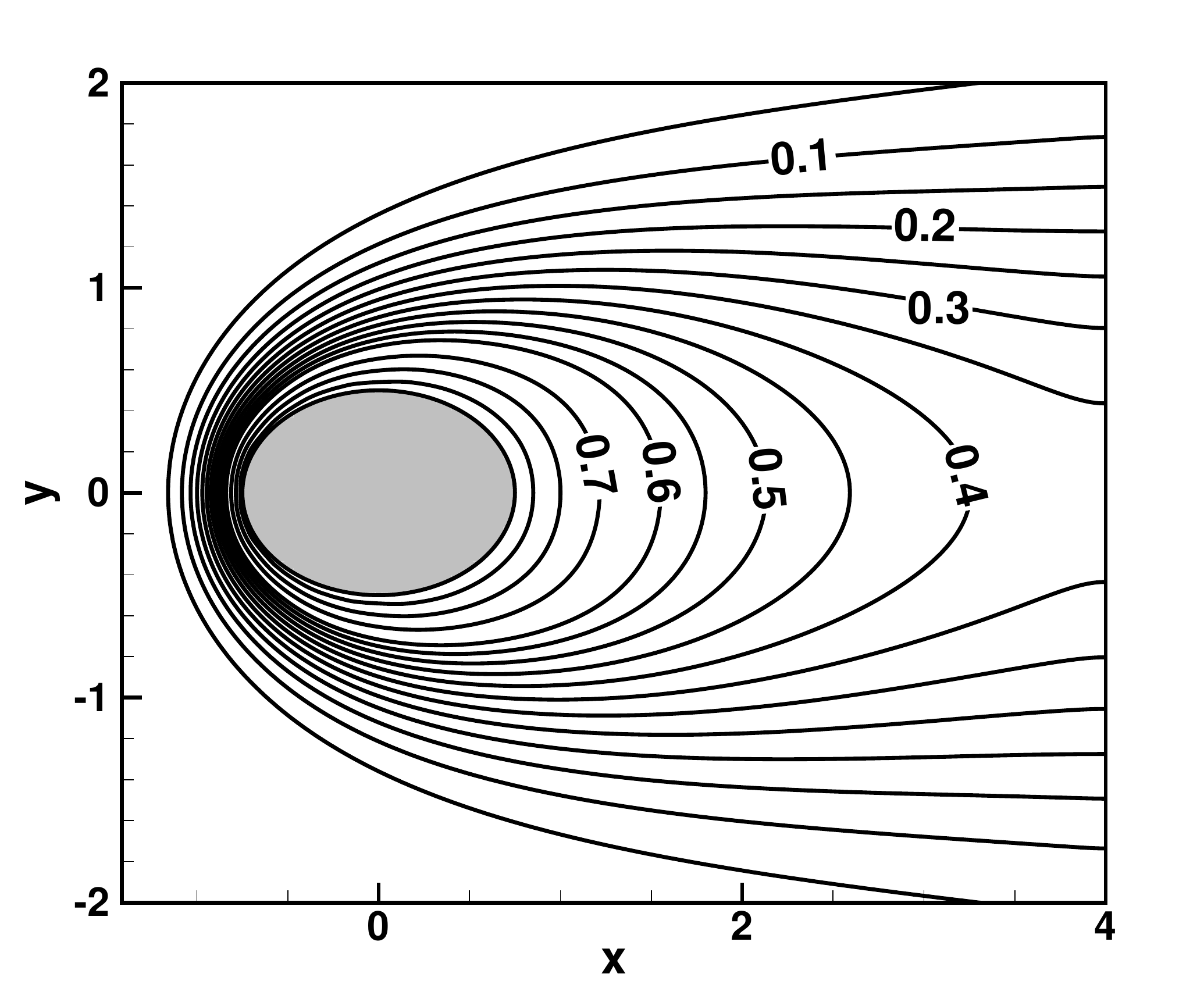} 
		\caption{}
	\end{subfigure}\hfil
	\begin{subfigure}{0.3\textwidth}
		\includegraphics[width=\linewidth]{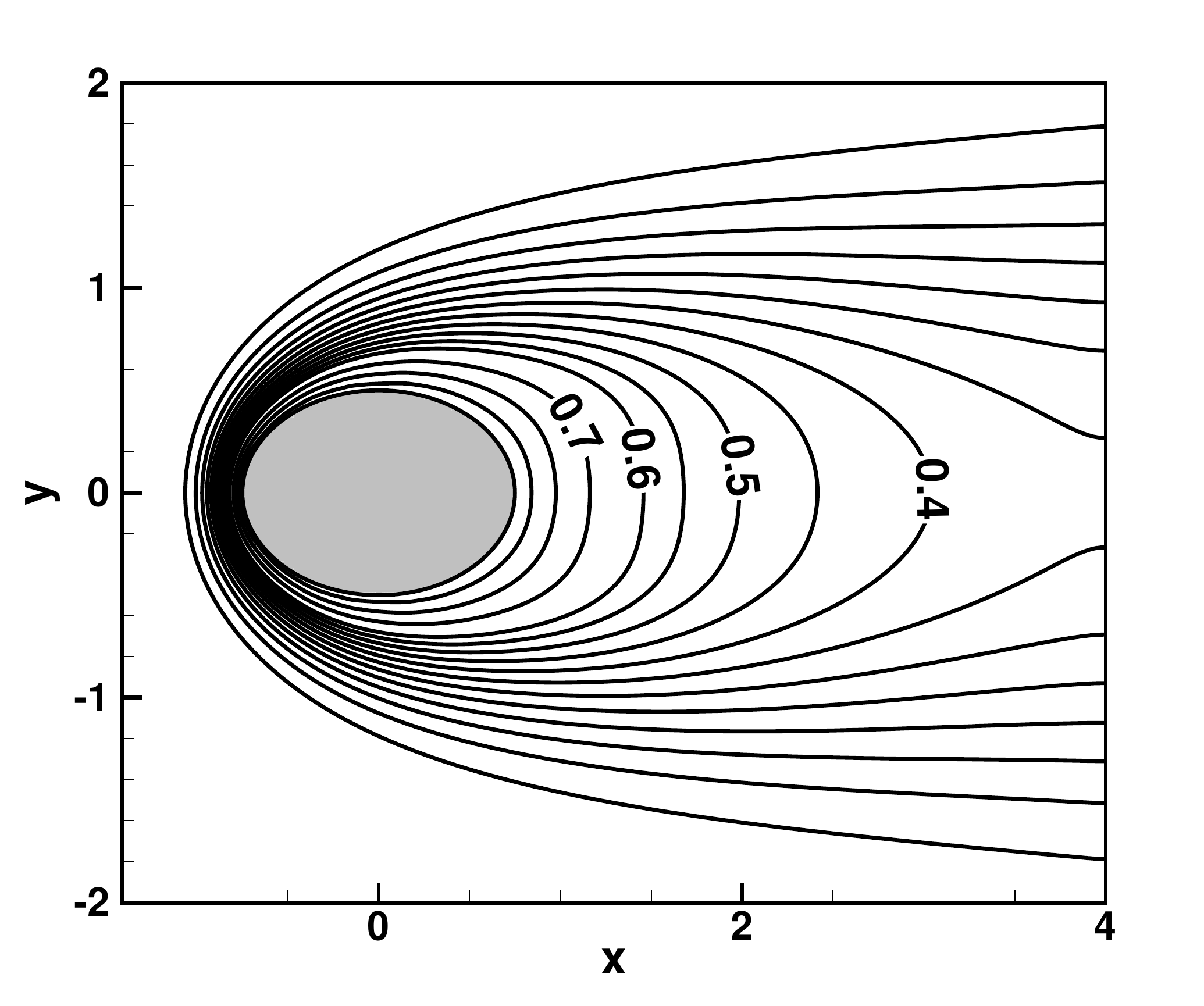} 
		\caption{}
	\end{subfigure}\hfil
	\begin{subfigure}{0.3\textwidth}
		\includegraphics[width=\linewidth]{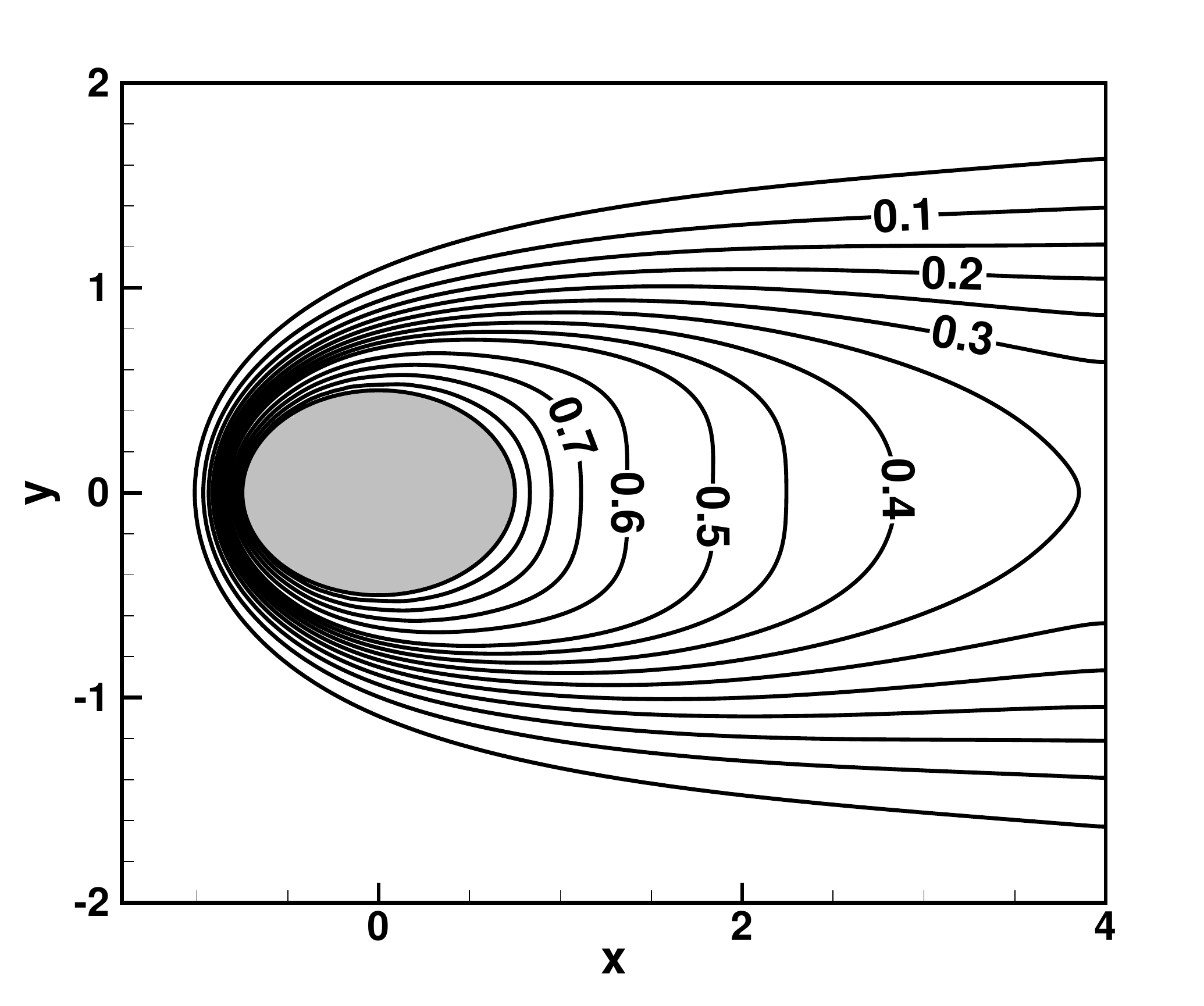} 
		\caption{}
	\end{subfigure}\hfil
	\begin{subfigure}{0.3\textwidth}
		\includegraphics[width=\linewidth]{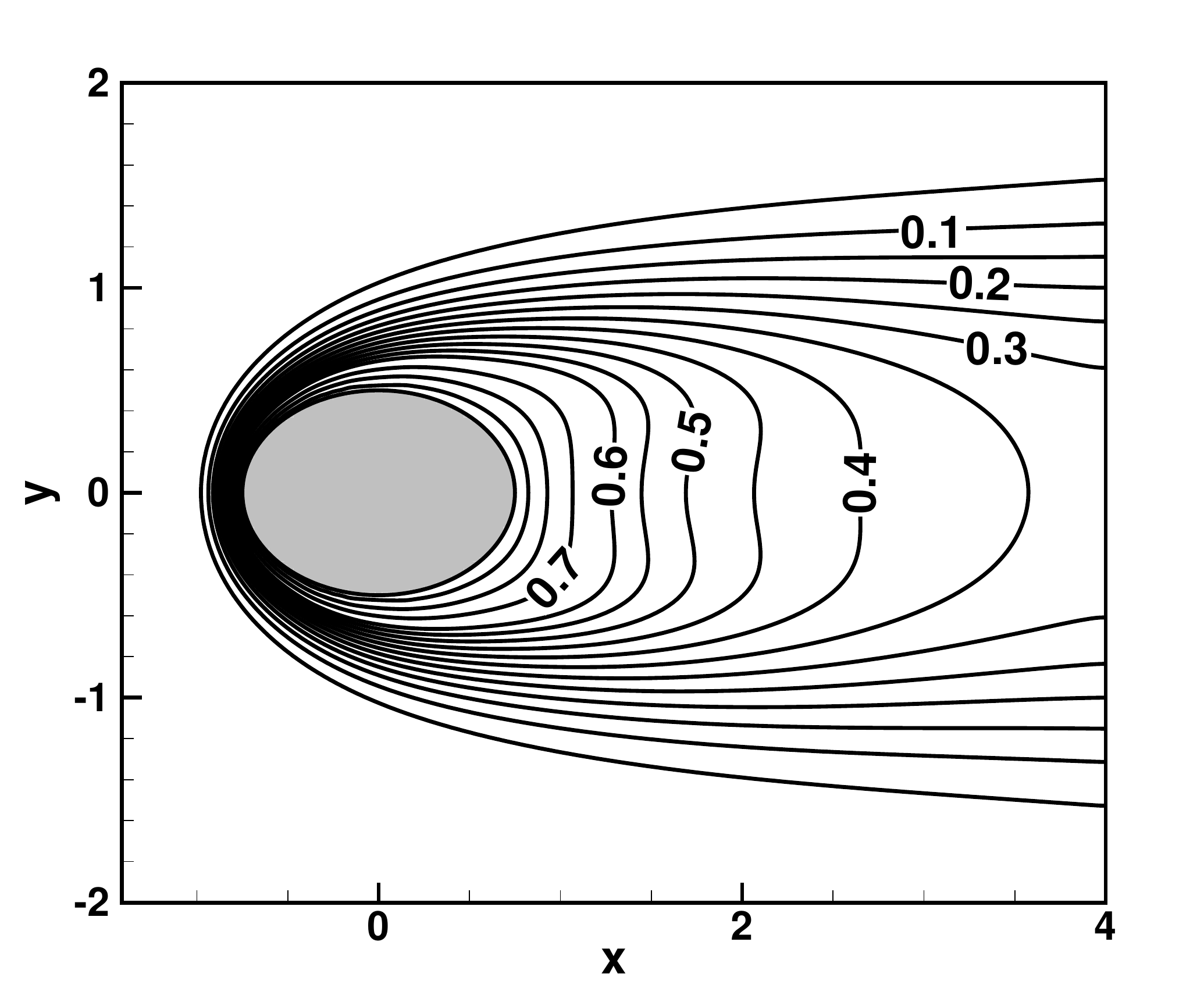} 
		\caption{}
	\end{subfigure}\hfil 
	\begin{subfigure}{0.3\textwidth}
		\includegraphics[width=\linewidth]{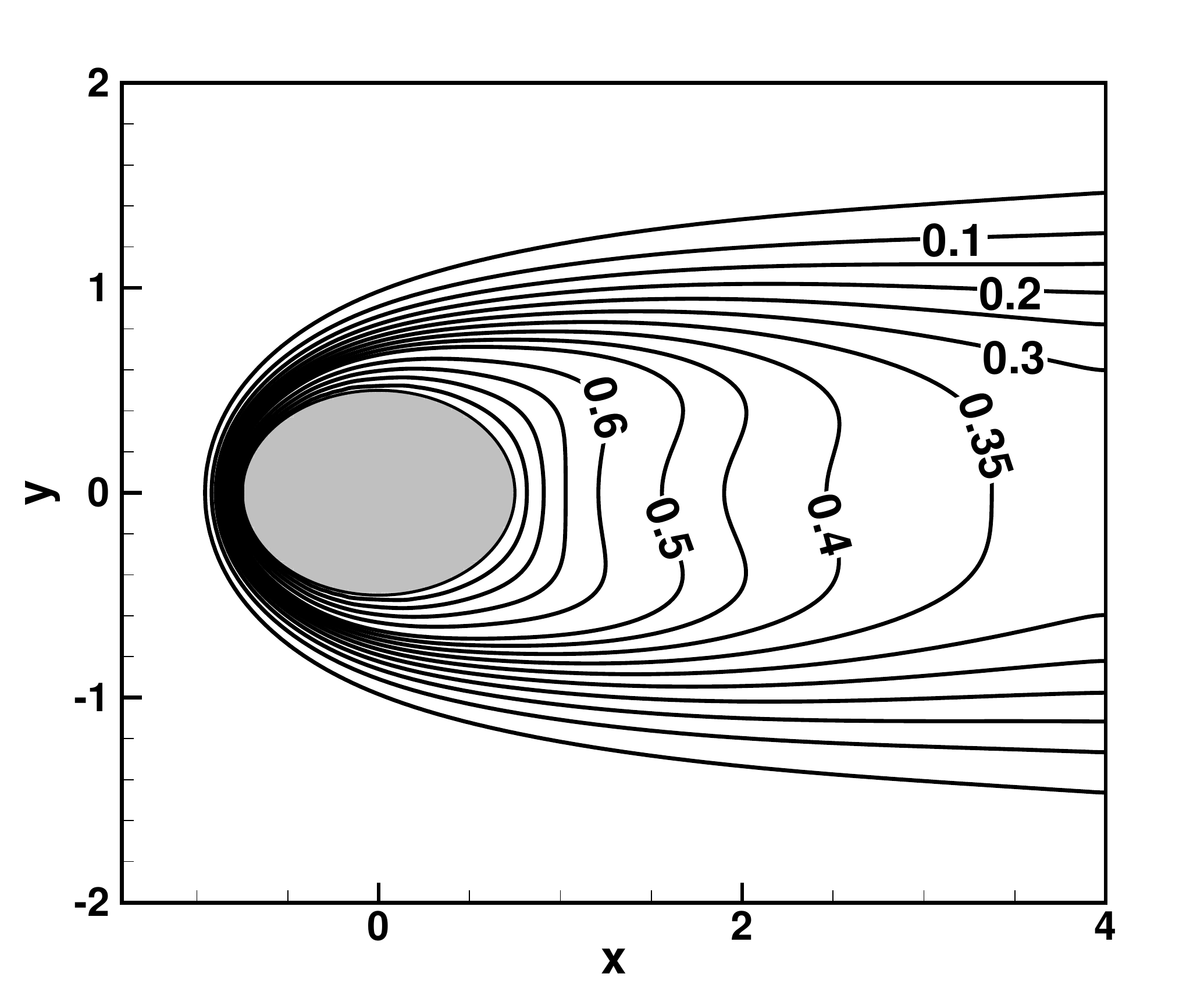} 
		\caption{}
	\end{subfigure}\hfil 
	\caption{\small{Steady state isotherms for $\theta=0^{\degree}$ and (a)$Re=10$, (b)$Re=20$, (c)$Re=30$, (d)$Re=40$, (e)$Re=50$, and (f) $Re=59$.}}
	\label{Fig:isotherms-steady-zerodeg}
\end{figure}

Figures \ref{Fig:psi-steady-15deg} and \ref{Fig:T-steady-15deg} show the streamlines and isotherms respectively for $\theta = 15^{\degree}$. $Re_{c}$ for $\theta = 15^{\degree}$ is in the range $59 \leq Re < 60$. Note that as the cylinder now occupies a position asymmetric to the incoming flow, the flow in the wake of the cylinder also loses its symmetry, which is reflected in the streamlines and isotherms. At $Re=10$ (figure \ref{Fig:psi-steady-15deg} (a)), flow separation does not happen and the tiny recirculation bubble seen for $\theta=0^{\degree}$ (figure \ref{Fig:psi-steady-zerodeg} (a)) vanishes, although a slight bulge in the streamlines can be seen at the rear end of the cylinder. At $Re=20$ (figure \ref{Fig:psi-steady-15deg} (b)), flow separates from the surface of the cylinder and a clockwise rotating recirculation region appears attached on the upper part of the cylinder. A counter-clockwise rotating vortex appears as well on the lower part of the cylinder at $Re=30$ (figure \ref{Fig:psi-steady-15deg} (c)). This vortex, however, remains detached from the cylinder surface. Both vortices grow in size and strength as the $Re$ increases (figures \ref{Fig:psi-steady-15deg} (d) - (f)). Due to the asymmetric nature of the flow w.r.t the cylinder, these vortices are also of unequal strengths and sizes. This asymmetry is reflected in the isotherms as well (figures \ref{Fig:T-steady-15deg} (a) - (f)). A better insight into the nature of heat transfer characteristics can be gleaned from the plot of surface Nusselt number, which is presented in a subsequent section.
\begin{figure}[H]
	\centering
	\begin{subfigure}{0.3\textwidth}
		\includegraphics[width=\linewidth]{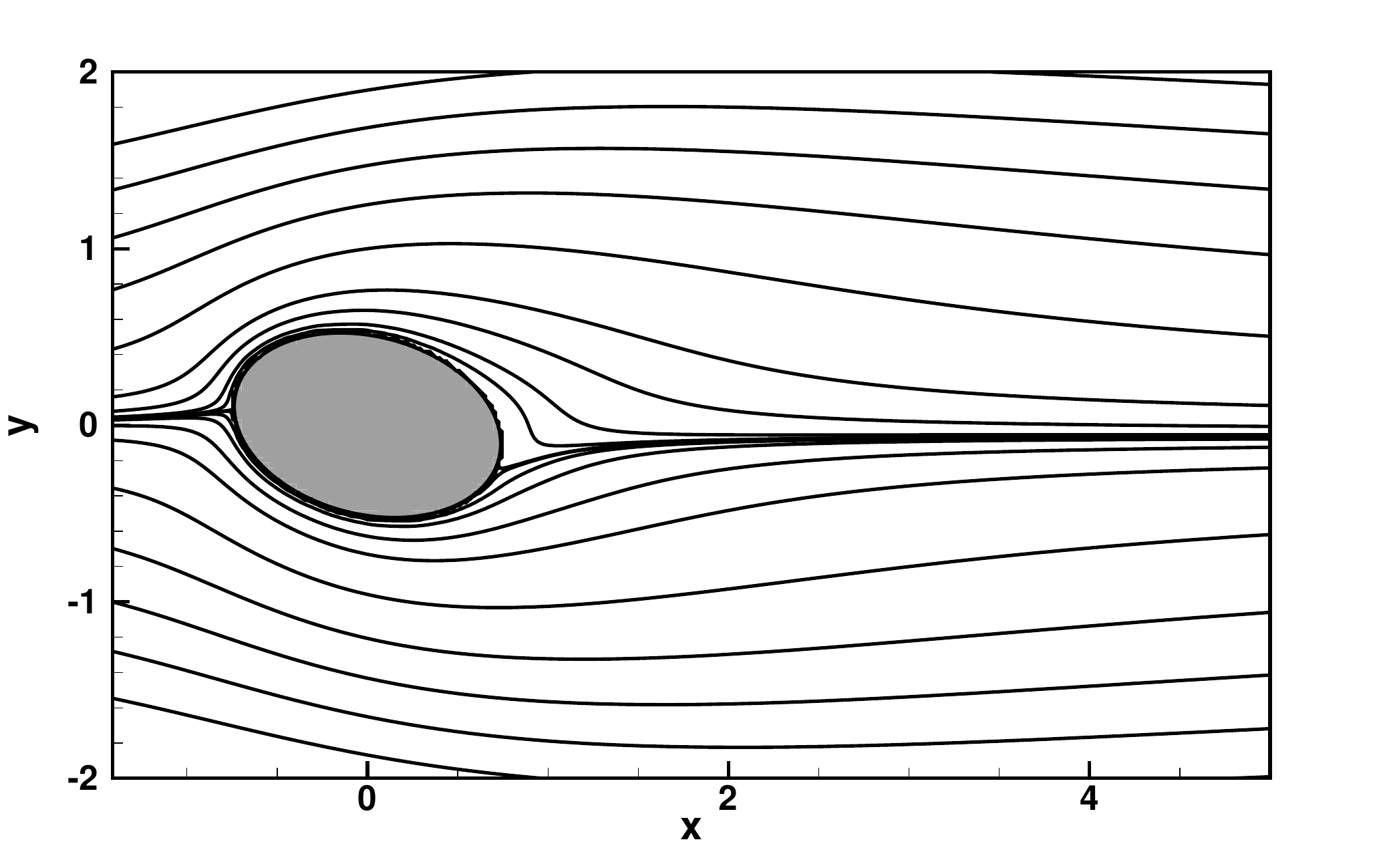} 
		\caption{$Re=10$}
	\end{subfigure}\hfil 
	\begin{subfigure}{0.3\textwidth}
		\includegraphics[width=\linewidth]{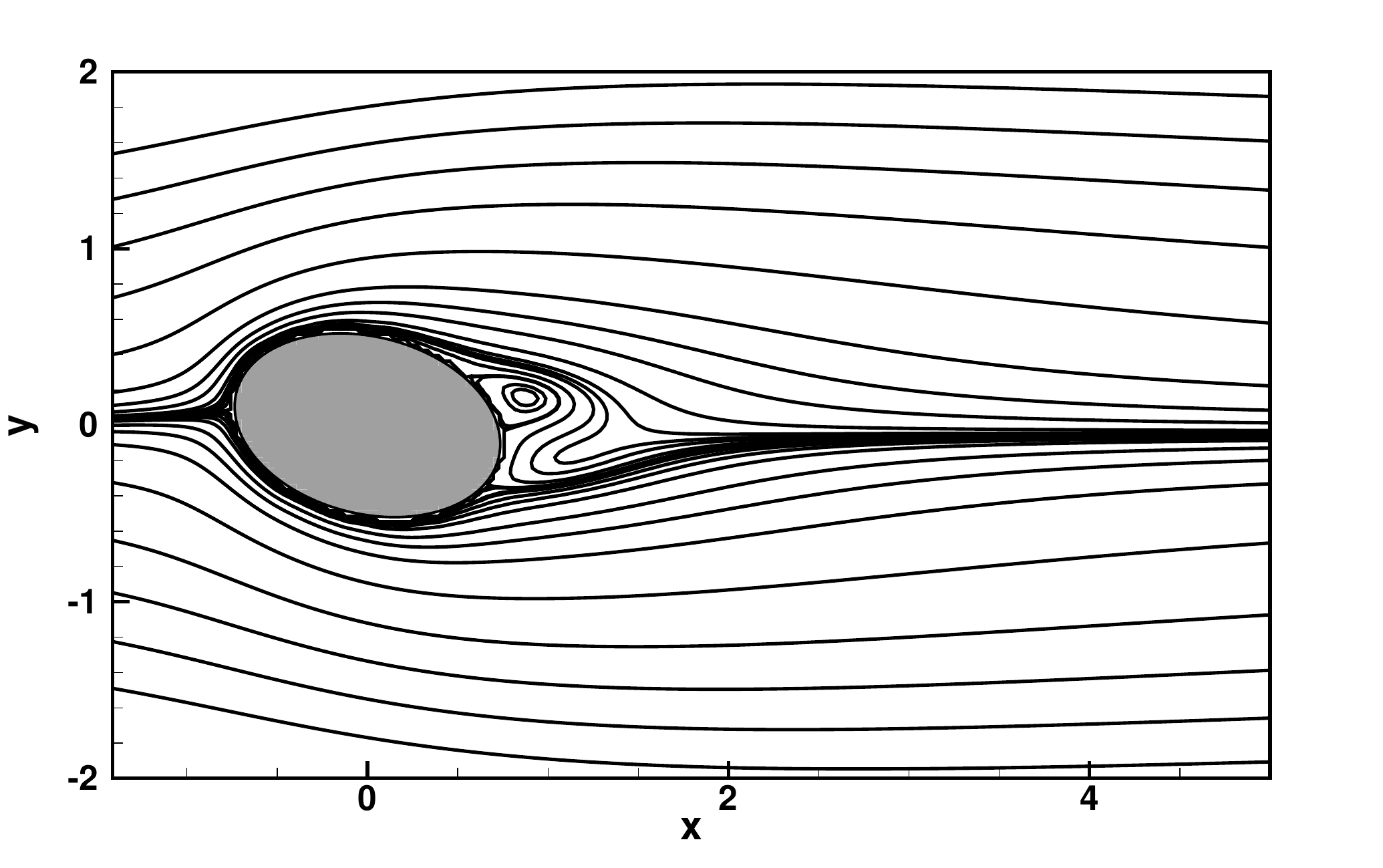} 
		\caption{$Re=20$}
	\end{subfigure}\hfil 
	\begin{subfigure}{0.3\textwidth}
		\includegraphics[width=\linewidth]{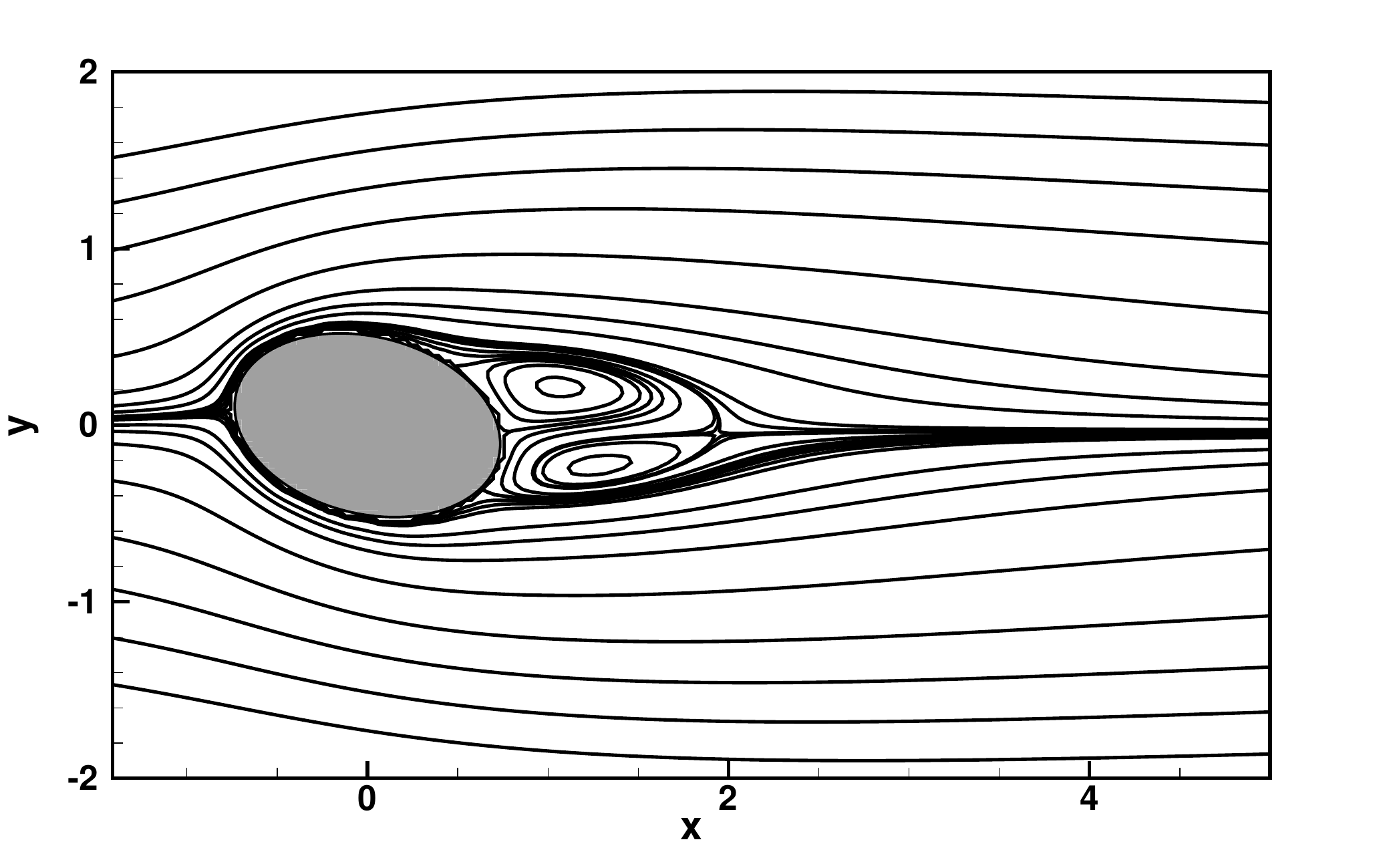} 
		\caption{$Re=30$}
	\end{subfigure}\hfil 
	\begin{subfigure}{0.3\textwidth}
		\includegraphics[width=\linewidth]{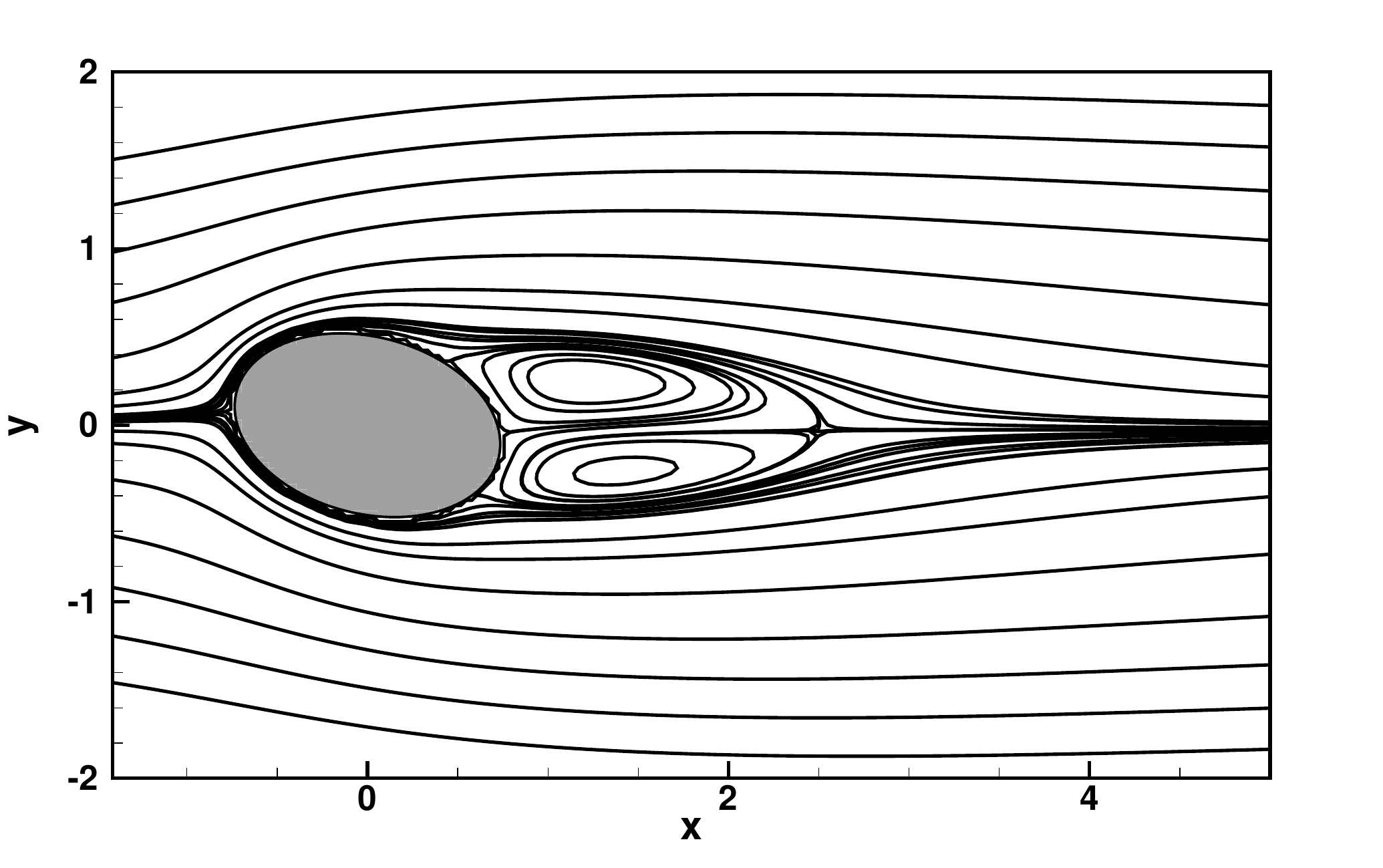} 
		\caption{$Re=40$}
	\end{subfigure}\hfil 
	\begin{subfigure}{0.3\textwidth}
		\includegraphics[width=\linewidth]{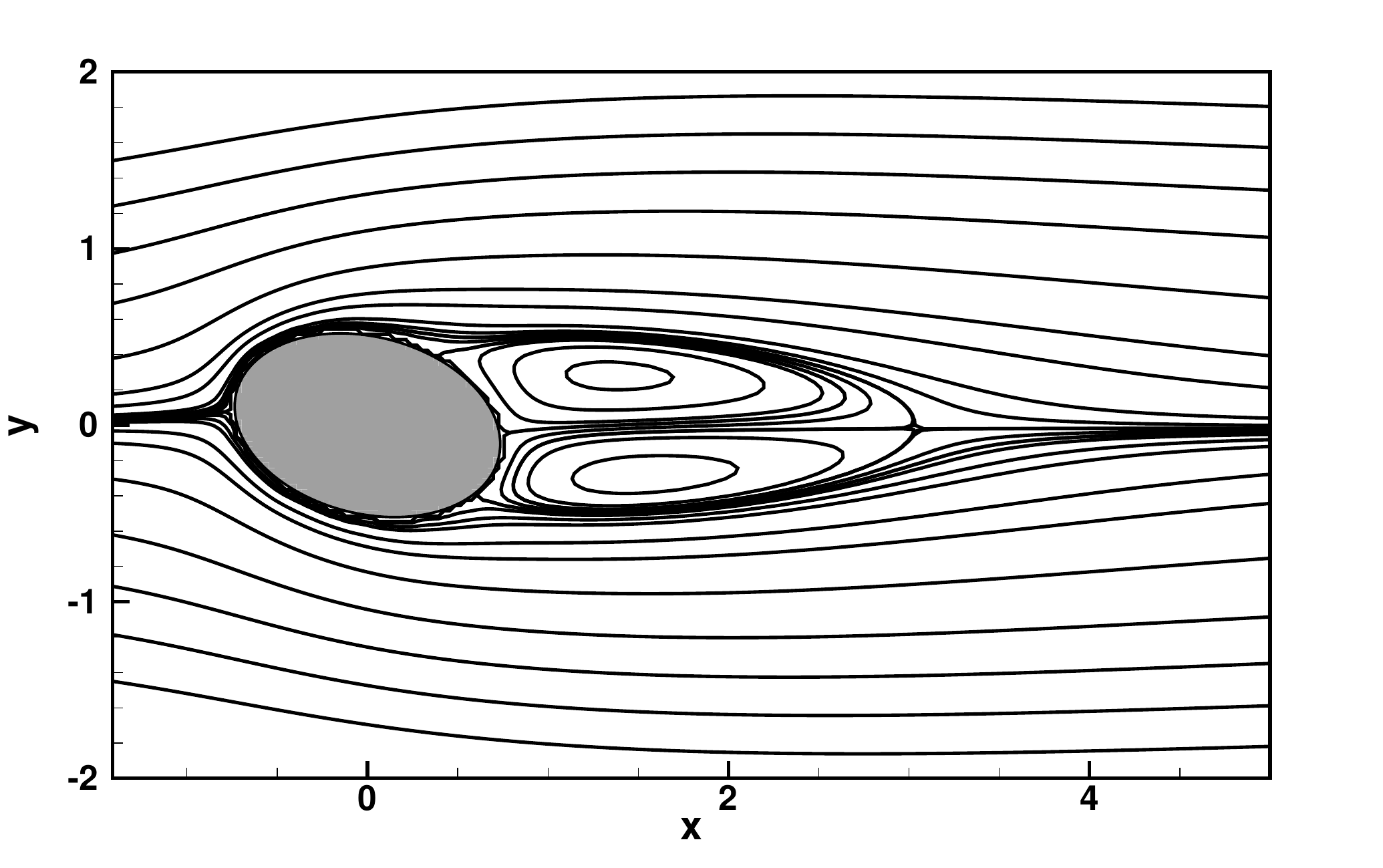} 
		\caption{$Re=50$}		
	\end{subfigure}\hfil 
	\begin{subfigure}{0.3\textwidth}
		\includegraphics[width=\linewidth]{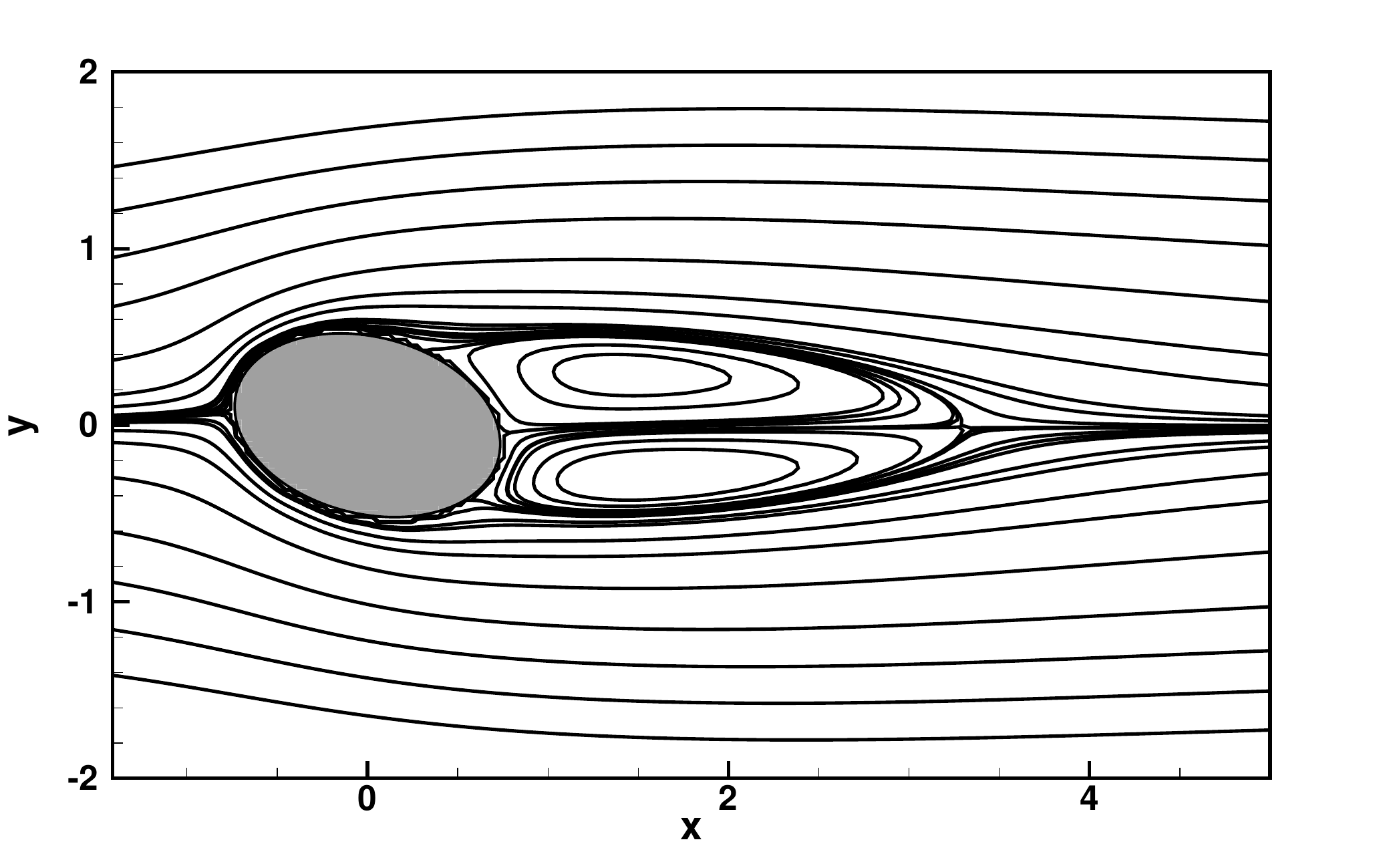} 
		\caption{$Re=59$}		
	\end{subfigure}\hfil 
	\caption{\small{Steady state streamlines for $\theta=15^{\degree}$ and (a)$Re=10$, (b)$Re=20$, (c)$Re=30$, (d)$Re=40$, (e)$Re=50$, and (f) $Re=59$.}}
	\label{Fig:psi-steady-15deg}
\end{figure}

\begin{figure}[H]
	\centering
	\begin{subfigure}{0.3\textwidth}
		\includegraphics[width=\linewidth]{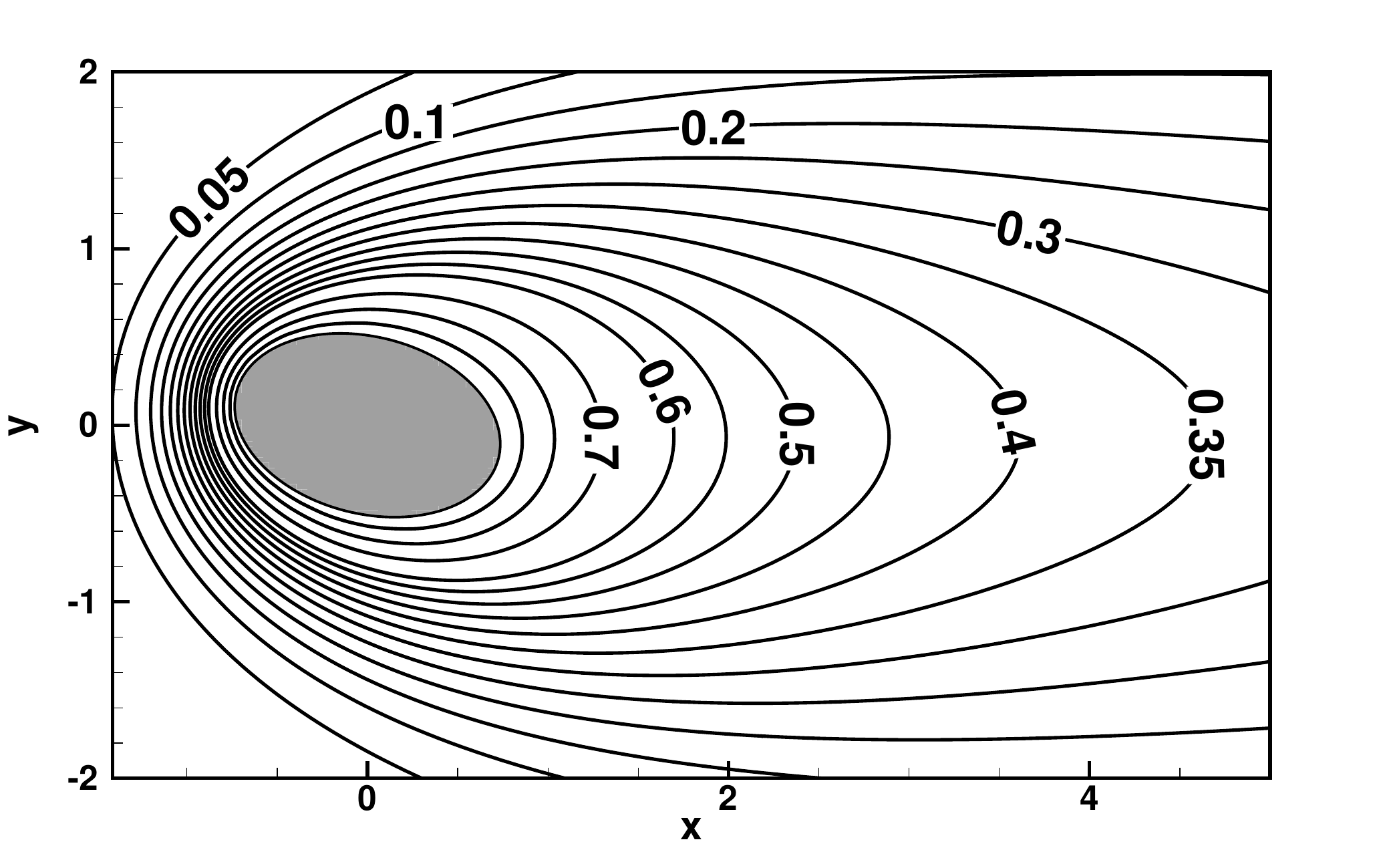} 
		\caption{$Re=10$}
	\end{subfigure}\hfil 
	\begin{subfigure}{0.3\textwidth}
		\includegraphics[width=\linewidth]{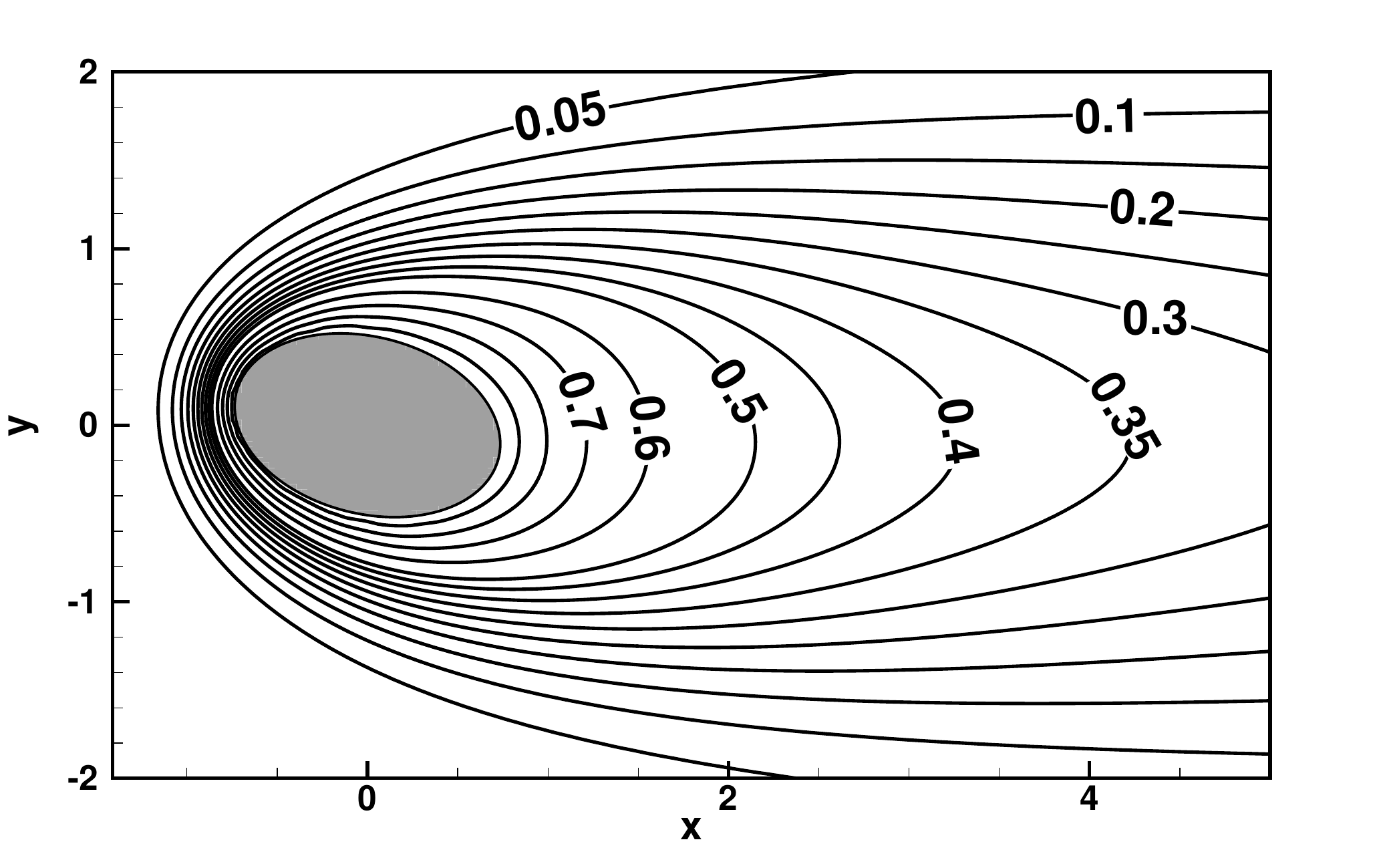} 
		\caption{$Re=20$}
	\end{subfigure}\hfil 
	\begin{subfigure}{0.3\textwidth}
		\includegraphics[width=\linewidth]{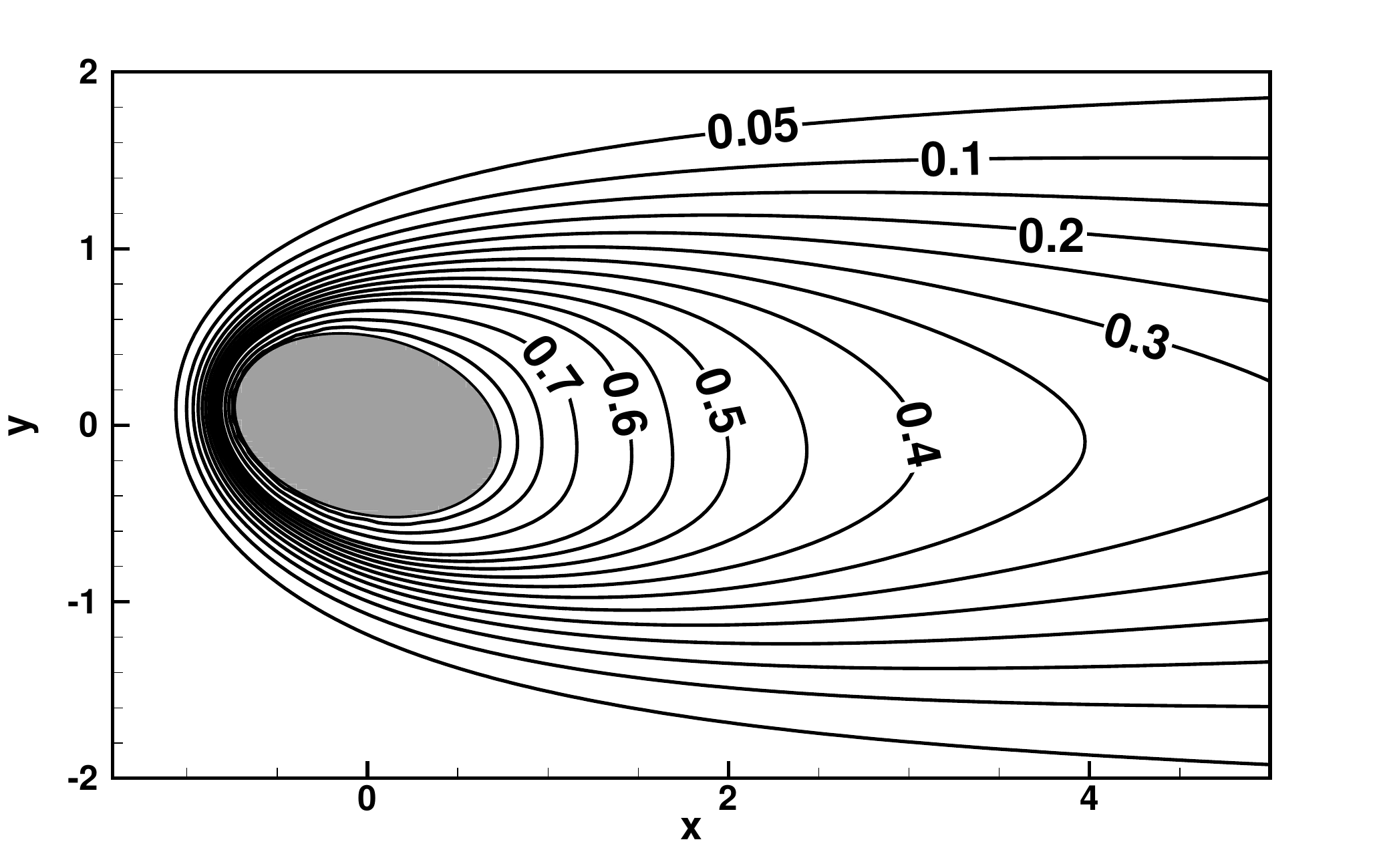} 
		\caption{$Re=30$}
	\end{subfigure}\hfil 
	\begin{subfigure}{0.3\textwidth}
		\includegraphics[width=\linewidth]{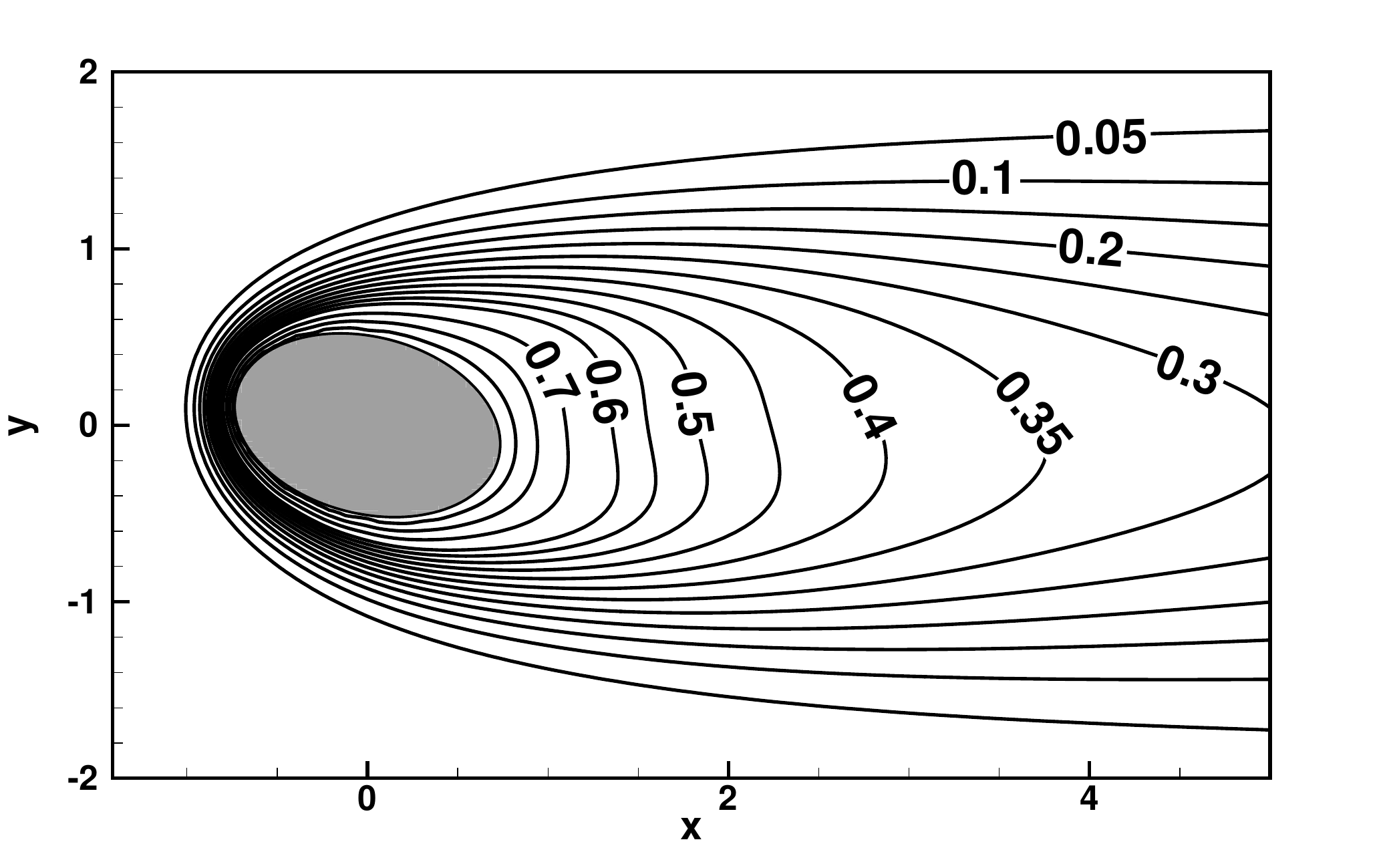} 
		\caption{$Re=40$}
	\end{subfigure}\hfil 
	\begin{subfigure}{0.3\textwidth}
		\includegraphics[width=\linewidth]{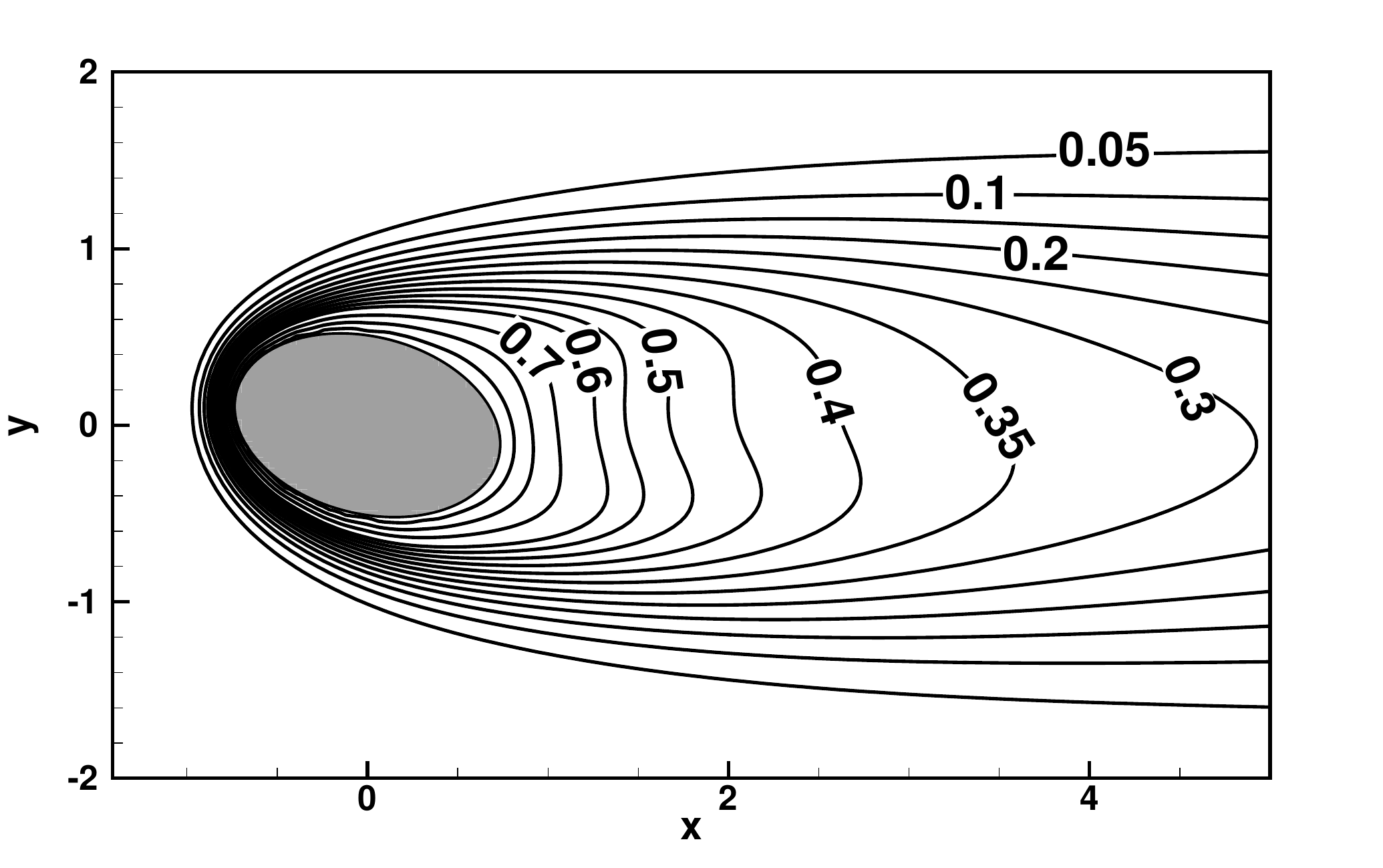} 
		\caption{$Re=50$}		
	\end{subfigure}\hfil 
	\begin{subfigure}{0.3\textwidth}
		\includegraphics[width=\linewidth]{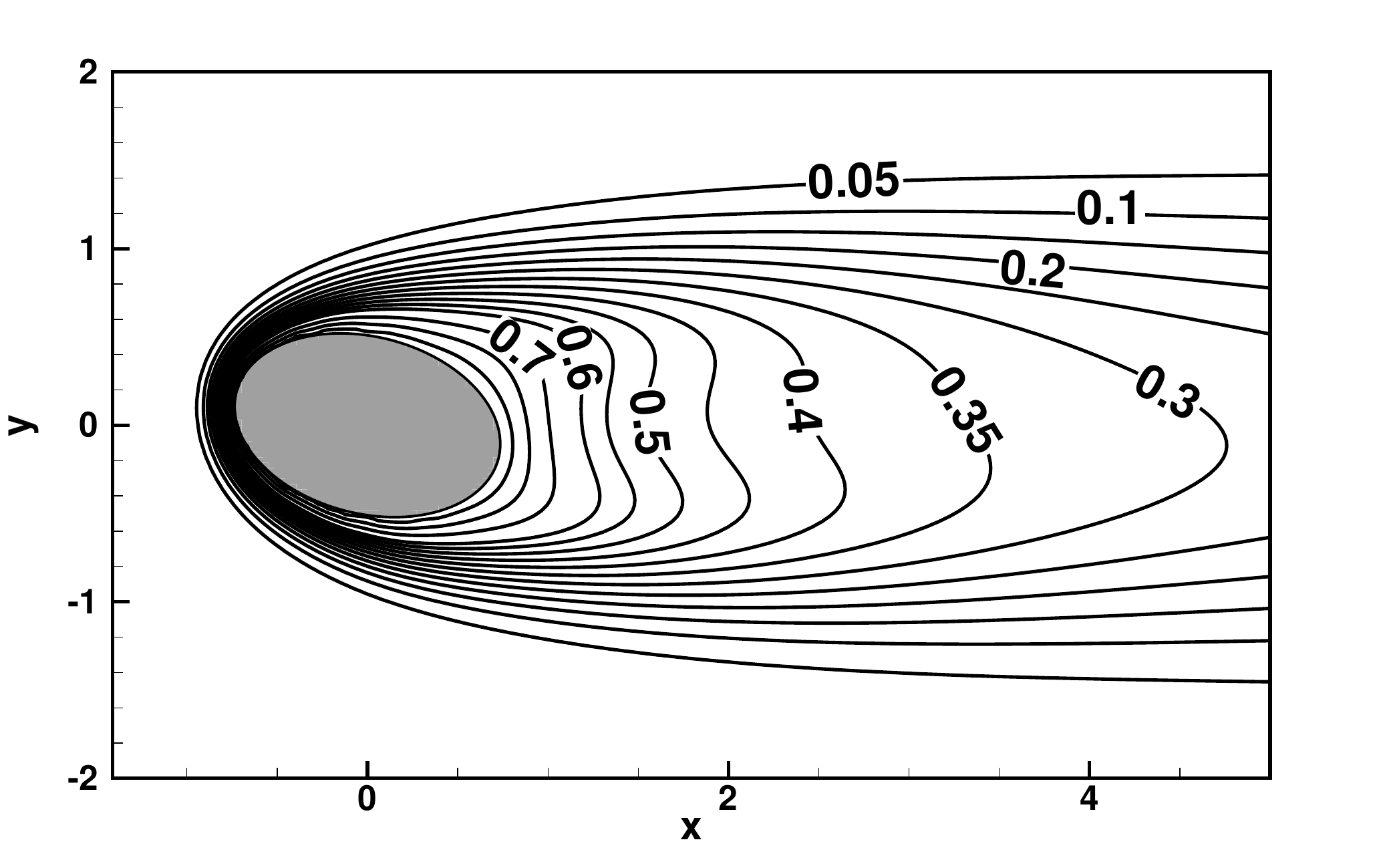} 
		\caption{$Re=59$}		
	\end{subfigure}\hfil 
	\caption{\small{Steady state isotherms for $\theta=15^{\degree}$ and (a)$Re=10$, (b)$Re=20$, (c)$Re=30$, (d)$Re=40$, (e)$Re=50$, and (f) $Re=59$.}}
	\label{Fig:T-steady-15deg}
\end{figure}

Figures \ref{Fig:psi-steady-30deg} and \ref{Fig:T-steady-30deg} show the streamlines and isotherms respectively for $\theta = 30^{\degree}$, where the $Re_{c}$ is in the range $49 \leq Re < 50$. The barely discernible bulge when $\theta=15^{\degree}$ at $Re=10$ (figure \ref{Fig:T-steady-15deg} (a)) is more noticeable when $\theta = 30^{\degree}$ (figure \ref{Fig:psi-steady-30deg} (a)) implying that the flow is on the brink of separating from the cylinder surface. The clockwise rotating vortex attached on the upper surface of the cylinder grows in size at $Re=20$, and a counterclockwise rotating vortex begins to form near the lower surface of the cylinder (figure \ref{Fig:psi-steady-30deg} (b)). Flow pattern for the rest of the $Re$'s follow a similar pattern to the previous configuration. The isotherms also follow a similar pattern, except that the distortions in the isotherms appear at a much lower $Re$  as $\theta$ is increased,  $Re=40$ for this case (figure \ref{Fig:T-steady-30deg} (d)) compared to $Re=50$ for $\theta=15^{\degree}$ (figure \ref{Fig:T-steady-15deg} (e)). This would indicate that the overall heat transfer rate for the same $Re$ is comparably higher (see section \ref{sec:Nusselt-steady}).
\begin{figure}[H]
	\centering
	\begin{subfigure}{0.3\textwidth}
		\includegraphics[width=\linewidth]{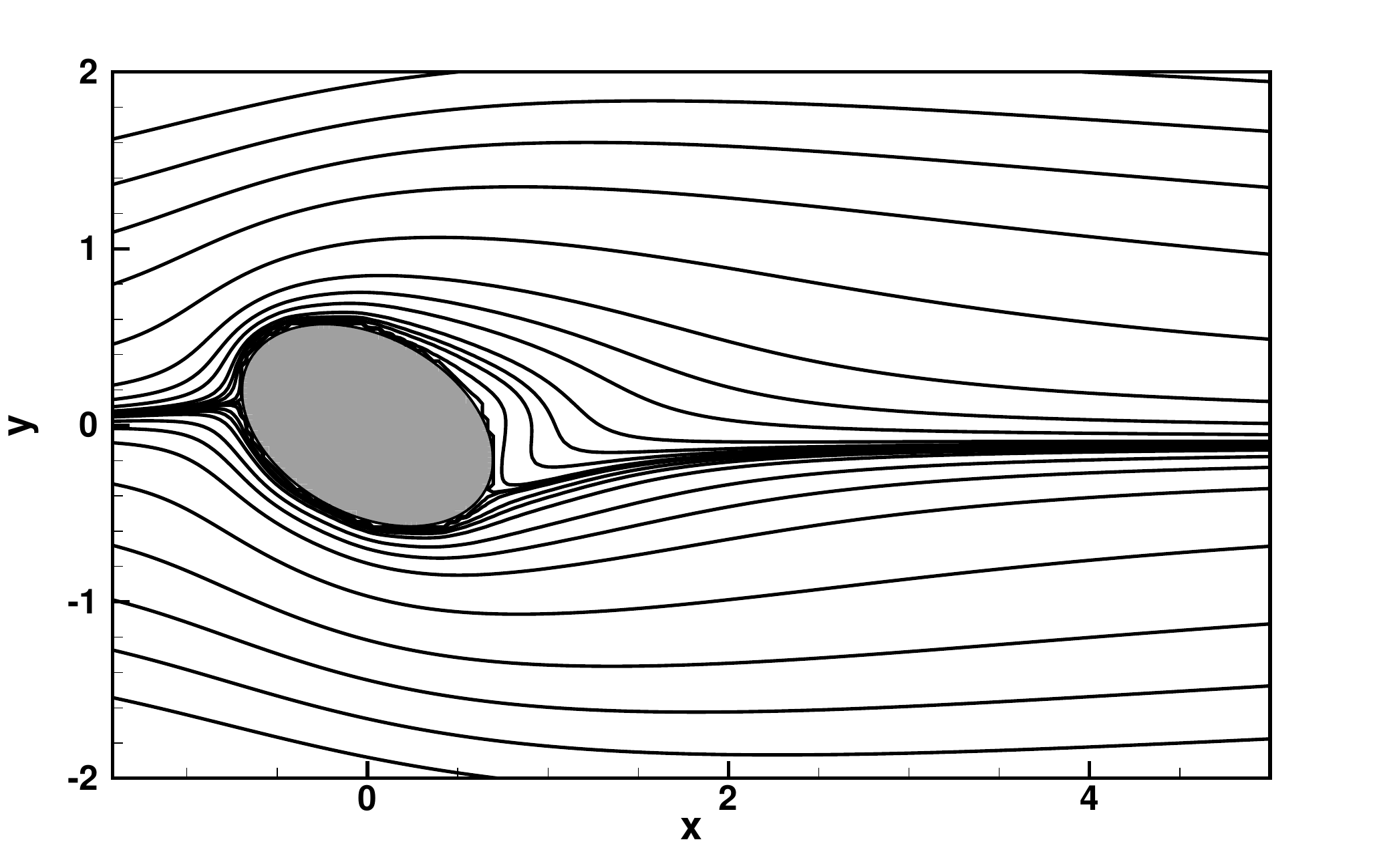} 
		\caption{$Re=10$}
	\end{subfigure}\hfil 
	\begin{subfigure}{0.3\textwidth}
		\includegraphics[width=\linewidth]{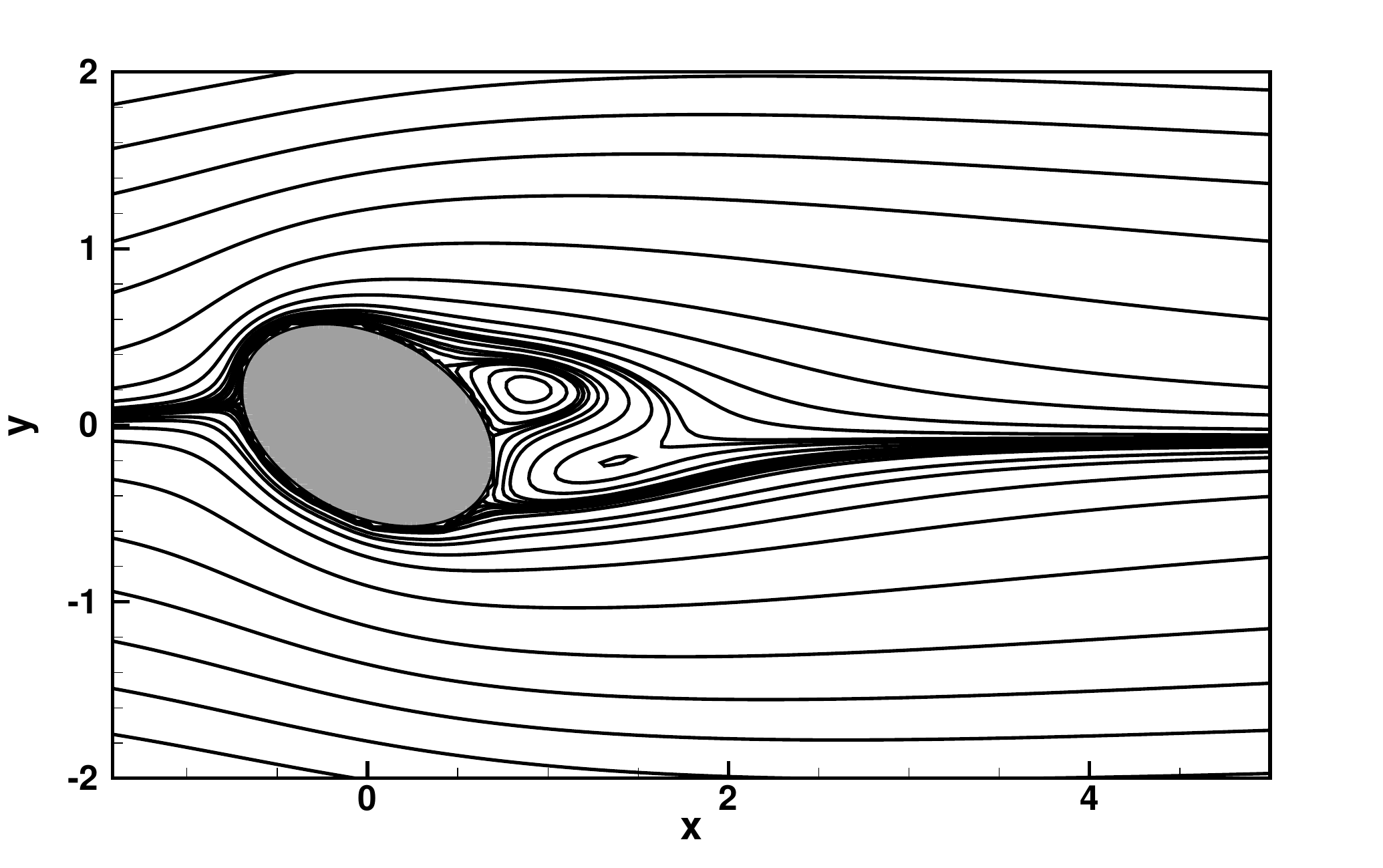} 
		\caption{$Re=20$}
	\end{subfigure}\hfil 
	\begin{subfigure}{0.3\textwidth}
		\includegraphics[width=\linewidth]{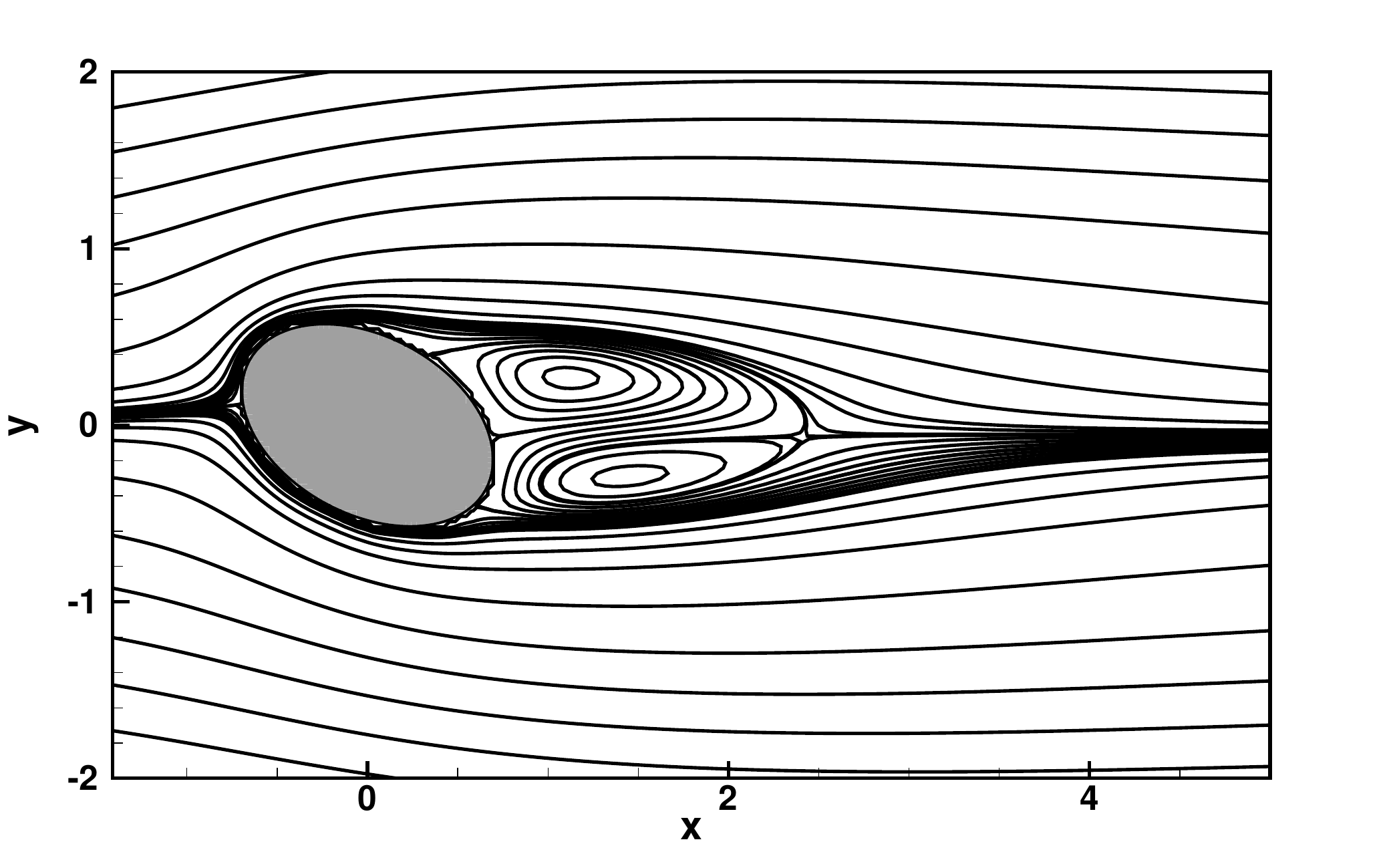} 
		\caption{$Re=30$}
	\end{subfigure}\hfil 
	\begin{subfigure}{0.3\textwidth}
		\includegraphics[width=\linewidth]{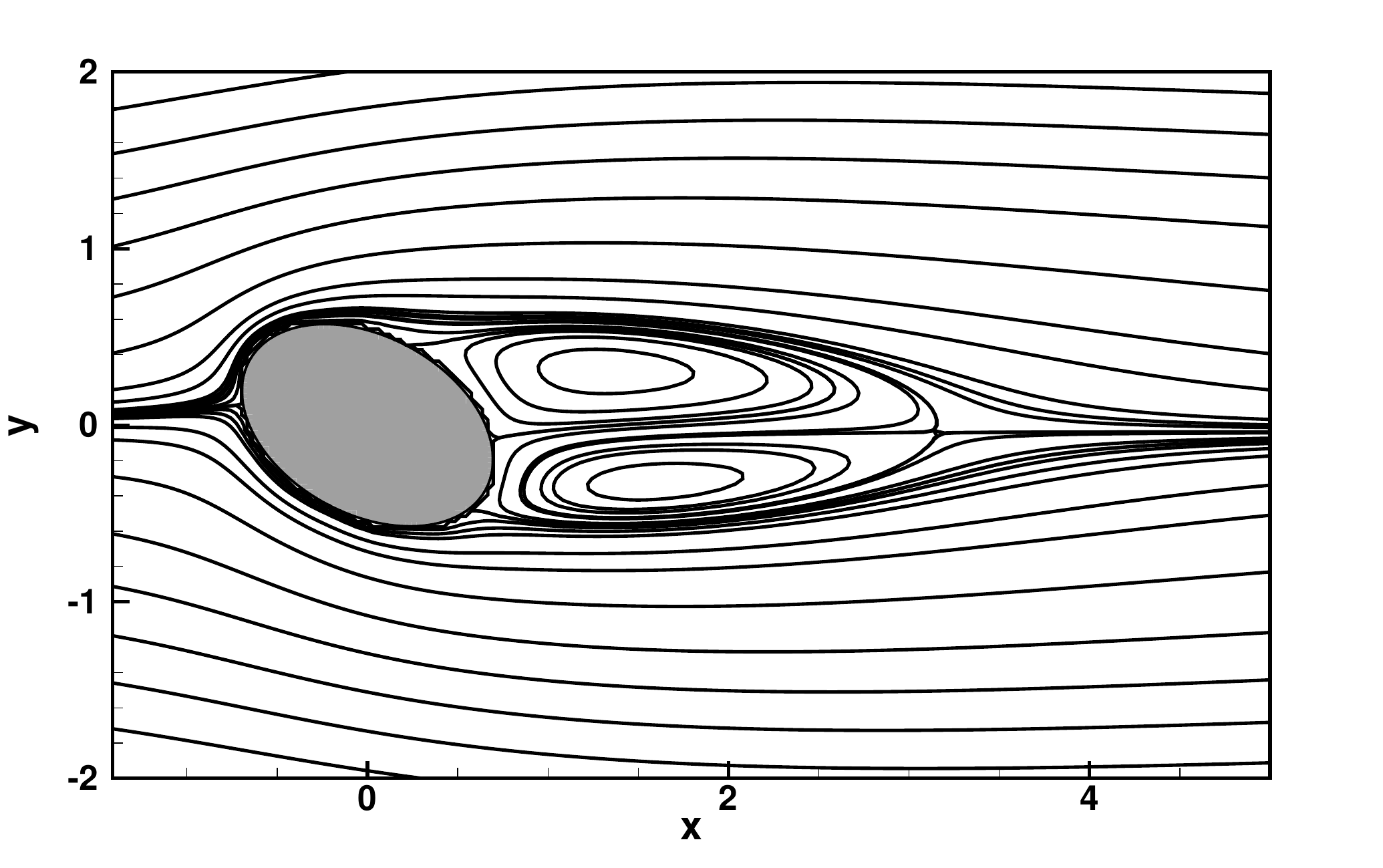} 
		\caption{$Re=40$}
	\end{subfigure}\hfil 
	\begin{subfigure}{0.3\textwidth}
		\includegraphics[width=\linewidth]{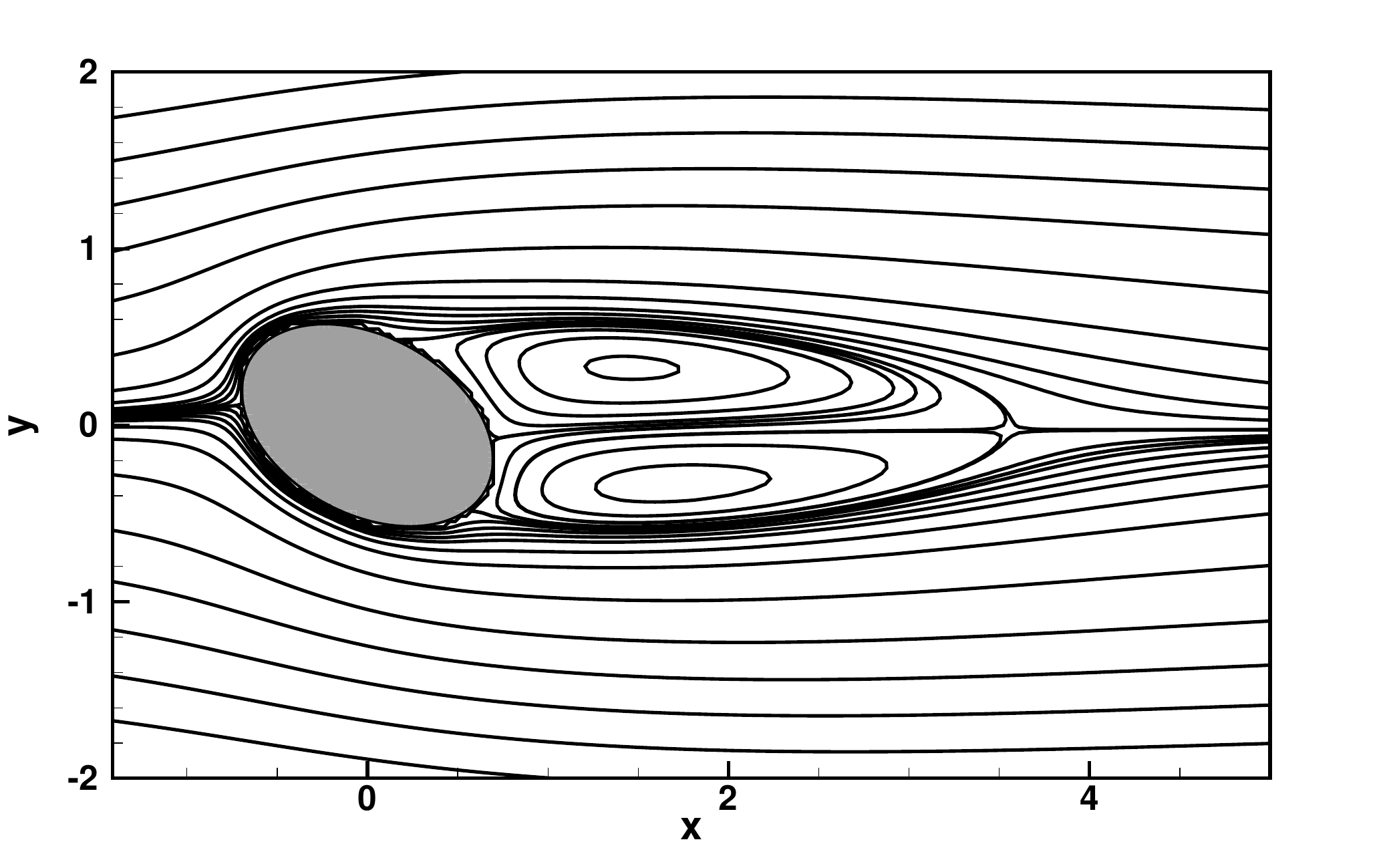} 
		\caption{$Re=49$}		
	\end{subfigure}\hfil 
	\caption{\small{Steady state streamlines for $\theta=30^{\degree}$ and (a)$Re=10$, (b)$Re=20$, (c)$Re=30$, (d)$Re=40$, and (e)$Re=49$.}}
	\label{Fig:psi-steady-30deg}
\end{figure}

\begin{figure}[H]
	\centering
	\begin{subfigure}{0.3\textwidth}
		\includegraphics[width=\linewidth]{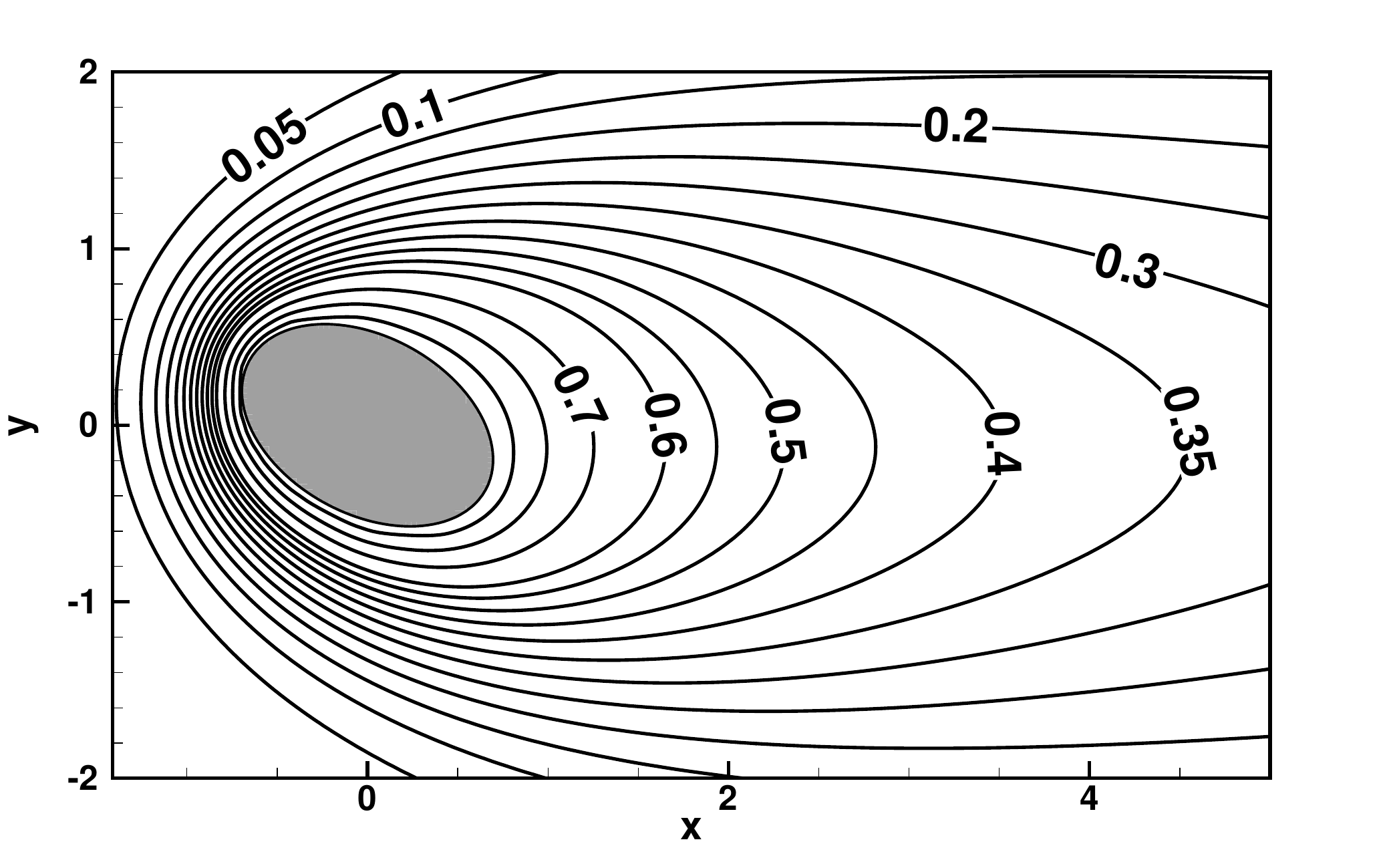} 
		\caption{$Re=10$}
	\end{subfigure}\hfil 
	\begin{subfigure}{0.3\textwidth}
		\includegraphics[width=\linewidth]{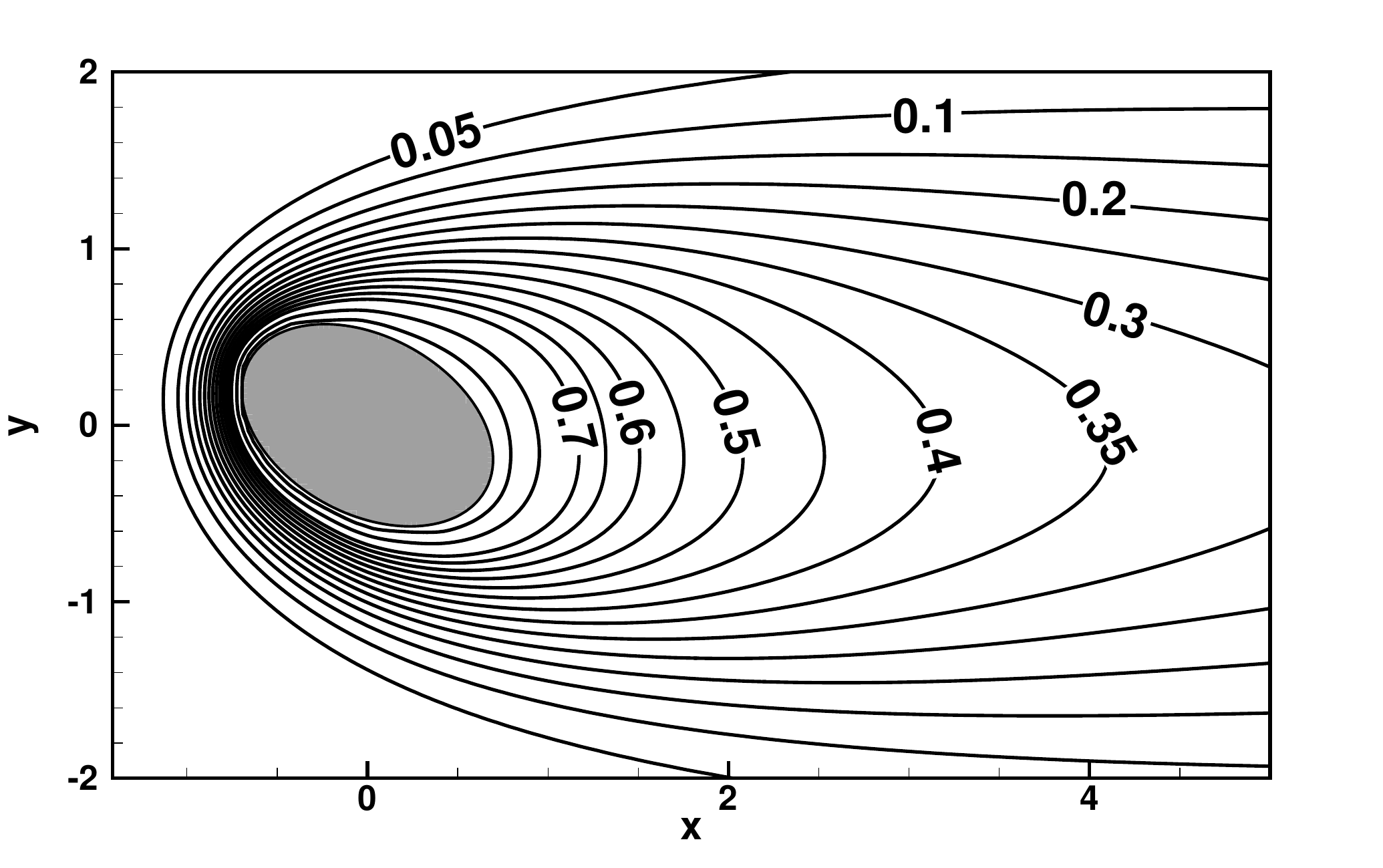} 
		\caption{$Re=20$}
	\end{subfigure}\hfil 
	\begin{subfigure}{0.3\textwidth}
		\includegraphics[width=\linewidth]{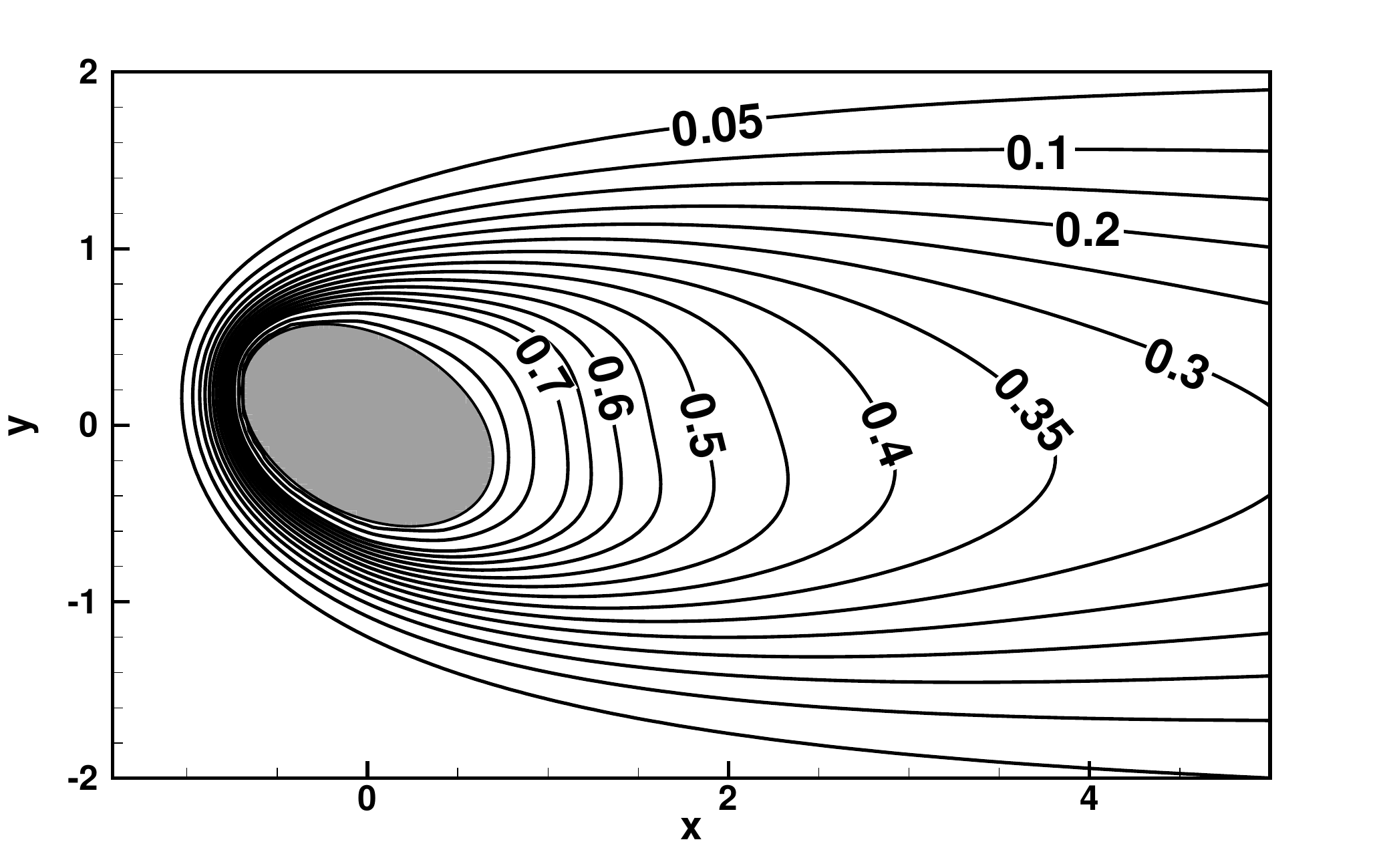} 
		\caption{$Re=30$}
	\end{subfigure}\hfil 
	\begin{subfigure}{0.3\textwidth}
		\includegraphics[width=\linewidth]{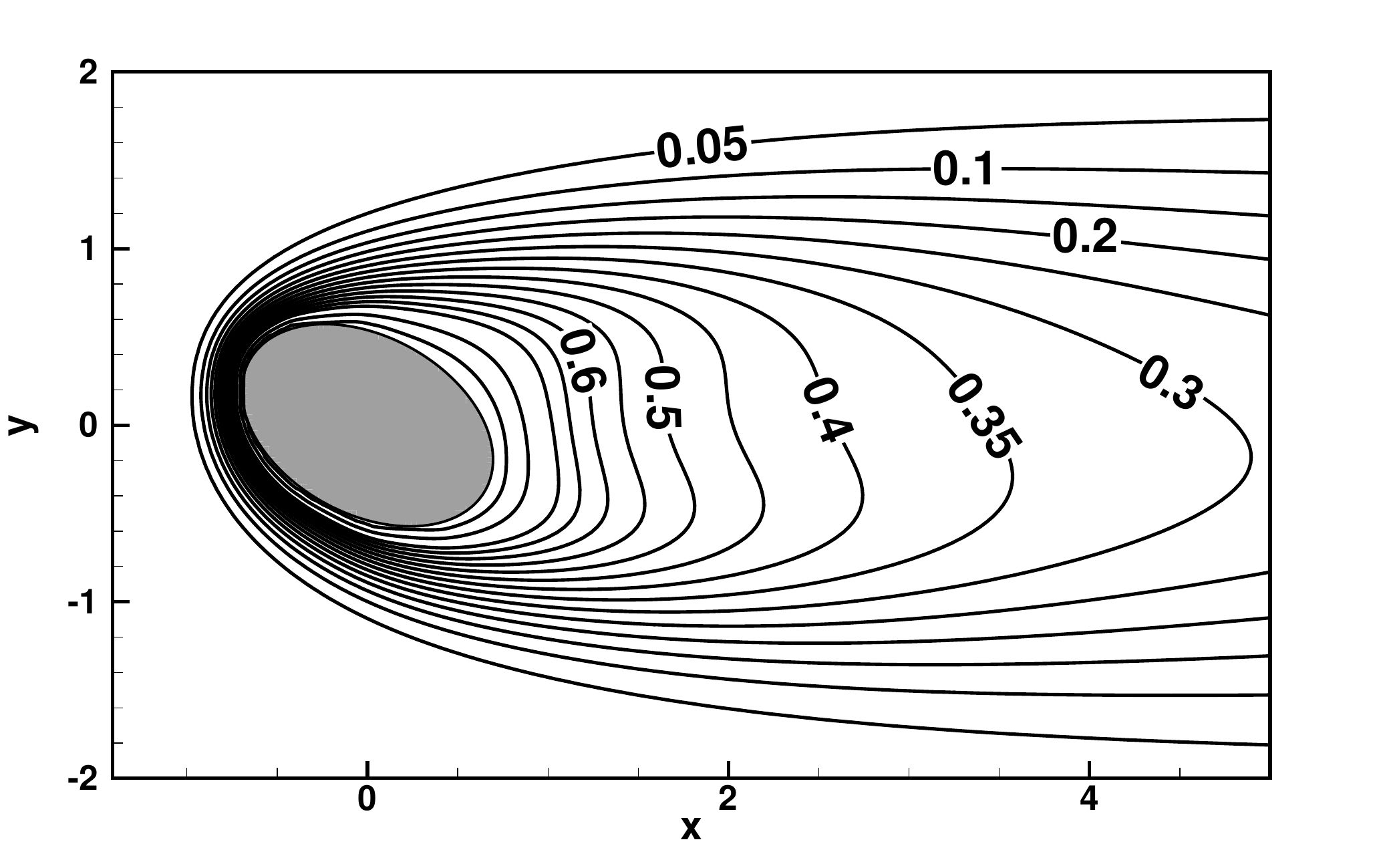} 
		\caption{$Re=40$}
	\end{subfigure}\hfil 
	\begin{subfigure}{0.3\textwidth}
		\includegraphics[width=\linewidth]{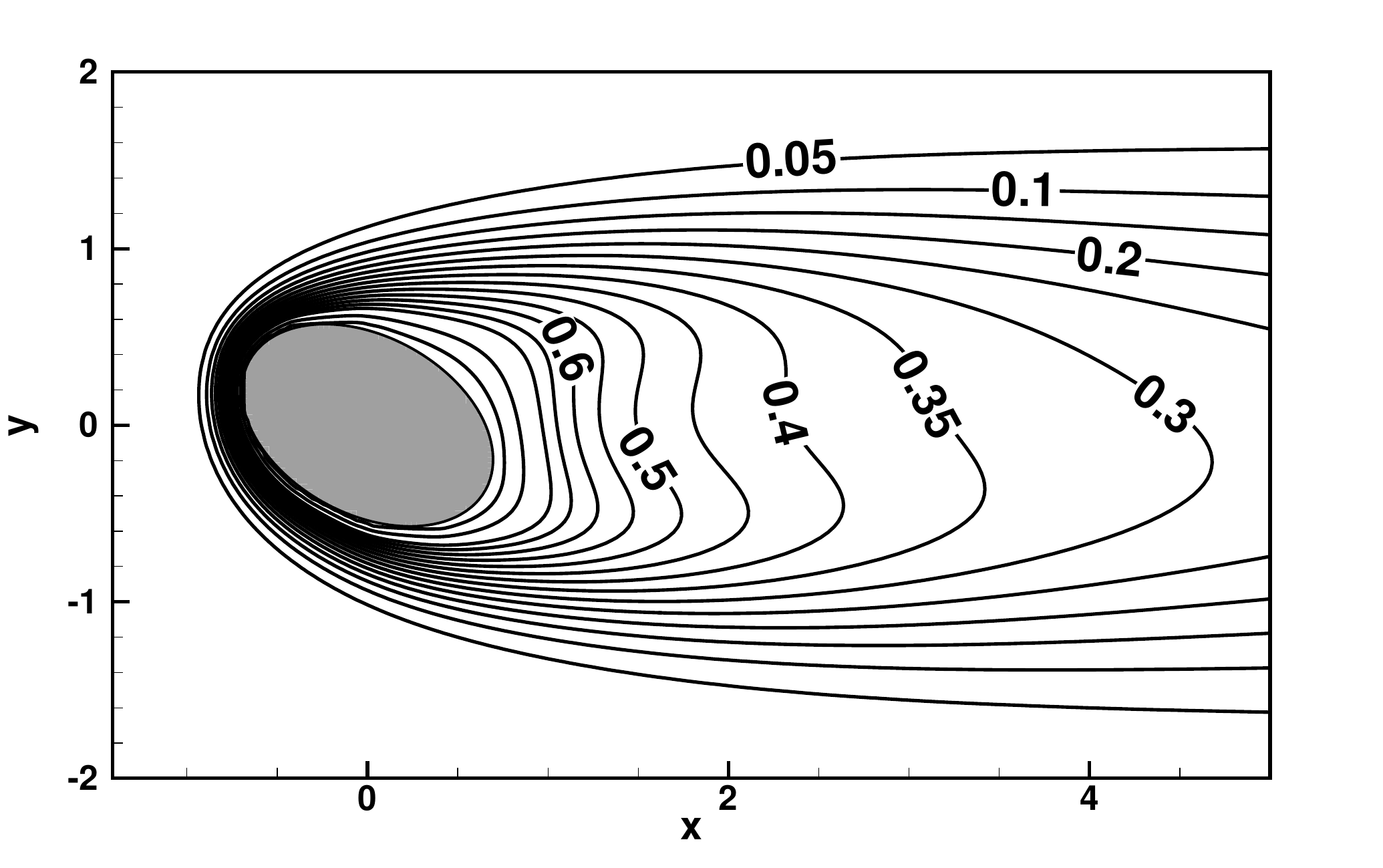} 
		\caption{$Re=49$}		
	\end{subfigure}\hfil 
	\caption{\small{Steady state isotherms for $\theta=30^{\degree}$ and (a)$Re=10$, (b)$Re=20$, (c)$Re=30$, (d)$Re=40$, and (e)$Re=49$.}}
	\label{Fig:T-steady-30deg}
\end{figure}

Steady state streamlines and isotherms for $\theta=45^{\degree}$ are shown in figures \ref{Fig:psi-steady-45deg} and \ref{Fig:T-steady-45deg} respectively where the $Re_{c}$ is in the range $38 \leq Re < 39$. Here, in sharp contrast to the previous two cases, flow separation occurs at $Re = 10$ and we see the appearance of a recirculation region on the upper surface of the cylinder (figure \ref{Fig:psi-steady-45deg} (a)). Also, distortions in the isotherms appear at a lesser $Re$ (figure \ref{Fig:T-steady-45deg} (c)) compared to the previous two cases. The evolution of streamlines follow a similar pattern - the size and strength of the vortices increase with $Re$. However, note that the value of $Re_c$ decreases when $\theta$ is increased.

\begin{figure}[H]
	\centering
	\begin{subfigure}{0.25\textwidth}
		\includegraphics[width=\linewidth]{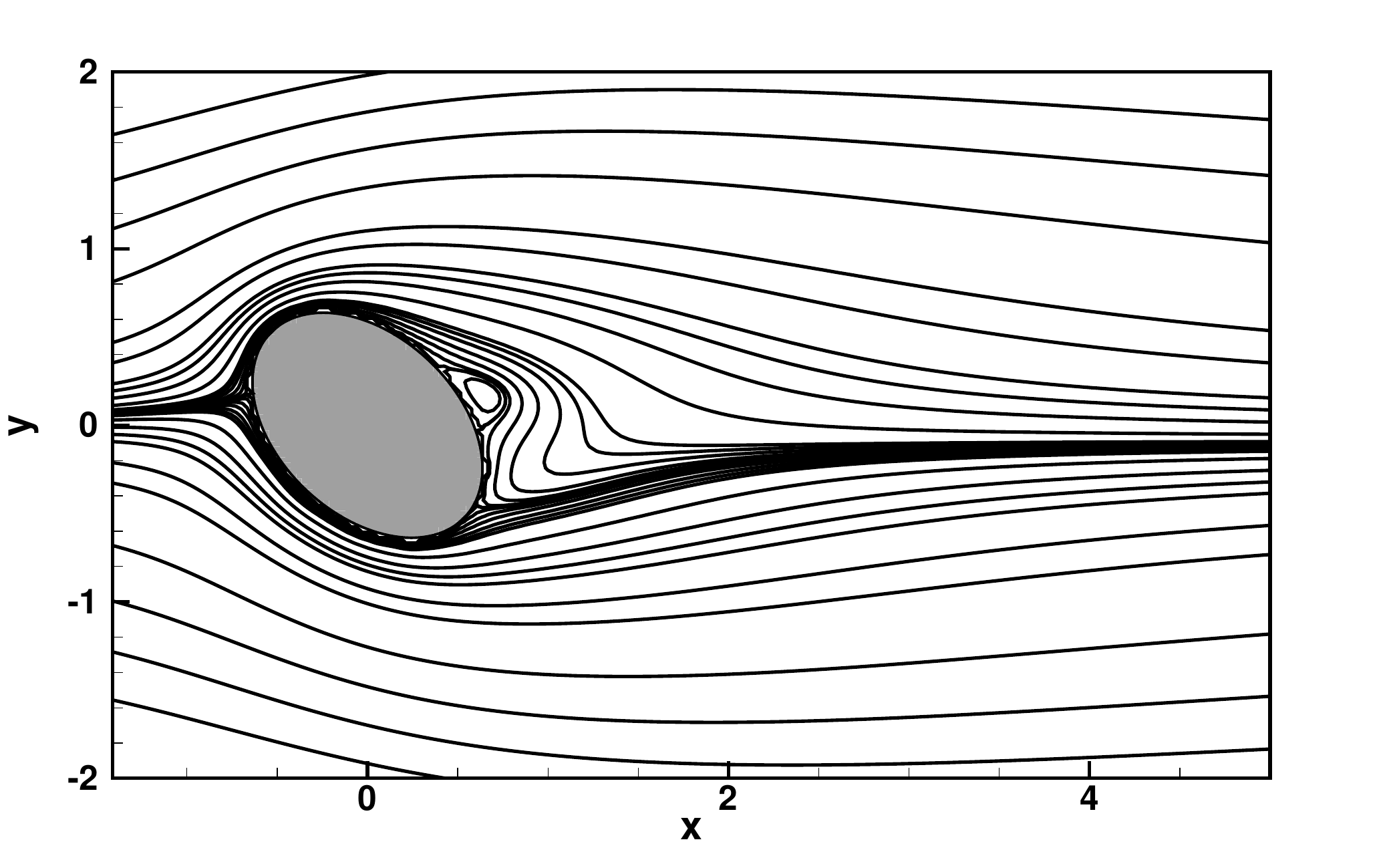} 
		\caption{$Re=10$}
	\end{subfigure}\hfil 
	\begin{subfigure}{0.25\textwidth}
		\includegraphics[width=\linewidth]{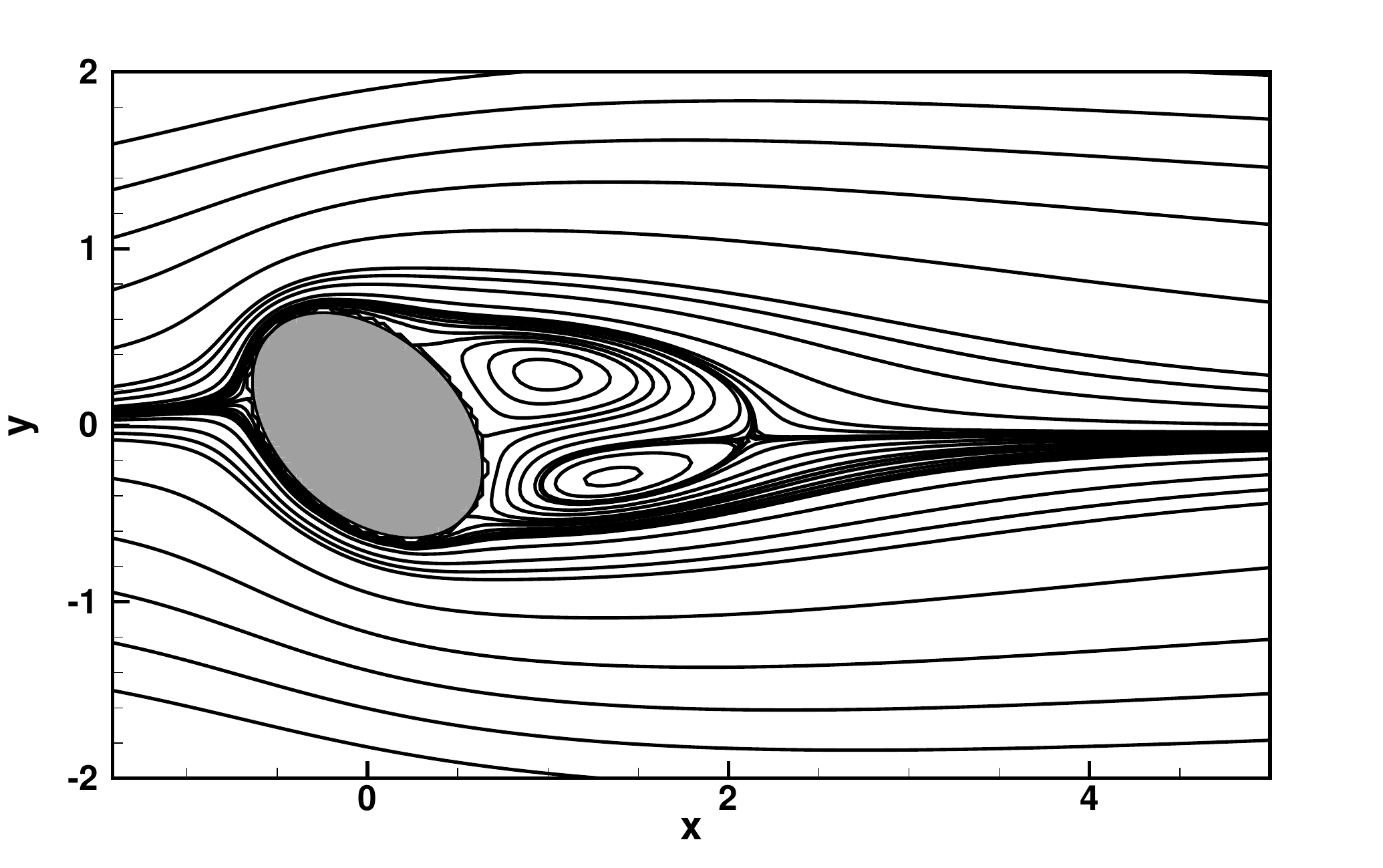} 
		\caption{$Re=20$}
	\end{subfigure}\hfil 
	\begin{subfigure}{0.25\textwidth}
		\includegraphics[width=\linewidth]{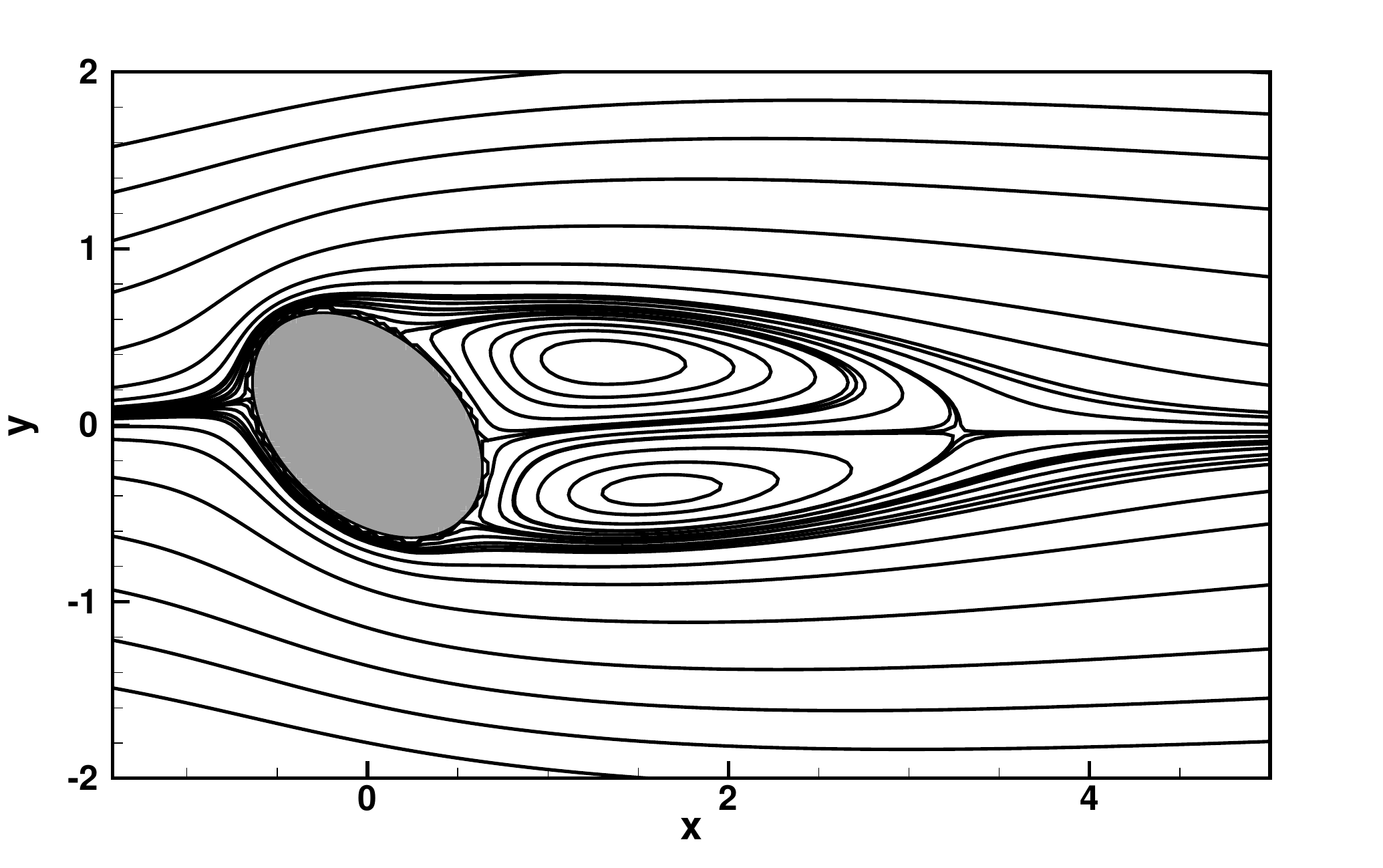} 
		\caption{$Re=30$}
	\end{subfigure}\hfil 
	\begin{subfigure}{0.25\textwidth}
		\includegraphics[width=\linewidth]{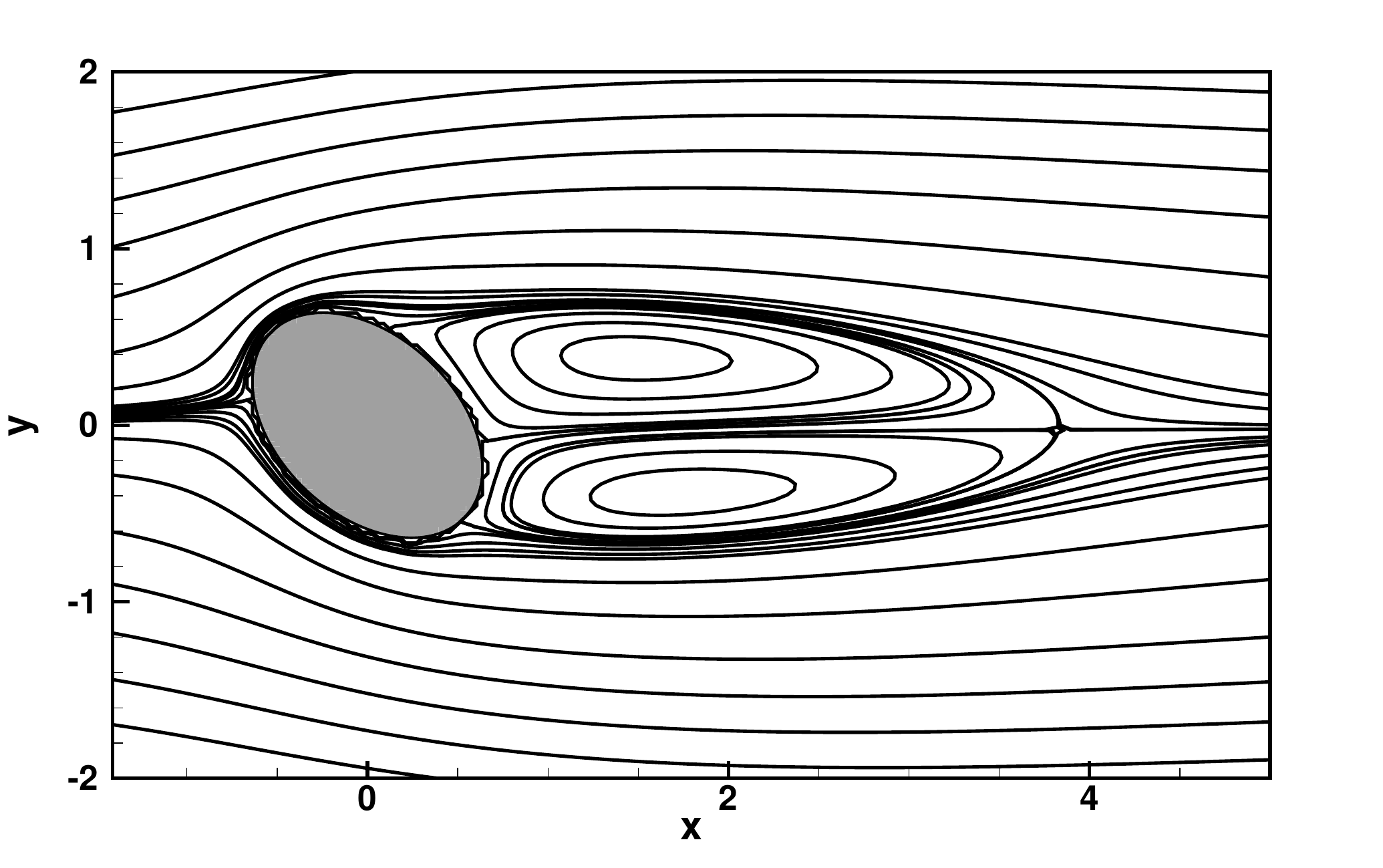} 
		\caption{$Re=38$}
	\end{subfigure}\hfil 
	\caption{\small{Steady state streamlines for $\theta=45^{\degree}$ and (a)$Re=10$, (b)$Re=20$, (c)$Re=30$, and (d)$Re=38$.}}
	\label{Fig:psi-steady-45deg}
\end{figure}

\begin{figure}[H]
	\centering
	\begin{subfigure}{0.25\textwidth}
		\includegraphics[width=\linewidth]{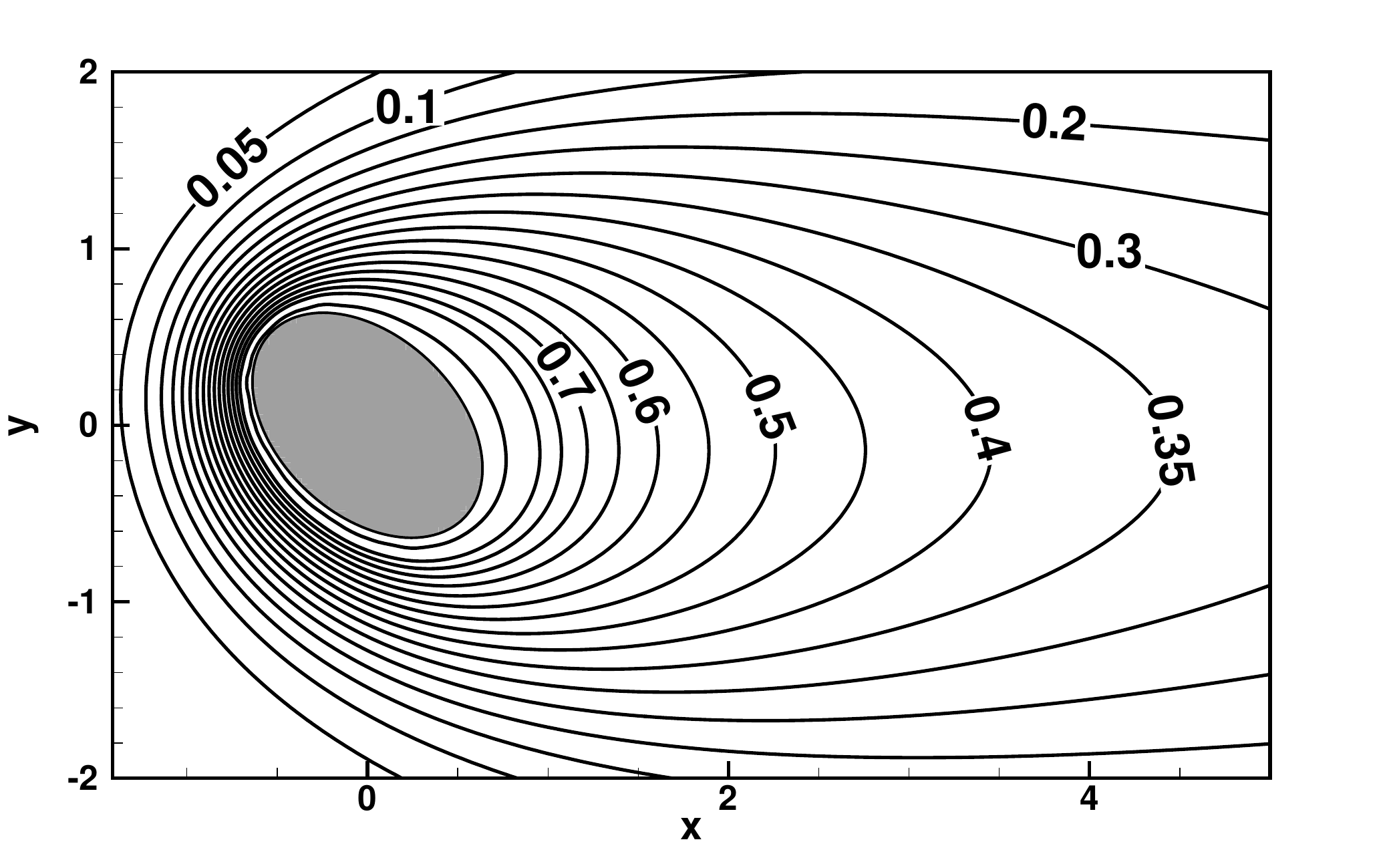} 
		\caption{$Re=10$}
	\end{subfigure}\hfil 
	\begin{subfigure}{0.25\textwidth}
		\includegraphics[width=\linewidth]{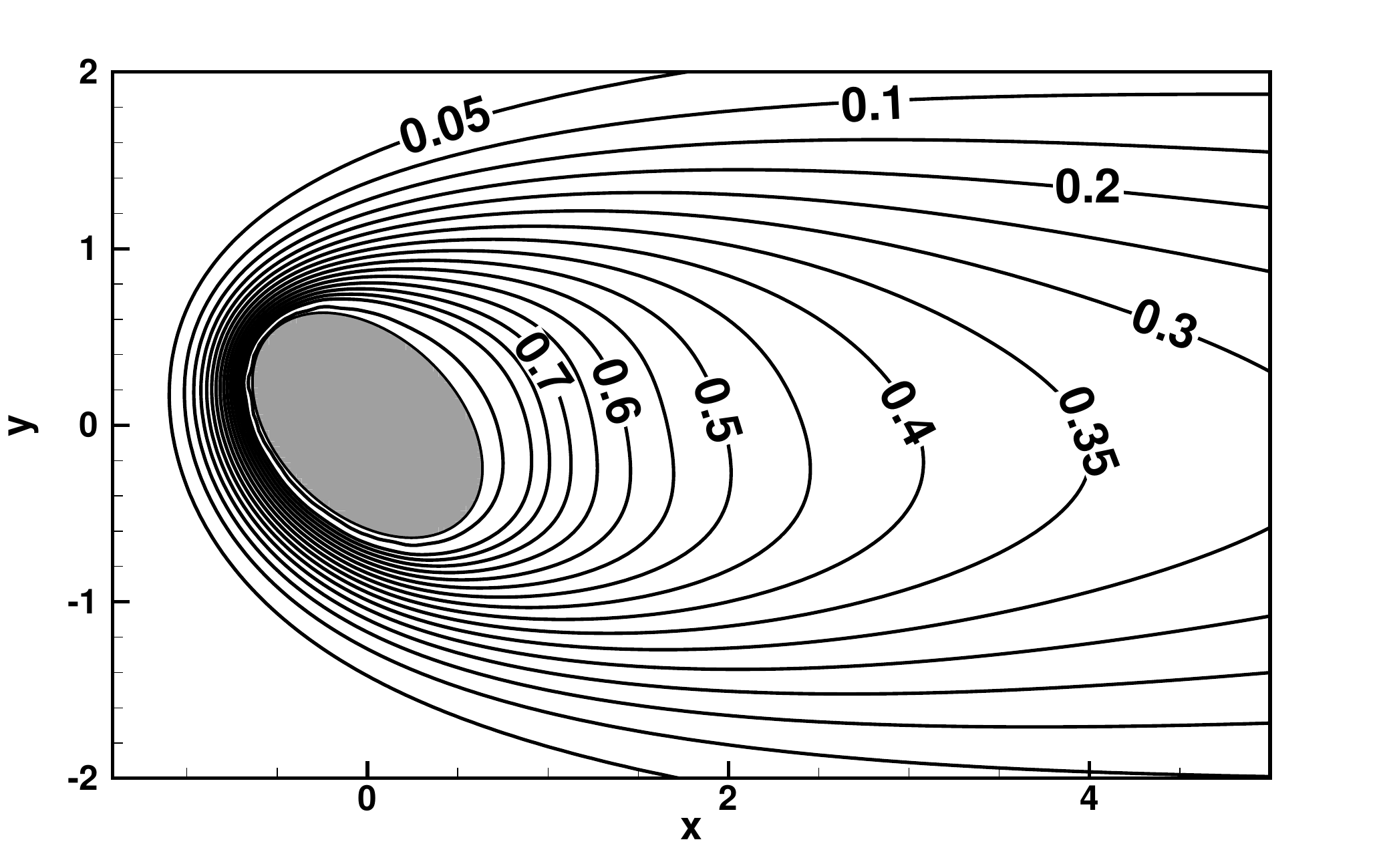} 
		\caption{$Re=20$}
	\end{subfigure}\hfil 
	\begin{subfigure}{0.25\textwidth}
		\includegraphics[width=\linewidth]{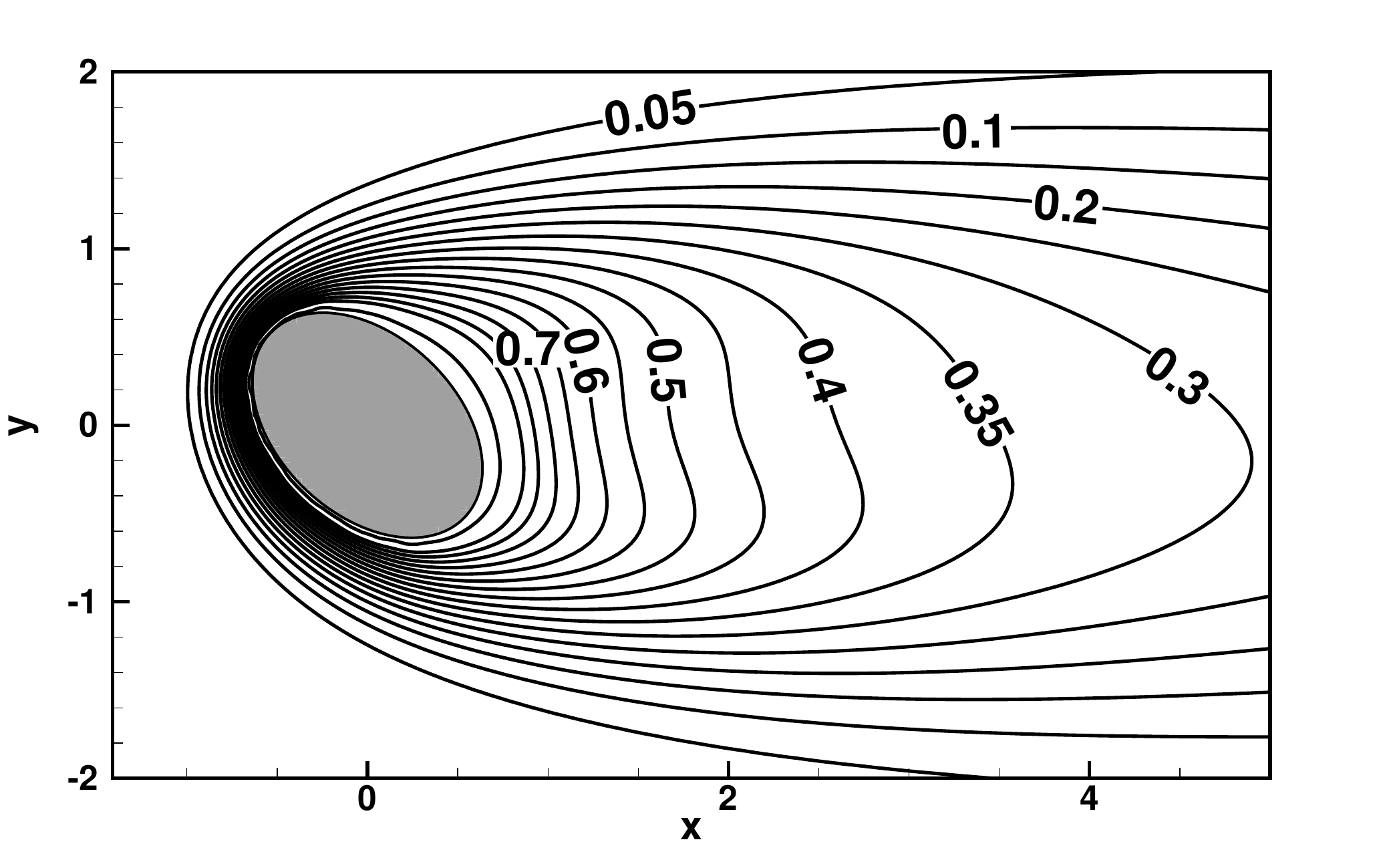} 
		\caption{$Re=30$}
	\end{subfigure}\hfil 
	\begin{subfigure}{0.25\textwidth}
		\includegraphics[width=\linewidth]{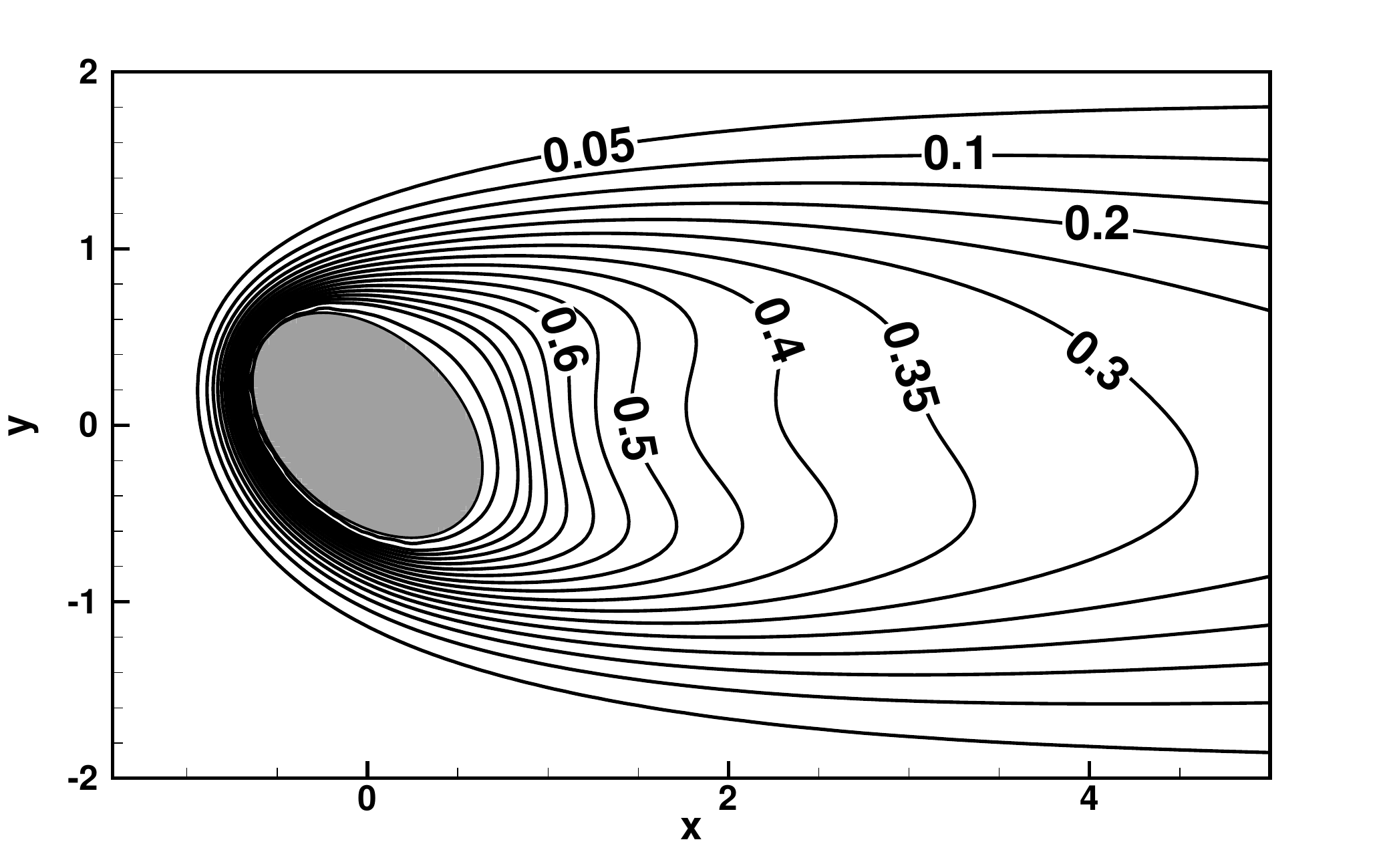} 
		\caption{$Re=38$}
	\end{subfigure}\hfil 
	\caption{\small{Steady state isotherms for $\theta=45^{\degree}$ and (a)$Re=10$, (b)$Re=20$, (c)$Re=30$, and (d)$Re=38$.}}
	\label{Fig:T-steady-45deg}
\end{figure}

Figures \ref{Fig:psi-steady-60deg} and \ref{Fig:T-steady-60deg} show the steady state streamlines and isotherms respectively for $\theta = 60^{\degree}$. Here, the $Re_c$ is in the range $31 \leq Re < 32$. At $Re = 10$ (figure \ref{Fig:psi-steady-60deg} (a)), the recirculation region that formed at $\theta = 45^{\degree}$ increases in size. As seen previously there is a gradual increase in the sizes of the vortices formed on the surfaces of the cylinder as $Re$ is increased, and the value of $Re_c$ also drops to $Re = 31$ at $\theta = 60^{\degree}$. One can also notice that the flow is gradually becoming symmetric as $\theta$ is increased. The distortions in the isotherms appear at a much lower $Re$ (figure \ref{Fig:T-steady-60deg} (b)) than for $\theta = 45^{\degree}$.
\begin{figure}[H]
	\centering
	\begin{subfigure}{0.25\textwidth}
		\includegraphics[width=\linewidth]{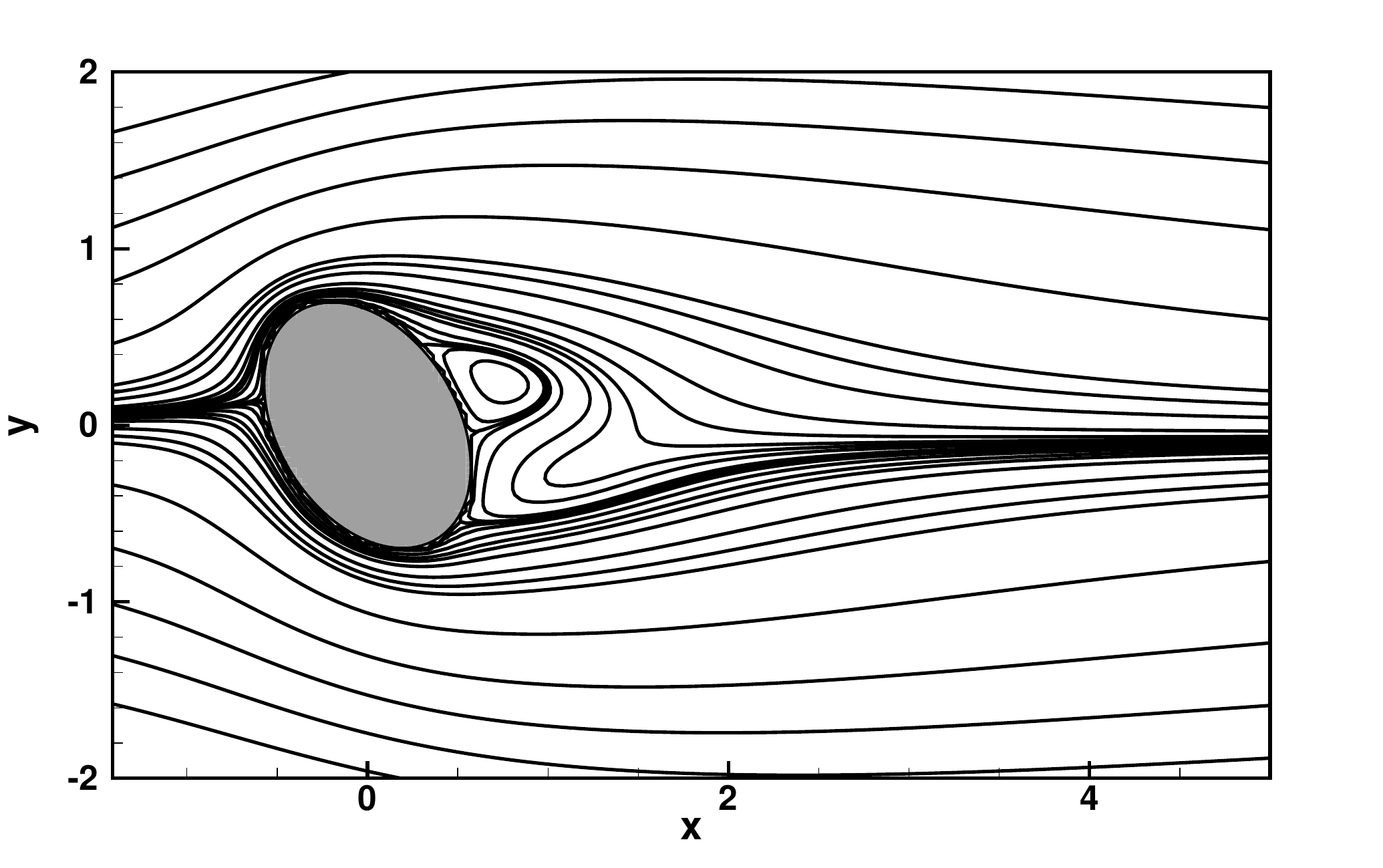} 
		\caption{$Re=10$}
	\end{subfigure}\hfil 
	\begin{subfigure}{0.25\textwidth}
		\includegraphics[width=\linewidth]{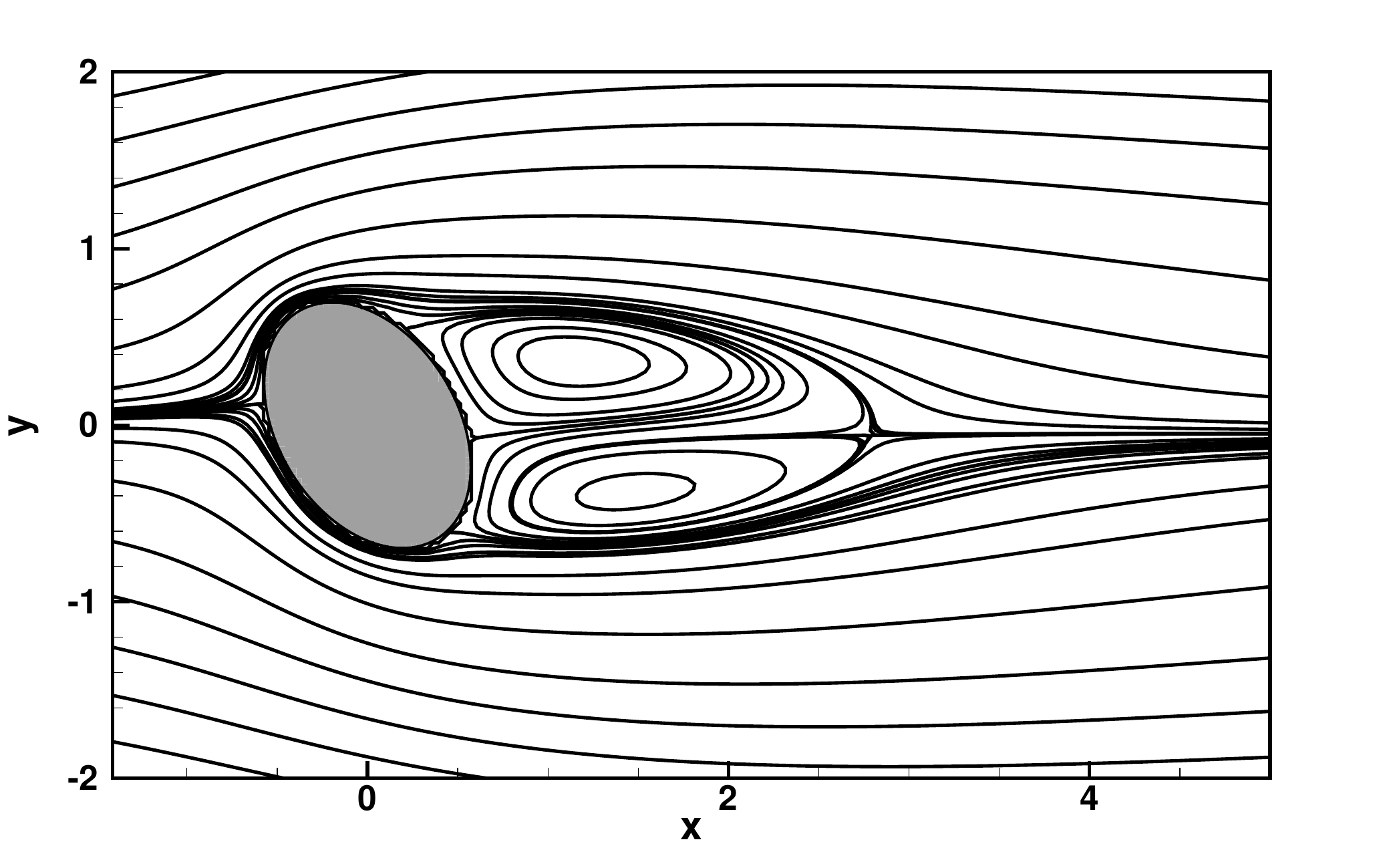} 
		\caption{$Re=20$}
	\end{subfigure}\hfil 
	\begin{subfigure}{0.25\textwidth}
		\includegraphics[width=\linewidth]{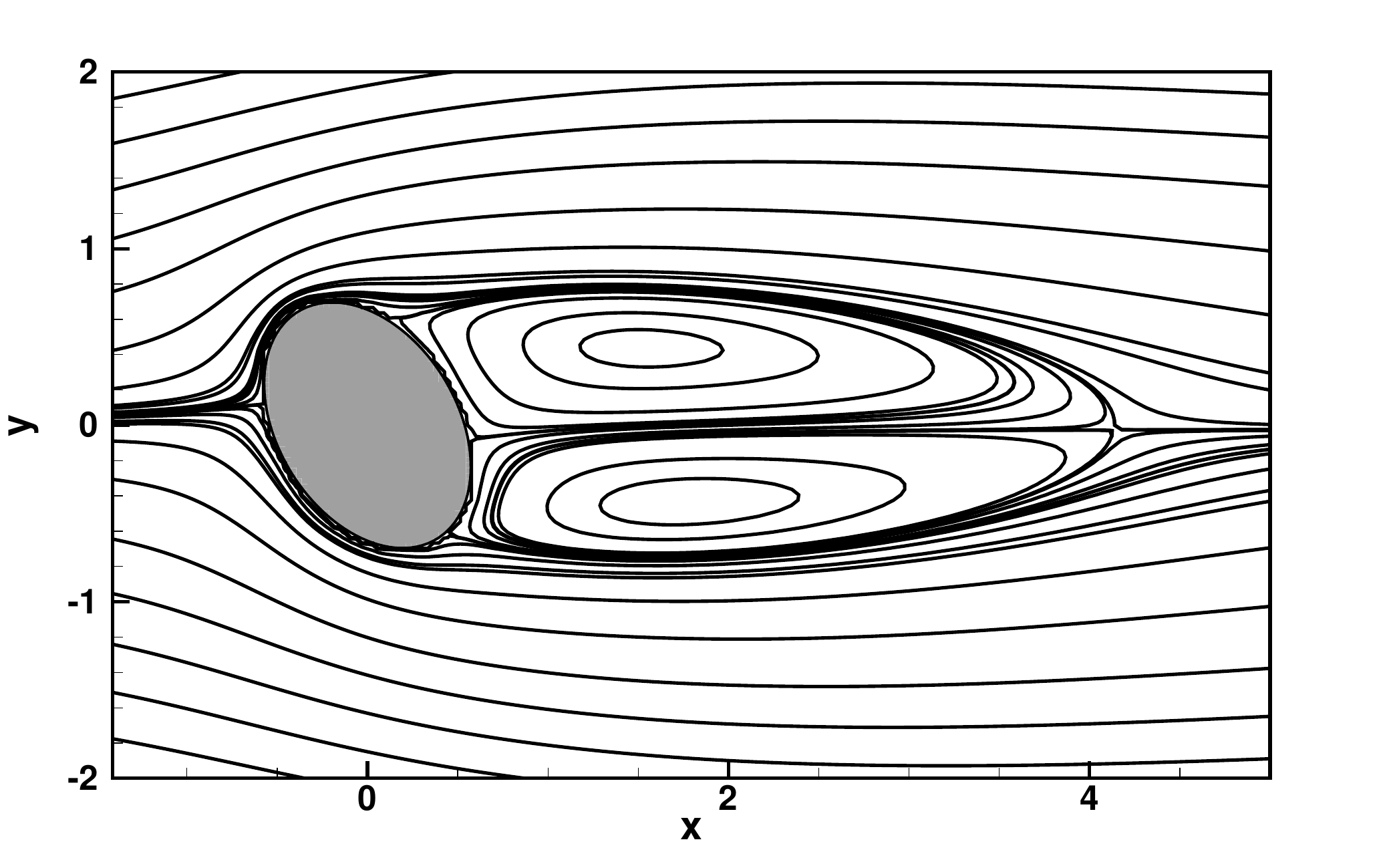} 
		\caption{$Re=30$}
	\end{subfigure}\hfil 
	\begin{subfigure}{0.25\textwidth}
		\includegraphics[width=\linewidth]{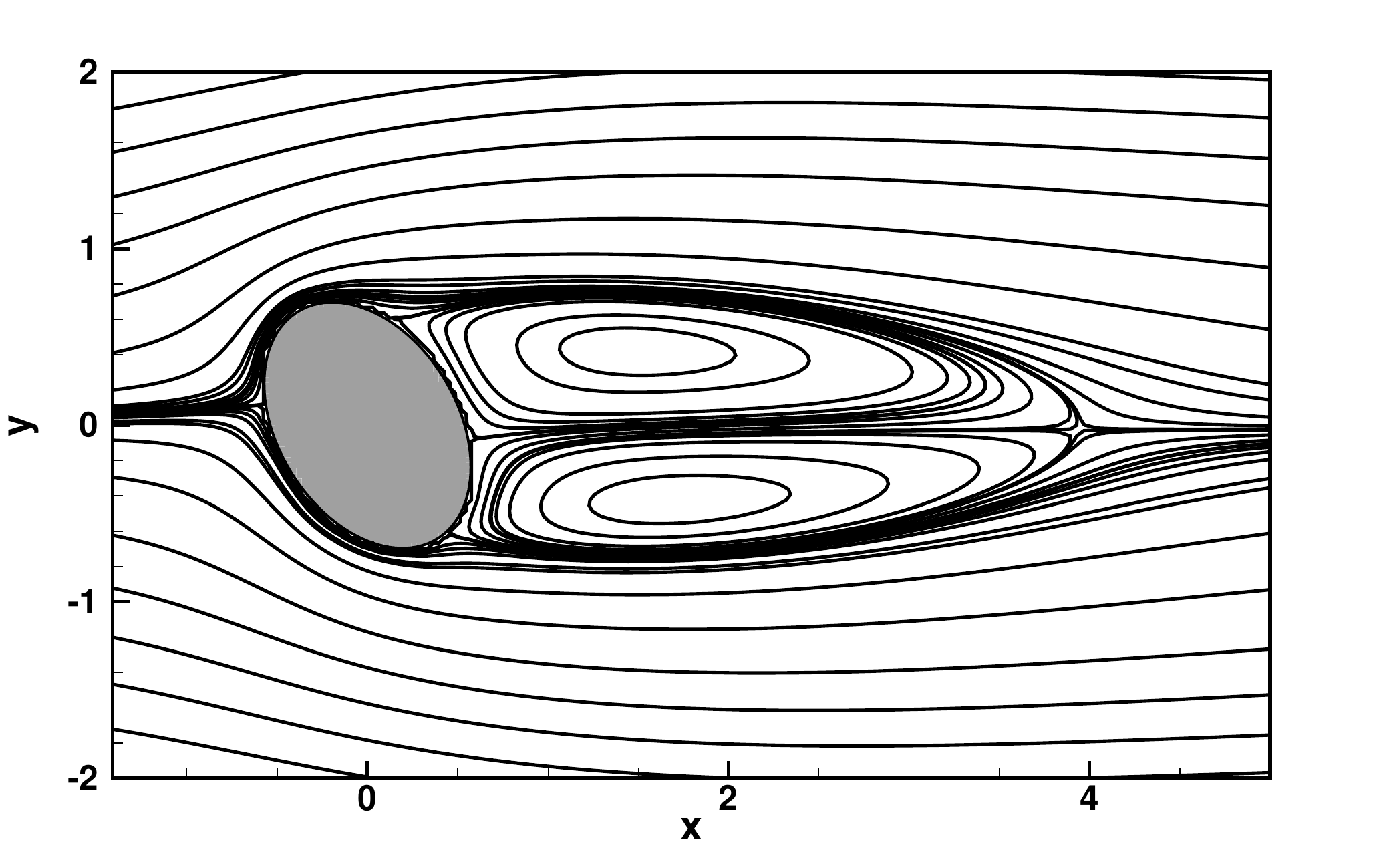} 
		\caption{$Re=31$}
	\end{subfigure}\hfil 
	\caption{\small{Steady state streamlines for $\theta=60^{\degree}$ and (a)$Re=10$, (b)$Re=20$, (c)$Re=30$, and (d)$Re=31$.}}
	\label{Fig:psi-steady-60deg}
\end{figure}

\begin{figure}[H]
	\centering
	\begin{subfigure}{0.25\textwidth}
		\includegraphics[width=\linewidth]{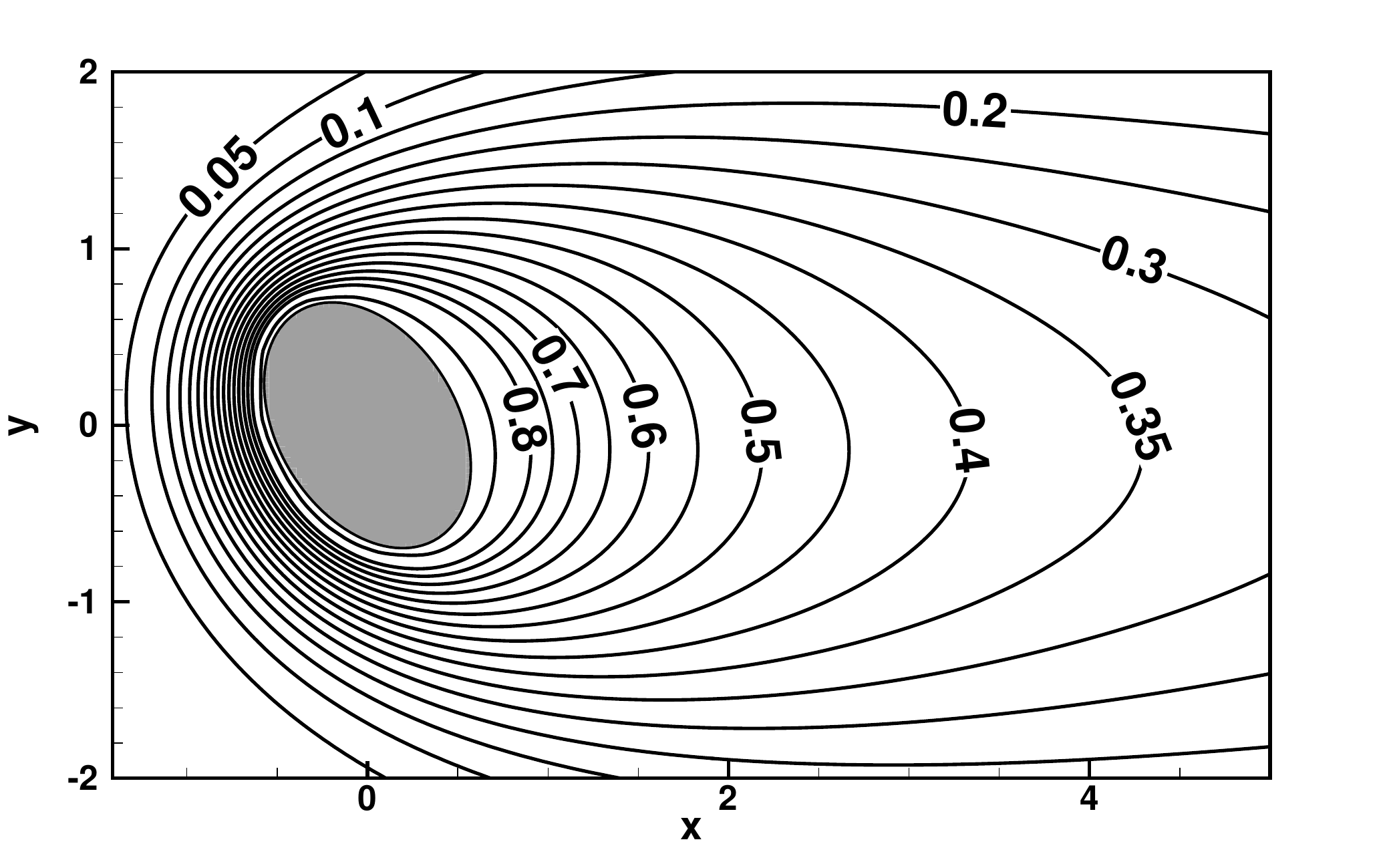} 
		\caption{$Re=10$}
	\end{subfigure}\hfil 
	\begin{subfigure}{0.25\textwidth}
		\includegraphics[width=\linewidth]{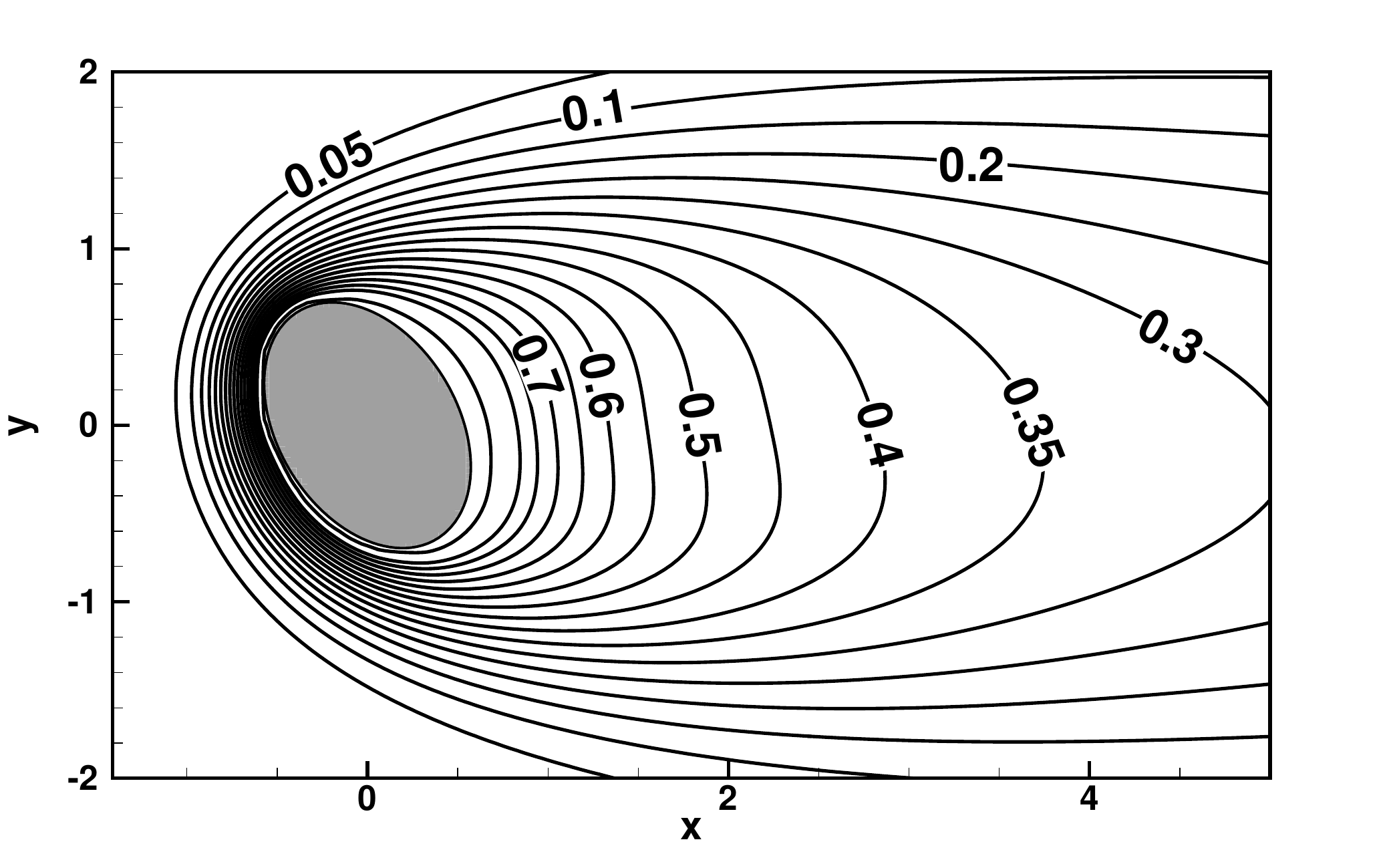} 
		\caption{$Re=20$}
	\end{subfigure}\hfil 
	\begin{subfigure}{0.25\textwidth}
		\includegraphics[width=\linewidth]{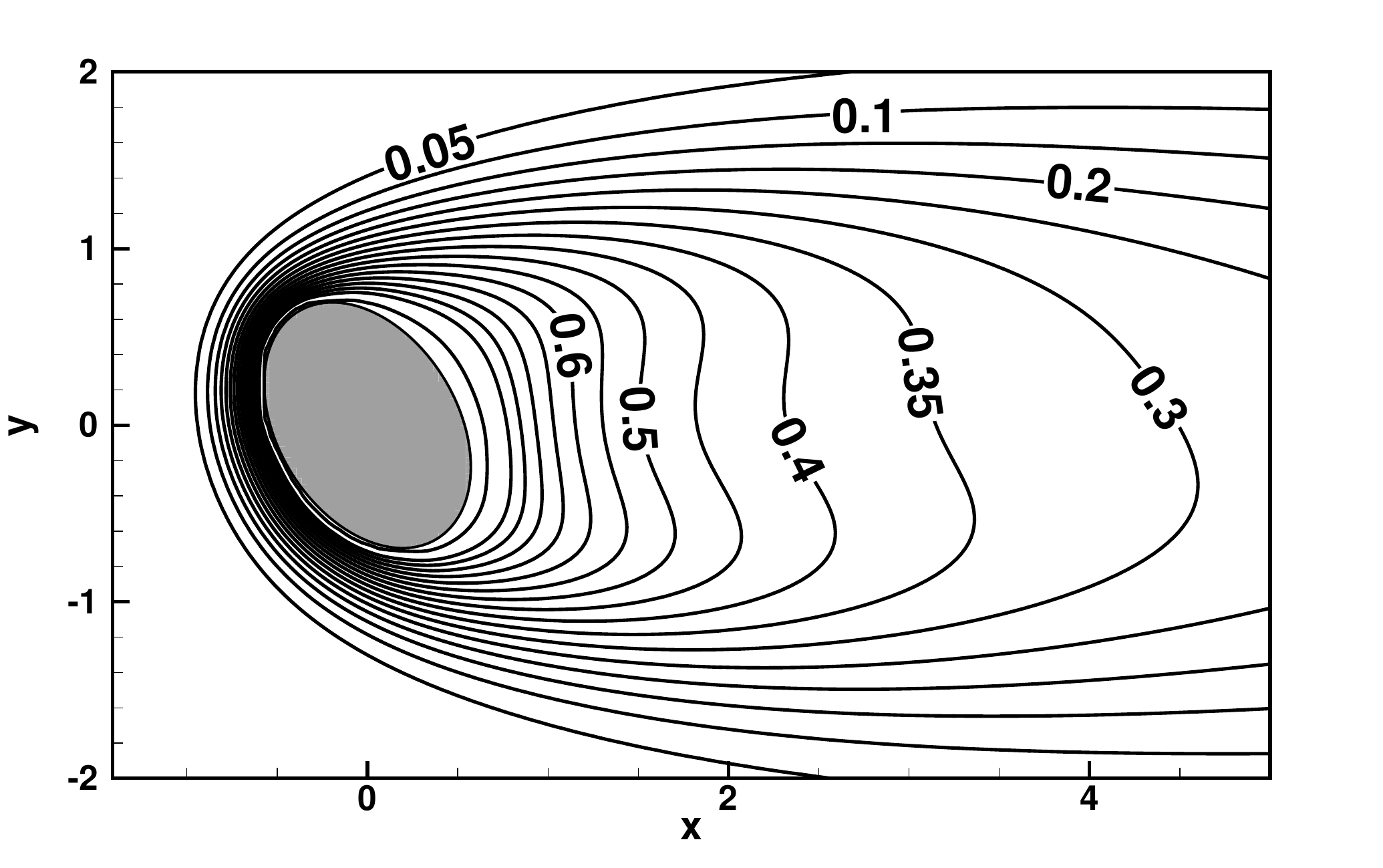} 
		\caption{$Re=30$}
	\end{subfigure}\hfil 
	\begin{subfigure}{0.25\textwidth}
		\includegraphics[width=\linewidth]{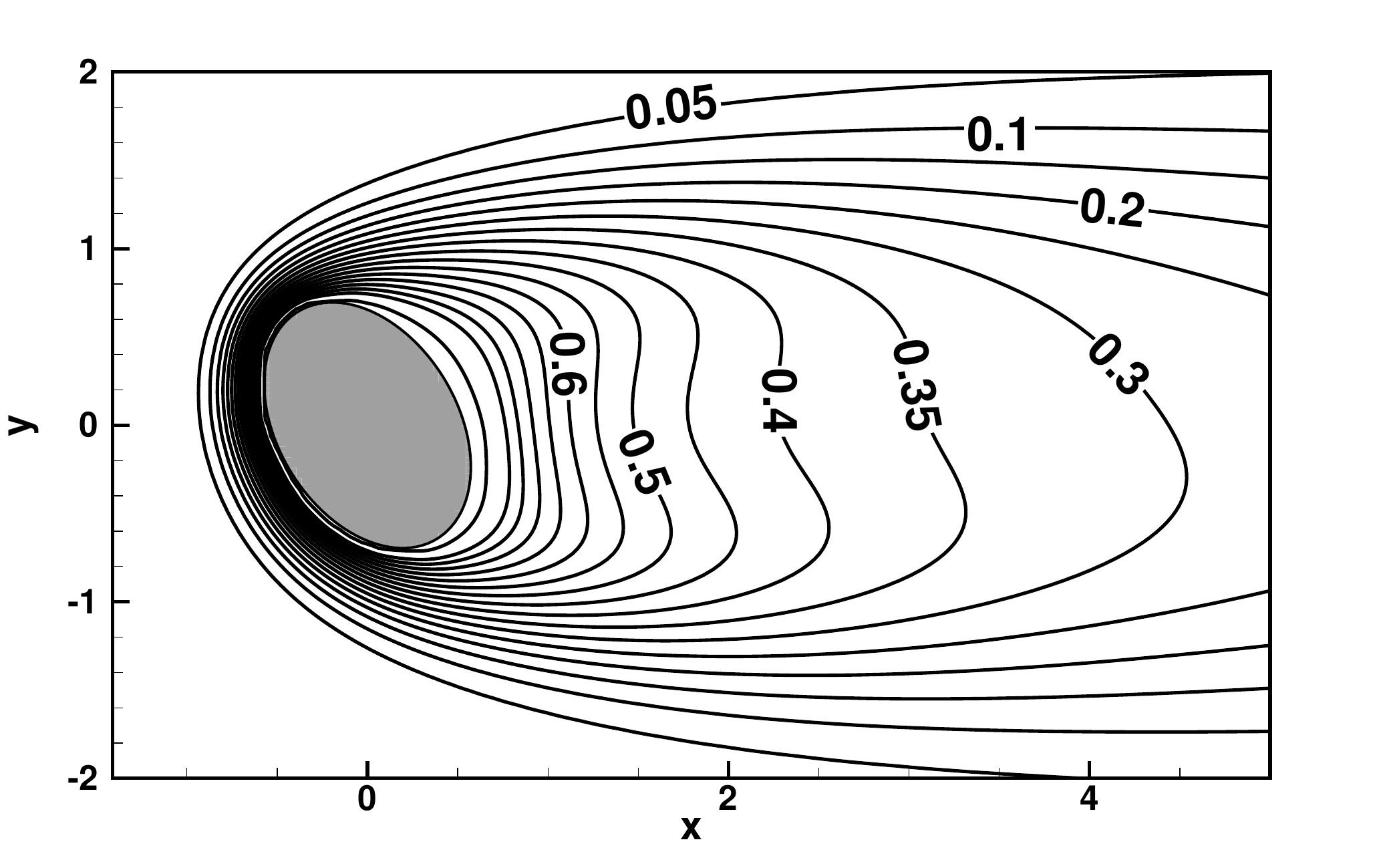} 
		\caption{$Re=31$}
	\end{subfigure}\hfil 
	\caption{\small{Steady state streamlines for $\theta=60^{\degree}$ and (a)$Re=10$, (b)$Re=20$, (c)$Re=30$, and (d)$Re=31$.}}
	\label{Fig:T-steady-60deg}
\end{figure}

Steady state streamlines and isotherms for $\theta = 75^{\degree}$ are shown in figures \ref{Fig:psi-steady-75deg} and \ref{Fig:T-steady-75deg} respectively. The $Re_c$ in this case is in the range $28 \leq Re < 29$. At $Re = 10$, we observe the formation of two recirculation regions on the surface of the cylinder as opposed to only one for $\theta = 45^{\degree}$, $60^{\degree}$ and none for $\theta = 15^{\degree}$, $30^{\degree}$. The wake region appears nearly symmetric as $\theta$ is increased. This tendency of the flow to approach symmetry is observed in the isotherms as well. Distortions in the isotherms in case appears at $Re = 20$ (figure \ref{Fig:T-steady-75deg} (b)), which is the same as for $\theta = 60^{\degree}$, but a closer look reveals that the distortion seen at $\theta = 75^{\degree}$ is more pronounced than that observed at $\theta = 60^{\degree}$. 
\begin{figure}[H]
	\centering
	\begin{subfigure}{0.3\textwidth}
		\includegraphics[width=\linewidth]{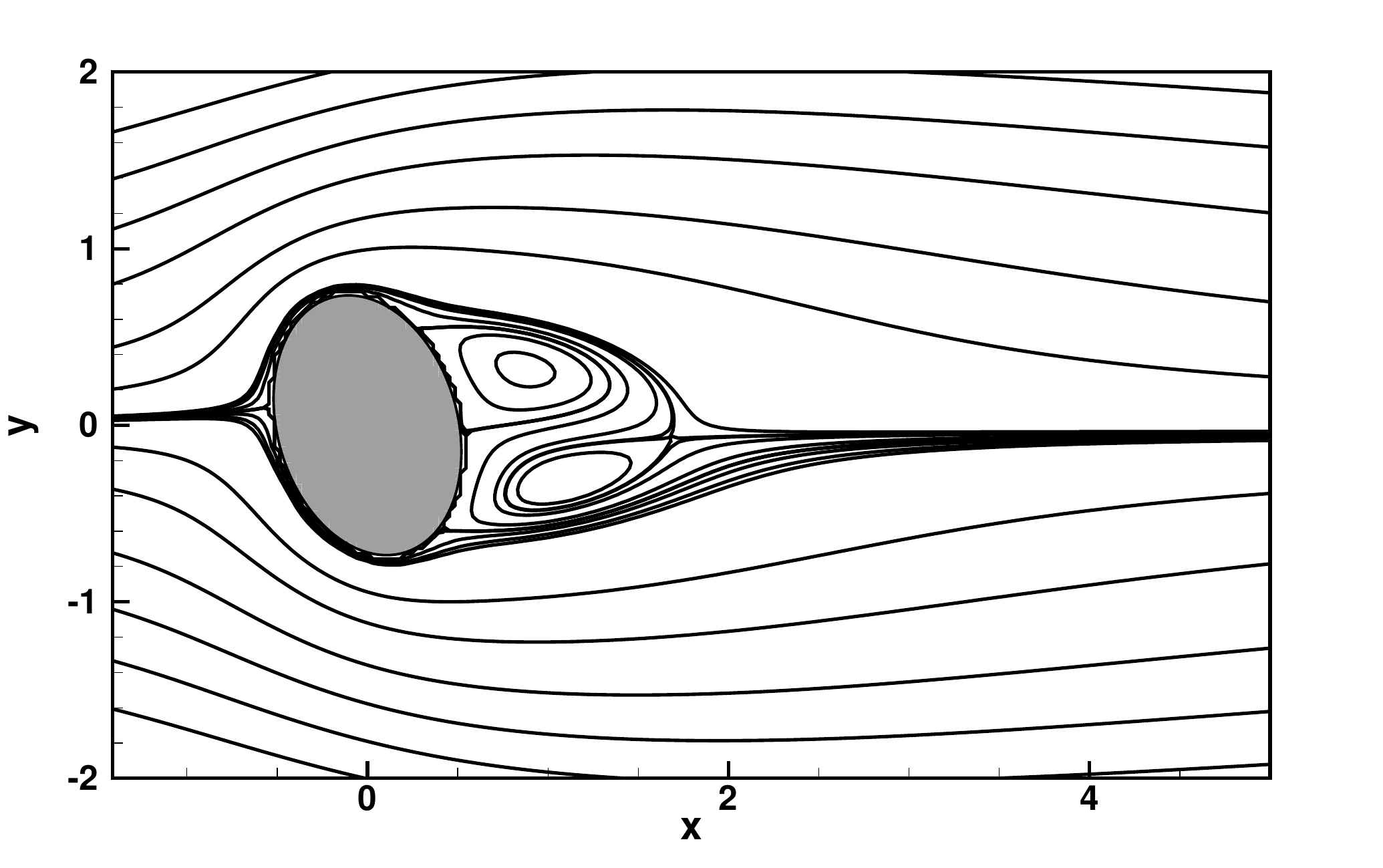} 
		\caption{$Re=10$}
	\end{subfigure}
	\begin{subfigure}{0.3\textwidth}
		\includegraphics[width=\linewidth]{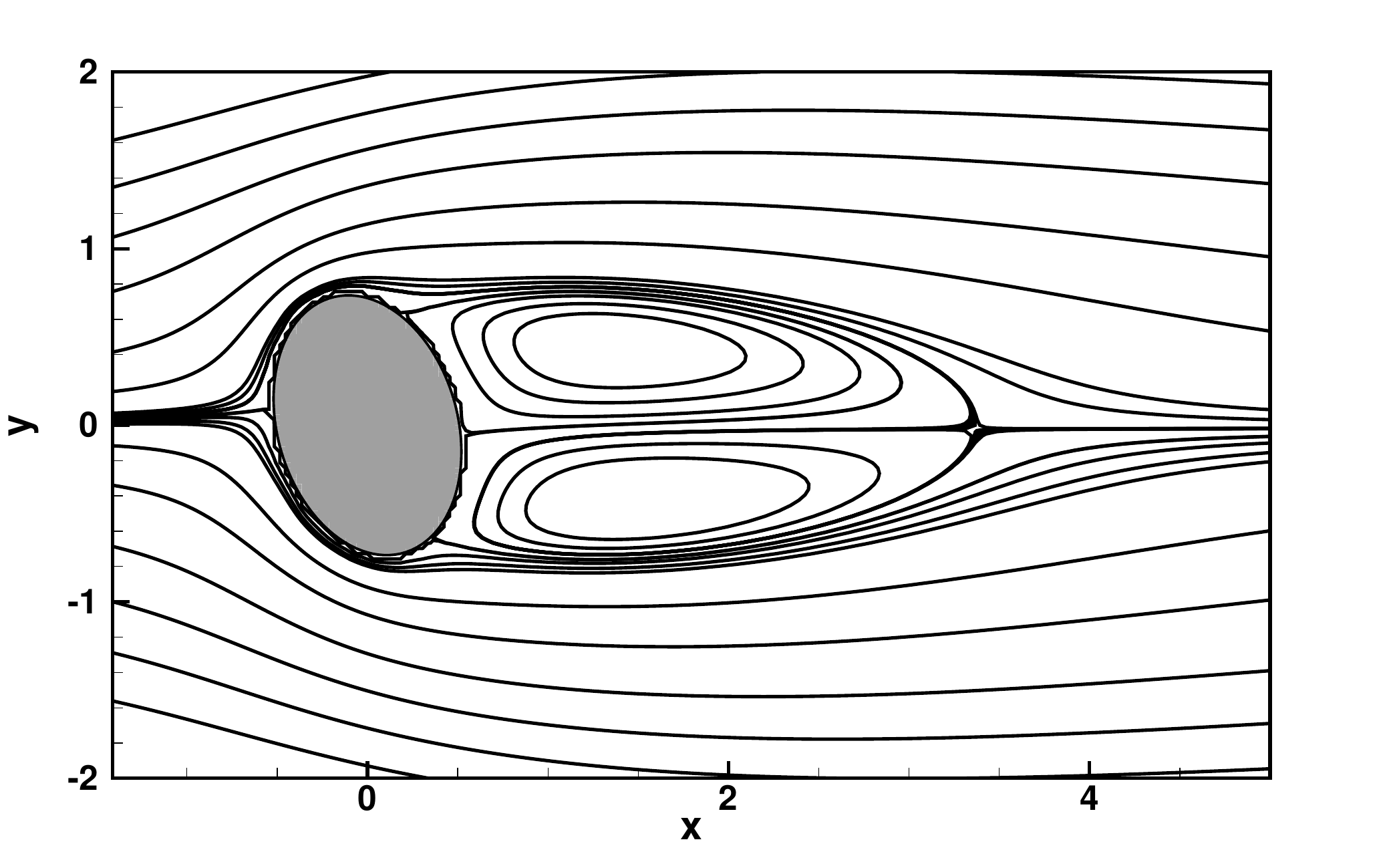} 
		\caption{$Re=20$}
	\end{subfigure}
		\begin{subfigure}{0.3\textwidth}
			\includegraphics[width=\linewidth]{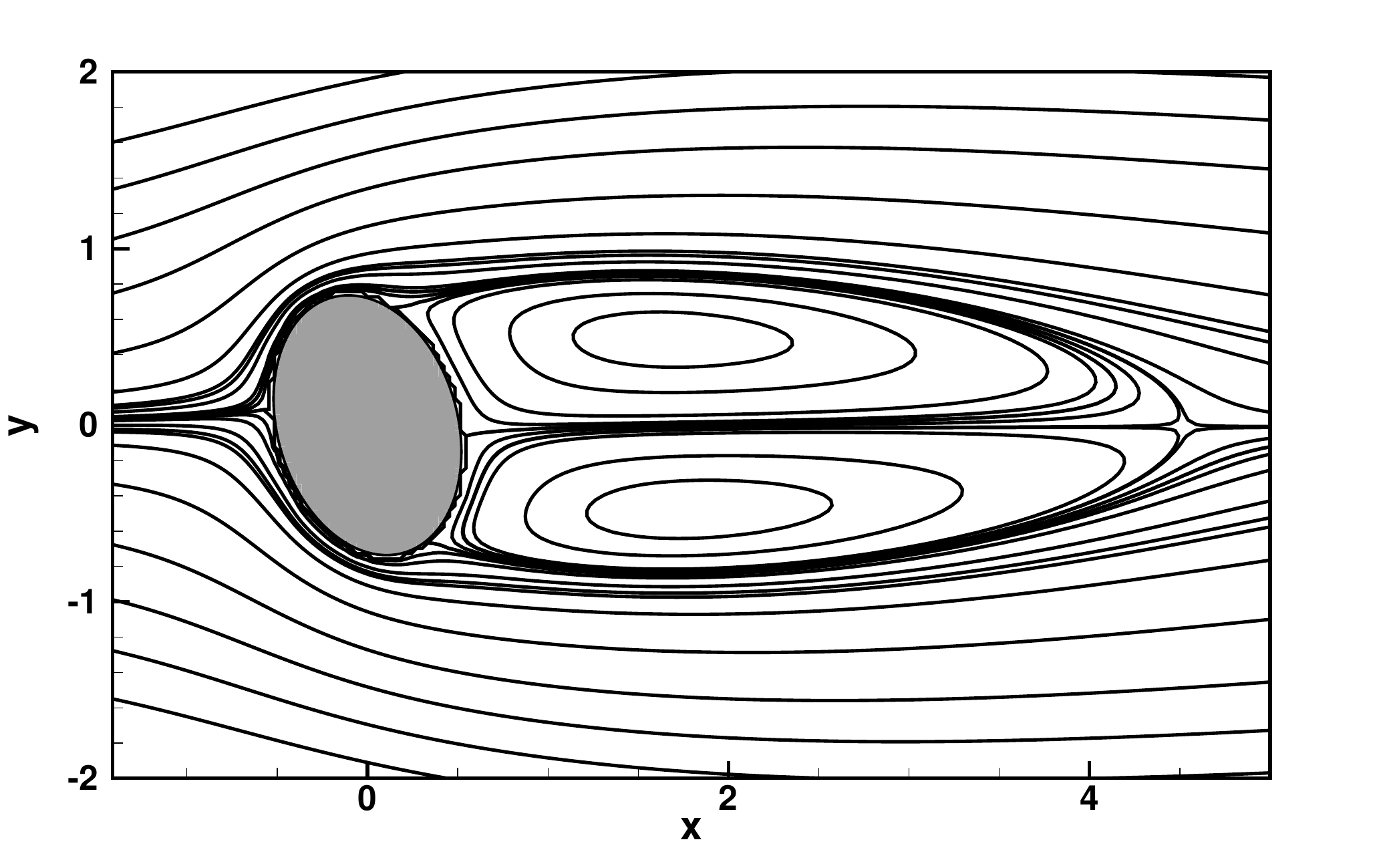} 
			\caption{$Re=28$}
		\end{subfigure} 
	\caption{\small{Steady state streamlines for $\theta=75^{\degree}$ and (a)$Re=10$, (b)$Re=20$, (c) $Re=28$.}}
	\label{Fig:psi-steady-75deg}
\end{figure}

\begin{figure}[H]
	\centering
	\begin{subfigure}{0.3\textwidth}
		\includegraphics[width=\linewidth]{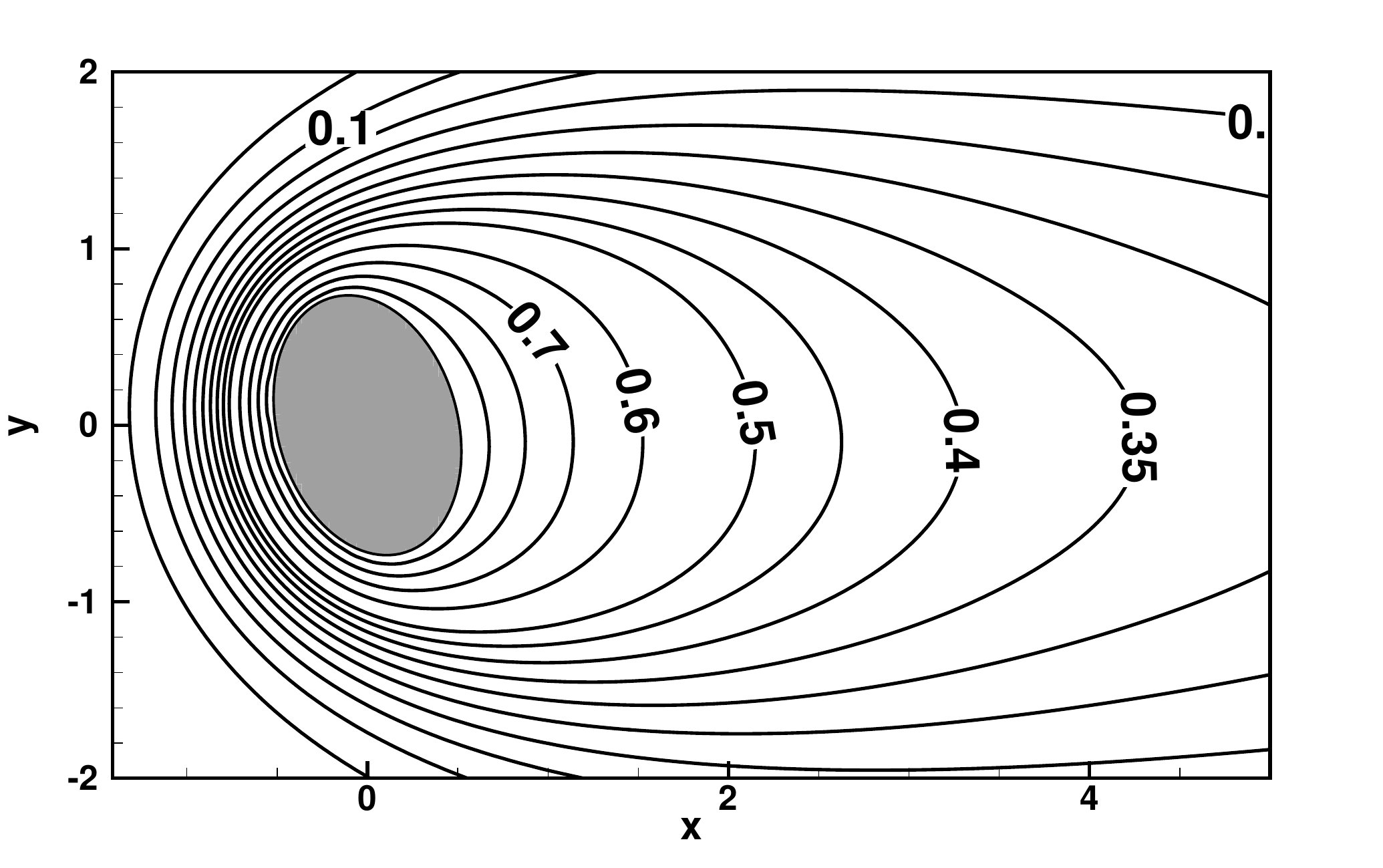} 
		\caption{$Re=10$}
	\end{subfigure}
	\begin{subfigure}{0.3\textwidth}
		\includegraphics[width=\linewidth]{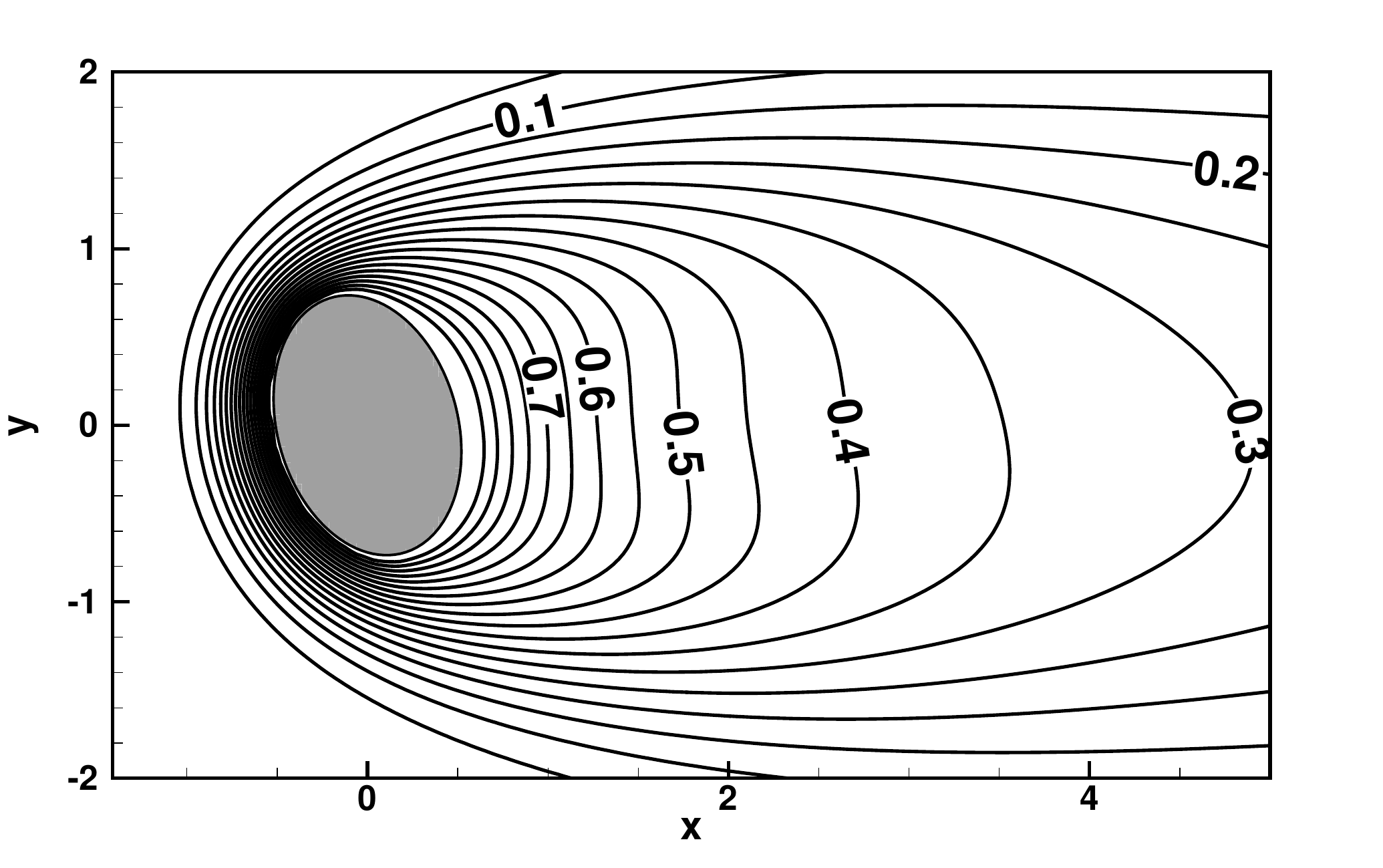} 
		\caption{$Re=20$}
	\end{subfigure}
\begin{subfigure}{0.3\textwidth}
	\includegraphics[width=\linewidth]{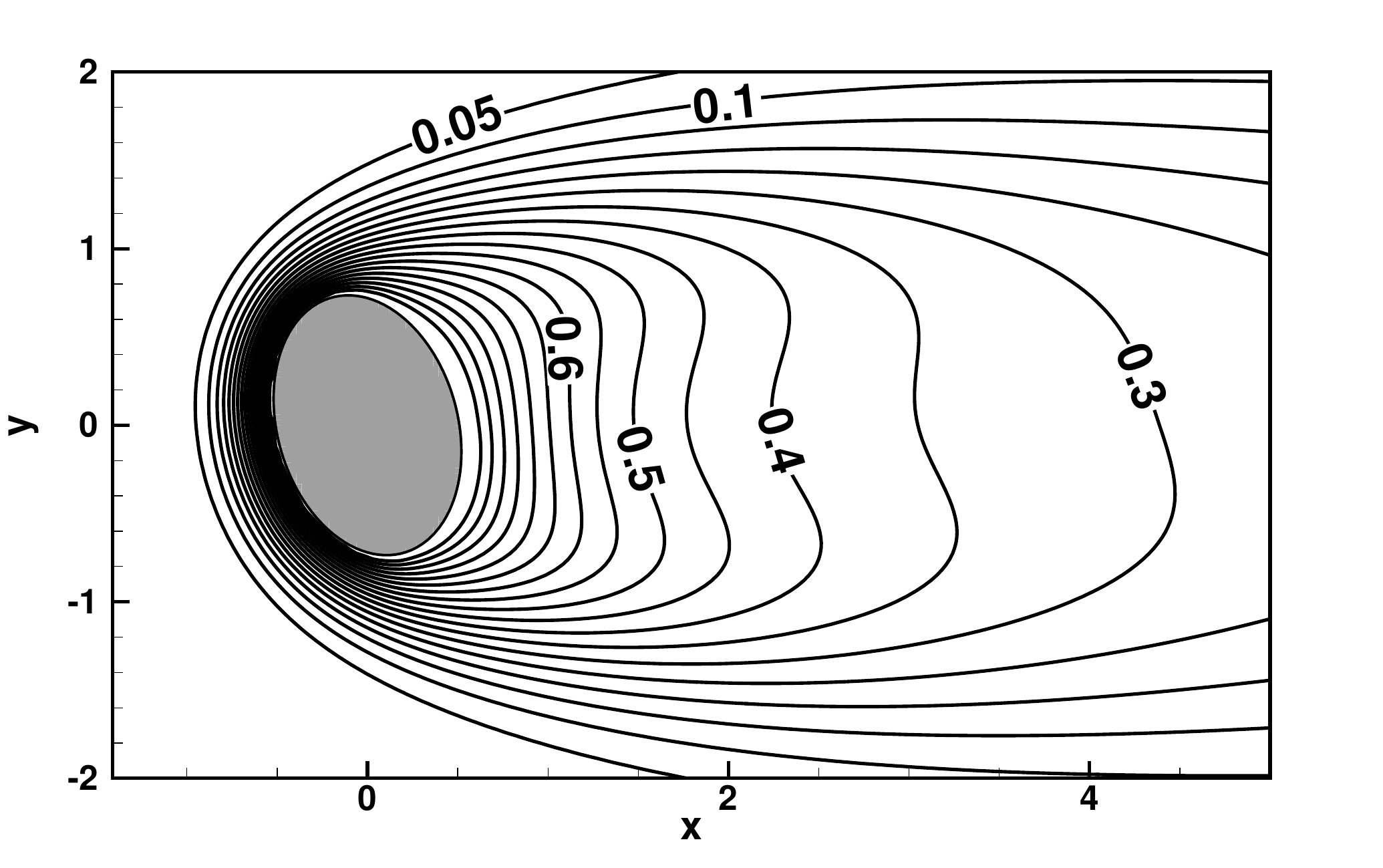} 
	\caption{$Re=28$}
\end{subfigure} 
	\caption{\small{Steady state isotherms for $\theta=75^{\degree}$ and (a)$Re=10$, (b)$Re=20$, (c) $Re=28$.}}
	\label{Fig:T-steady-75deg}
\end{figure}

As $\theta$ is increased to $90^{\degree}$ and incoming flow is symmetric w.r.t to the cylinder, we observe the flow becoming symmetric again in the cylinder wake, as evident from the streamlines and isotherms in  figures \ref{Fig:psi-steady-90deg} and \ref{Fig:isotherms-steady-90deg} respectively. The $Re_c$ is in the range $25 \leq Re < 26$. Again, distortion in the isotherms at $Re = 20$ is more pronounced than that observed at $\theta = 75^{\degree}$ (figure \ref{Fig:isotherms-steady-90deg}(b)).

\begin{figure}[H]
	\centering
	\begin{subfigure}{0.3\textwidth}
		\includegraphics[width=\linewidth]{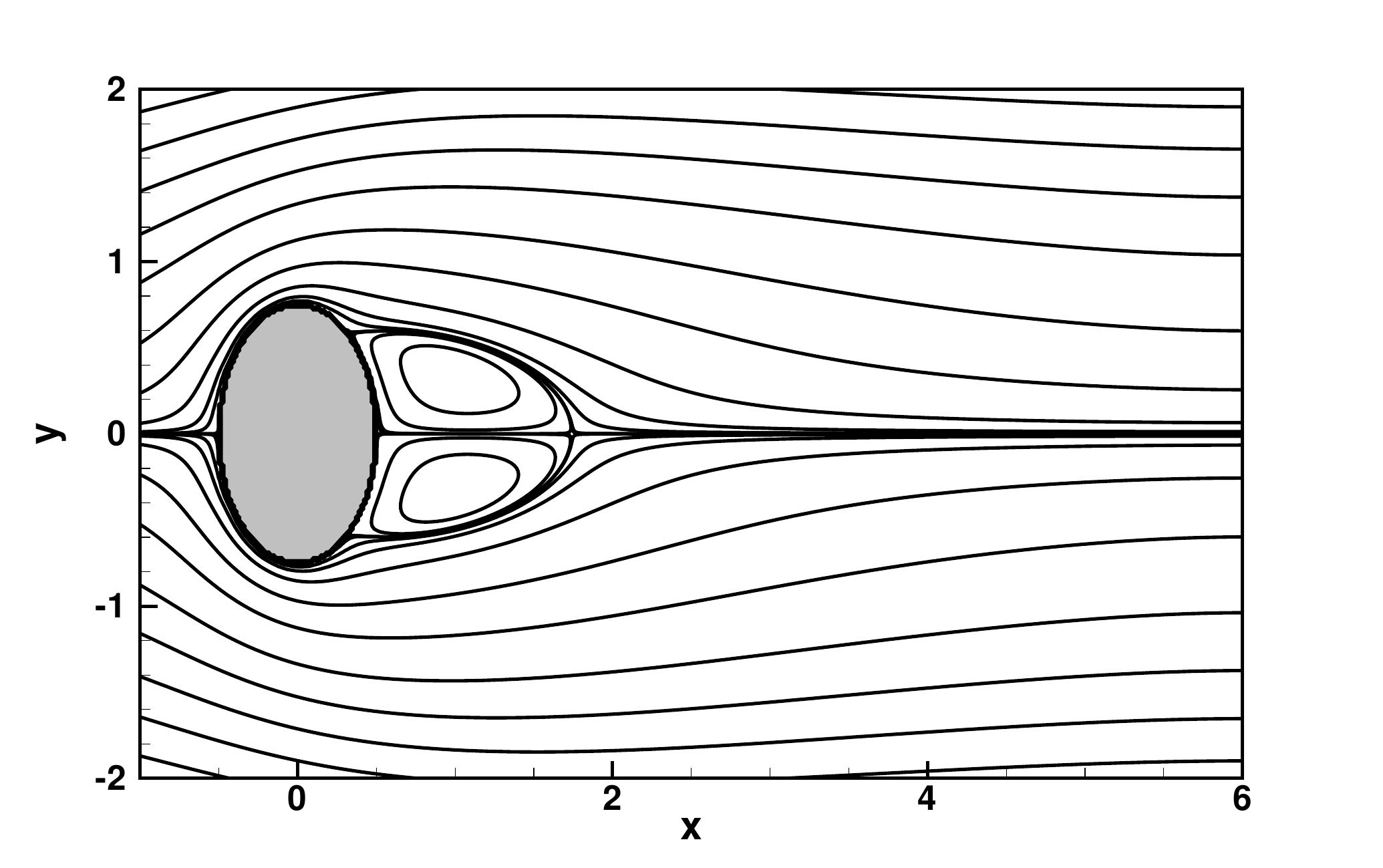} 
		\caption{$Re=10$}
	\end{subfigure}
	\begin{subfigure}{0.3\textwidth}
		\includegraphics[width=\linewidth]{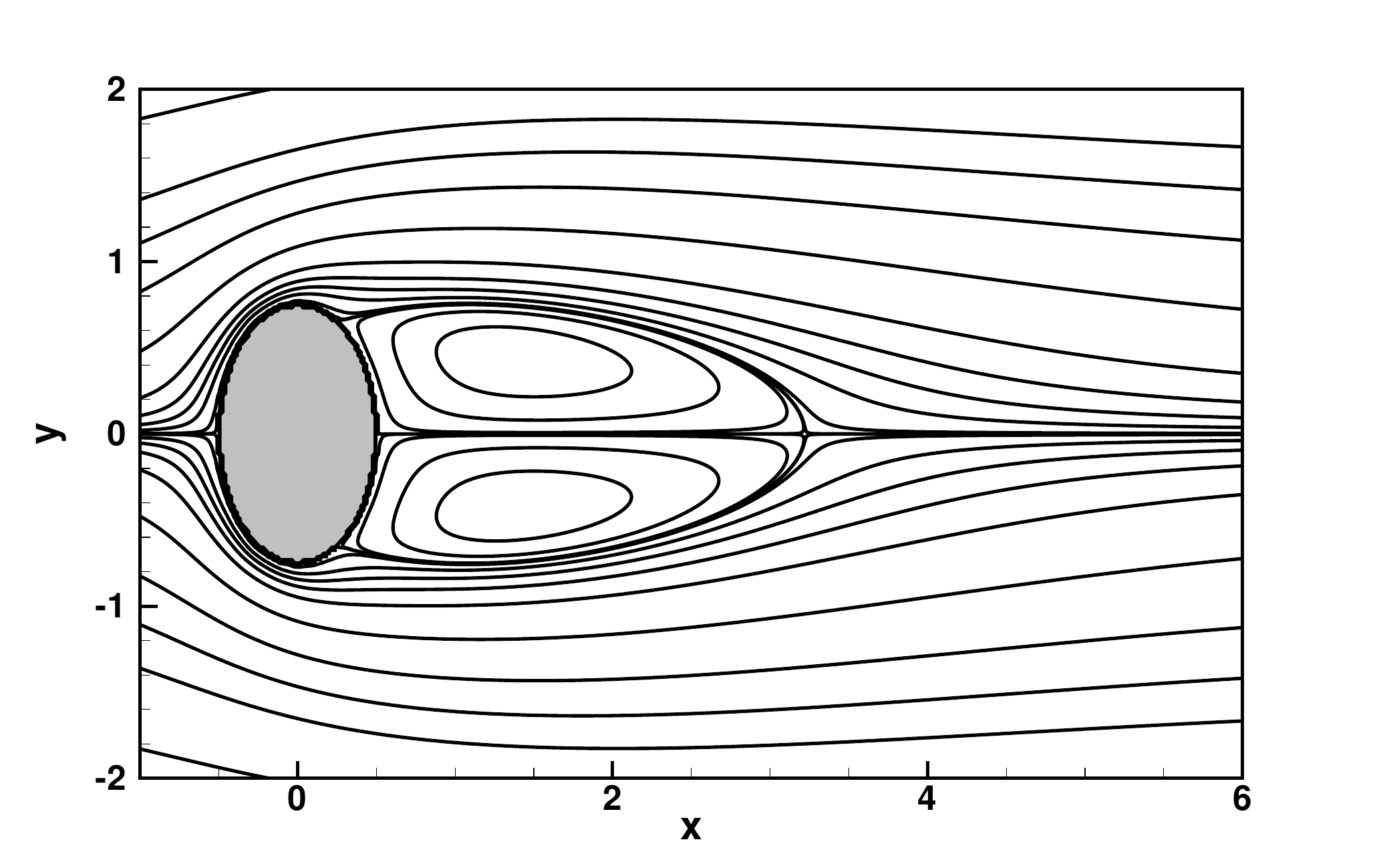} 
		\caption{$Re=20$}
	\end{subfigure}
	\begin{subfigure}{0.3\textwidth}
		\includegraphics[width=\linewidth]{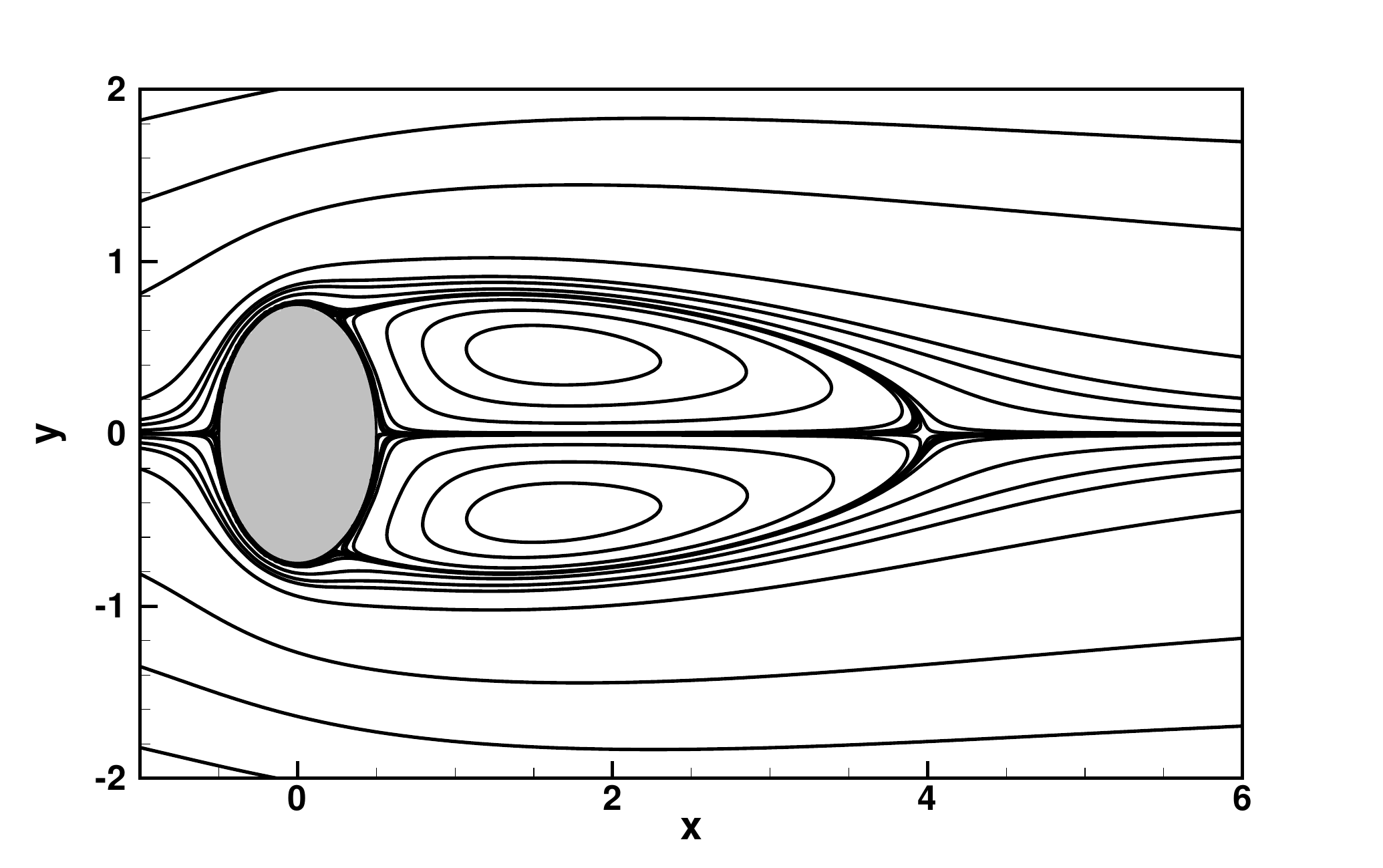} 
		\caption{$Re=25$}
	\end{subfigure} 
	\caption{\small{Steady state streamlines for $\theta=90^{\degree}$ and (a)$Re=10$, (b)$Re=20$, (c) $Re=25$.}}
	\label{Fig:psi-steady-90deg}
\end{figure}

\begin{figure}[H]
	\centering
	\begin{subfigure}{0.3\textwidth}
		\includegraphics[width=\linewidth]{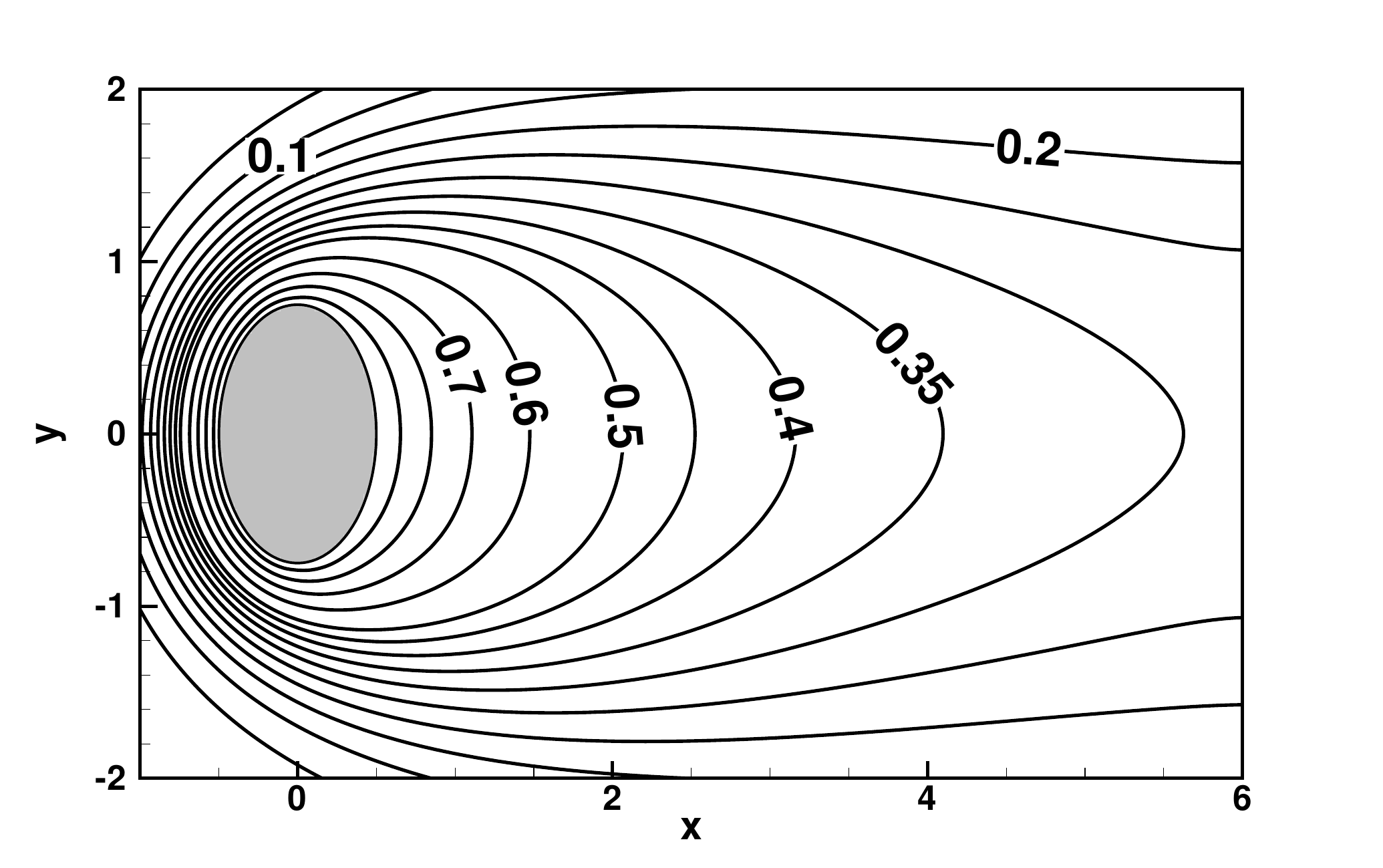} 
		\caption{$Re=10$}
	\end{subfigure}
	\begin{subfigure}{0.3\textwidth}
		\includegraphics[width=\linewidth]{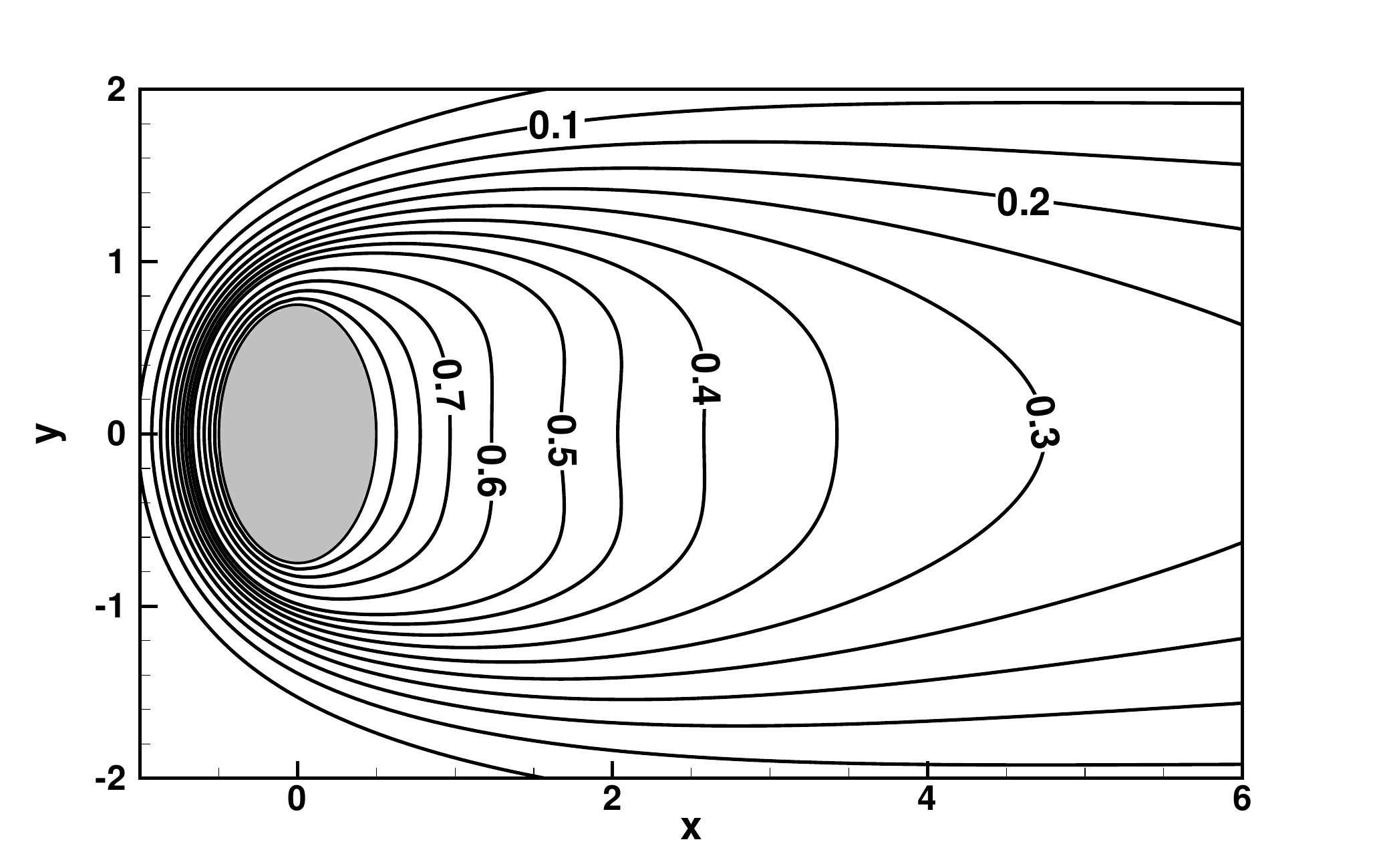} 
		\caption{$Re=20$}
	\end{subfigure}
	\begin{subfigure}{0.3\textwidth}
		\includegraphics[width=\linewidth]{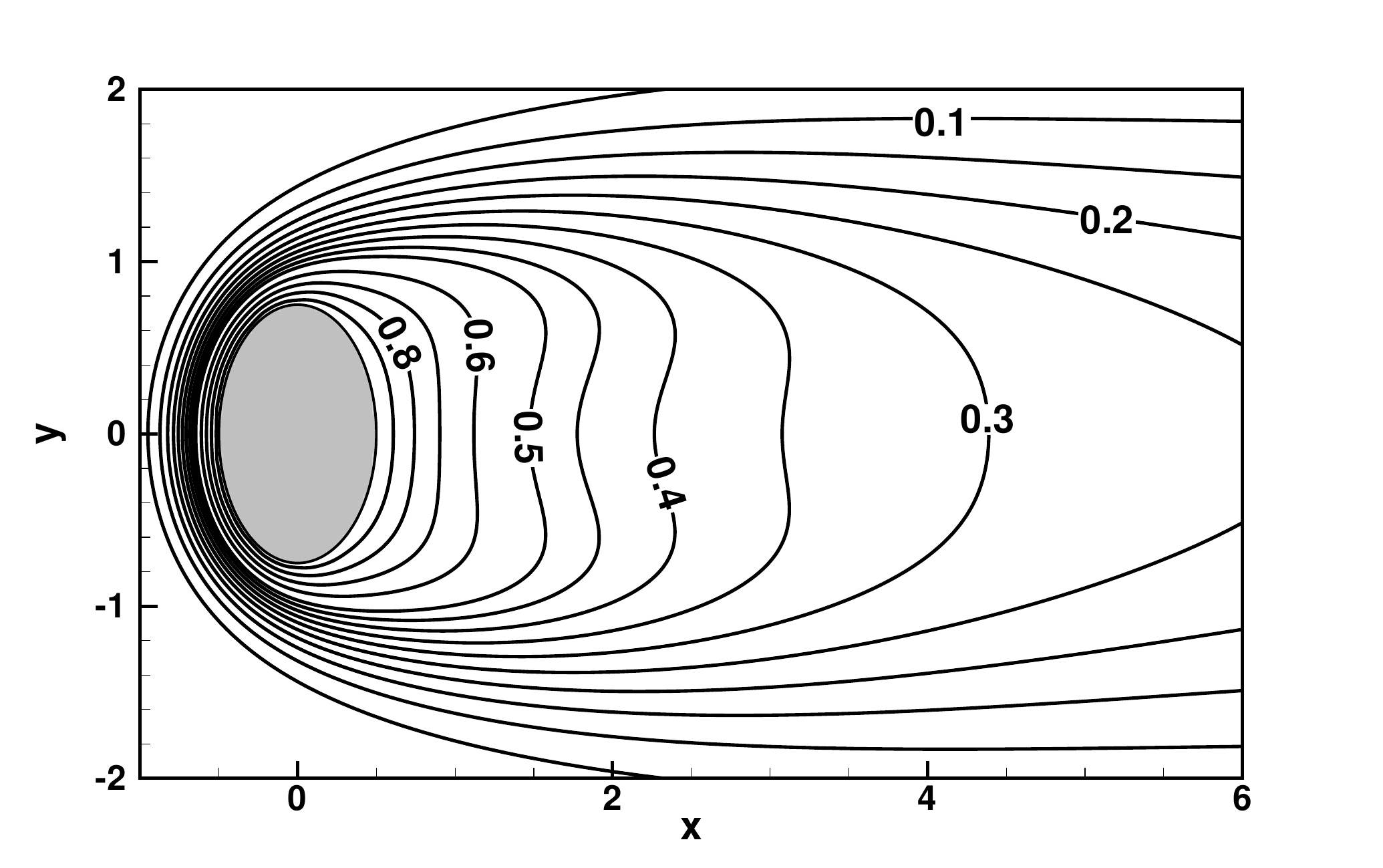} 
		\caption{$Re=25$}
	\end{subfigure} 
	\caption{\small{Steady state isotherms for $\theta=90^{\degree}$ and (a)$Re=10$, (b)$Re=20$, (c) $Re=25$.}}
	\label{Fig:isotherms-steady-90deg}
\end{figure}

The wake lengths for $\theta = 0^{\degree}$, $90^{\degree}$ are tabulated in \ref{tab:wake-length} for reference. As we can see from the table also the wake lengths at $\theta = 90^{\degree}$ are markedly higher than the wake length at $\theta = 0^{\degree}$.
\begin{table}[H]
	\caption{Wake length for $\theta = 0^{\degree}$, $90^{\degree}$}
	\centering
	\begin{tabular}{|l|l|l|} 
		\hline
		\multirow{2}{*}{$Re$} & \multicolumn{2}{c|}{$\theta$}  \\ 
		\cline{2-3}
		& $0^{\degree}$     & $90^{\degree}$                  \\ 
		\hline
		10                  & 0.085 & 1.307               \\ 
		\hline
		20                  & 0.643 & 2.725               \\ 
		\hline
		25                  & 0.922 & 3.457               \\ 
		\hline
		30                  & 1.141 & --                    \\ 
		\hline
		40                  & 1.693 & --                    \\ 
		\hline
		50                  & 2.262 & --                    \\ 
		\hline
		59                  & 2.848 & --                    \\
		\hline
	\end{tabular}\label{tab:wake-length}
\end{table}

%

\subsubsection{Average Nusselt number and Drag coefficient}\label{sec:Nusselt-steady}
The local and surface averaged Nusselt numbers are calculated from equations \eqref{eq:local_Nusselt} and \eqref{eq:Nusselt} respectively. We then plot the variation of the local $Nu$ along the surface of the cylinder. Figure \ref{Fig:nusselt-schematic} shows the schematic for measuring the perimeter of the ellipse. When $\theta=0^{\degree}$, we start at point $P$ and then move clockwise along the points $Q$, $R$, $S$, $W$. Note that $W$ coincides with $P$. Let $l_E$ denote the perimeter of the cylinder measured along $PQRSW$. When $\theta \neq 0^{\degree}$, the perimeter is measured along $P^'Q^'R^'S^'W^'$.

\begin{figure}[H] 
	\centering
	\includegraphics[width = 0.5\textwidth]{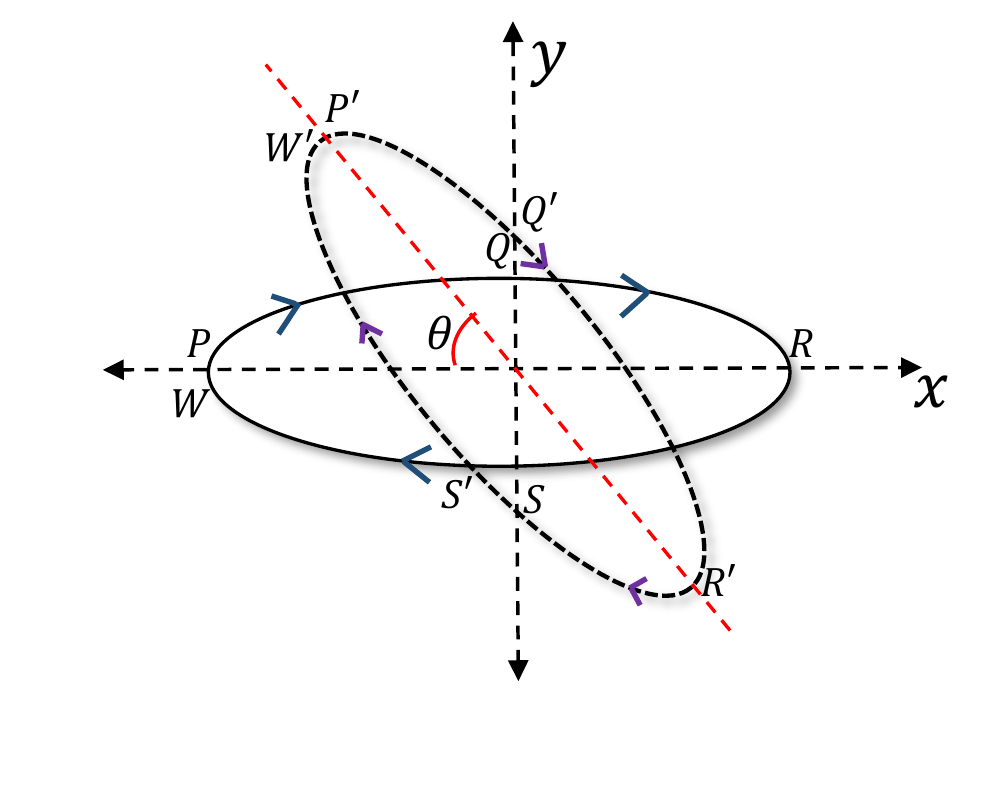}
	\caption{\small{Schematic showing the cylinder orientation for Nusselt number computation.}}\label{Fig:nusselt-schematic}
\end{figure}

Figure \ref{Fig:local-Nu} shows the variation of local $Nu$ along the surface of the cylinder for $\theta = 0^{\degree} - 90^{\degree}$. For every value of $\theta$ we observe that $Nu$ increases with $Re$. For $\theta = 0^{\degree}$ (figure \ref{Fig:local-Nu} (a)), the variation in $Nu$ is observed to be symmetric, with the maximum $Nu$ at the leading edge of the cylinder, i.e., at point $P$ shown in the schematic (figure \ref{Fig:nusselt-schematic}). At $\theta = 15^{\degree}$ (figure \ref{Fig:local-Nu} (b)), $Nu$ decreases first along the surface $P'Q'R'$. In the vicinity of the point $R'$ we observe a global minima and a local maxima of $Nu$. As one moves from the point $R'$, i.e., the trailing edge of the cylinder, to the point $W'$ along the surface $R'S'W'$, an increase in $Nu$ is observed. The variation of $Nu$ for $\theta = 30^{\degree}$ (figure \ref{Fig:local-Nu} (c)) follows a similar pattern as that of $\theta = 15^{\degree}$. However, two important differences stand out. First, the maximum value of $Nu$ for any $Re$ at $\theta = 30^{\degree}$ is greater than the maximum value of $Nu$ for the same $Re$ at $\theta = 15^{\degree}$. This is due to the fact that flow separation happens at a lower $Re$ for $\theta = 30^{\degree}$, which leads to greater mixing of the fluid thus increasing the rate of heat transfer. Thus, max($Nu$) at $Re = 40$ for $\theta = 30^{\degree}$ $>$ max($Nu$) at $Re = 40$ for $\theta = 15^{\degree}$, and so on. Secondly, there is slight shift in the locations of the local maxima and minima of $Nu$ in the clockwise direction. For $\theta=45^{\degree}$ (figure \ref{Fig:local-Nu} (d)) also, the variation in $Nu$ follows the pattern we observed for $\theta = 30^{\degree}$. We also observe that the variation of $Nu$ along the surface $P'Q'R'$ assumes an almost parabolic shape. At $\theta = 60^{\degree}$ (figure \ref{Fig:local-Nu} (e)), similar to the previous two cases, the locations of the local maxima and minima shift in the clockwise direction along the surface of the cylinder.  At $\theta=75^{\degree}$ (figure \ref{Fig:local-Nu} (f)), there is a significant reversal in one of the patterns observed in the previous four cases. Here, one can observe that the maximum value of $Nu$ for a particular $Re$ is less than the maximum value of $Nu$ for the same $Re$ at $\theta = 60^{\degree}$, i.e., max  max($Nu$) at $Re = 10$ for $\theta = 75^{\degree}$ $<$ max($Nu$) at $Re = 10$ for $\theta = 60^{\degree}$. Note that, at $\theta = 90^{\degree}$ (figure \ref{Fig:local-Nu} (g)), the maximum value of $Nu$ for a particular $Re$ is again less than the maximum value of $Nu$ for the same $Re$ at $\theta = 75^{\degree}$. Interestingly, the minimum value of $Nu$ keeps on decreasing from $\theta = 15^{\degree} - 90^{\degree}$, and it occurs on the surface $P'Q'R'$. Note that for the variation of $Nu$ is smoother along the surface  on the part $P'Q'R'$ for all values of $\theta$. 

The variation of surface averaged Nusselt number, $Nu_{\text{av}}$, with the Reynolds number for different values of $\theta$ is shown in figure \ref{Fig:Av-Nu} (a). Apart from $\theta = 0^{\degree}$, the variation of $Nu_{\text{av}}$ follows a similar pattern for all values of $\theta$. For $\theta = 0^{\degree}$, we observe that the value of $Nu_{\text{av}}$ for a particular $Re$ is markedly higher than the corresponding $Nu_{\text{av}}$ values at other values of $\theta$. For the rest of $\theta$ values considered, the average Nusselt number increases with $Re$ due an increase in flow strength as $Re$ is increased. Note that the value of $Nu_{\text{av}}$ also increases as $\theta$ is increased. Thus, the value of $Nu_{\text{av}}$ at $Re = 10$ for $\theta = 30^{\degree}$ is greater than the value of $Nu_{\text{av}}$ at $Re = 10$ for $\theta = 15^{\degree}$ and so on. Also, the value of $Nu_{\text{av}}$ is minimum at $\theta = 15^{\degree}$.

Figure \ref{Fig:Av-Nu} (b) shows the variation of drag coefficient  $C_D$ with $Re$ for different values of $\theta$, which is computed by using \eqref{eq:drag}. We can see that for a particular $\theta$, $C_D$ decreases with $Re$, which is on the expected line, as with increase in $Re$, inertial forces start dominating the viscous ones.  Two cases, however, stand out viz. $\theta = 0^{\degree}$ and $\theta = 90^{\degree}$. For a given $Re$, the values of $C_D$ at $\theta = 0^{\degree}$, $90^{\degree}$ are greater than the value of $C_D$ at the rest of $\theta$ values. Further, the drag forced experienced by the body at $\theta = 90^{\degree}$ is the highest of all for a given $Re$. Also, as $\theta$ is increased for a particular $Re$, flow separation occurs, which leads to an increase in the pressure difference between the front and rear half of the cylinder, thereby causing an increase in the pressure drag force. Thus for a fixed Reynolds number, $C_D$ increases as $\theta$ is increased.

\begin{figure}[H]
	\centering
	\begin{subfigure}{0.3\textwidth}
		\includegraphics[width=\linewidth]{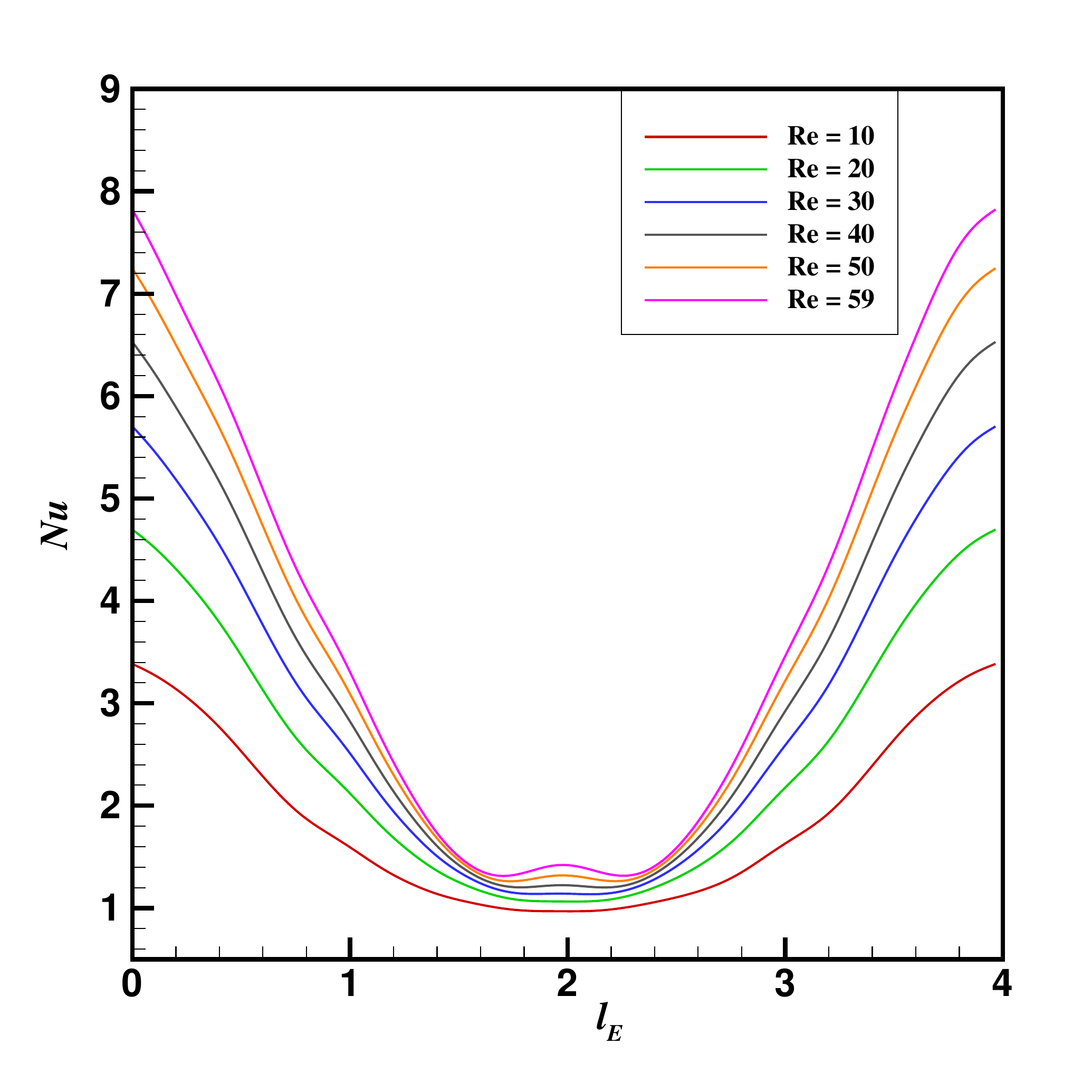} 
		\caption{$\theta=0^{\degree}$}
	\end{subfigure}\hfil 
	\begin{subfigure}{0.3\textwidth}
		\includegraphics[width=\linewidth]{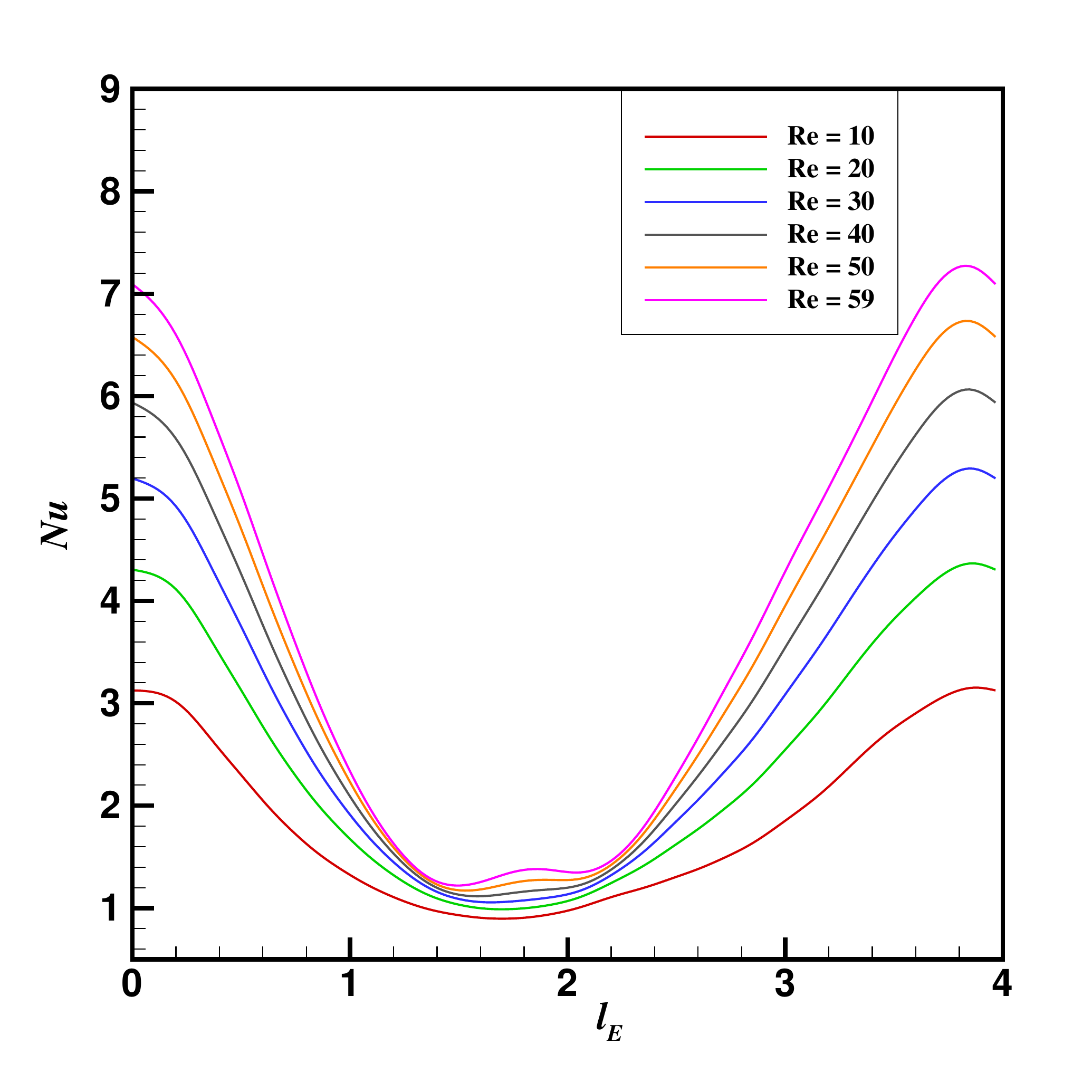} 
		\caption{$\theta=15^{\degree}$}
	\end{subfigure}\hfil 
	\begin{subfigure}{0.3\textwidth}
		\includegraphics[width=\linewidth]{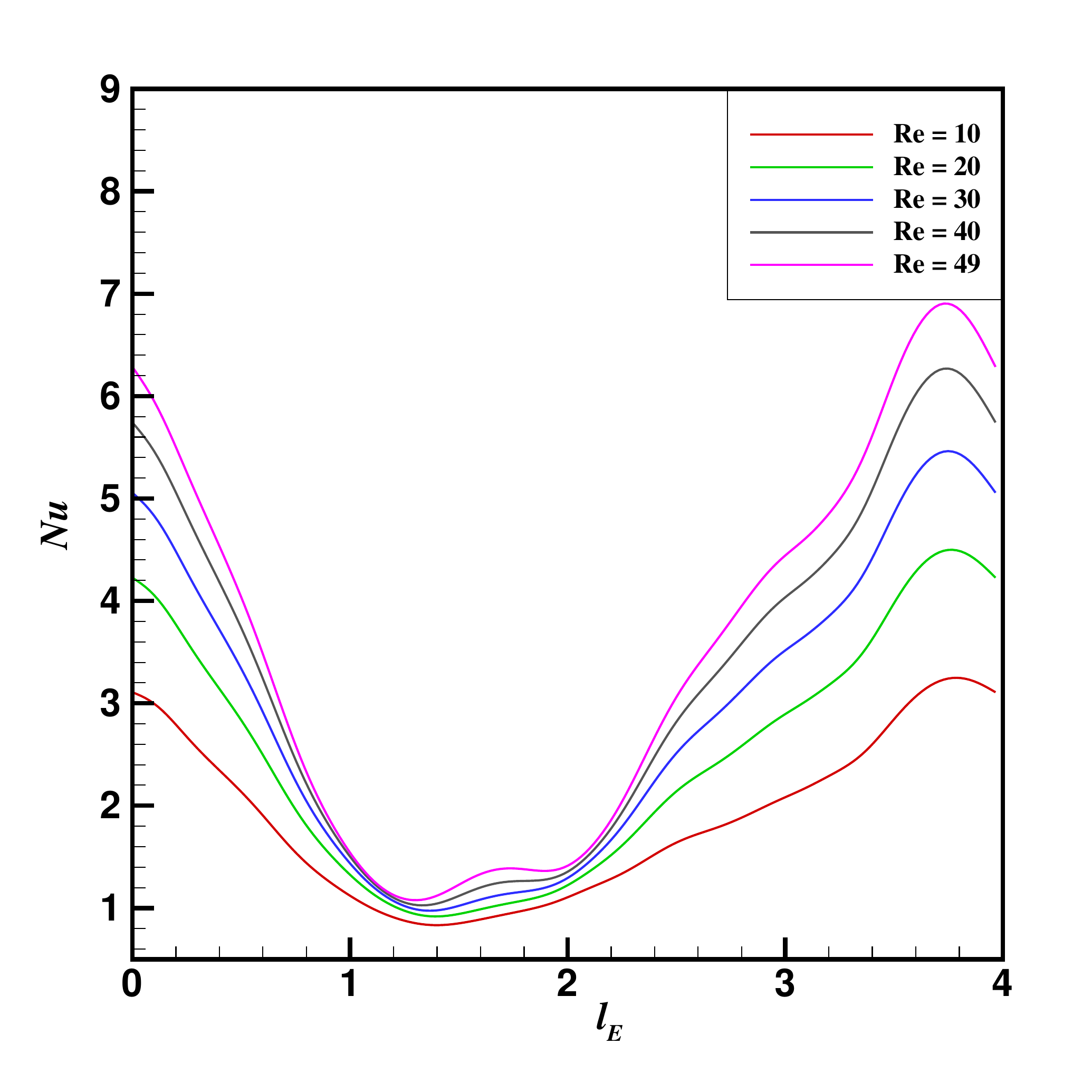} 
		\caption{$\theta=30^{\degree}$}
	\end{subfigure}\hfil 
	\begin{subfigure}{0.3\textwidth}
		\includegraphics[width=\linewidth]{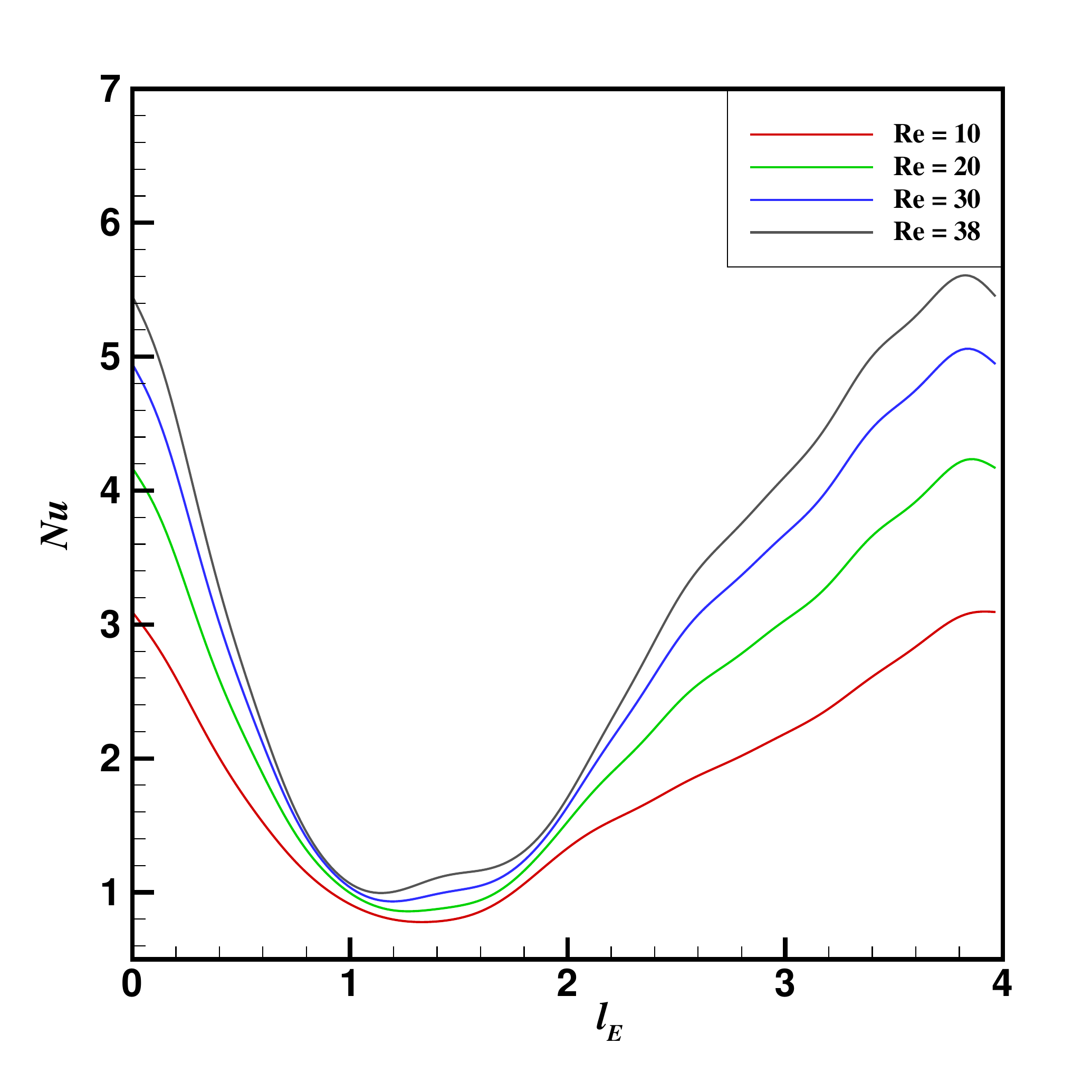} 
		\caption{$\theta=45^{\degree}$}
	\end{subfigure}\hfil 
	\begin{subfigure}{0.3\textwidth}
		\includegraphics[width=\linewidth]{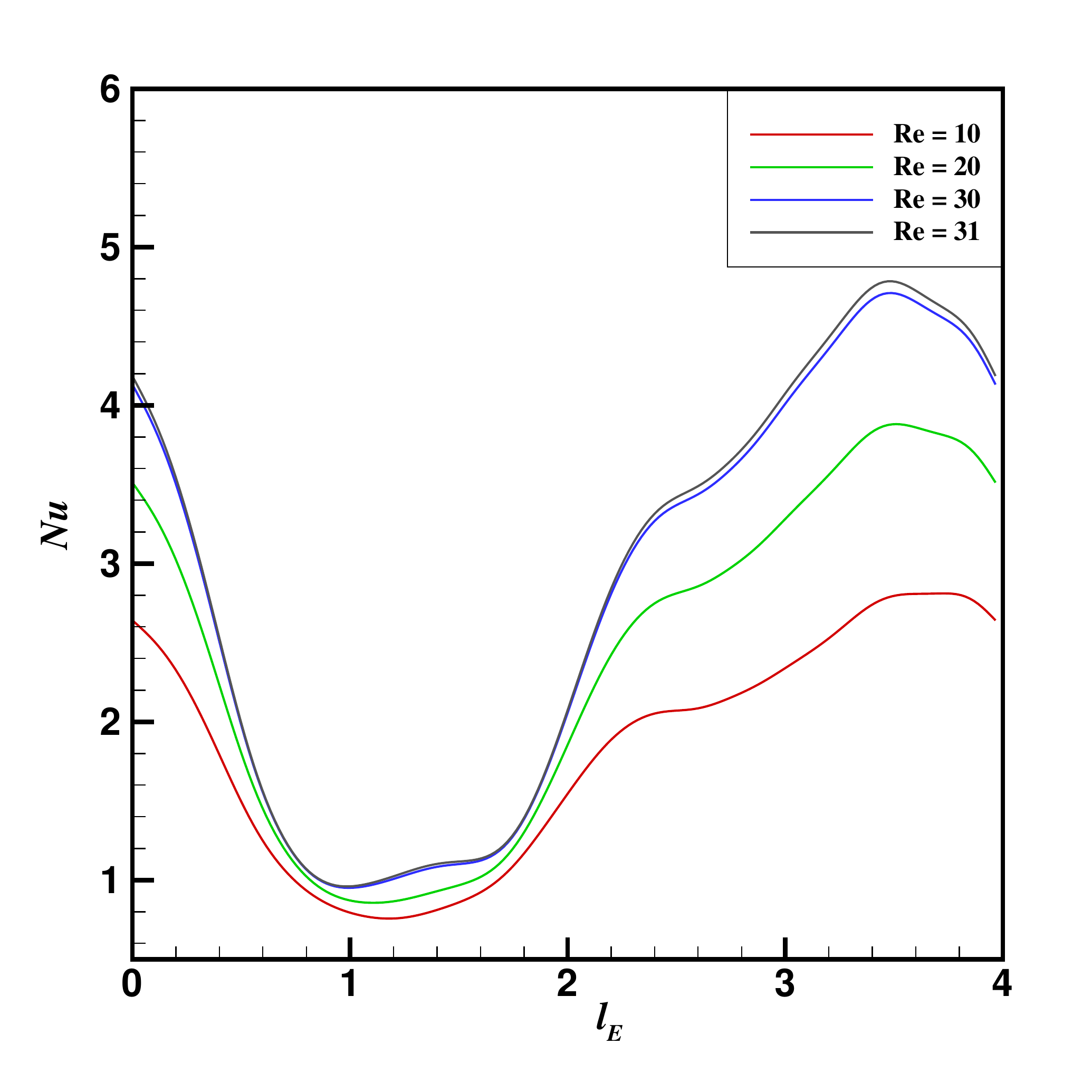} 
		\caption{$\theta=60^{\degree}$}
	\end{subfigure}\hfil 
	\begin{subfigure}{0.3\textwidth}
		\includegraphics[width=\linewidth]{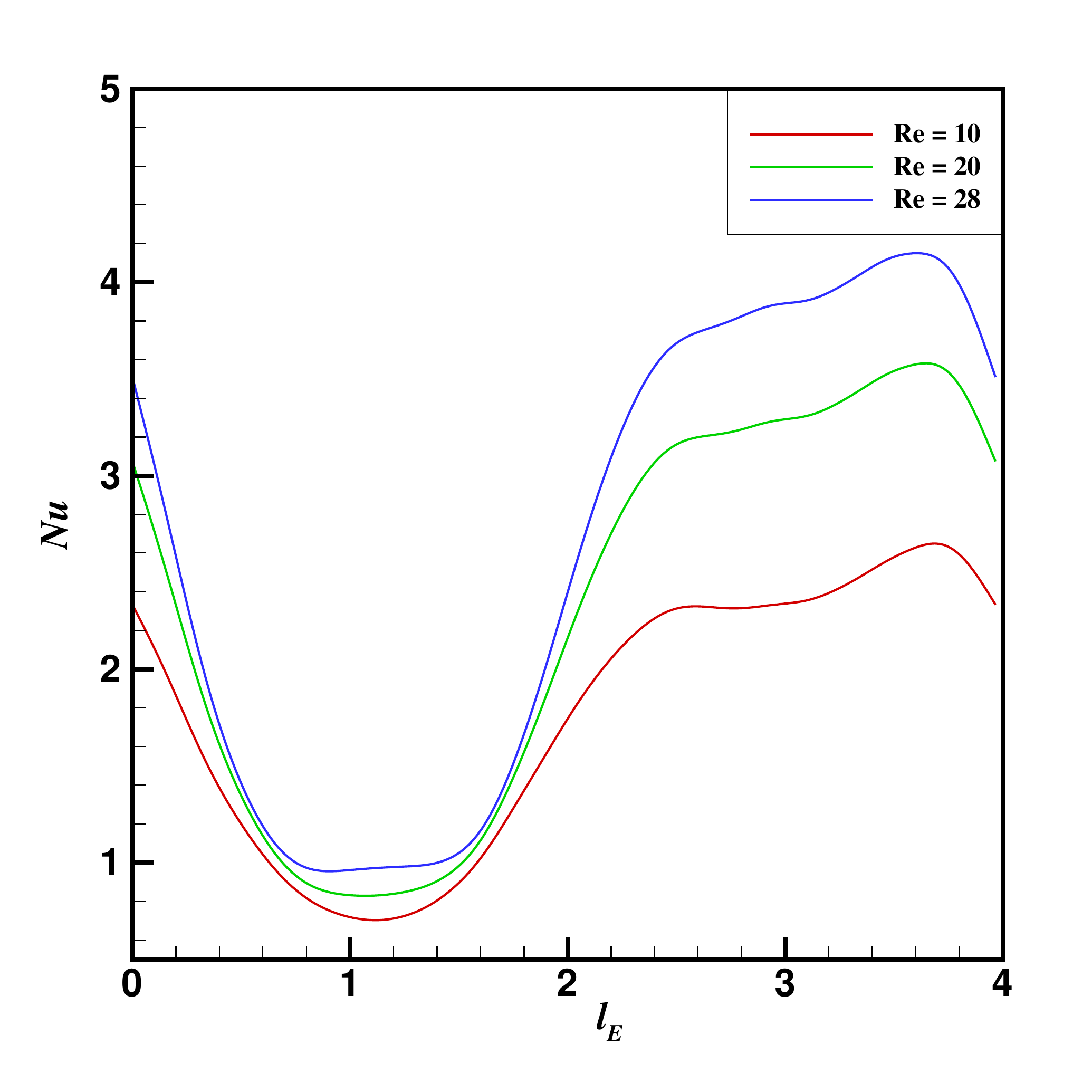} 
		\caption{$\theta=75^{\degree}$}
	\end{subfigure}\hfil 
	\begin{subfigure}{0.3\textwidth}
		\includegraphics[width=\linewidth]{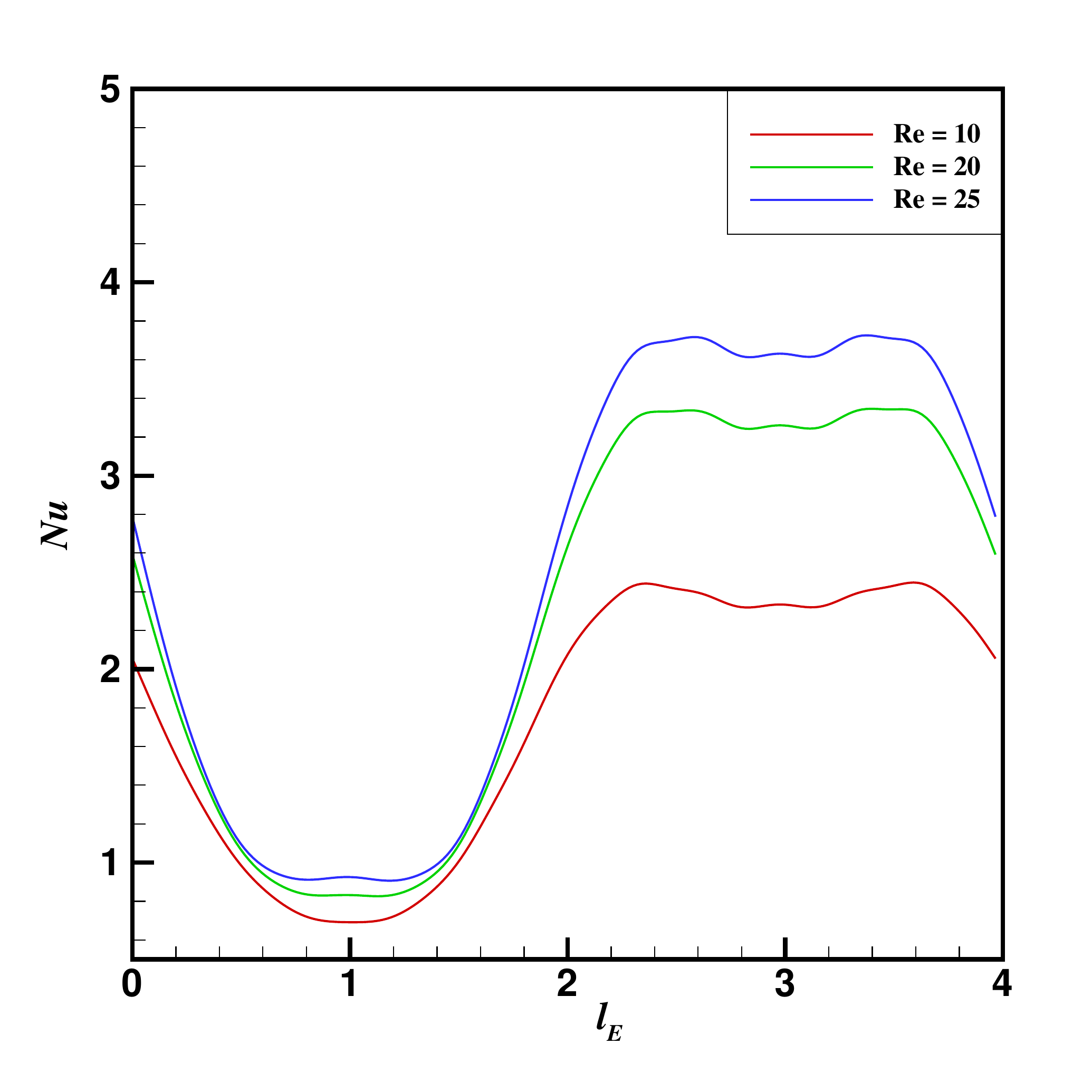} 
		\caption{$\theta=90^{\degree}$}
	\end{subfigure}\hfil 
	\caption{\small{Variation of local Nusselt number along the surface of the cylinder for (a) $\theta=0^{\degree}$, (b) $\theta=15^{\degree}$, (c) $\theta=30^{\degree}$, (d) $\theta=45^{\degree}$, (e) $\theta=60^{\degree}$, (f) $\theta=75^{\degree}$, (g) $\theta=90^{\degree}$.}}
	\label{Fig:local-Nu}
\end{figure}

\begin{figure}[H]
	\centering
	\begin{subfigure}{0.5\textwidth}
		\includegraphics[width=\linewidth]{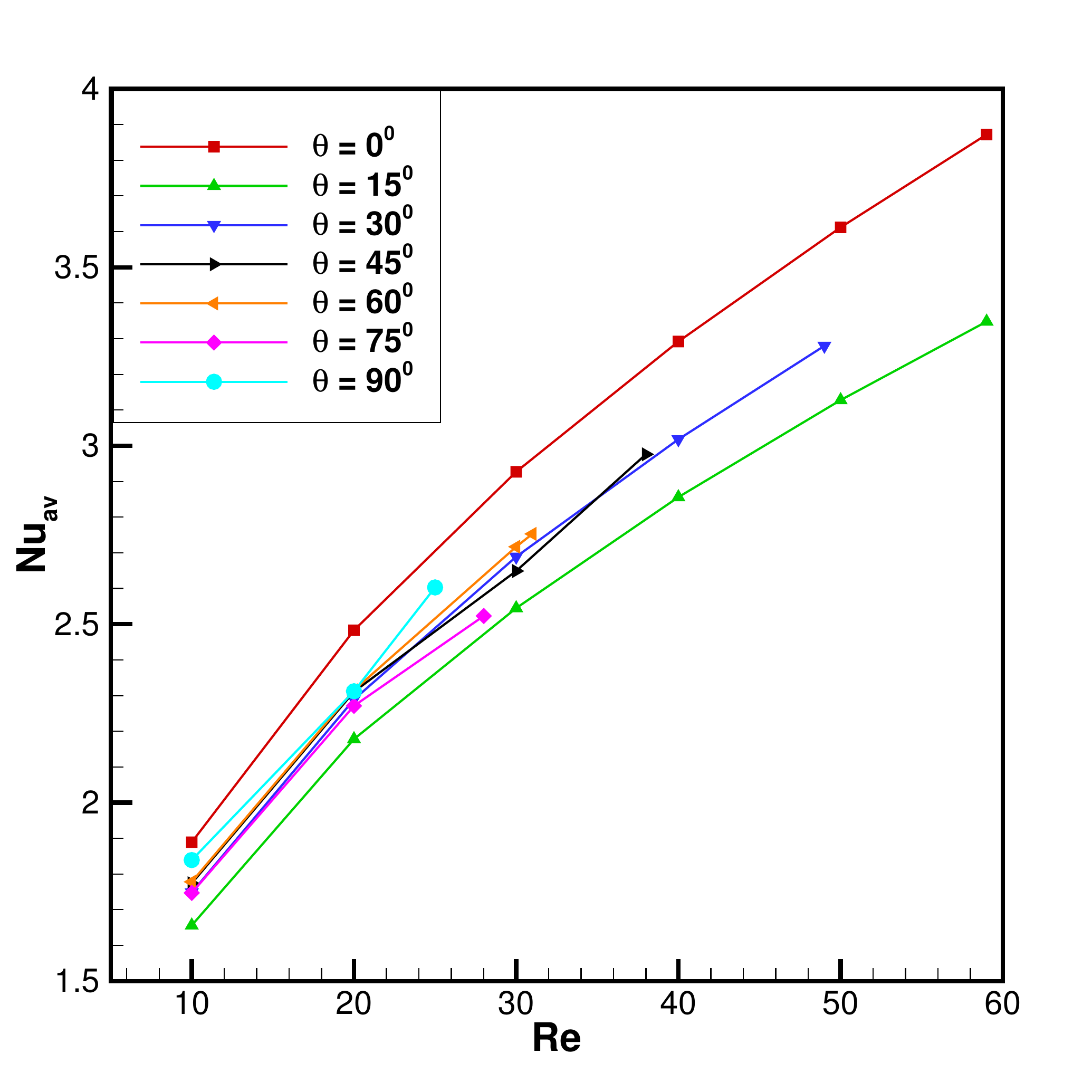} 
		\caption{$Nu_{\text{av}}$ vs. $Re$}
	\end{subfigure}\hfil 
	\begin{subfigure}{0.5\textwidth}
		\includegraphics[width=\linewidth]{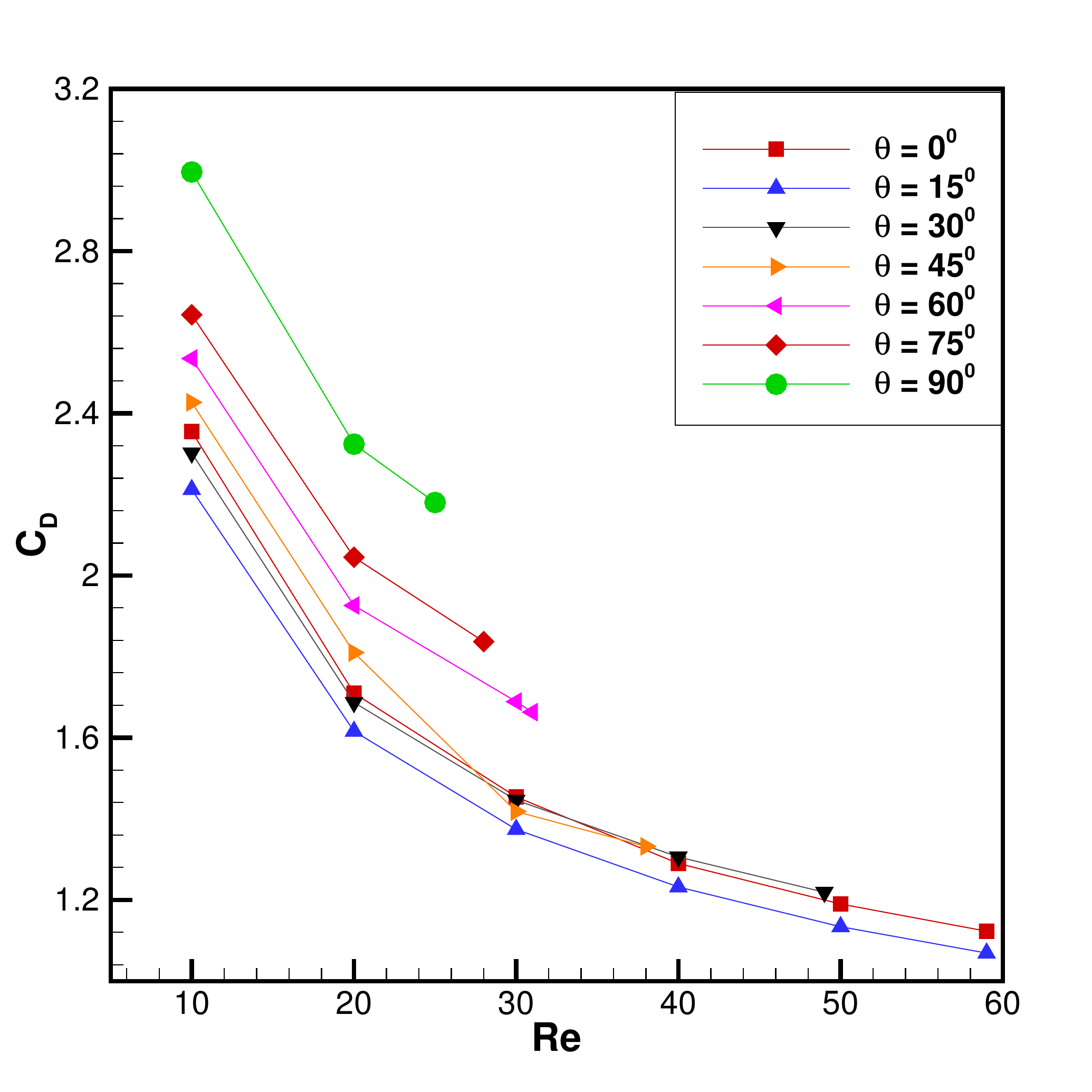} 
		\caption{$C_D$ vs. $Re$}
	\end{subfigure}\hfil 	
	\caption{\small{Variation of (a) Surface Averaged Nusselt number ($Nu_{\text{av}}$) and (b) Average drag $C_D$ with $Re$ for different values of $\theta$.}}
	\label{Fig:Av-Nu}
\end{figure}

\subsubsection{Heat and fluid flow beyond $\theta =90^\degree$}
\begin{figure}[H] 
	\centering
	\includegraphics[width = 0.5\textwidth]{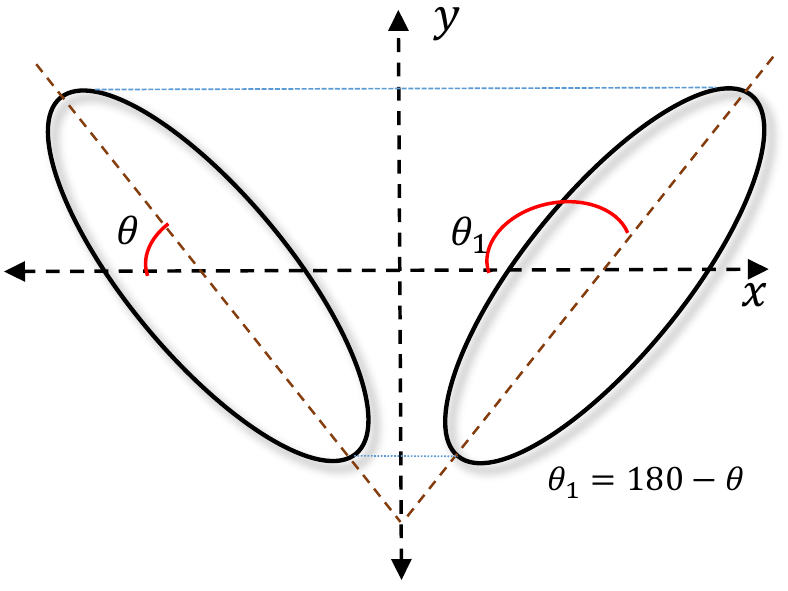}
	\caption{\small{Schematic showing mirror image of the two configurations corresponding to $\theta$ and $\theta_1=180^\degree -\theta$ for $0^\degree \leq \theta \leq  90^\degree $.}}\label{Fig:mirror-image-schematic}
\end{figure}

We carried out continued our computations for $90^{\degree} < \theta < 180^{\degree}$ as well. However, in this range of $\theta$, we observed that, about an $x= constant$ line, the flow in the wake of the cylinder is a mirror image of the flow in range of $0^{\degree} \leq \theta \leq 90^{\degree}$, i.e., flow for $\theta$ ($90^{\degree} < \theta < 180^{\degree}$) is a mirror image of flow for $180^{\degree}-\theta$ ($0^{\degree} \leq \theta \leq 90^{\degree}$). In other words, flow pattern for $\theta = 105^{\degree}$ is a mirror image of flow pattern for $\theta = 75^{\degree}$, that of $\theta = 120^{\degree}$ is a mirror image of $\theta = 60^{\degree}$ and so on. In figure \ref{Fig:mirror-image-schematic} the schematic on the left represents the configurations in the range $0^{\degree} < \theta \leq 90^{\degree}$. On the right, the configuration for the range $90^{\degree} < \theta < 180^{\degree}$ is represented. As shown in the figure, the schematic on the right with an angle of attack $\theta_1 (=180^{\degree}-\theta)$ is the mirror image about the $y$- axis of the schematic on the right. To demonstrate this interesting phenomena, we have chosen three flow configurations at different $\theta$ and $Re$, and compared them with their $180^{\degree} - \theta$ counterparts (see figure \ref{Fig:comparison-mirror-image}). This particular symmetry results from the geometry of the cylinder, as well as the particular assumption of negligible gravity on the flow (see section \ref{sec:problem-statement}).As a result of this particular symmetry there is no marked difference in the quantitative parameters as well. Thus there is no difference in the values of the average Nusselt number ($Nu_{\text{av}}$) as well as the average drag ($C_D$) for the configuration $\theta = 45^{\degree}$ and its mirror image $\theta = 135^{\degree}$, and so on as can be seen from tables \ref{T_comparison_3pi12_9pi12}-\ref{T_comparison_5pi12_7pi12}.

\begin{figure}[H]
	\centering
	\begin{subfigure}{0.5\textwidth}
		\includegraphics[width=\linewidth]{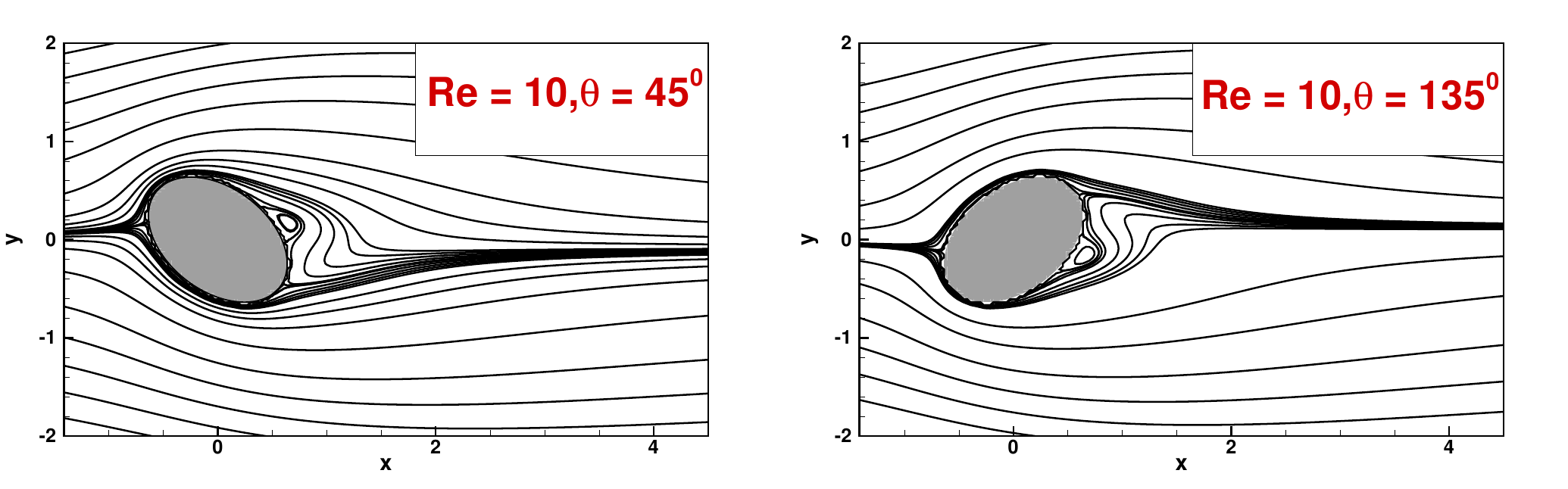} 
	\end{subfigure}\hfil 
	\begin{subfigure}{0.5\textwidth}
	\includegraphics[width=\linewidth]
{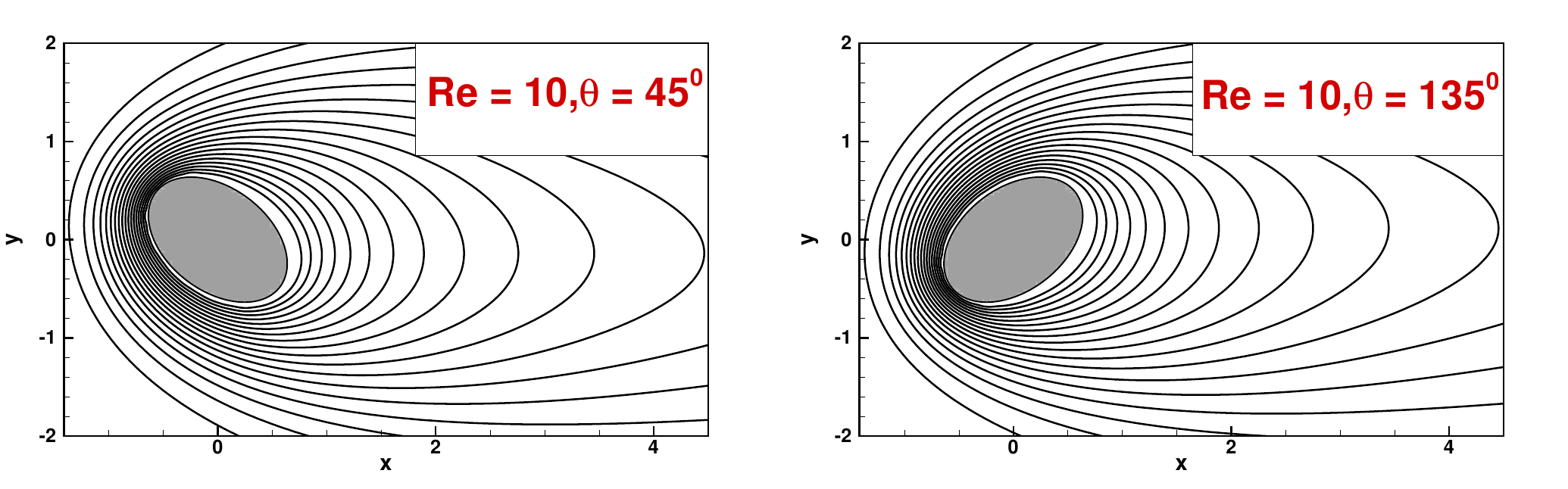}
	\end{subfigure}\hfil
\caption{Comparison of streamlines and isotherms for $Re=10$ with $\theta = 45^{\degree}$ (left) and $\theta = 135^{\degree}$ (right)}
	\begin{subfigure}{0.5\textwidth}
		\includegraphics[width=\linewidth]{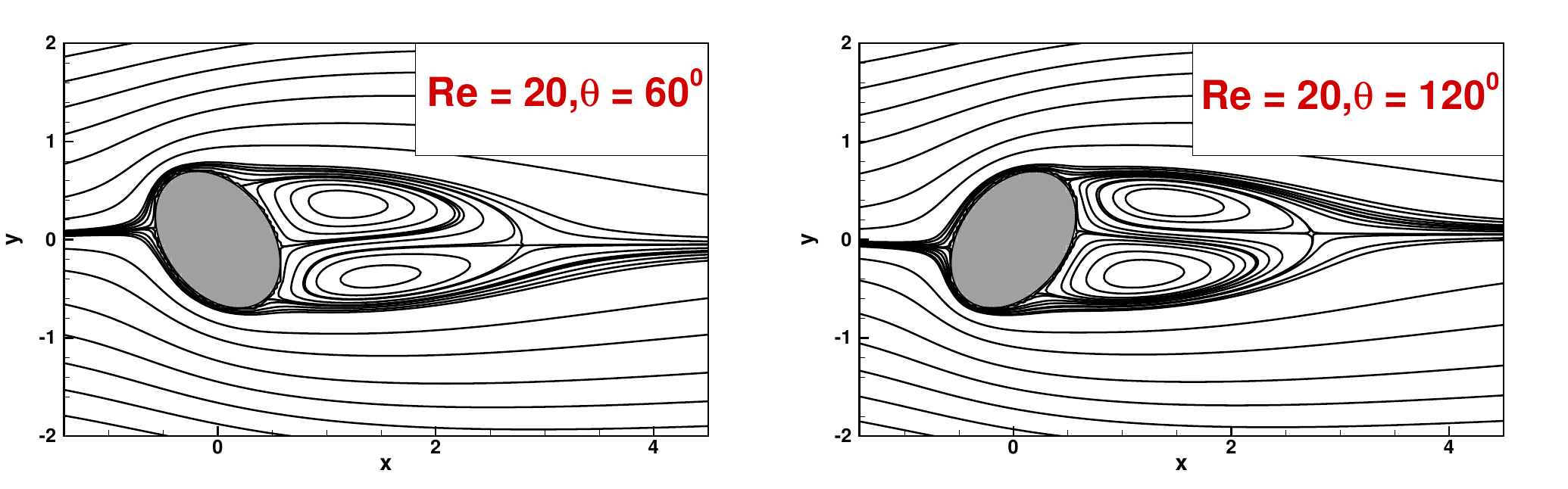} 
	\end{subfigure}\hfil 
	\begin{subfigure}{0.5\textwidth}
	\includegraphics[width=\linewidth]
{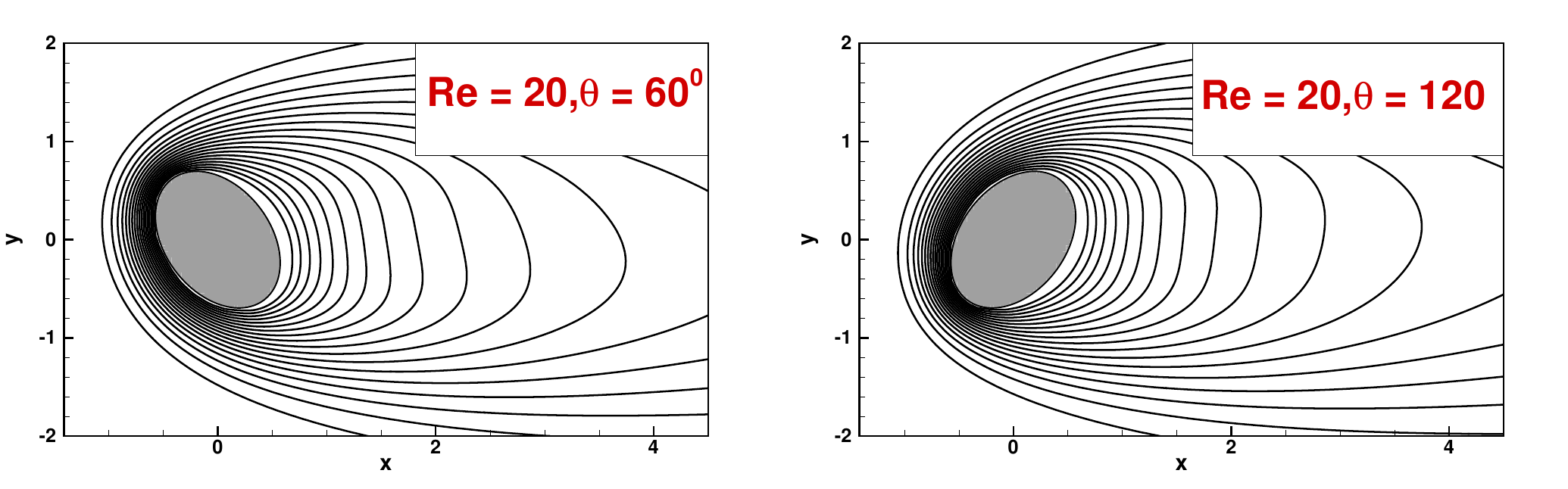}
\end{subfigure}\hfil
\caption{Comparison of streamlines and isotherms for $Re=20$ with $\theta = 60^{\degree}$ (left) and $\theta = 120^{\degree}$ (right)}
	\begin{subfigure}{0.5\textwidth}
		\includegraphics[width=\linewidth]{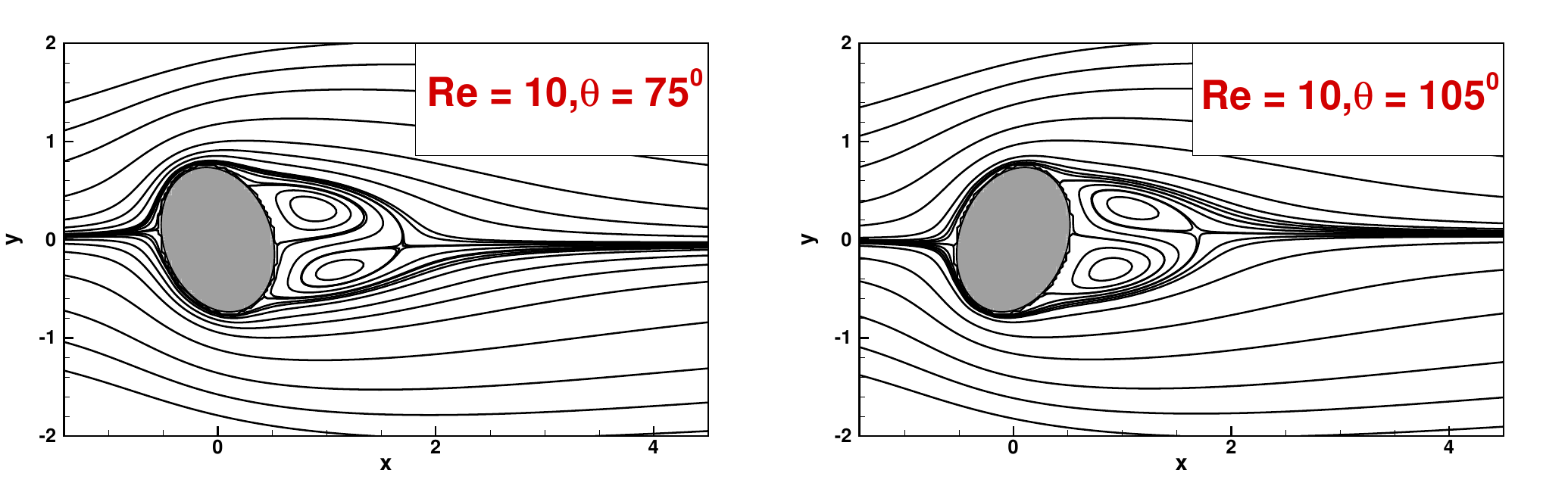} 
	\end{subfigure}\hfil 
	\begin{subfigure}{0.5\textwidth}
	\includegraphics[width=\linewidth]
{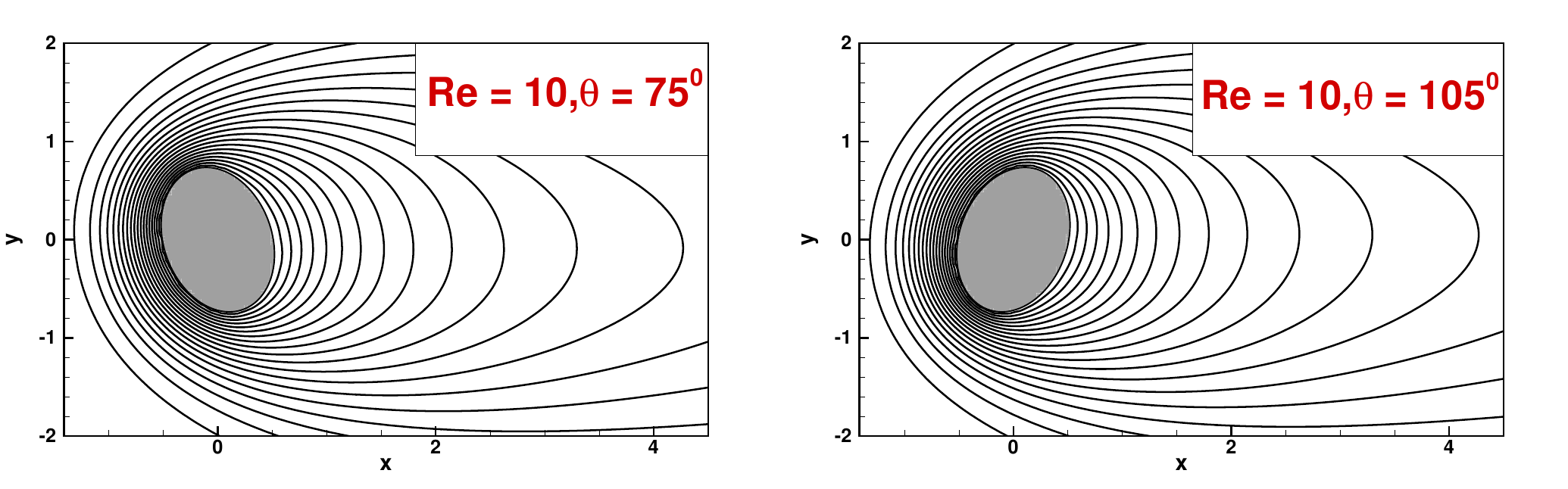}
\end{subfigure}\hfil
	\caption{Comparison of streamlines and isotherms for $Re=10$ with $\theta = 75^{\degree}$ (left) and $\theta = 105^{\degree}$ (right).}
	\label{Fig:comparison-mirror-image}
\end{figure}
\begin{table}[]
\centering
\caption{ Comparison of surface averaged Nusselt number $Nu_{av}$ and drag coefficients $C_D$ for $\theta = 45^{\degree}$ and $ \theta = 135^{\degree}$.}
\begin{tabular}{|c|cc|cc|}
\hline
$\quad $                               & \multicolumn{2}{c|}{$Nu_{av}$}                & \multicolumn{2}{c|}{$C_D$}                               \\ \hline
$Re$ &
$\theta = 45^{\degree}$ &
$ \theta = 135^{\degree}$ &
\multicolumn{1}{c}{$\theta = 45^{\degree}$} &
$\theta = 135^{\degree}$ \\ \hline
$10$  & $1.798$    & $1.802$  &$2.913$    & $2.907$  \\
$20$  & $2.349$    & $2.356$  &$2.173$    & $2.083$  \\
$30$  & $2.759$    & $2.768$  &$1.702$    & $1.692$  \\
$38$  & $3.032$    & $3.041$  &$1.598$    & $1.591$  \\
 \hline
\end{tabular}
\label{T_comparison_3pi12_9pi12}
\end{table}
\begin{table}[]
\centering
\caption{ Comparison of surface averaged Nusselt number $Nu_{av}$ and drag coefficients $C_D$ for $\theta = 60^{\degree}$ and $ \theta = 120^{\degree}$.}
\begin{tabular}{|c|cc|cc|}
\hline
$\quad $                               & \multicolumn{2}{c|}{$Nu_{av}$}                & \multicolumn{2}{c|}{$C_D$}                               \\ \hline
$Re$ &
$\theta = 60^{\degree}$ &
$ \theta = 120^{\degree}$ &
\multicolumn{1}{c}{$\theta = 60^{\degree}$} &
$\theta = 120^{\degree}$ \\ \hline
$10$  & $1.815$    & $1.821$  &$3.017$    & $2.958$  \\
$20$  & $2.366$    & $2.373$  &$2.311$    & $2.317$  \\
$30$  & $2.779$    & $2.794$  &$2.027$    & $2.034$  \\
$31$  & $2.816$    & $2.898$  &$1.996$    & $2.013$  \\
 \hline
\end{tabular}
\label{T_comparison_4pi12_8pi12}
\end{table}
\begin{table}[]
\centering
\caption{ Comparison of surface averaged Nusselt number $Nu_{av}$ and drag coefficients $C_D$ for $\theta = 75^{\degree}$ and $ \theta = 105^{\degree}$.}
\begin{tabular}{|c|cc|cc|}
\hline
$\quad $                               & \multicolumn{2}{c|}{$Nu_{av}$}                & \multicolumn{2}{c|}{$C_D$}                               \\ \hline
$Re$ &
$\theta = 75^{\degree}$ &
$ \theta = 105^{\degree}$ &
\multicolumn{1}{c}{$\theta = 75^{\degree}$} &
$\theta = 105^{\degree}$ \\ \hline
$10$  & $1.756$    & $1.761$  &$2.961$    & $2.916$  \\
$20$  & $2.283$    & $2.911$  &$2.454$    & $2.433$  \\
$28$  & $2.610$    & $2.619$  &$1.965$    & $1.946$  \\
 \hline
\end{tabular}
\label{T_comparison_5pi12_7pi12}
\end{table}
\subsection{Transient state} \label{sec:transient}
\begin{figure}[!h] 
	\centering
	\begin{subfigure}{0.3\textwidth}
		\includegraphics[width=\linewidth]{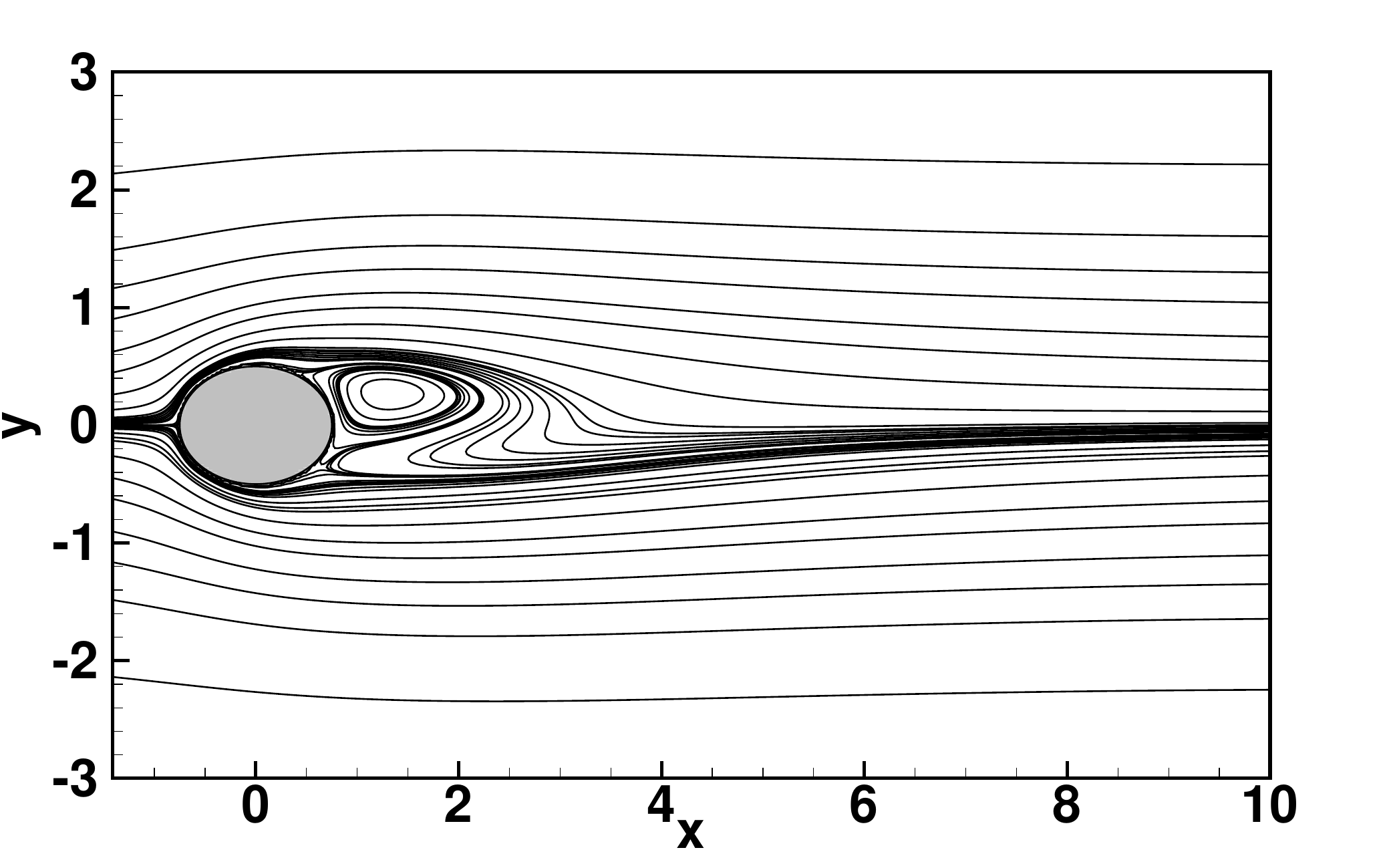} 
	\end{subfigure}\hfil
	\begin{subfigure}{0.3\textwidth}
		\includegraphics[width=\linewidth]{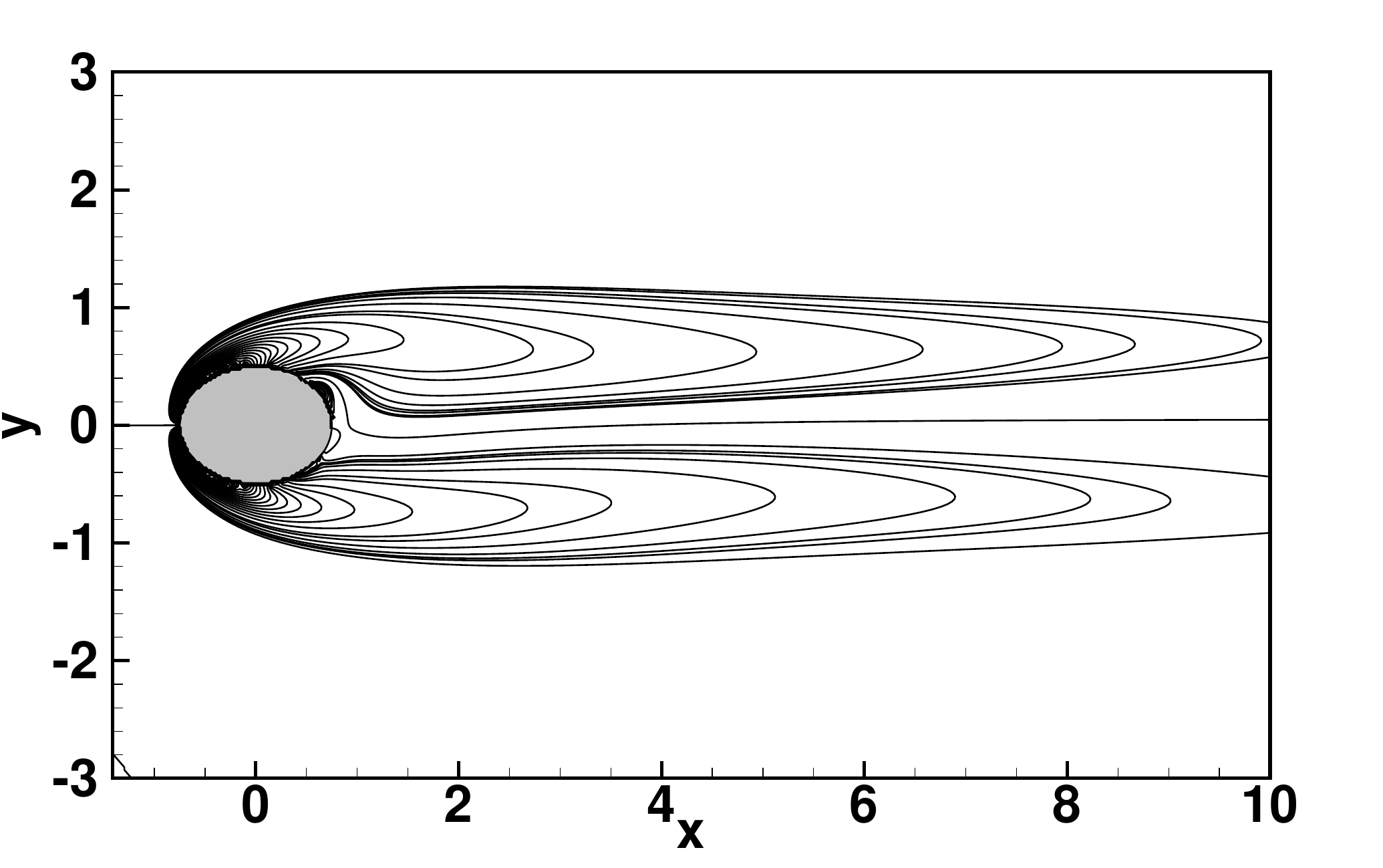} 
	\end{subfigure}\hfil
	\begin{subfigure}{0.3\textwidth}
		\includegraphics[width=\linewidth]{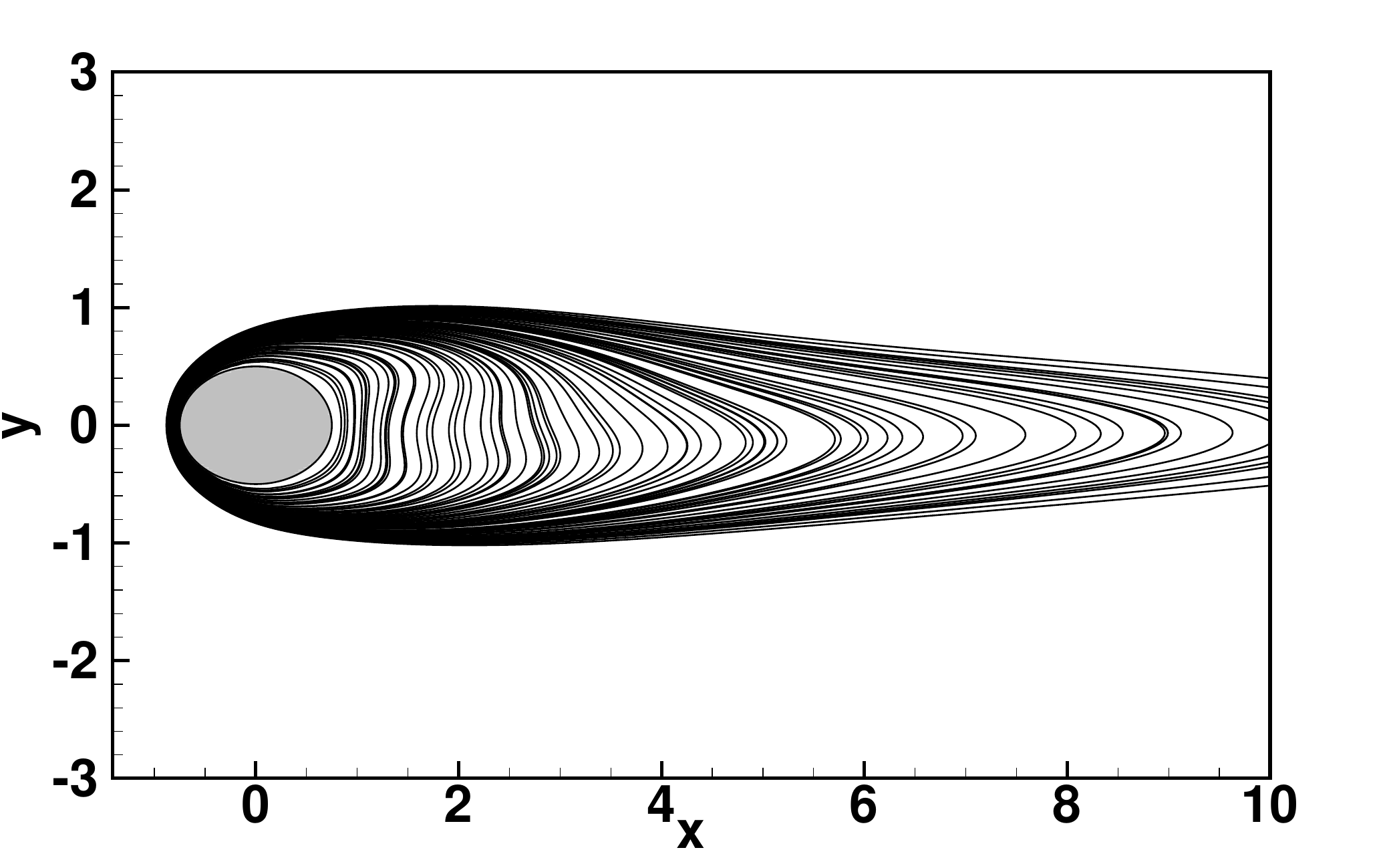} 
	\end{subfigure}\hfill
	\begin{subfigure}{0.3\textwidth}
		\includegraphics[width=\linewidth]{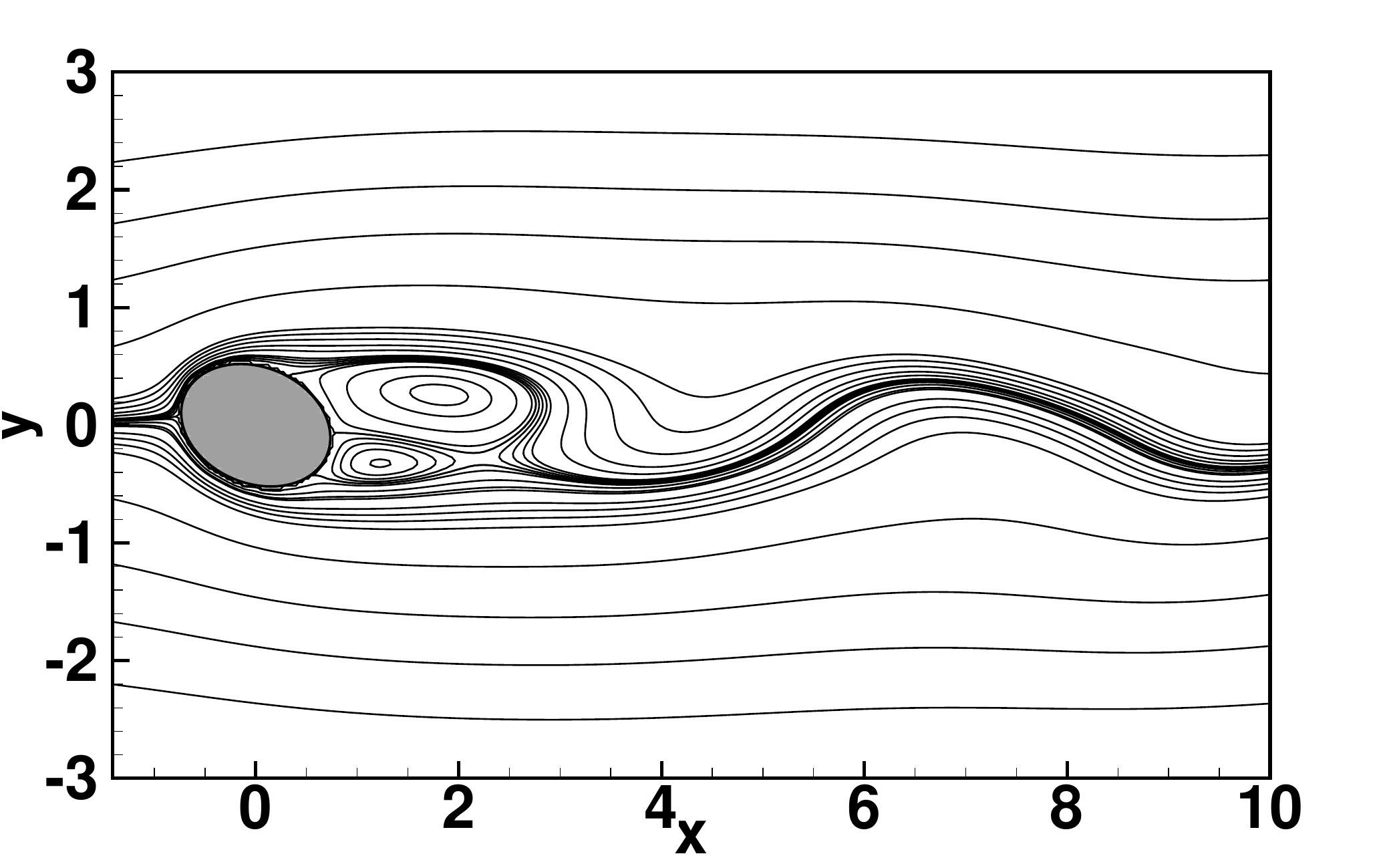} 
	\end{subfigure}\hfil
	\begin{subfigure}{0.3\textwidth}
		\includegraphics[width=\linewidth]{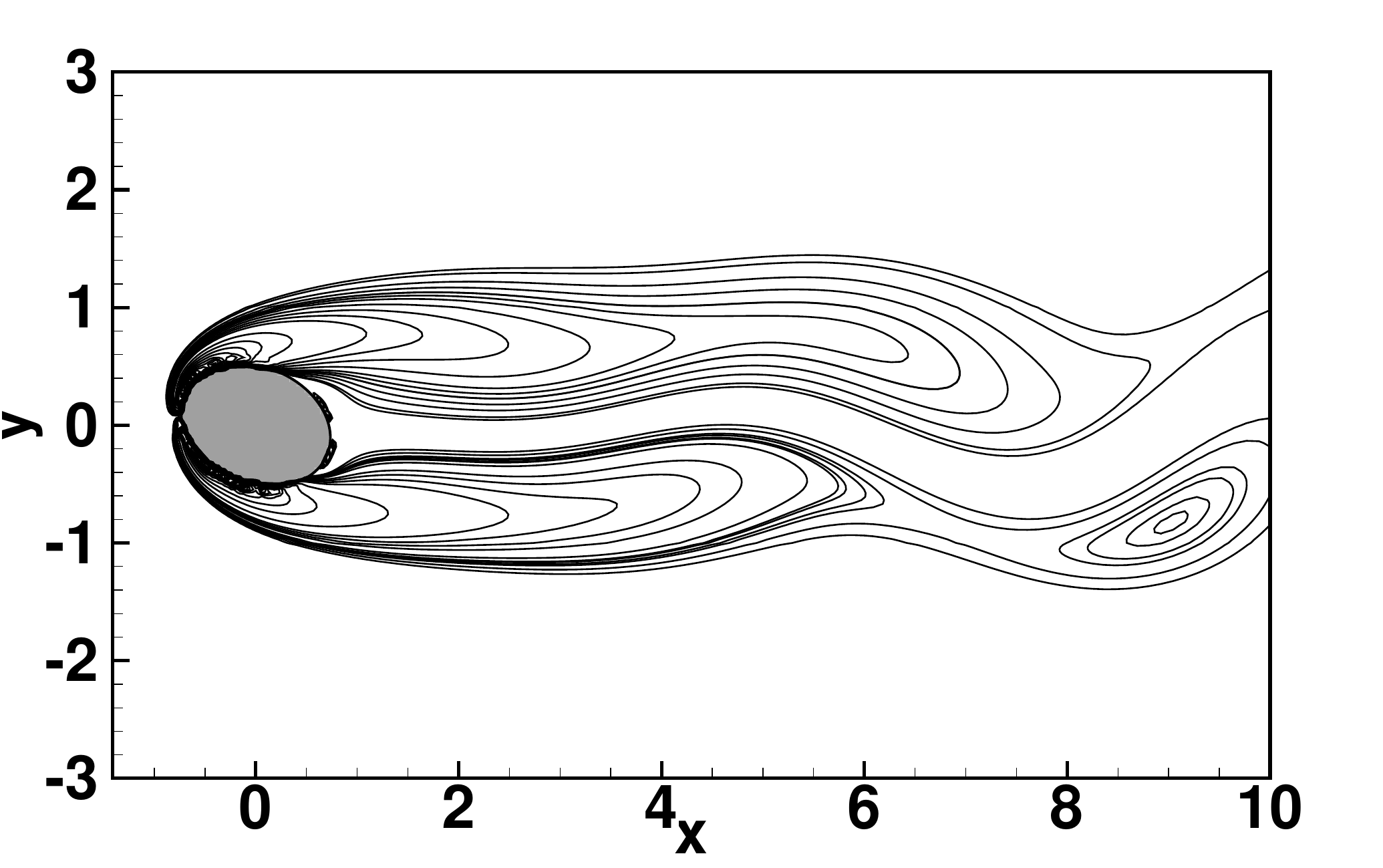} 
	\end{subfigure}\hfil
	\begin{subfigure}{0.3\textwidth}
		\includegraphics[width=\linewidth]{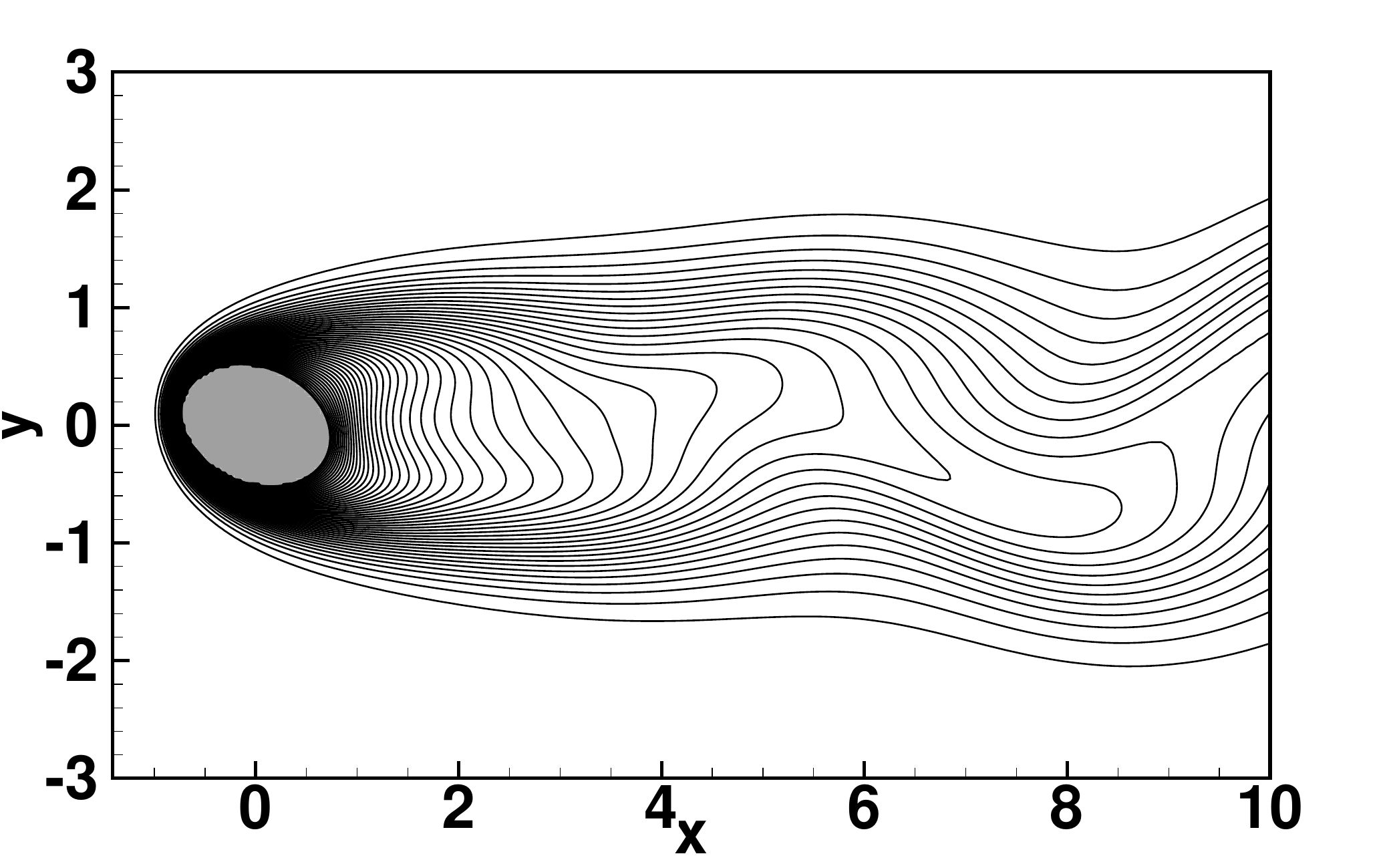} 
	\end{subfigure}\hfil
		\begin{subfigure}{0.3\textwidth}
		\includegraphics[width=\linewidth]{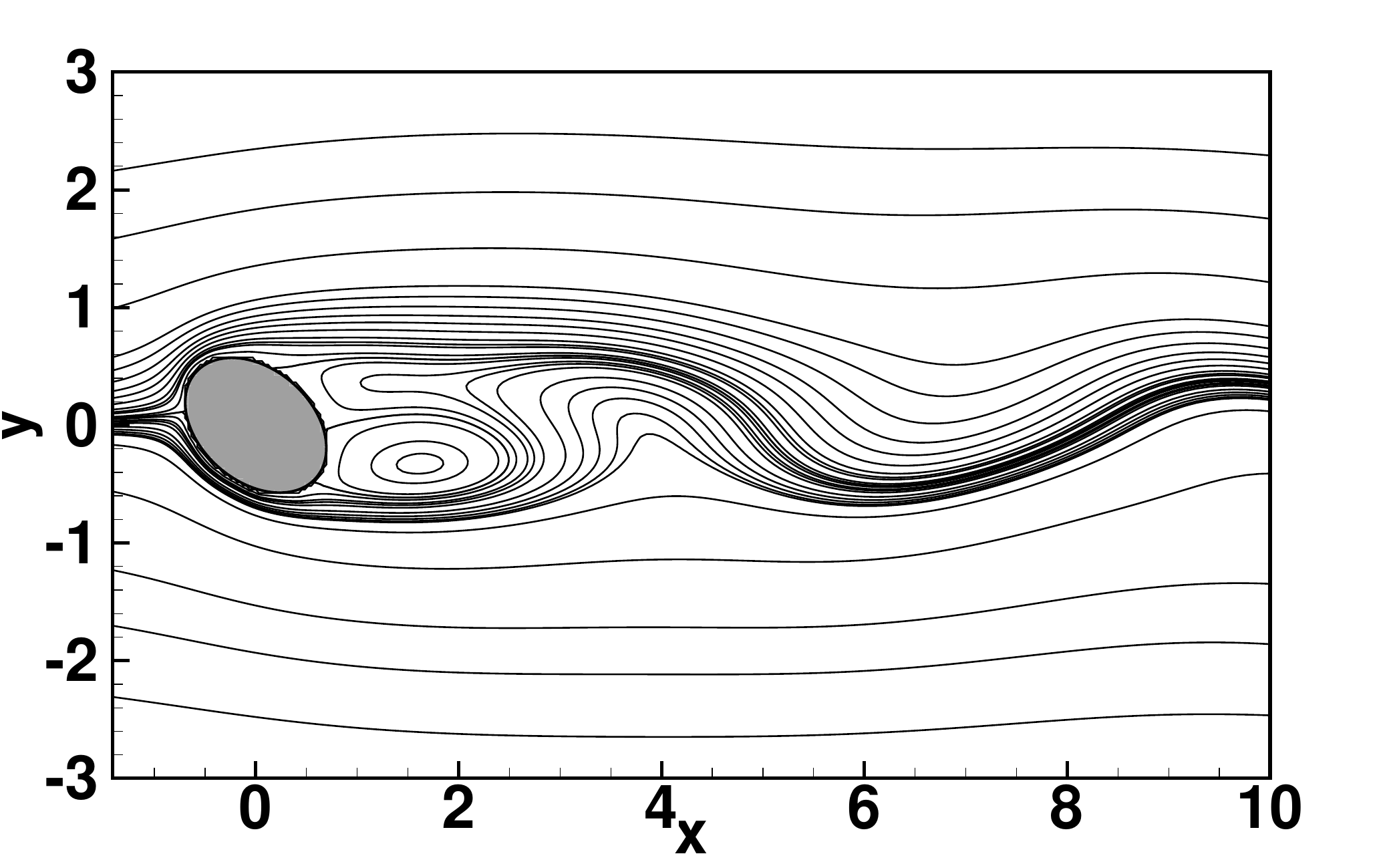} 
	\end{subfigure}\hfil
	\begin{subfigure}{0.3\textwidth}
		\includegraphics[width=\linewidth]{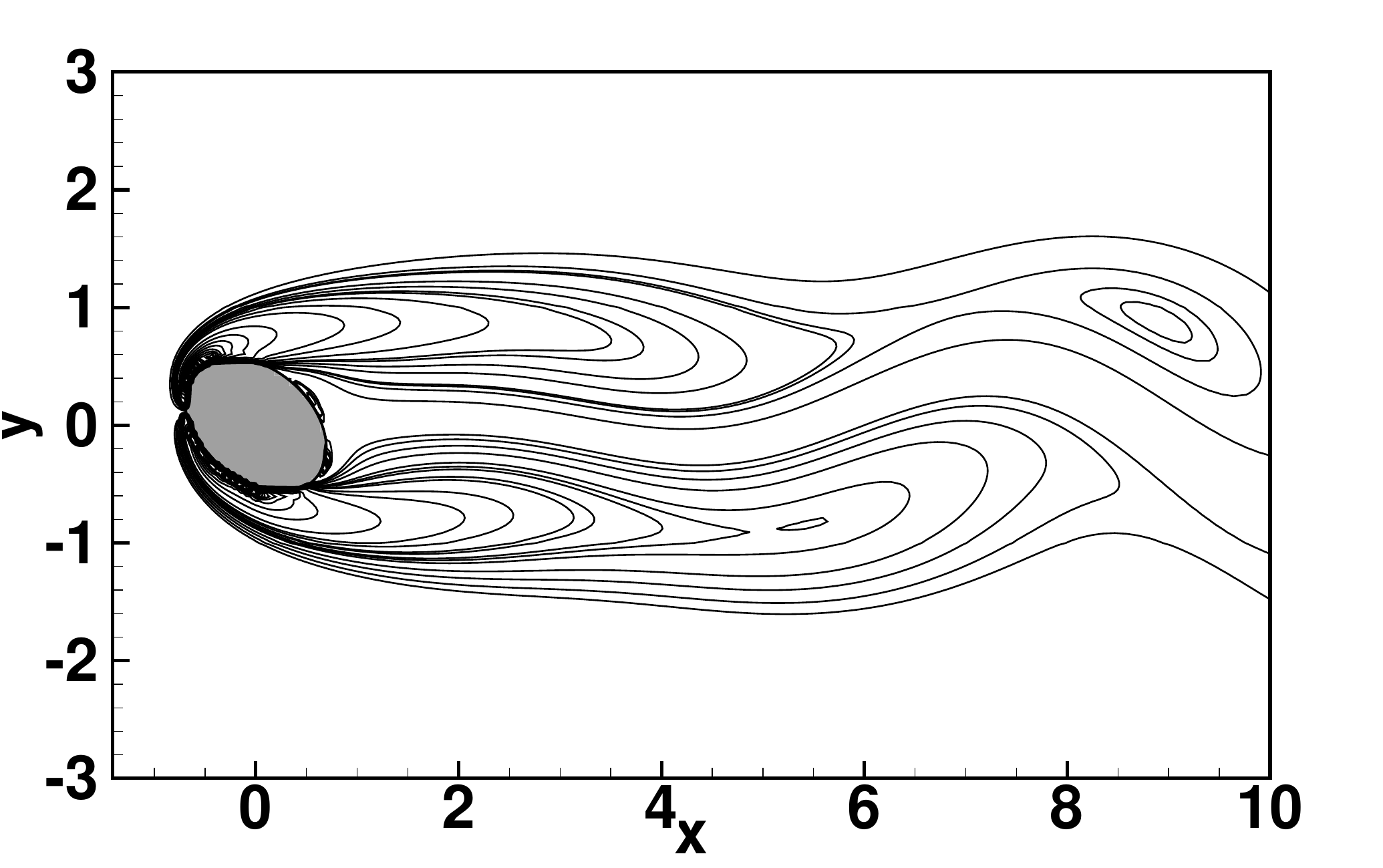} 
	\end{subfigure}\hfil
	\begin{subfigure}{0.3\textwidth}
		\includegraphics[width=\linewidth]{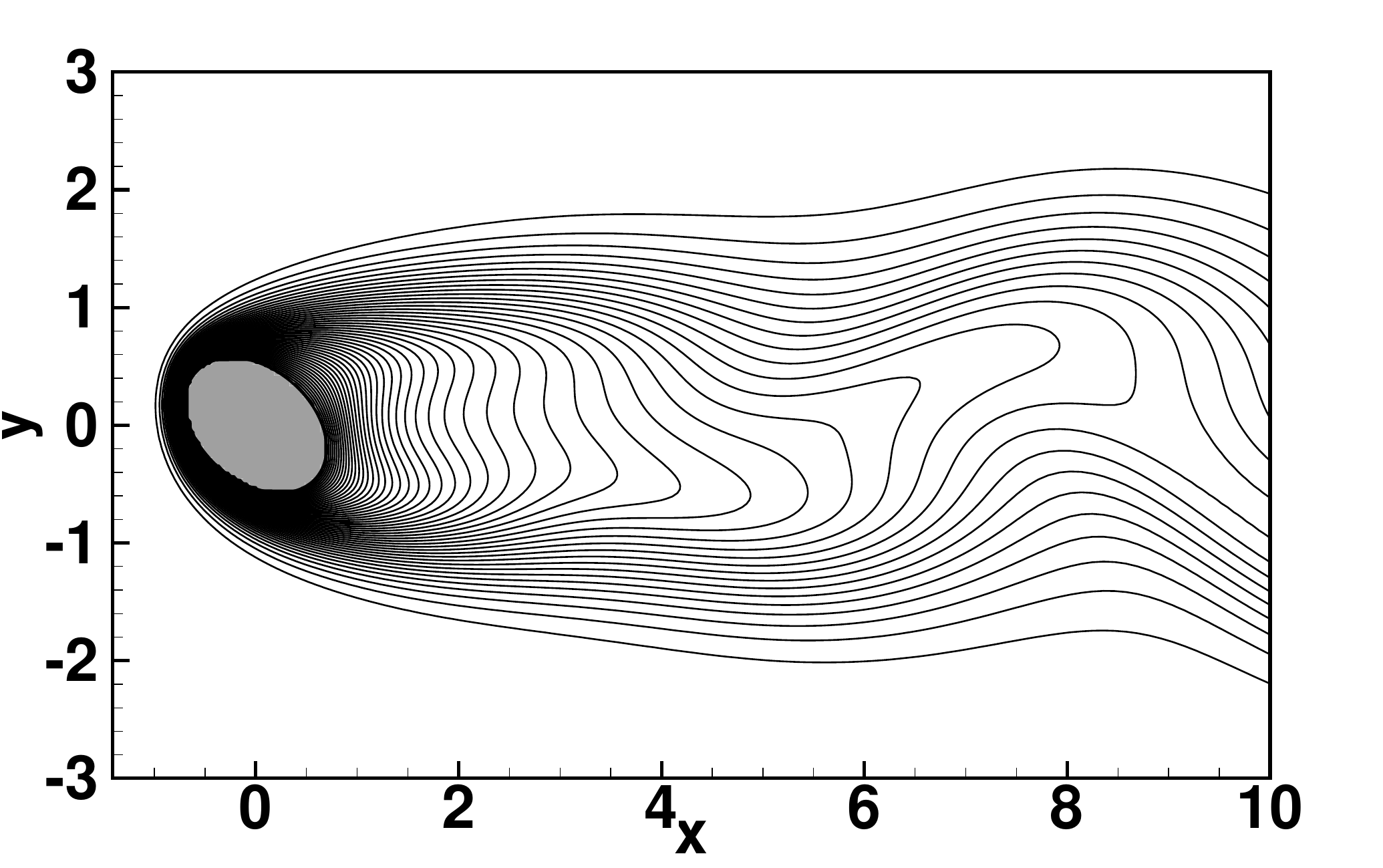} 
	\end{subfigure}\hfil
		\begin{subfigure}{0.3\textwidth}
		\includegraphics[width=\linewidth]{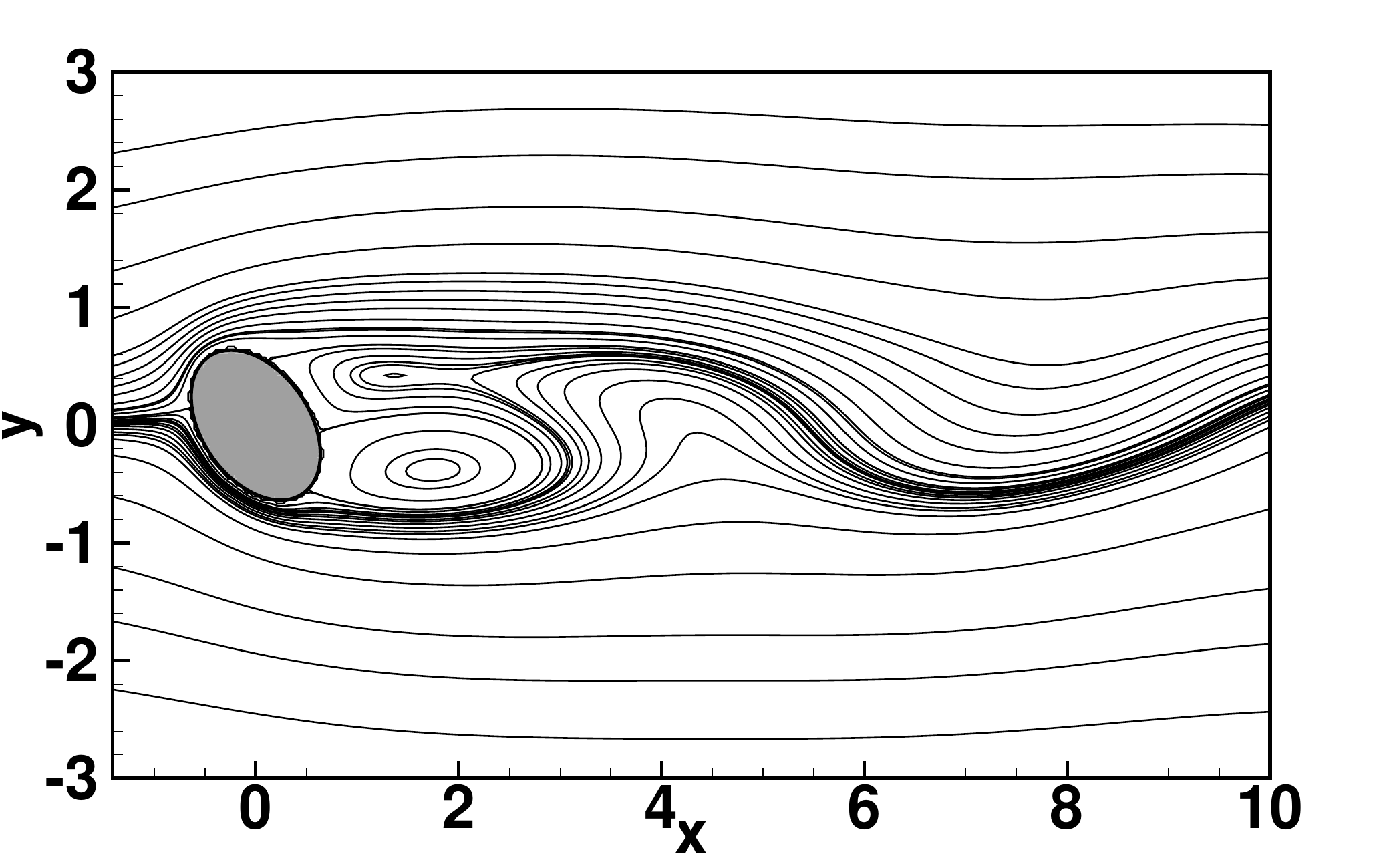} 
	\end{subfigure}\hfil
	\begin{subfigure}{0.3\textwidth}
		\includegraphics[width=\linewidth]{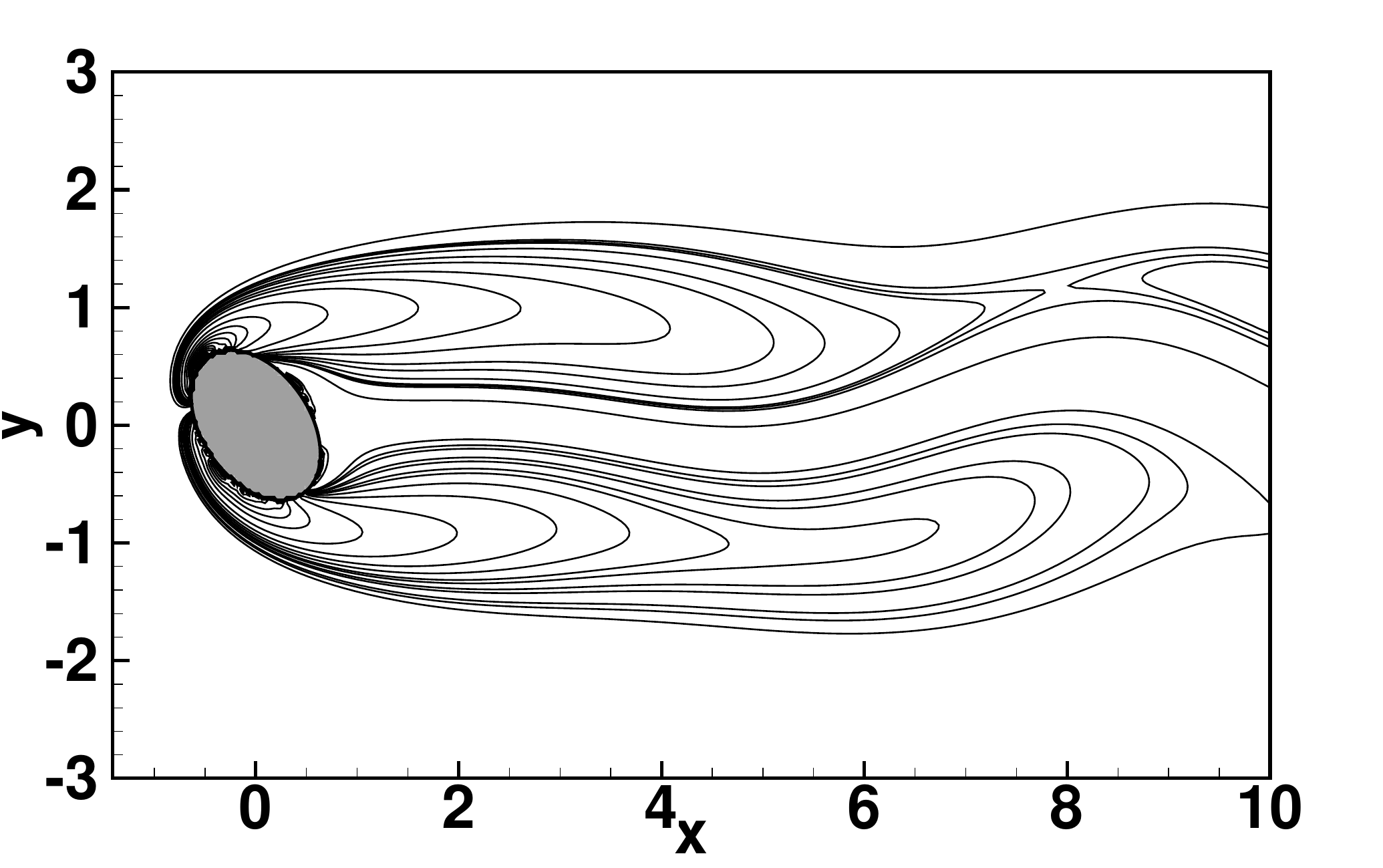} 
	\end{subfigure}\hfil
	\begin{subfigure}{0.3\textwidth}
		\includegraphics[width=\linewidth]{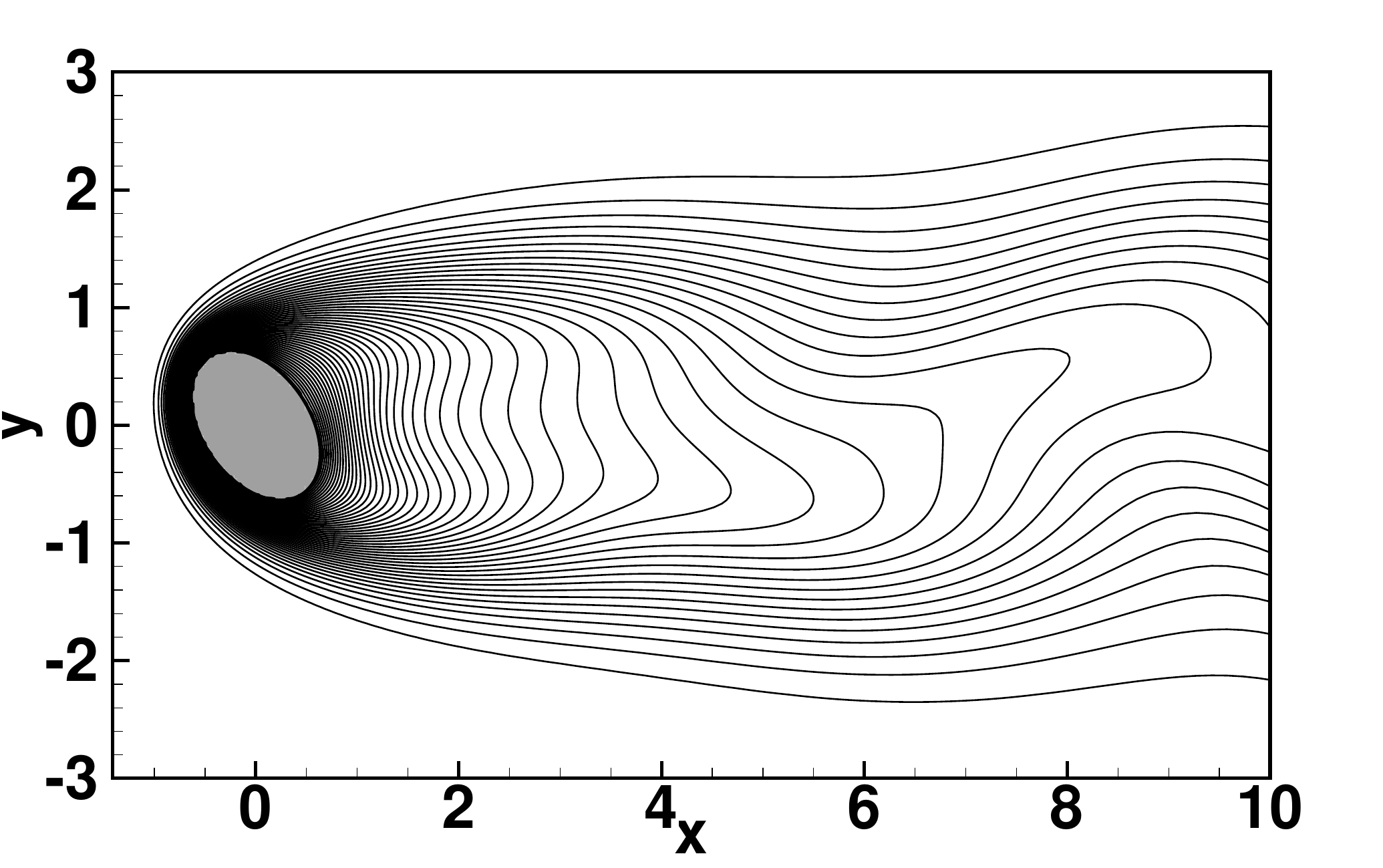} 
	\end{subfigure}\hfil
		\begin{subfigure}{0.3\textwidth}
		\includegraphics[width=\linewidth]{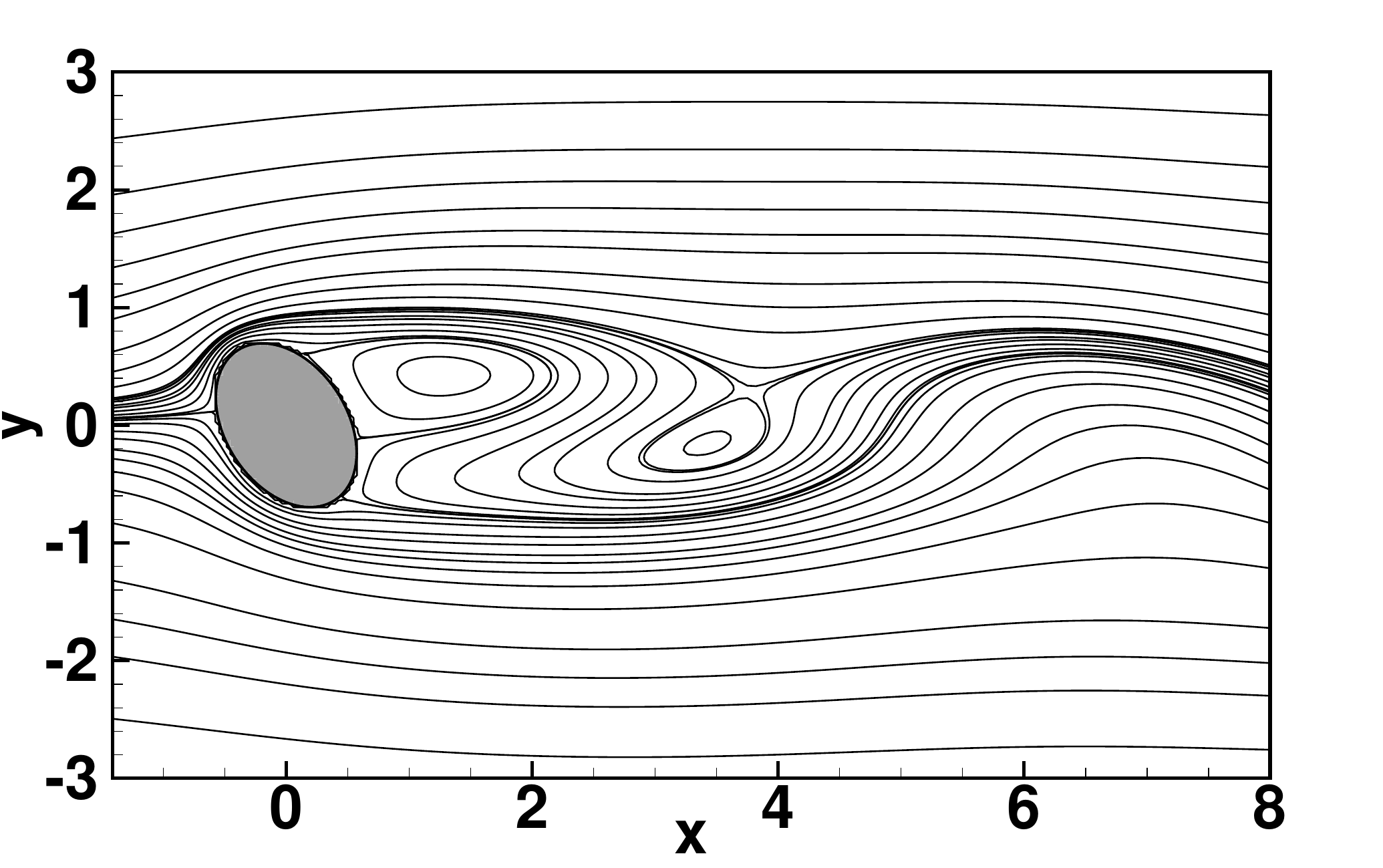} 
	\end{subfigure}\hfil
	\begin{subfigure}{0.3\textwidth}
		\includegraphics[width=\linewidth]{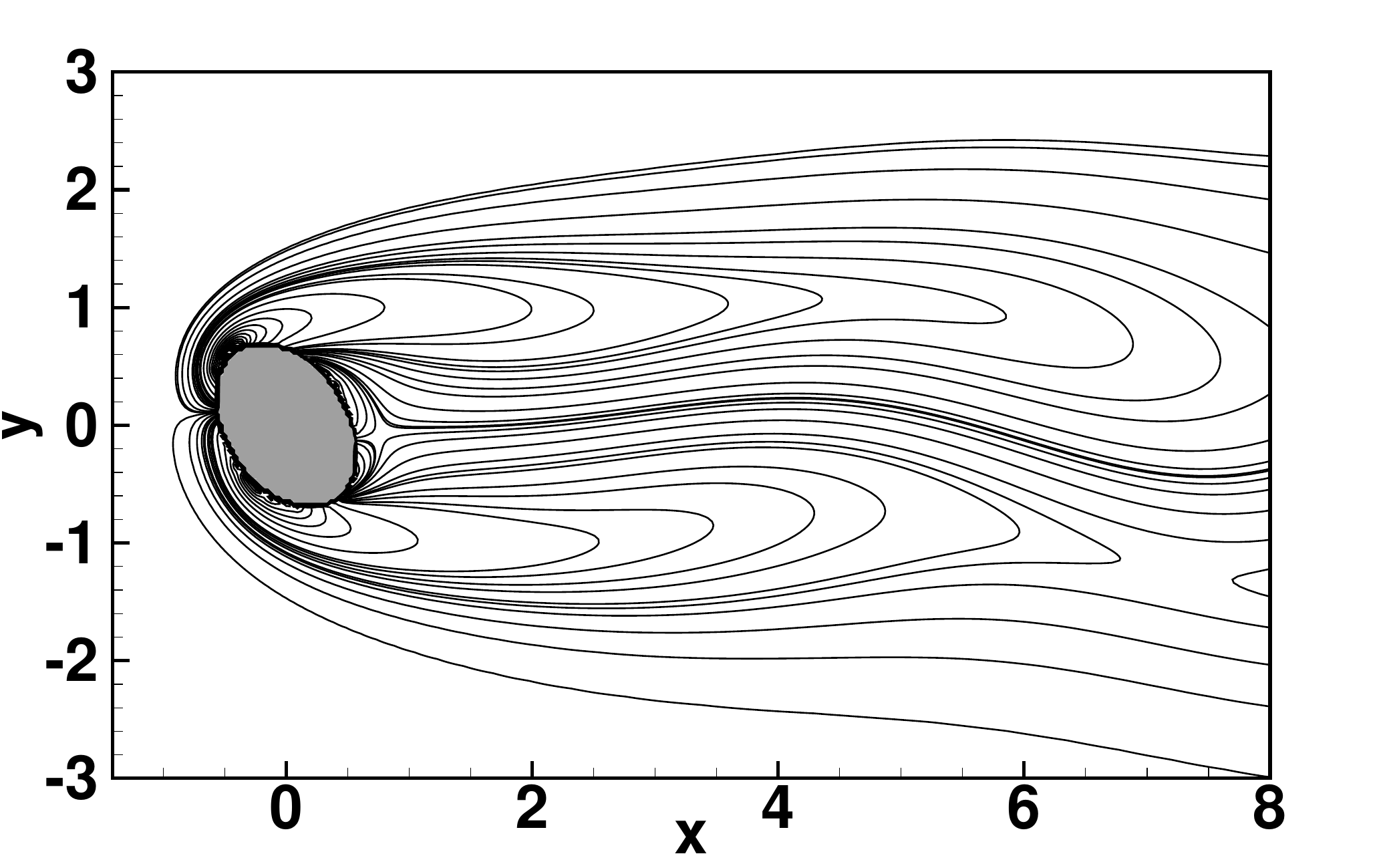} 
	\end{subfigure}\hfil
	\begin{subfigure}{0.3\textwidth}
		\includegraphics[width=\linewidth]{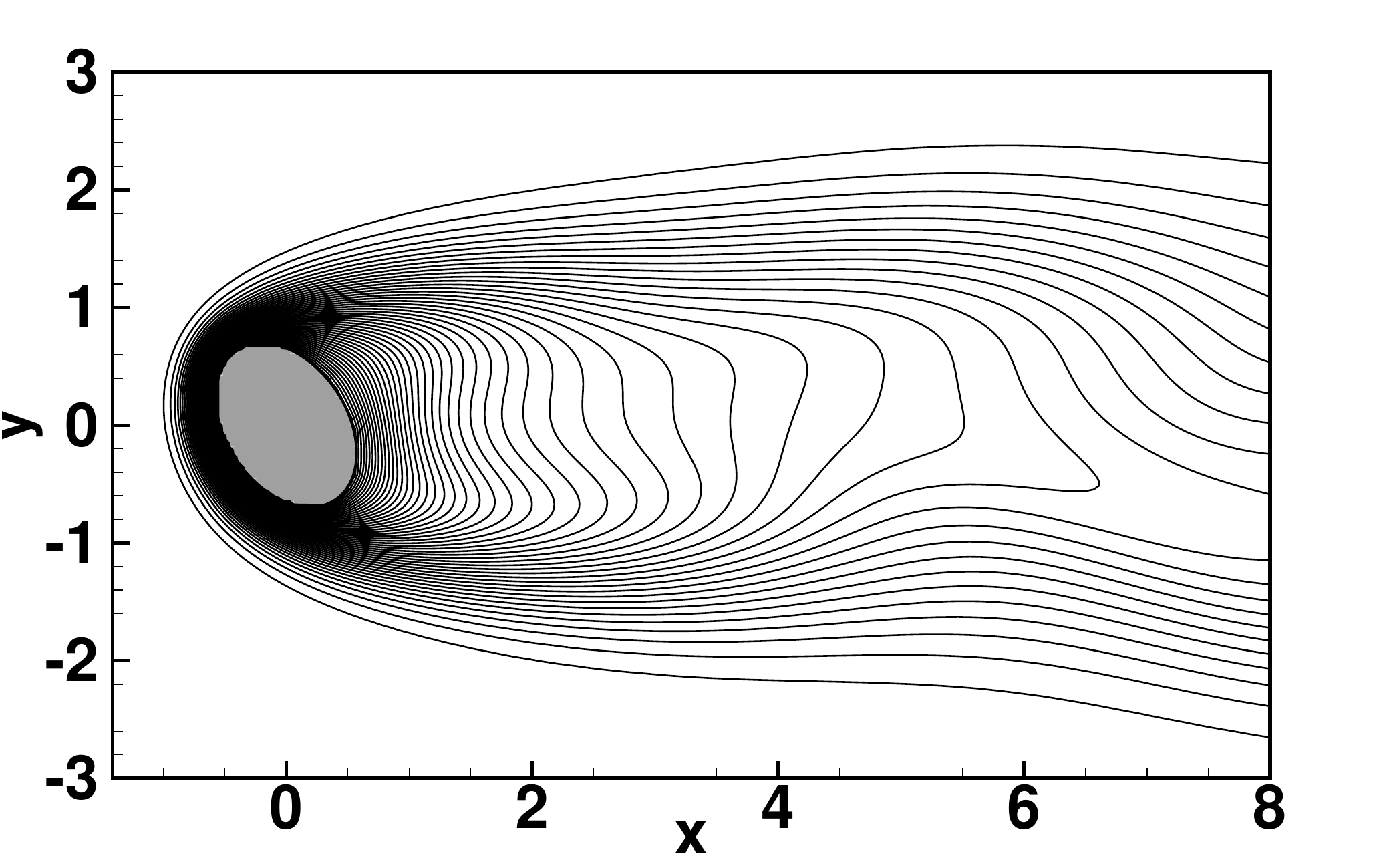} 
	\end{subfigure}\hfil
		\begin{subfigure}{0.3\textwidth}
		\includegraphics[width=\linewidth]{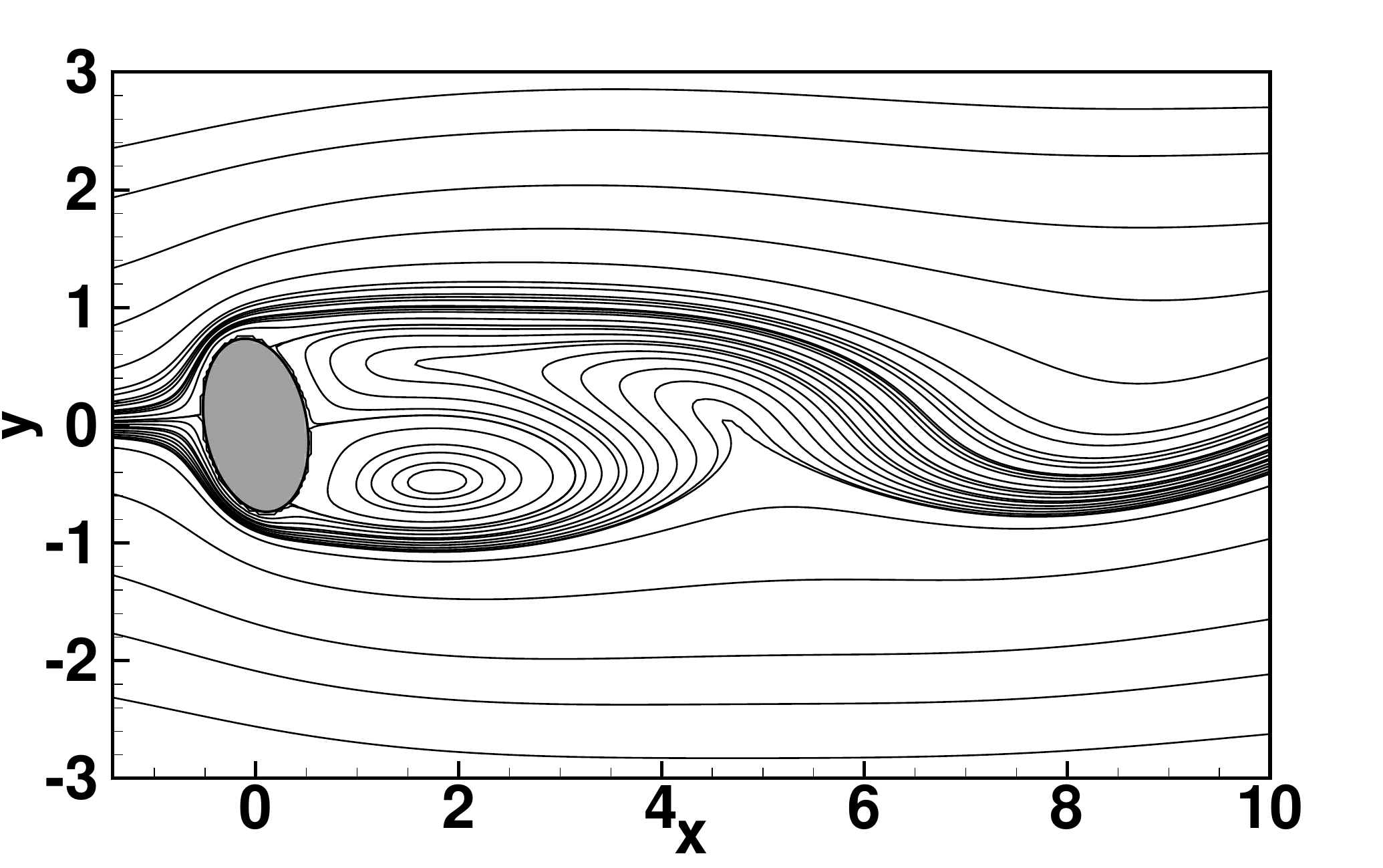} 
	\end{subfigure}\hfil
	\begin{subfigure}{0.3\textwidth}
		\includegraphics[width=\linewidth]{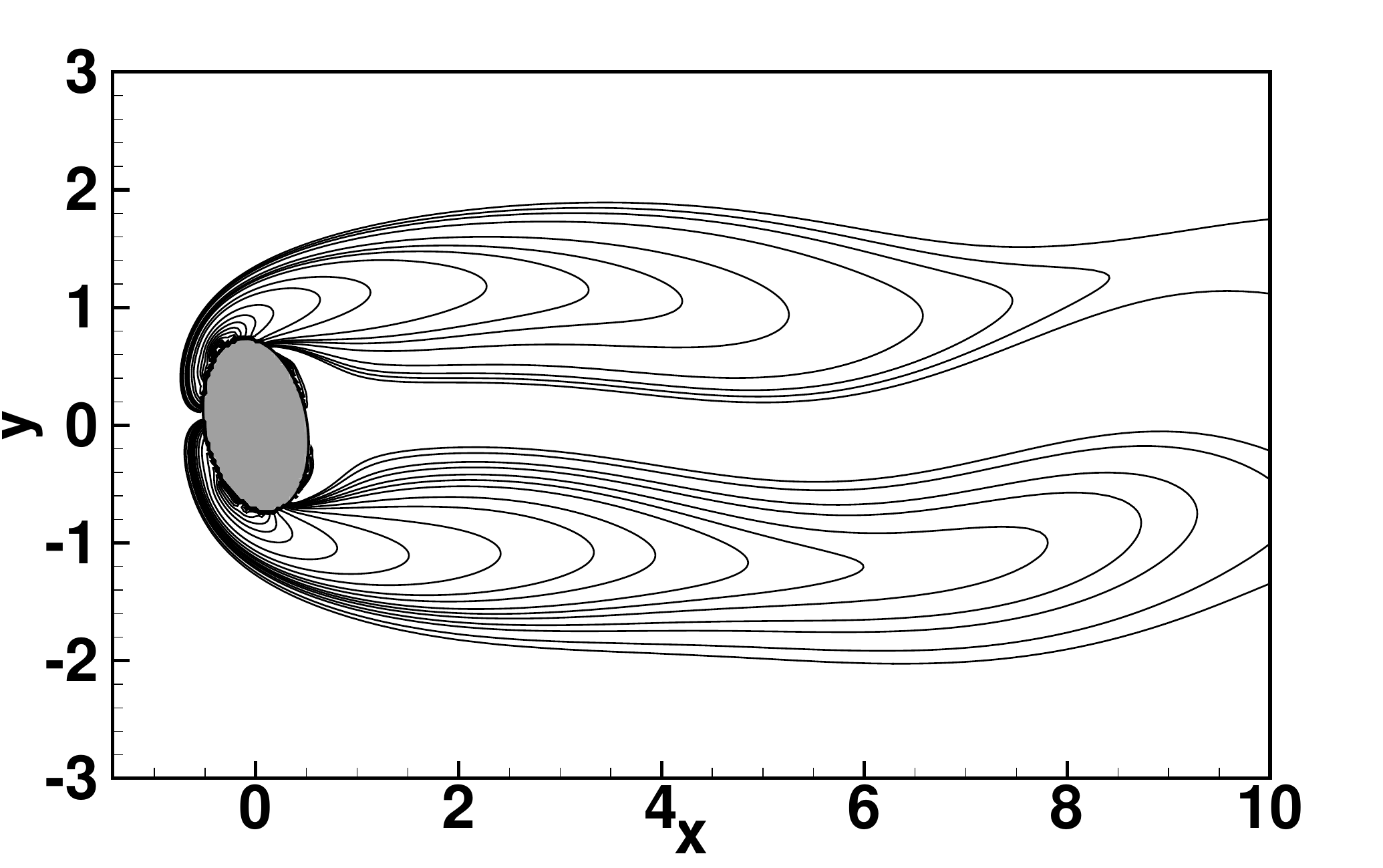} 
	\end{subfigure}\hfil
	\begin{subfigure}{0.3\textwidth}
		\includegraphics[width=\linewidth]{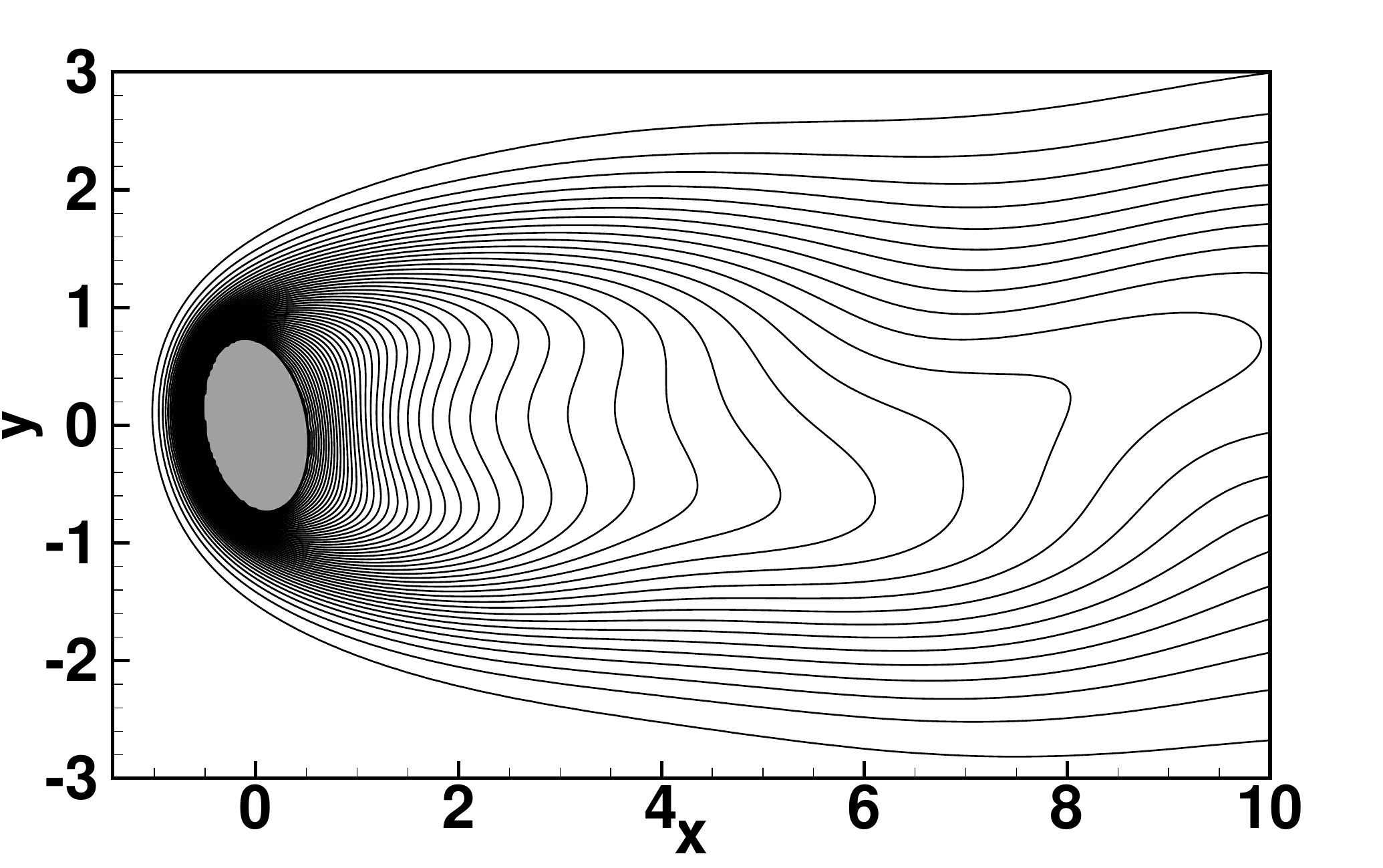} 
	\end{subfigure}\hfil
		\begin{subfigure}{0.3\textwidth}
		\includegraphics[width=\linewidth]{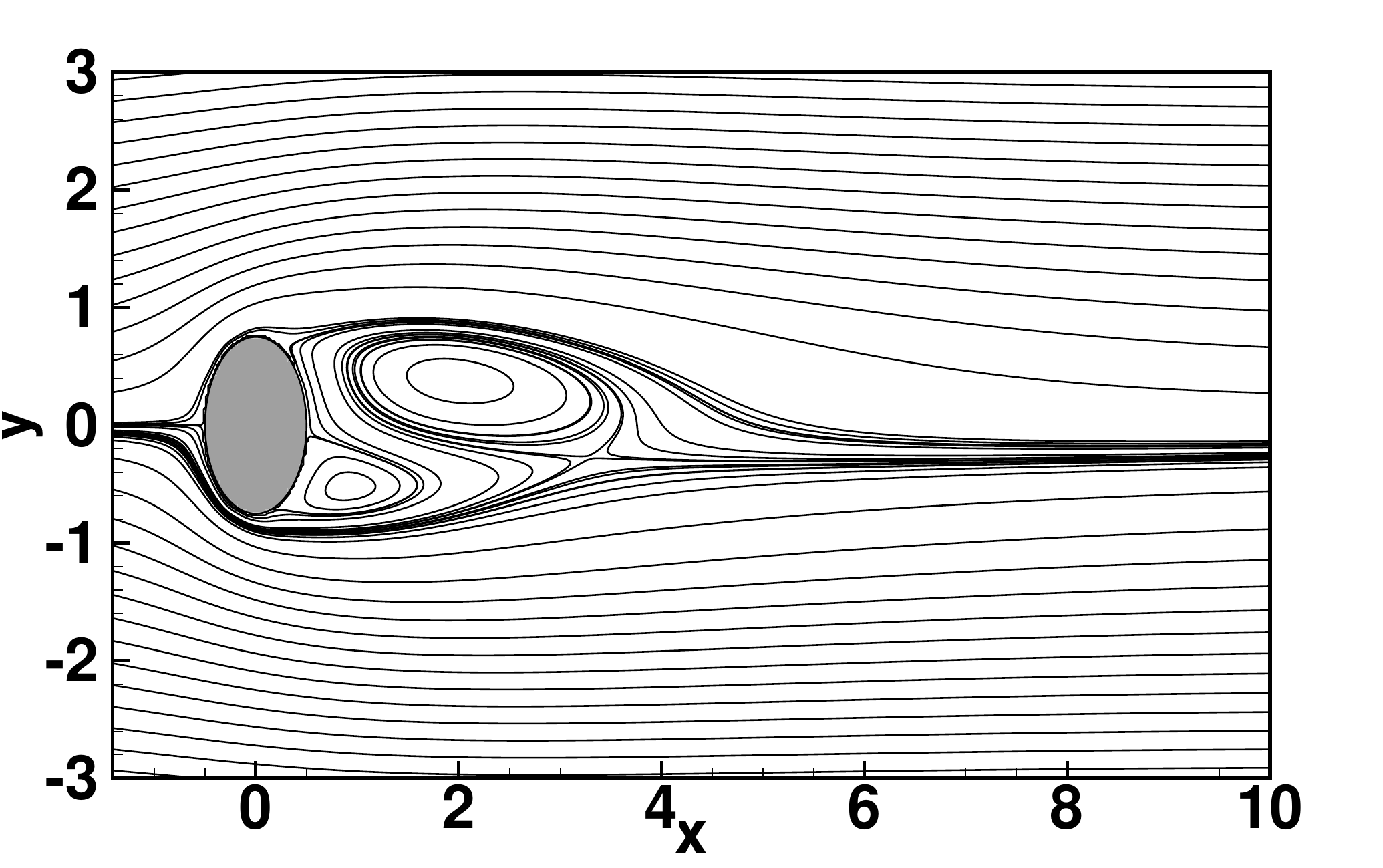} 
	\end{subfigure}\hfil
	\begin{subfigure}{0.3\textwidth}
		\includegraphics[width=\linewidth]{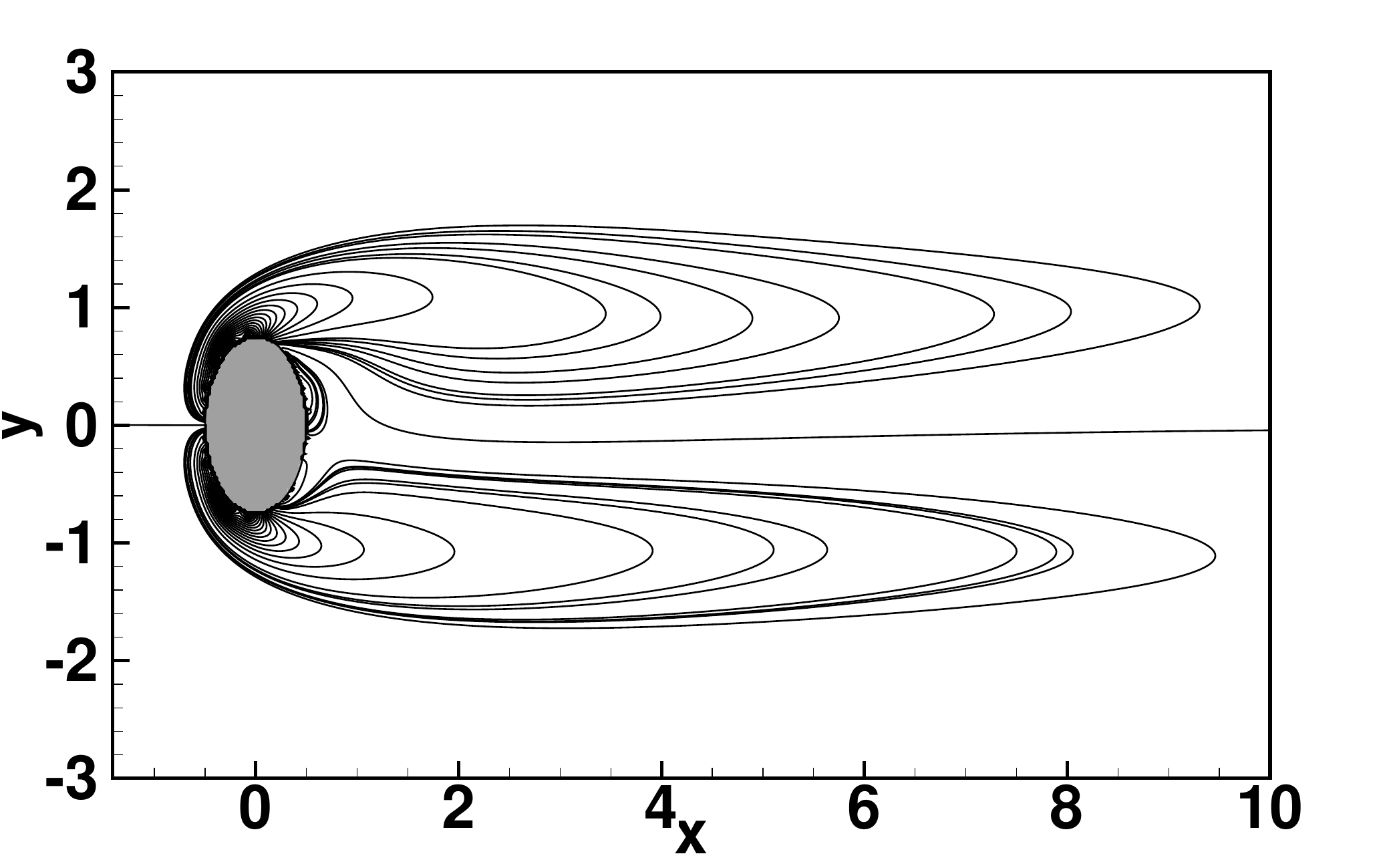} 
	\end{subfigure}\hfil
	\begin{subfigure}{0.3\textwidth}
		\includegraphics[width=\linewidth]{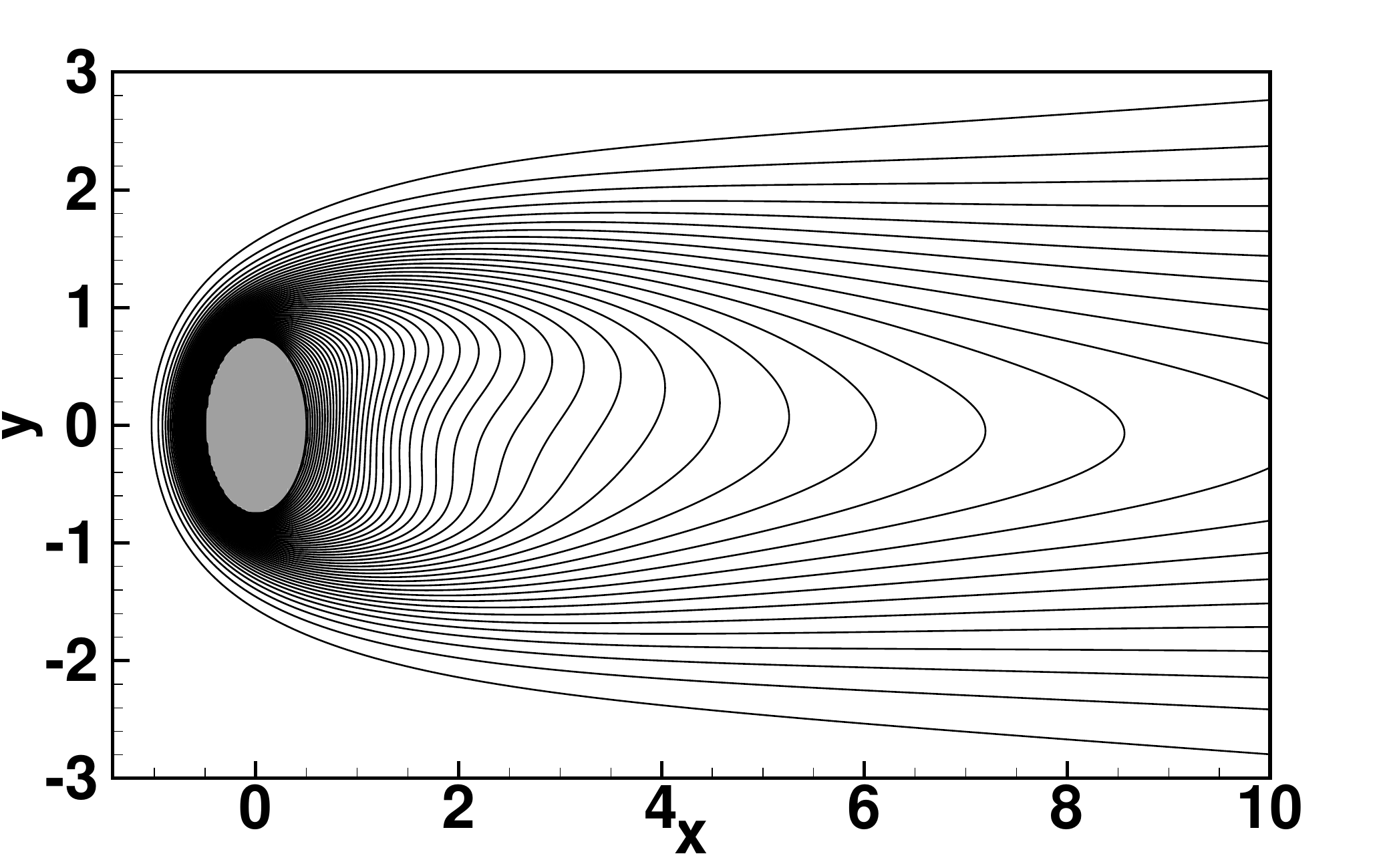} 
	\end{subfigure}\hfil
\caption{{\sl Transition to unsteadiness, flow in the vicinity of Critical Reynolds number: Streamlines (left), vorticity (middle) and isotherms (right)  for the combinations $(\theta, Re)$, from top to bottom, $ (0^{\degree},60)$, $(15^{\degree}, 60)$, $(30^{\degree}, 50)$,  $(45^{\degree}, 39)$, $(60^{\degree}, 32)$, $(75^{\degree}, 29)$, and $(90^{\degree}, 26)$, respectively.} }
\label{fig:unsteady1-critical-Re}
\end{figure}
\clearpage
In this section, we present the results for transient state flow phenomena in terms of streamlines, vorticity contours, isotherms, as well as the force coefficients ($C_D$, $C_L$), surface averaged Nusselt numbers, and Strouhal numbers. Computations were carried for $Re_c \leq Re \leq 120$, and $0 \leq \theta \leq 180^{\degree}$. However, as we noted in section \ref{sec:steady}, the flow phenomena for $ 90^{\degree} < \theta < 180^{\degree}$ is a mirror image of the flow phenomena for $ 0^{\degree} < \theta < 90^{\degree}$. Hence, we present results only for $0^{\degree} \leq \theta \leq 90^{\degree}$.

In general, in the unsteady regime, two rows of well defined vortices are formed with clockwise vortices being shed from the upper side of the cylinder and counterclockwise vortices from the lower side. This is the well known von Karman vortex street that stretches over the entire downstream region in the wake of the cylinder. Since the mechanism of vortex shedding remains same for all values of $\theta$ considered, we take $Re = 100$ as the representative Reynolds number for which we present our analysis. Quantitative parameters like Strouhal number, Drag and Lift forces, and Nusselt number will be discussed at length subsequently. Note that the flow becomes unsteady beyond the critical Reynolds number, $Re_c$. However, it is not necessary that vortex shedding commences immediately after $Re_c$. Thus, for some cases even though the flow becomes unsteady at $Re_c$, vortex shedding is seen to commence for Reynolds numbers slightly higher than $Re_c$. To exactly pinpoint the critical Reynolds number at which vortex shedding commences would require a separate study. Therefore, in order to have a fair enough idea about $Re_c$ for different inclinations of the elliptic cylinder, we plot the streamlines, vorticity contours and the isotherm contours for the $(\theta, Re)$ combination in Figure \ref{fig:unsteady1-critical-Re} such that the flow for $(\theta, Re-1)$  is always steady. These figures clearly demonstrate the unsteady nature of the flow and as such  $Re_c \in (Re-1, Re]$,  for the Reynolds numbers considered in these figures. Interestingly, the mirror phenomena described above holds true for the critical Reynolds number as well, that is, $Re_c$ is same for $\theta$ and 
$180^\degree-\theta$ for all $0^\degree \leq \theta \leq 90^\degree$.
\subsubsection{Flow field and isotherms}
Figure \ref{Fig:unsteady-psi-vort-temp-0} shows the instantaneous streamlines, vorticity contours, and isotherms for $Re = 100$ and $\theta = 0^{\degree}$ at different instants of time in a complete vortex shedding cycle. Here $T$ represents the time period of vortex shedding, and the flow patterns are shown at equal intervals of $T/4$ within a vortex shedding cycle. We can see that the growth of the upper vortex is accompanied by the formation of a lower vortex in the flow field (figure \ref{Fig:unsteady-psi-vort-temp-0} (a)). While the upper vortex begins to decay, the lower vortex grows and attaches itself to the trailing edge (figure \ref{Fig:unsteady-psi-vort-temp-0} (b)). Subsequently, the upper vortex reappears around the leading edge and grows in such a way that it suppresses the lower vortex, which starts to get
smaller (figure \ref{Fig:unsteady-psi-vort-temp-0} (c), \ref{Fig:unsteady-psi-vort-temp-0} (d)). This process is repeated for the shedding cycle.

\begin{figure}[H]
	\centering
	\begin{subfigure}{\textwidth}
		\includegraphics[width=\linewidth]{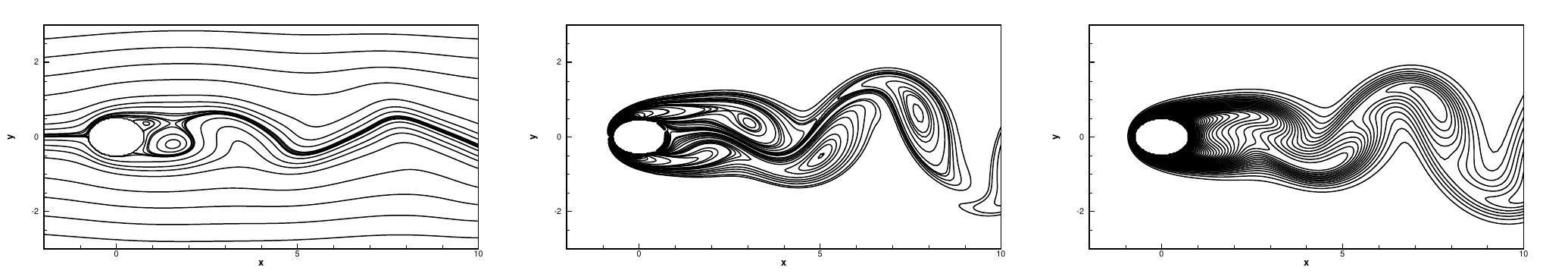} 
		\caption{$t+T/4$}
	\end{subfigure}\hfil 
	\begin{subfigure}{\textwidth}
		\includegraphics[width=\linewidth]{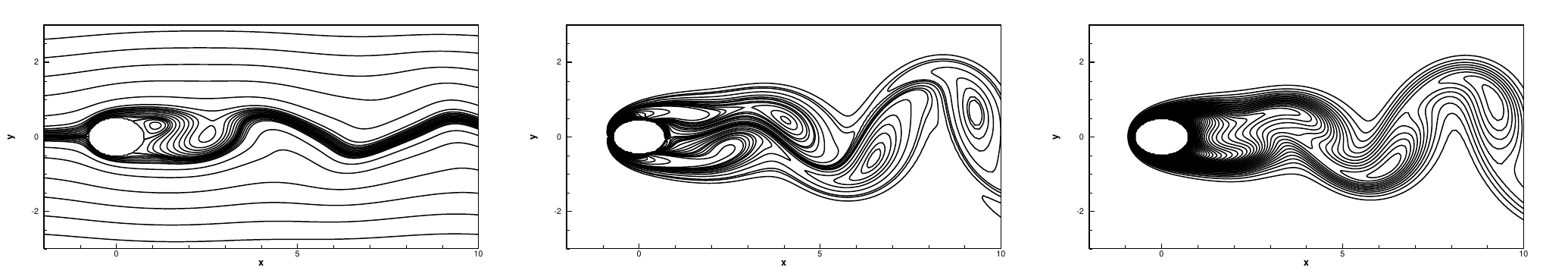} 
		\caption{$t+T/2$}
	\end{subfigure}\hfil 
	\begin{subfigure}{\textwidth}
		\includegraphics[width=\linewidth]{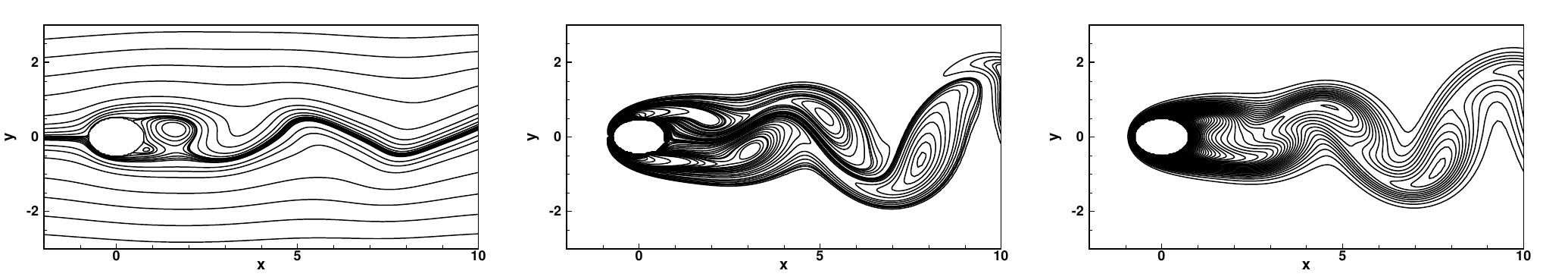} 
		\caption{$t+3T/4$}
	\end{subfigure}\hfil 
	\begin{subfigure}{\textwidth}
		\includegraphics[width=\linewidth]{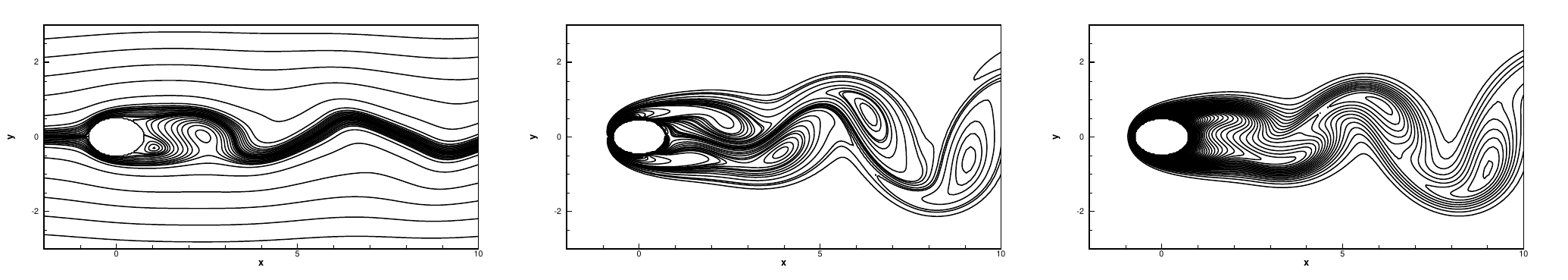} 
		\caption{$t+T$}
	\end{subfigure}\hfil 
	\caption{\small{Instantaneous streamlines (left), vorticity contours (middle) and isotherms (right) within a vortex shedding period for $Re = 100$ and $\theta=0^{\degree}$.}}
	\label{Fig:unsteady-psi-vort-temp-0}
\end{figure}
Figures \ref{Fig:unsteady-psi-vort-temp-45} and \ref{Fig:unsteady-psi-vort-temp-75} show the instantaneous streamlines, vorticity contours and isotherms for $Re = 100$, and $\theta = 45^{\degree}$, $75^{\degree}$ respectively. We see that as the angle of incidence is increased, the undulations in the streamlines become progressively complex. Also, vortex shedding occurs at a shorter distance from the trailing edge of the cylinder, becoming considerably wider as $\theta$ is increased.

\begin{figure}[H]
	\centering
	\begin{subfigure}{\textwidth}
		\includegraphics[width=\linewidth]{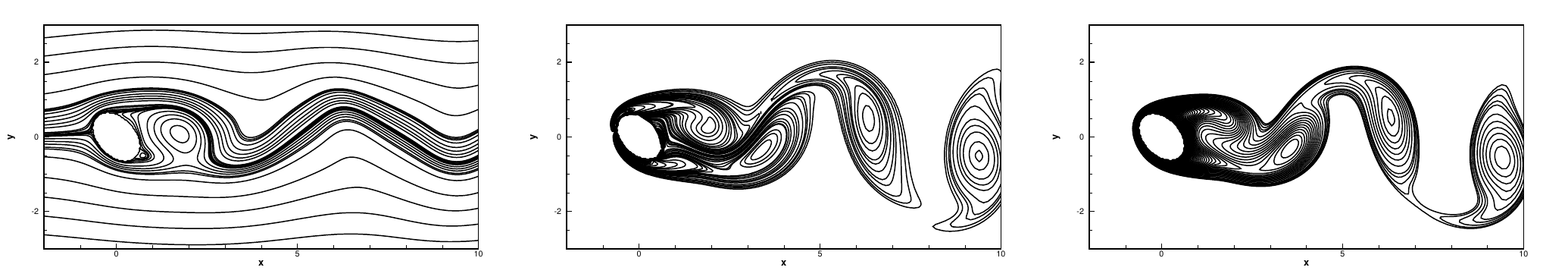} 
		\caption{$t+T/4$}
	\end{subfigure}\hfil 
	\begin{subfigure}{\textwidth}
		\includegraphics[width=\linewidth]{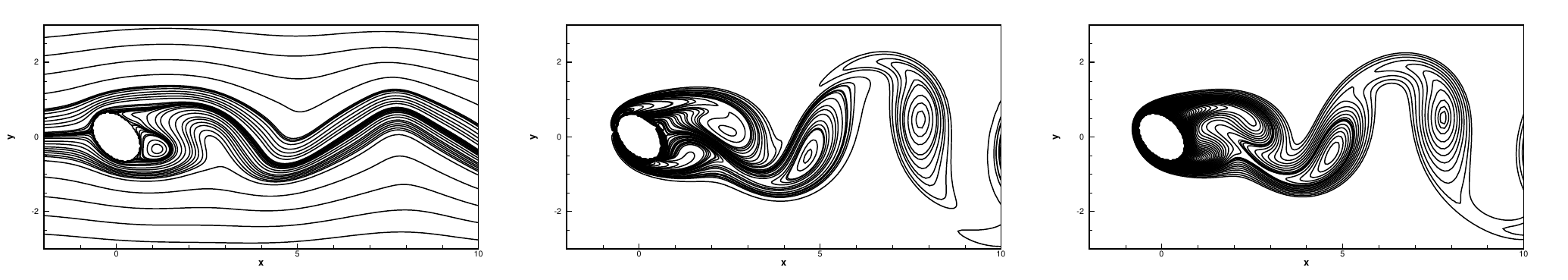} 
		\caption{$t+T/2$}
	\end{subfigure}\hfil 
	\begin{subfigure}{\textwidth}
		\includegraphics[width=\linewidth]{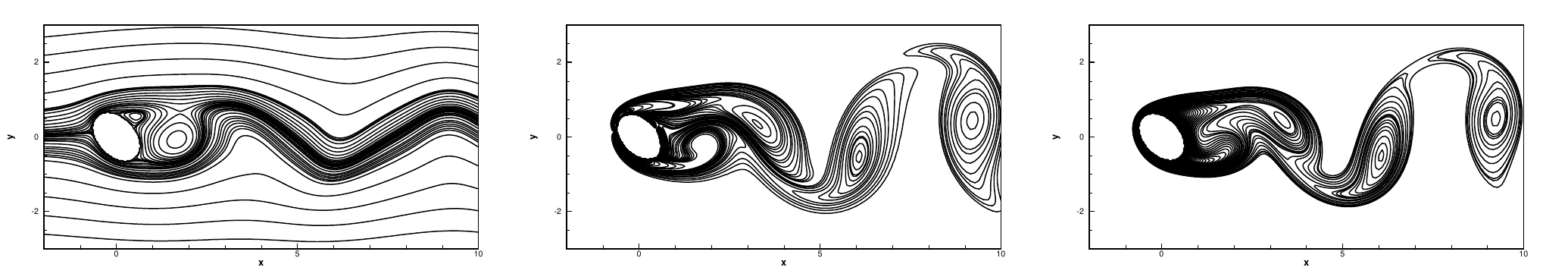} 
		\caption{$t+3T/4$}
	\end{subfigure}\hfil 
	\begin{subfigure}{\textwidth}
		\includegraphics[width=\linewidth]{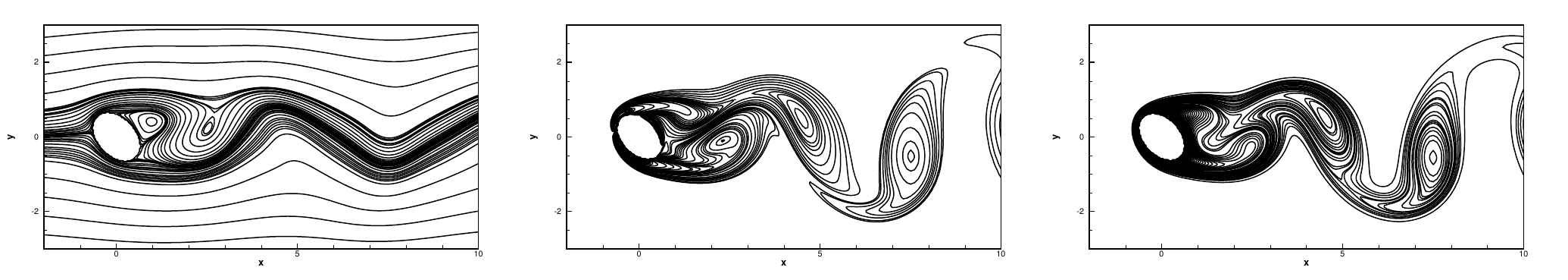} 
		\caption{$t+T$}
	\end{subfigure}\hfil 
	\caption{\small{Instantaneous streamlines (left), vorticity contours (middle) and isotherms (right) within a vortex shedding period for $Re = 100$ and $\theta=45^{\degree}$.}}
	\label{Fig:unsteady-psi-vort-temp-45}
\end{figure}

\begin{figure}[H]
	\centering
	\begin{subfigure}{\textwidth}
		\includegraphics[width=\linewidth]{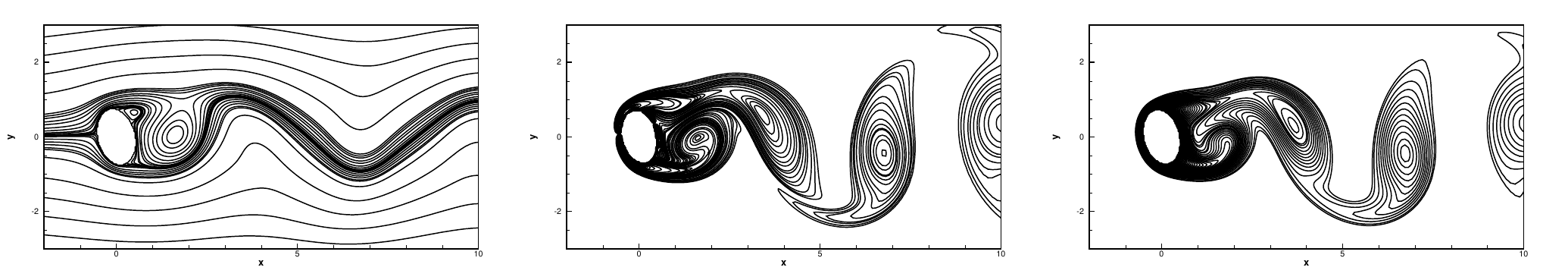} 
		\caption{$t+T/4$}
	\end{subfigure}\hfil 
	\begin{subfigure}{\textwidth}
		\includegraphics[width=\linewidth]{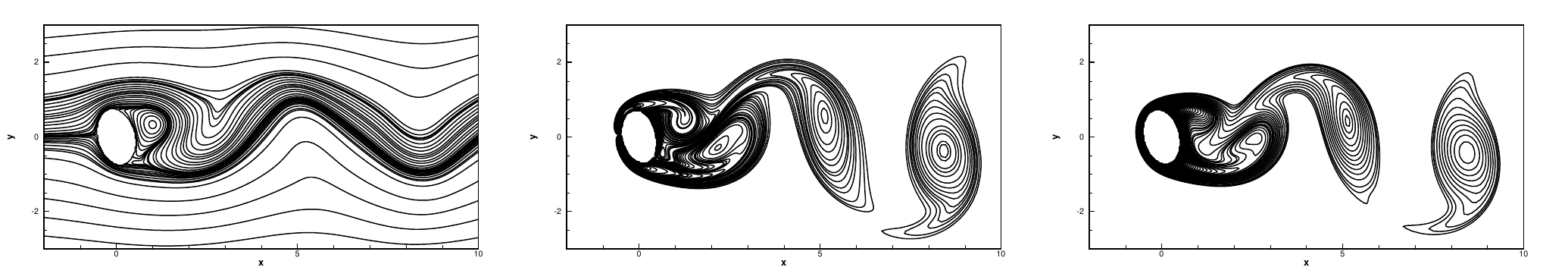} 
		\caption{$t+T/2$}
	\end{subfigure}\hfil 
	\begin{subfigure}{\textwidth}
		\includegraphics[width=\linewidth]{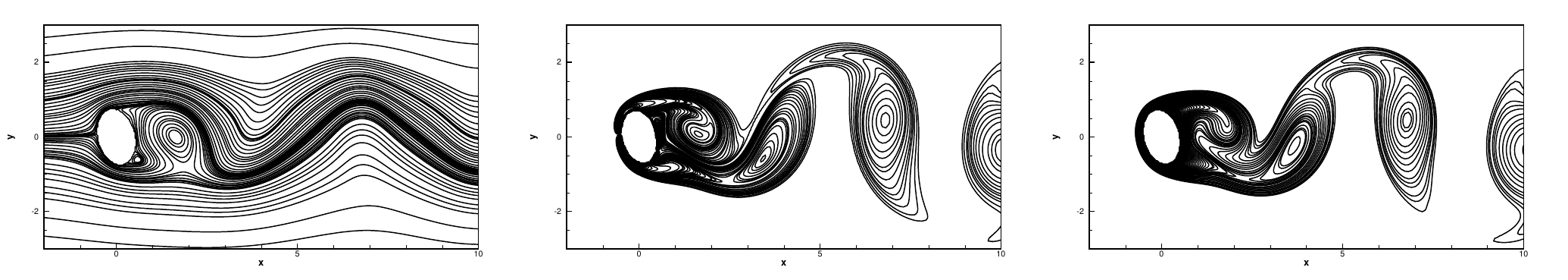} 
		\caption{$t+3T/4$}
	\end{subfigure}\hfil 
	\begin{subfigure}{\textwidth}
		\includegraphics[width=\linewidth]{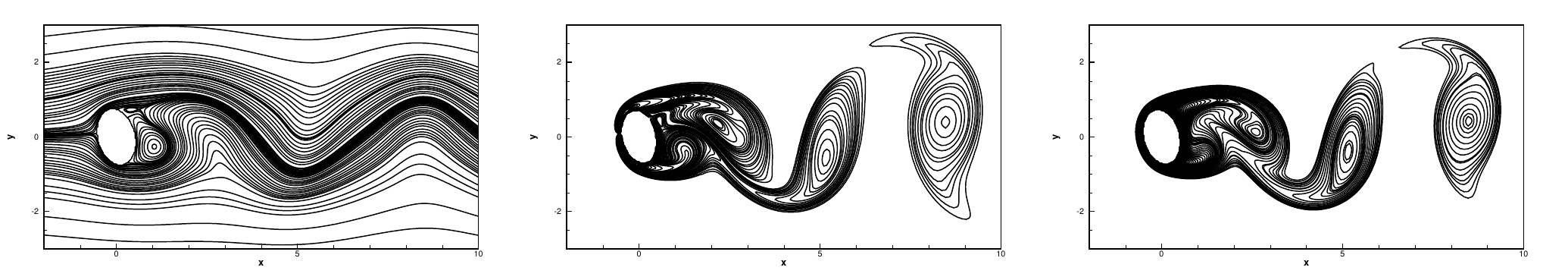} 
		\caption{$t+T$}
	\end{subfigure}\hfil 
	\caption{\small{Instantaneous streamlines (left), vorticity contours (middle) and isotherms (right) within a vortex shedding period for $Re = 100$ and $\theta=75^{\degree}$.}}
	\label{Fig:unsteady-psi-vort-temp-75}
\end{figure}

The instantaneous isotherms also depict vortex shedding (figures \ref{Fig:unsteady-psi-vort-temp-0} - \ref{Fig:unsteady-psi-vort-temp-75} ). Figure \ref{Fig:re100-vort-T} shows the instantaneous vorticity contours and isotherms for $Re=100$ and $0 \leq \theta \leq 90^{\degree}$. Note that the vorticity contours are structurally similar to the corresponding isotherms, which implies that the shedding vortices carry the heat away with them from the heated cylinder. The core of the vortex contains most of the heat, and the heat gets diffused into the free stream as the vortices are convected away from the cylinder. One can observe that the hot fluid is captured in the core of the shed vortices, as can be seen from the existence of local maxima of the contour values at the vortex centers. Also, one can see the heat being diffused into the free stream in the far wake. One of the other ways to demonstrate the diffusion of heat into the free stream is to carry out a Fast Fourier Transform (FFT) of the transverse component of velocity and temperature at different locations downstream of the cylinder. Figure \ref{Fig:fft-re100-pi12} shows the FFT of the $y$-velocity $v$ at six different locations viz. $x=10$, $x=20$, $x=30$, $x=40$, $x=50$, and $x=60$ for $Re=100$ and $\theta = 15^{\degree}$. The primary frequency $f_P$ is the vortex shedding frequency. One can observe that the value of $f_P$ remains same in all the locations. However the amplitude decreases as one moves from $x=10$ to $x=60$. This shows the diffusion of energy downstream of the cylinder.

\begin{figure}[H]
	\centering
	\begin{subfigure}{1.1\textwidth}
		\includegraphics[width=\linewidth]{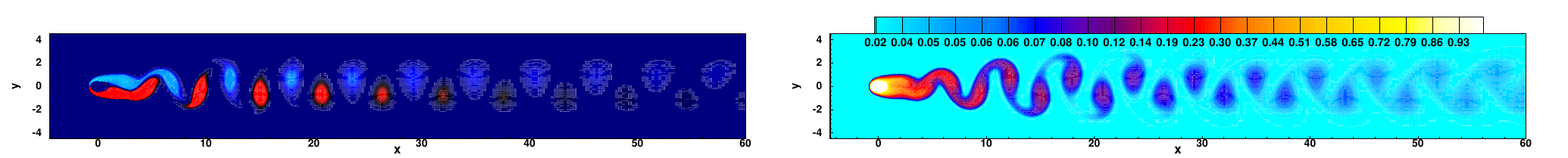} 
		\caption{$\theta = 0^{\degree}$}
	\end{subfigure}\hfil 
\begin{subfigure}{1.1\textwidth}
	\includegraphics[width=\linewidth]{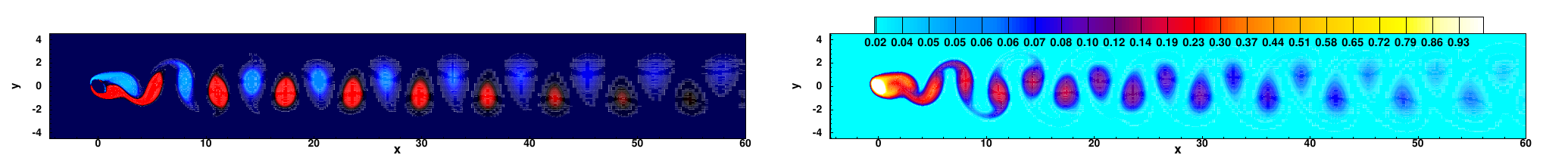} 
	\caption{$\theta = 45^{\degree}$}
\end{subfigure}\hfil
\begin{subfigure}{1.1\textwidth}
	\includegraphics[width=\linewidth]{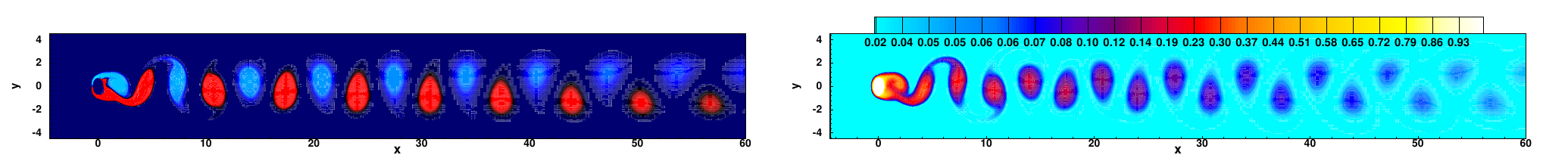} 
	\caption{$\theta = 90^{\degree}$}
\end{subfigure}\hfil
	\caption{\small{Instantaneous vorticity contours (left) and isotherms (right) for $Re=100$ and (a) $\theta = 0^{\degree}$, (b)$\theta = 15^{\degree}$, (c)$\theta = 30^{\degree}$, (d)$\theta = 45^{\degree}$, (e)$\theta = 60^{\degree}$, (f)$\theta =75^{\degree}$, and (g)$\theta = 90^{\degree}$}}
	\label{Fig:re100-vort-T}
\end{figure}
\begin{figure}[H]
	\centering
	\includegraphics[width = \textwidth]{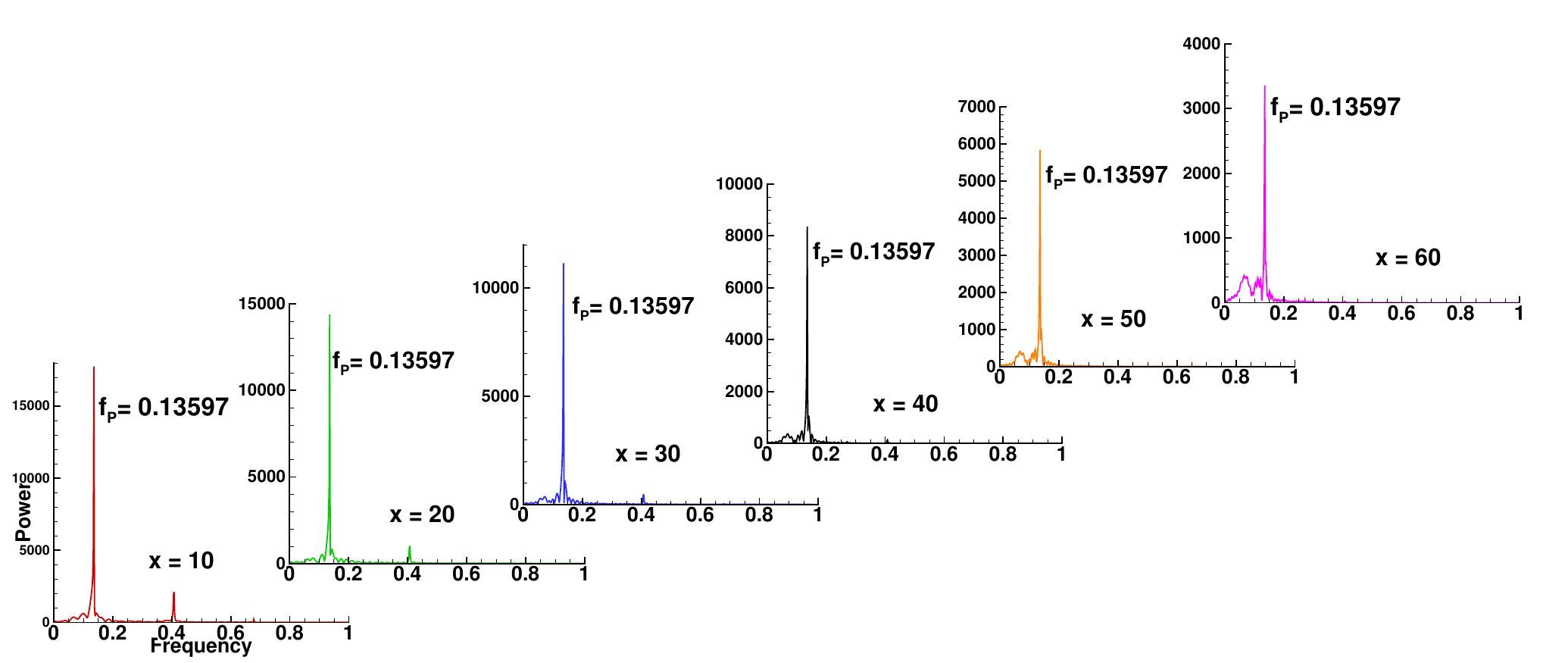}
	\caption{\small{Power spectra of the time history of $v$-velocity at six spatial locations.}}\label{Fig:fft-re100-pi12}
\end{figure}

Another interesting characteristic of the flow field is that the vortices shed from the cylinder are not of equal and opposite strength (and size) as the angle of incidence is increased. At $\theta = 0^{\degree}$ (figure \ref{Fig:re100-vort-T} (a)), counter-rotating vortices of equal and opposite strengths are shed from the cylinder. As $\theta$ increases, the upper vortex is stronger and more dominant than the lower vortex. For $0^{\degree} < \theta \leq 45^{\degree}$ (figures \ref{Fig:re100-vort-T} (a)-(d)) the lower vortex is not strong enough to overcome the suppression induced by the upper one. Thus, the growth of the lower vortex is suppressed by the upper vortex, which pushes the lower one to move slightly downstream of the flow. For $\theta \geq 60^{\degree}$ (figures \ref{Fig:re100-vort-T} (e)-(f)), the lower vortex gradually gains enough strength to balance the upper vortex, until at $\theta = 90^{\degree}$ (figure \ref{Fig:re100-vort-T} (g)) when the lower vortex balances the upper one completely, and vortices of equal sizes are shed from the cylinder.

\subsubsection{Drag and Lift coefficients}
The drag and lift coefficients, $C_D$ and $C_L$, are calculated from equations \eqref{eq:drag} and \eqref{eq:lift} respectively. Figure \ref{Fig:Cd-Cl-re100} shows the time history of $C_D$, $C_L$ for $Re=100$ and $0^{\degree} \leq \theta \leq 90^{\degree}$. Since the flow field is oscillatory in nature at this value of $Re$, the force coefficients also exhibit an oscillatory behaviour. $C_D$ and $C_L$ can written as $C_D = \overline{C_D} + C_D^'(t)$, $C_L = \overline{C_L} + C_L^'(t)$, where $\overline{C_D}$ and $\overline{C_L}$ are mean values that remain constant with time, and $C_D^'(t)$, $C_L^'(t)$ are the fluctuating components.  It can be observed from figures \ref{Fig:Cd-Cl-re100} (a), \ref{Fig:Cd-Cl-re100} (b) that the drag force first decreases as $\theta$ changes from $0^{\degree}$ to $15^{\degree}$. Note that a magnified view of $C_D$ is provided in the inset of figure \ref{Fig:Cd-Cl-re100} (a) for clarity. The value of $C_D$ then increases for $\theta = 15^{\degree} - 45^{\degree}$ (figures \ref{Fig:Cd-Cl-re100} (b) - \ref{Fig:Cd-Cl-re100} (d)). It again drops as $\theta$ is increased to $60^{\degree}$, after which it increases till $\theta = 90^{\degree}$. The value of $C_L$ is positive only for $\theta = 0^{\degree}$, $45^{\degree}$. For the rest of the values of $\theta$, we witness negative lift values. 

\begin{figure}[H]
	\centering
	\begin{subfigure}{0.3\textwidth}
		\includegraphics[width=\linewidth]{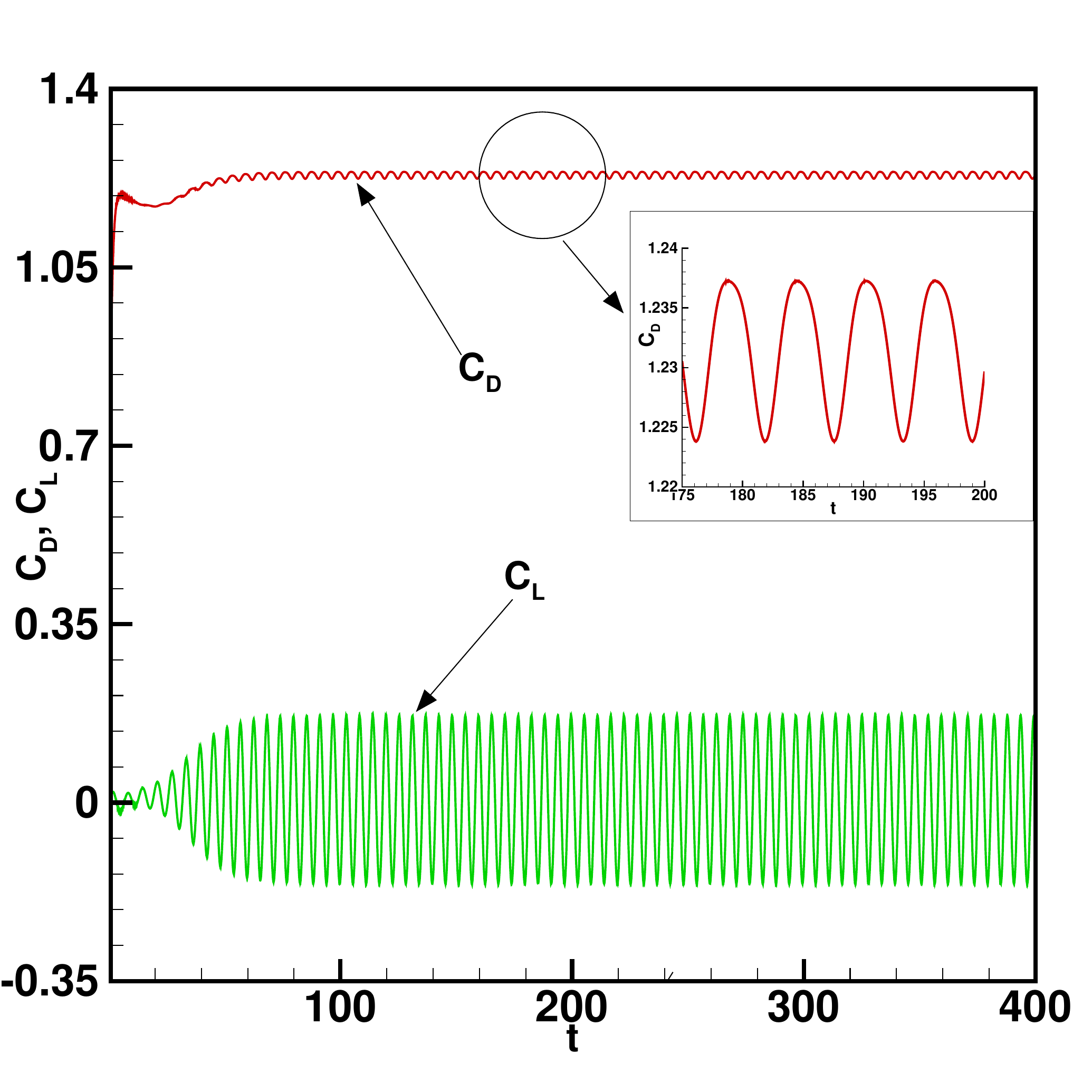} 
		\caption{$\theta=0^{\degree}$}
	\end{subfigure}\hfil 
	\begin{subfigure}{0.3\textwidth}
		\includegraphics[width=\linewidth]{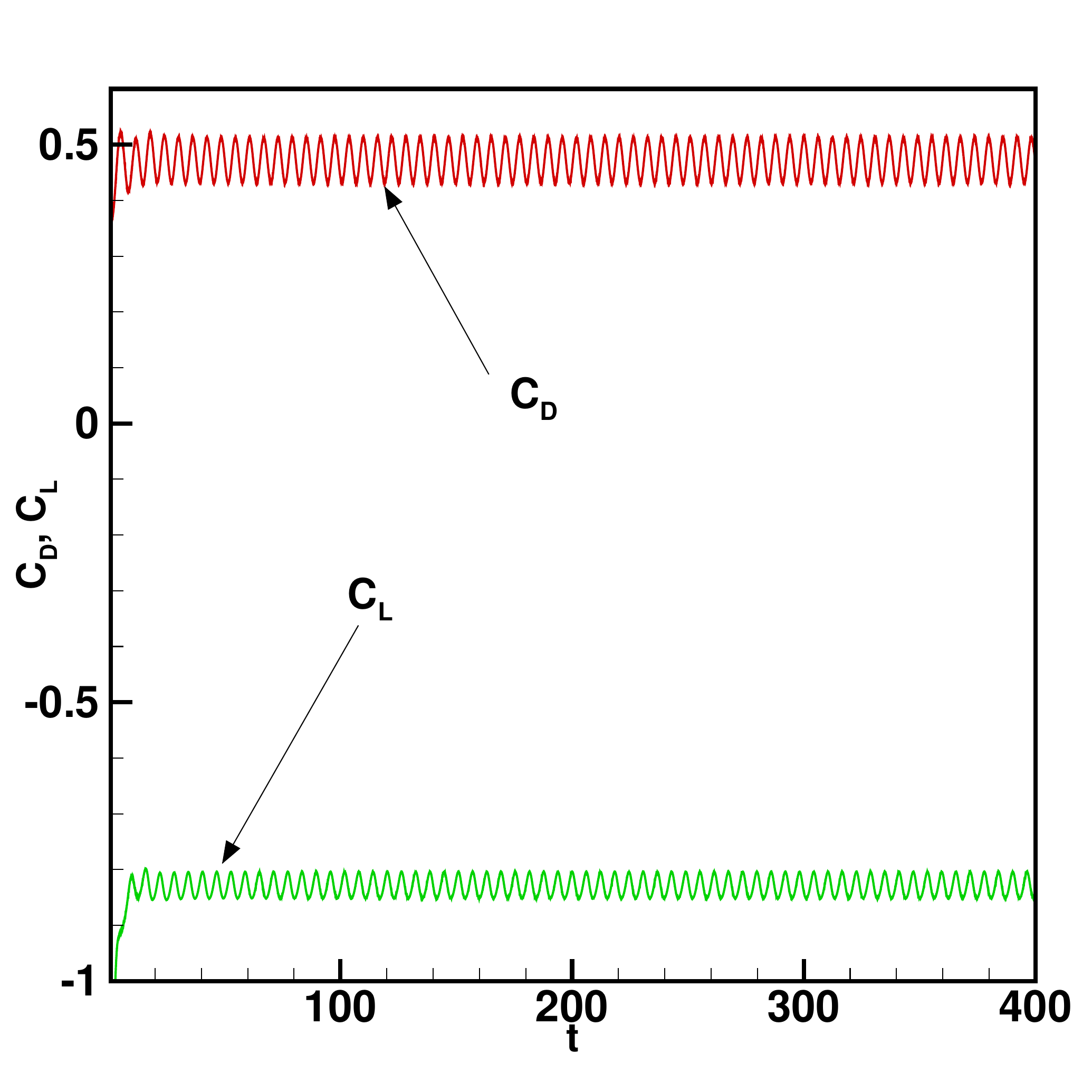} 
		\caption{$\theta=15^{\degree}$}
	\end{subfigure}\hfil 
	\begin{subfigure}{0.3\textwidth}
		\includegraphics[width=\linewidth]{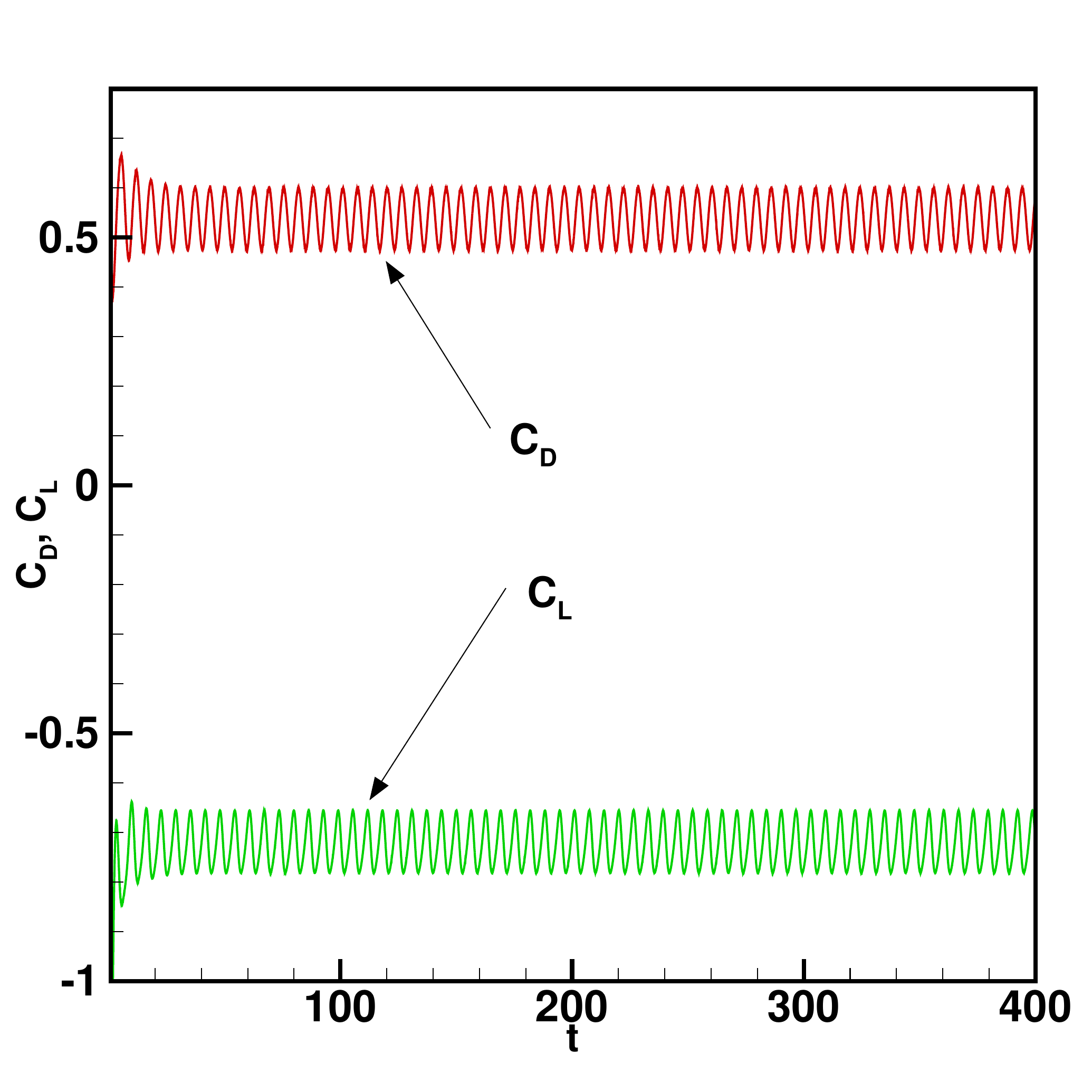} 
		\caption{$\theta=30^{\degree}$}
	\end{subfigure}\hfil 
	\begin{subfigure}{0.3\textwidth}
		\includegraphics[width=\linewidth]{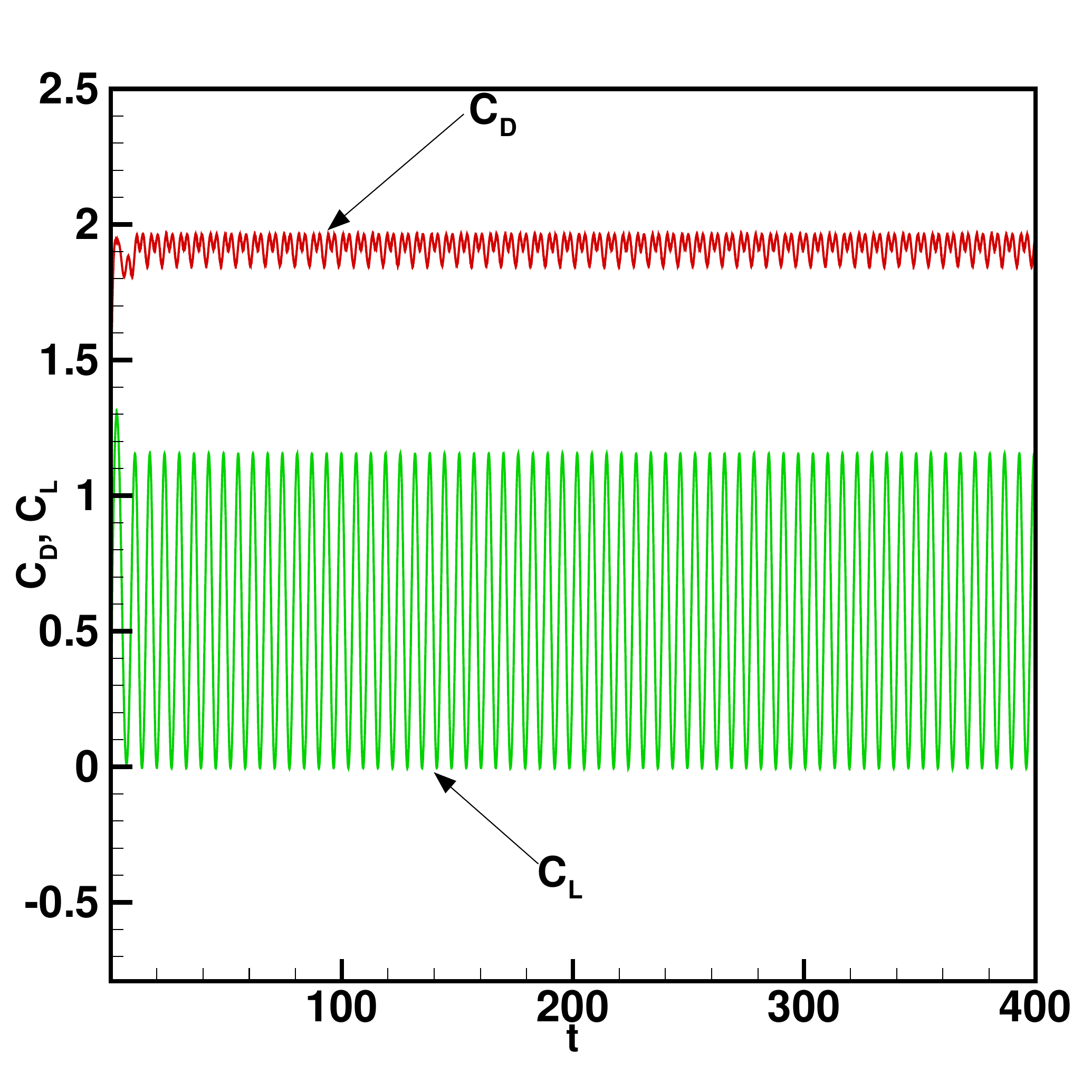} 
		\caption{$\theta=45^{\degree}$}
	\end{subfigure}\hfil 
	\begin{subfigure}{0.3\textwidth}
		\includegraphics[width=\linewidth]{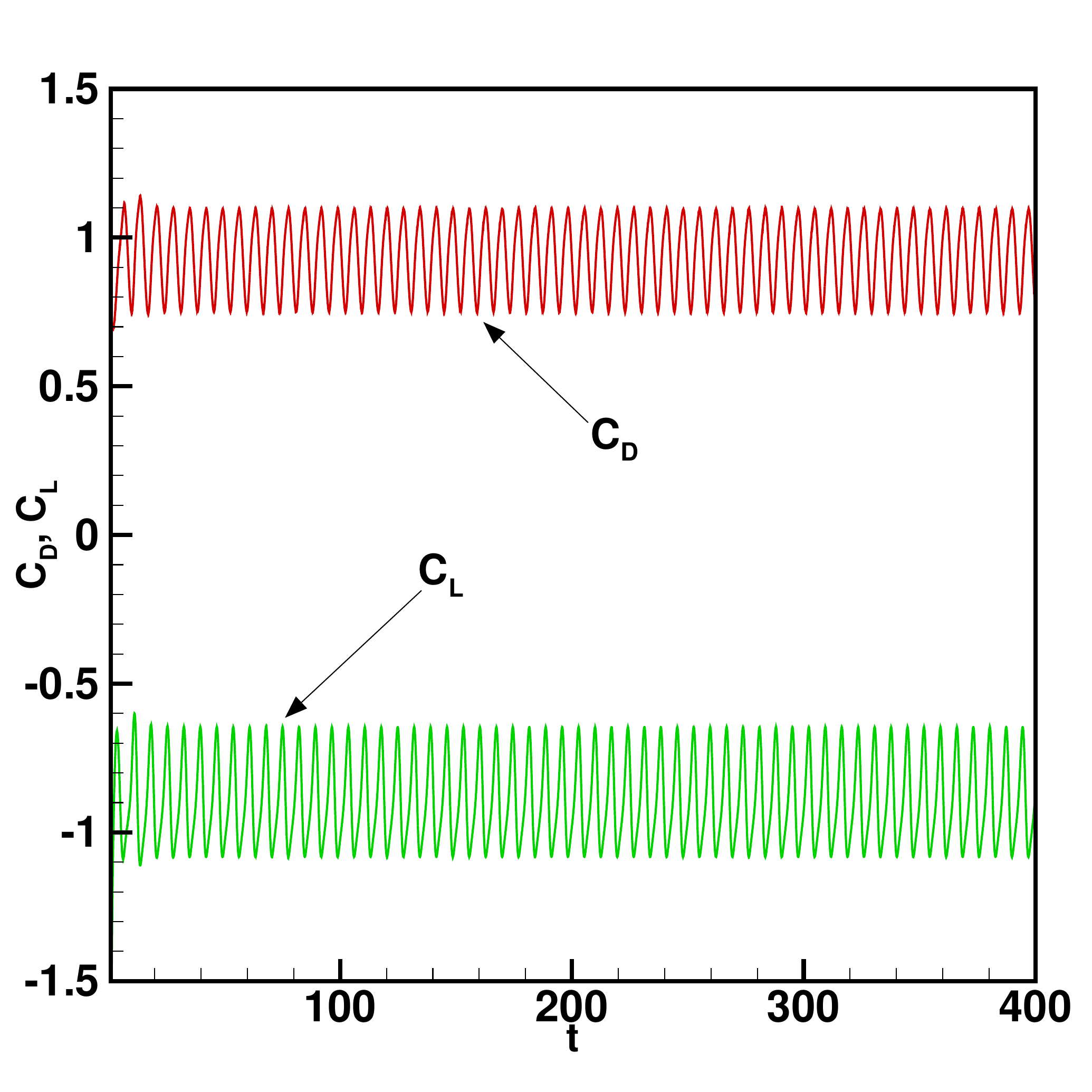} 
		\caption{$\theta=60^{\degree}$}
	\end{subfigure}\hfil 
	\begin{subfigure}{0.3\textwidth}
		\includegraphics[width=\linewidth]{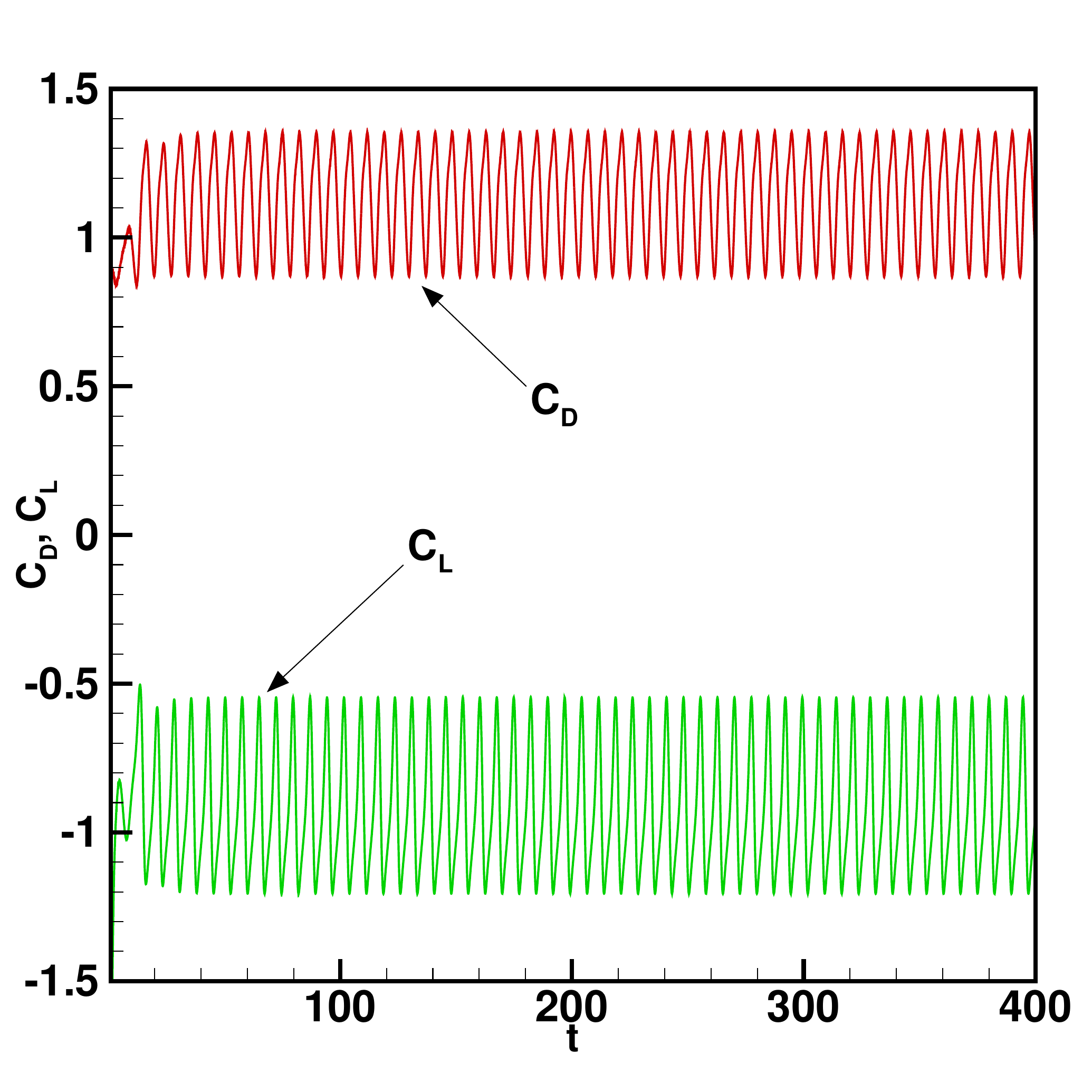} 
		\caption{$\theta=75^{\degree}$}
	\end{subfigure}\hfil 
	\begin{subfigure}{0.3\textwidth}
		\includegraphics[width=\linewidth]{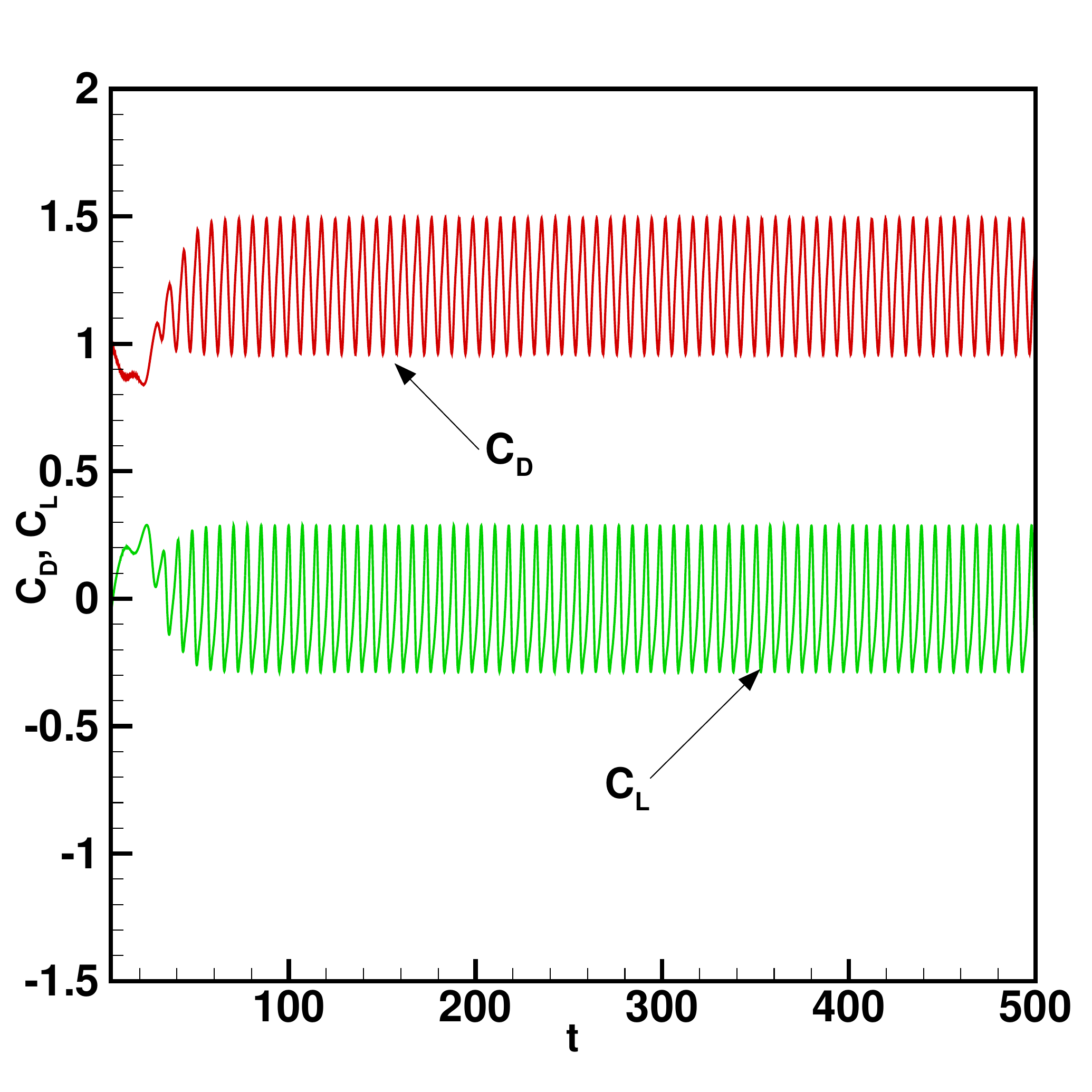} 
		\caption{$\theta=90^{\degree}$}
	\end{subfigure}\hfil 
	\caption{\small{Time variation of $C_D$ and $C_L$ for $Re=100$ and (a) $\theta=0^{\degree}$, (b) $\theta=15^{\degree}$, (c) $\theta=30^{\degree}$, (d) $\theta=45^{\degree}$, (e) $\theta=60^{\degree}$, (f) $\theta=75^{\degree}$, (g) $\theta=90^{\degree}$.}}
	\label{Fig:Cd-Cl-re100}
\end{figure}

%
%
%
%
%
%
%
%
%
%
%
%
%
%

\subsubsection{Strouhal Number}
The Strouhal number ($St$) is a measure of the vortex shedding phenomenon, which is defined as 
\begin{equation}
	St = \dfrac{fa}{U_0}
\end{equation} 

where $f$ is the vortex shedding frequency which is determined as the peak frequency
derived from the FFT of the time history of $C_L$. Note that the FFT is taken after discarding an initial period of at least 300 non-dimensional time units. $U_0$ is the free stream velocity, and $a$ is the semi-major axis of the ellipse. Figure \ref{Fig:St-theta} shows the variation of $St$ with $\theta$ for two values of $Re$. We can see that the frequency of vortex shedding decreases as $\theta$ is increased. Also, for a particular value of $\theta$, the vortex shedding frequency increases with $Re$.

\begin{figure}[H]
	\centering
	\includegraphics[width = 0.75\textwidth]{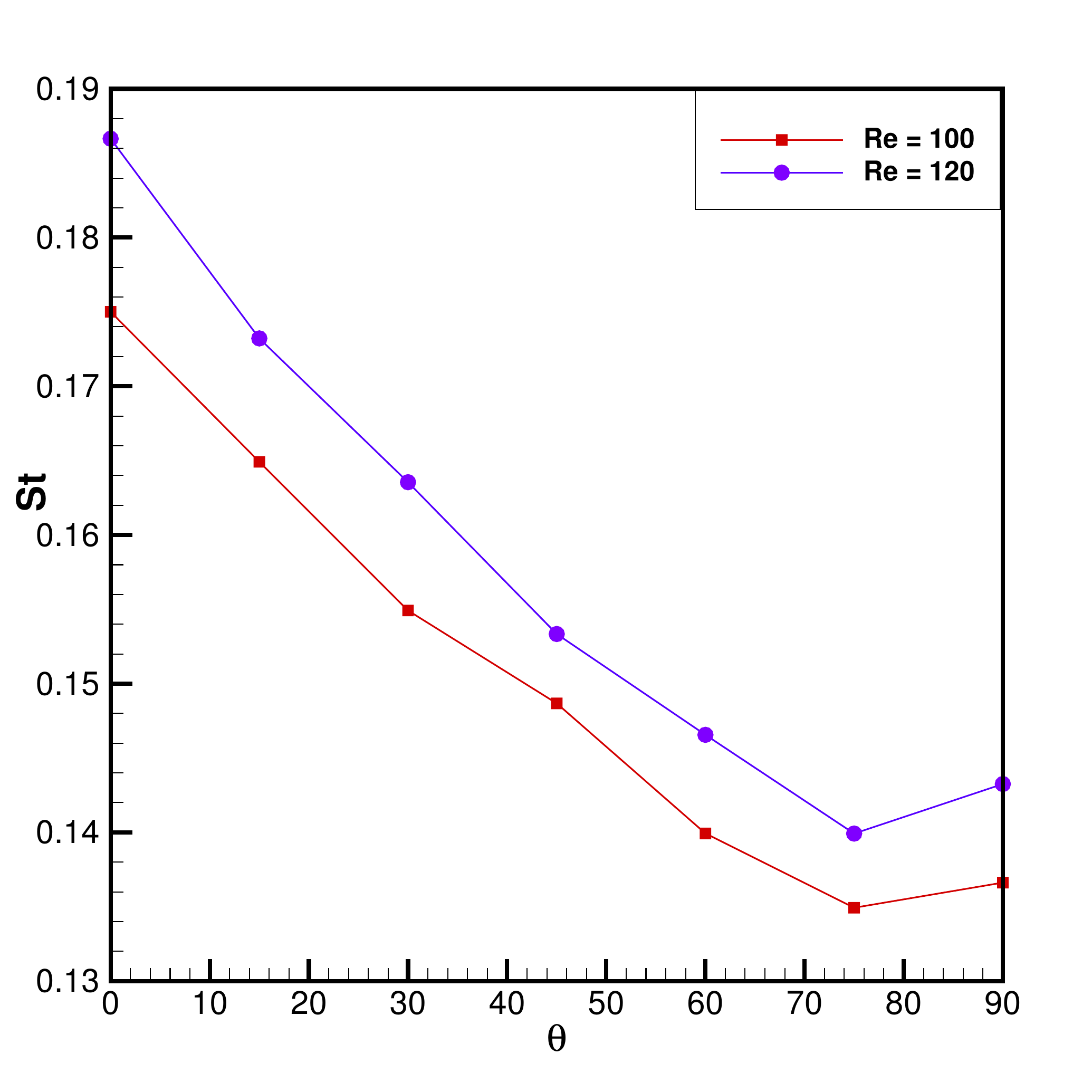}
	\caption{\small{variation of the Stouhal number $St$  against the angle of inclination $\theta$ for $Re = 100$ and $120$.}}\label{Fig:St-theta}
\end{figure}

\subsubsection{Nusselt number}
The surface averaged Nusselt number $Nu_{\text{av}}$ is calculated from the equation given by \eqref{eq:Nusselt}. Figure \ref{Fig:av-Nu-re100} shows the time history of $Nu_{\text{av}}$ for $Re=100$ and $0^{\degree} \leq \theta \leq 90^{\degree}$. For clarity, the time history is shown only for $t=350$ to $t=400$. Similar to $C_D$ and $C_L$, the surface averaged Nusselt number also exhibits a periodic behaviour w.r.t. time. In figure \ref{Fig:av-Nu-re100} (a) - (g), we have shown the time period $T_{\text{Nu}}$ for each of the angles of incidence considered. It is clear that as $\theta$ increases, $T_{\text{Nu}}$ also increases. Note that this periodicity in the variation of $Nu_{\text{av}}$ commences concurrent to vortex shedding, since the vortex shedding phenomena is invariably linked to the heat being convected away from the cylinder. One can also observe a curious co-relation between the vortex shedding phenomena and variation of $Nu_{\text{av}}$. Consider the two angles of incidence viz. $\theta = 0^{\degree}$ and $\theta = 90^{\degree}$. The Strouhal number for these two configurations are $0.175008$ and $0.136621$ respectively. Now, from figure \ref{Fig:av-Nu-re100} (a) and \ref{Fig:av-Nu-re100} (g), we see that $T_{\text{Nu}}$ for $\theta = 0^{\degree}$ and $\theta = 90^{\degree}$ are $2.857$ and $3.7049$ respectively. Thus, the frequency of oscillation of $Nu_{\text{av}}$, ($f_{\text{Nu}} = 1/T_{\text{Nu}}$) are $0.35001$ and $0.269912$ respectively. Thus we see that $f_{\text{Nu}} \approx 2\, St$. This relationship is true for all values of $\theta$ and all values of $Re$. As mentioned previously, the isotherms and vorticity contours are struturally similar owing to the fact that the shed vortices convect the heat from the cylinder downstream. Vorticity values alternate between positive and negative, whereas the temperature always remains positive. Thus, it can be expected that the frequency of isotherms being shed would be twice the vortex shedding frequency. The above exercise simply demonstrates this.
\begin{figure}[H]
	\centering
	\begin{subfigure}{0.3\textwidth}
		\includegraphics[width=\linewidth]{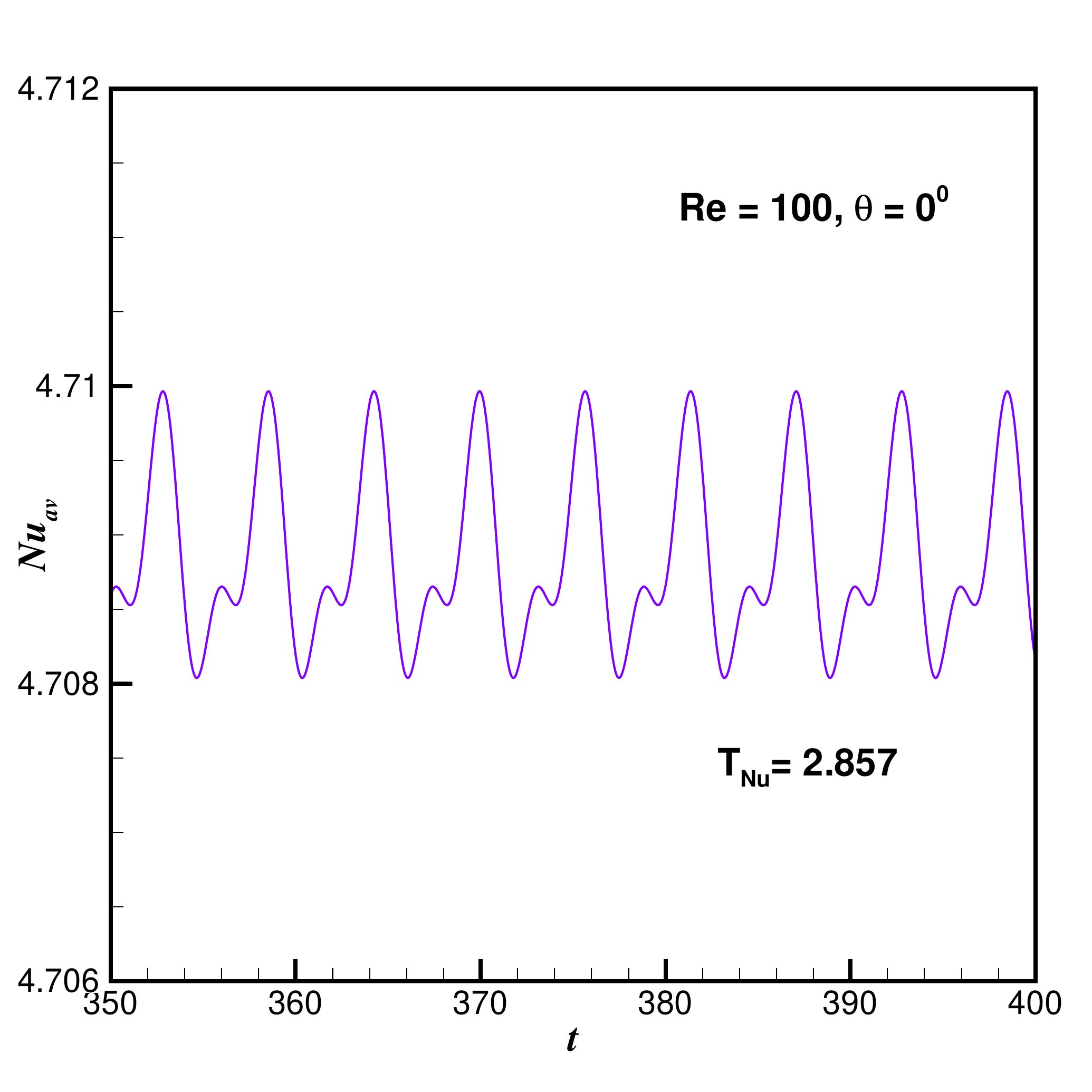} 
		\caption{$\theta=0^{\degree}$}
	\end{subfigure}\hfil 
	\begin{subfigure}{0.3\textwidth}
		\includegraphics[width=\linewidth]{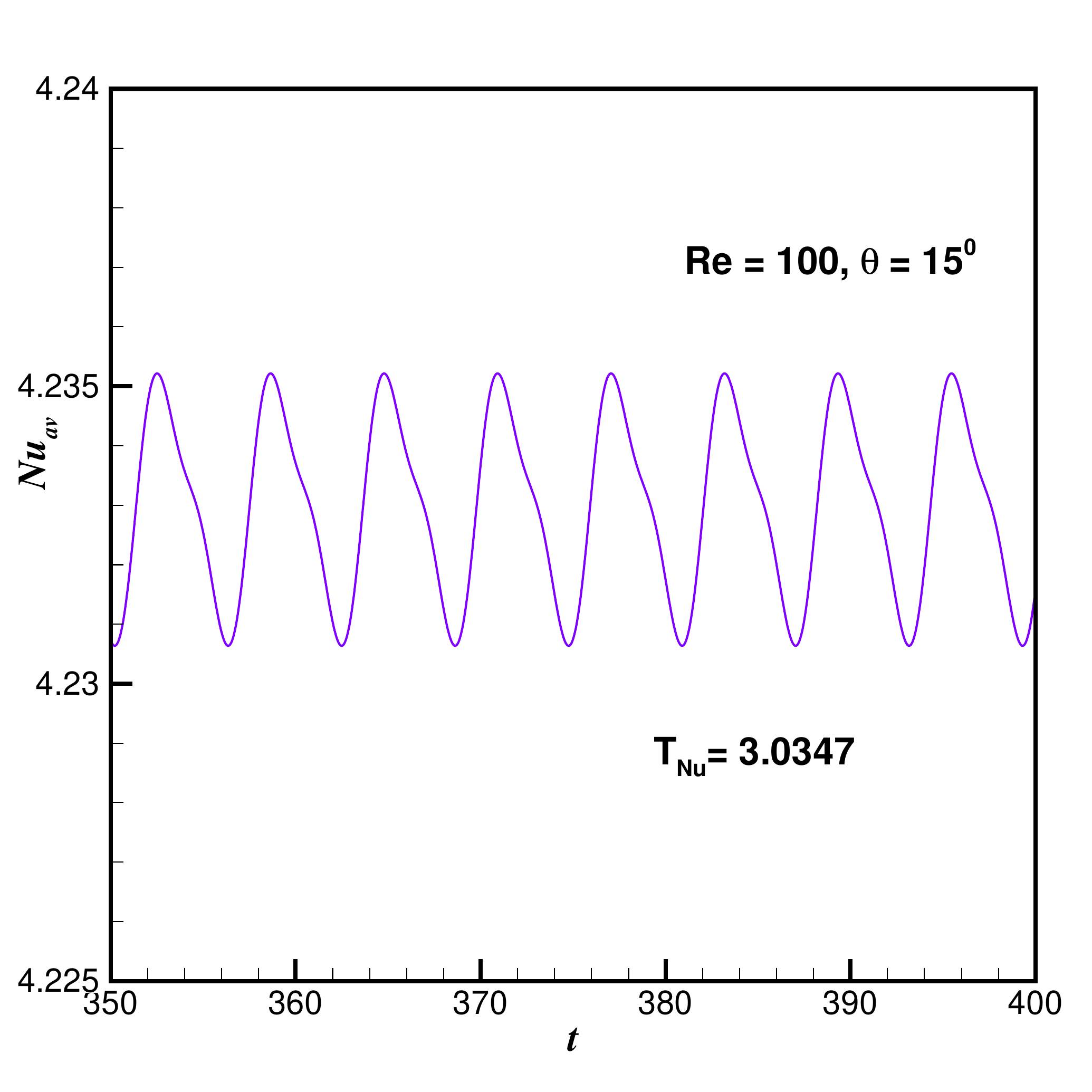} 
		\caption{$\theta=15^{\degree}$}
	\end{subfigure}\hfil 
	\begin{subfigure}{0.3\textwidth}
		\includegraphics[width=\linewidth]{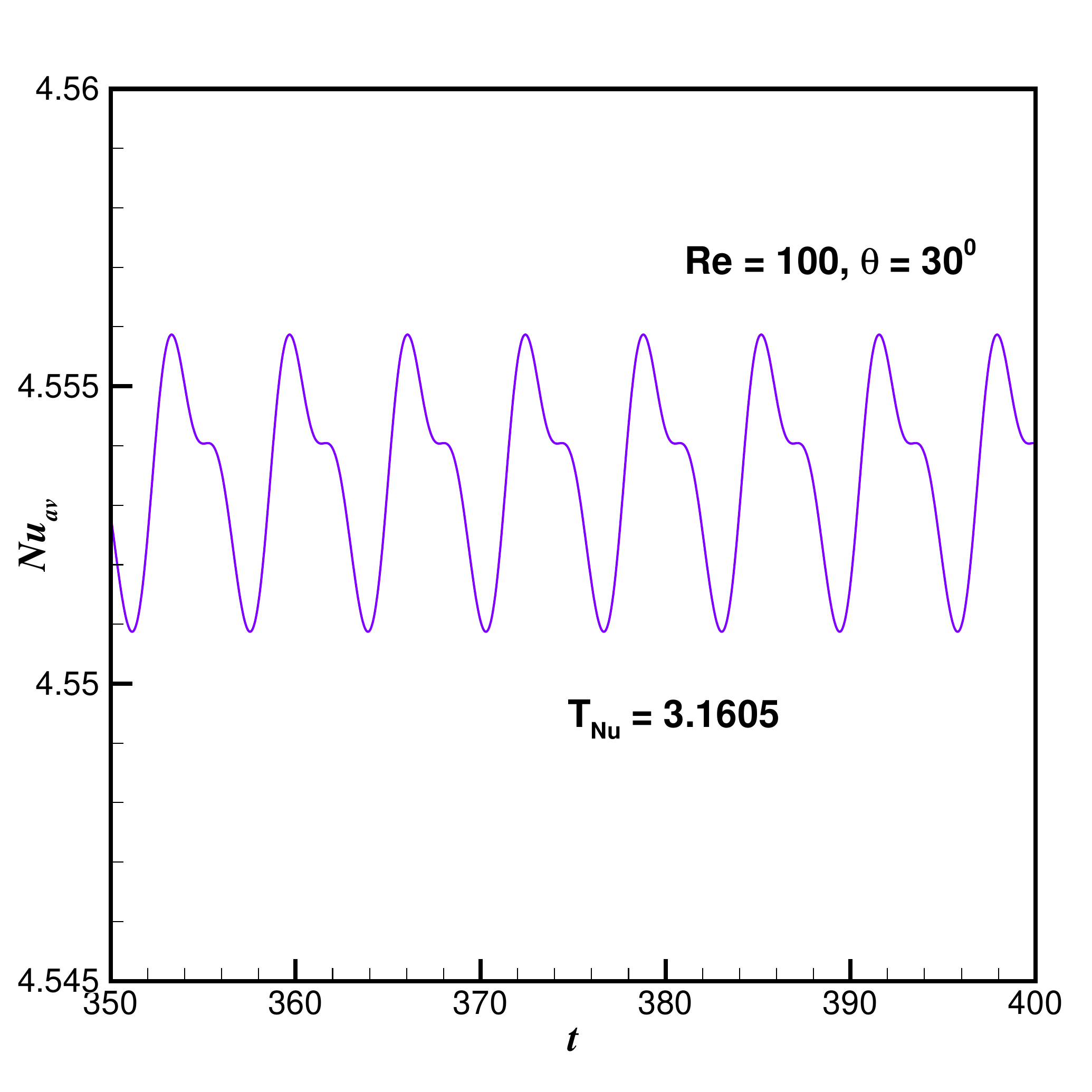} 
		\caption{$\theta=30^{\degree}$}
	\end{subfigure}\hfil 
	\begin{subfigure}{0.3\textwidth}
		\includegraphics[width=\linewidth]{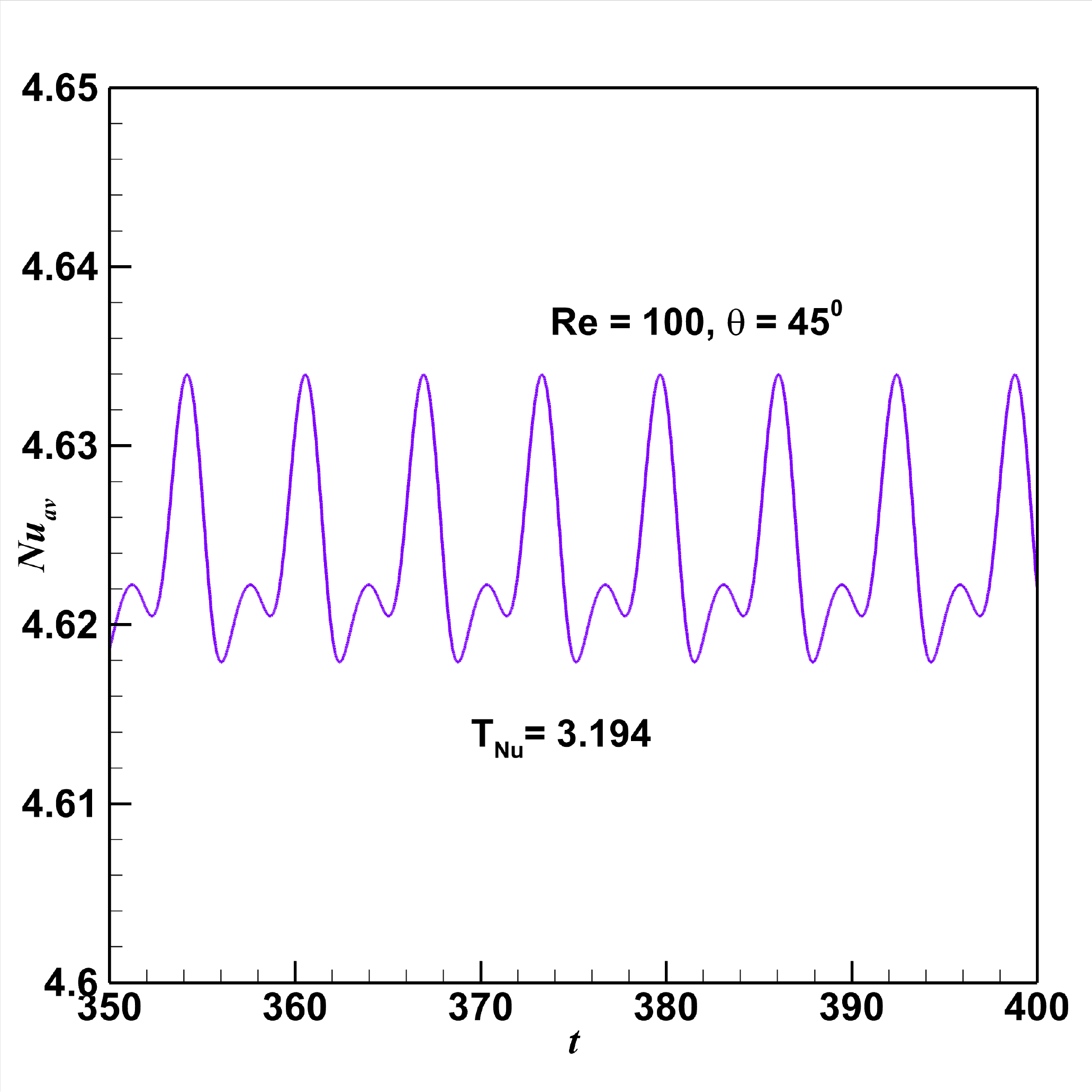} 
		\caption{$\theta=45^{\degree}$}
	\end{subfigure}\hfil 
	\begin{subfigure}{0.3\textwidth}
		\includegraphics[width=\linewidth]{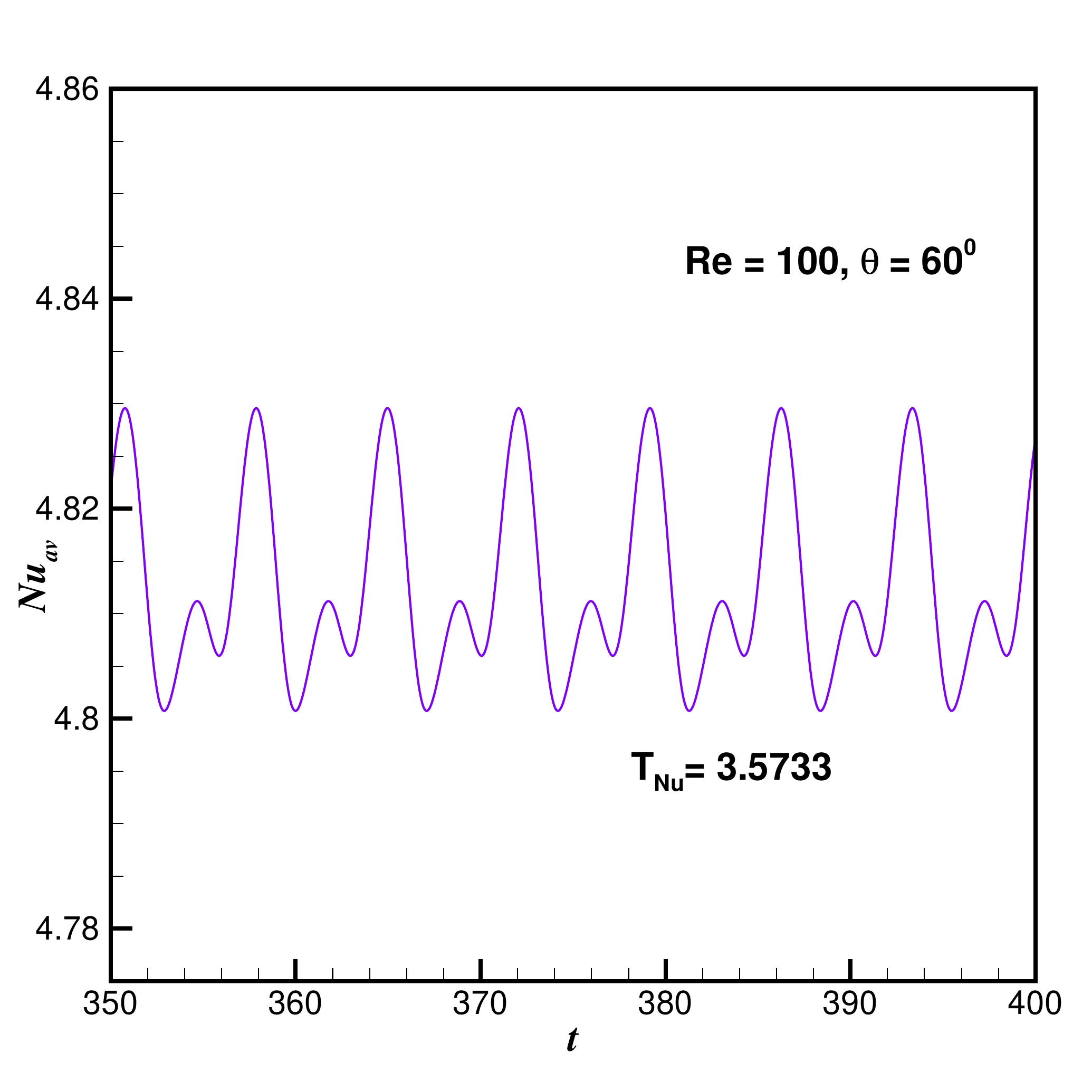} 
		\caption{$\theta=60^{\degree}$}
	\end{subfigure}\hfil 
	\begin{subfigure}{0.3\textwidth}
		\includegraphics[width=\linewidth]{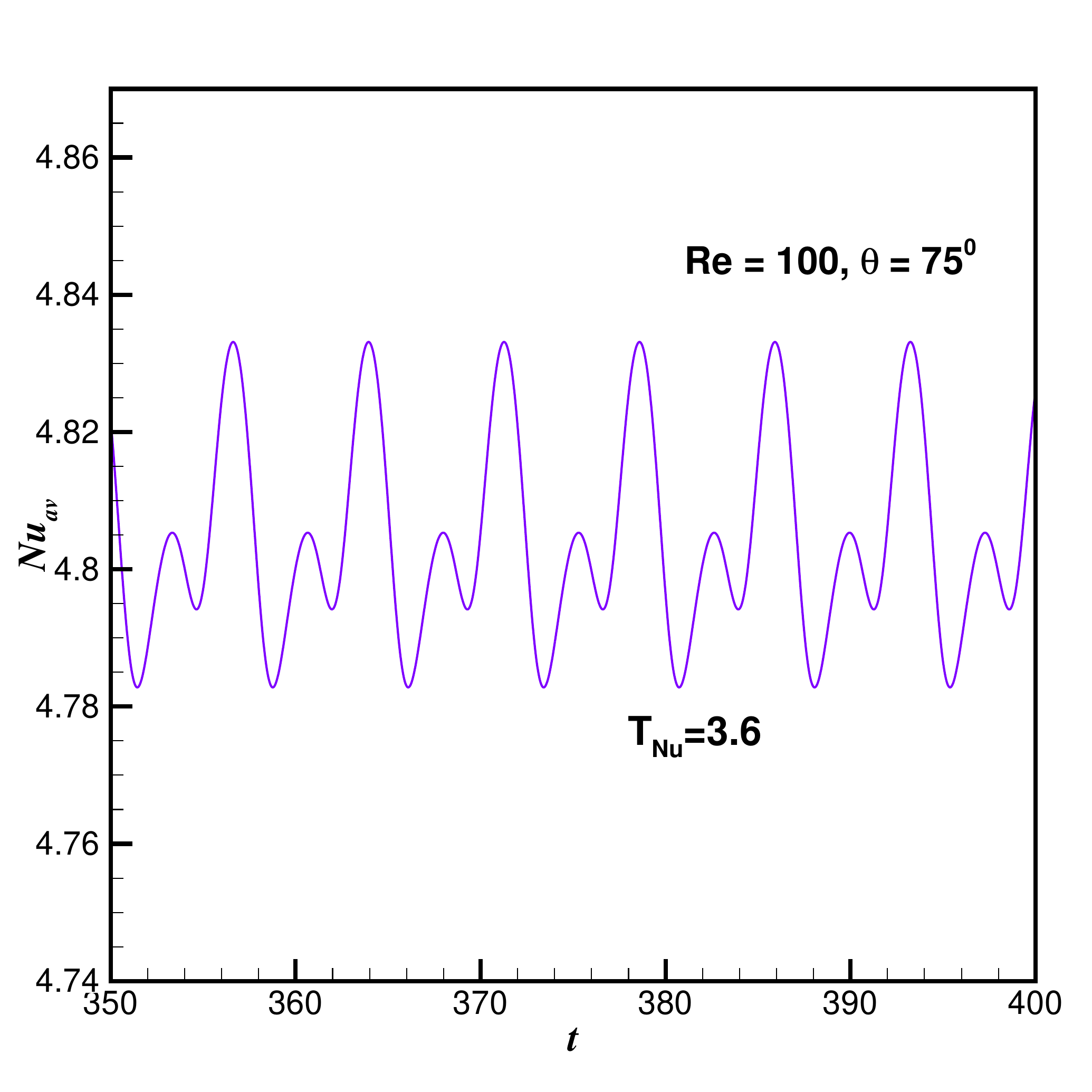} 
		\caption{$\theta=75^{\degree}$}
	\end{subfigure}\hfil 
	\begin{subfigure}{0.3\textwidth}
		\includegraphics[width=\linewidth]{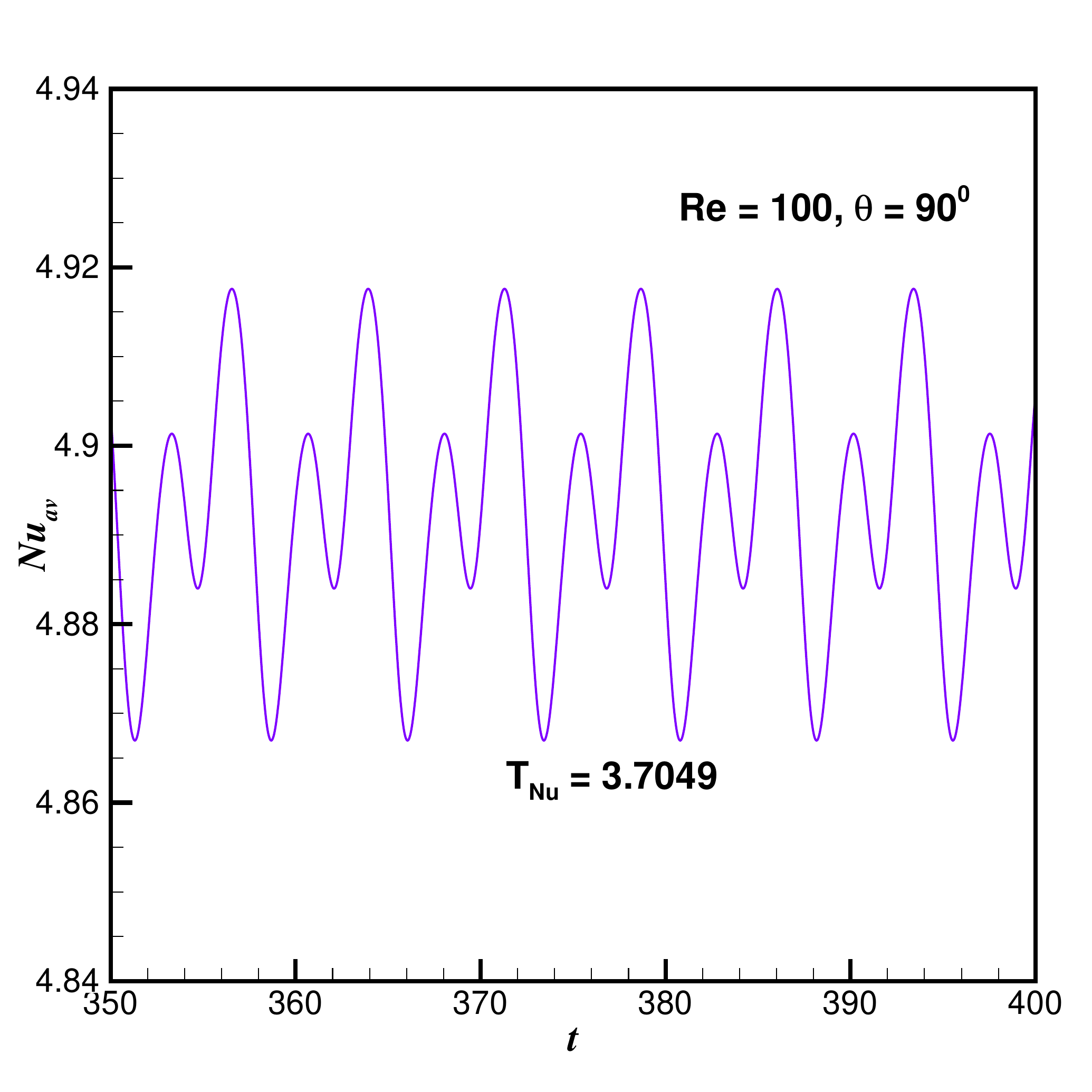} 
		\caption{$\theta=90^{\degree}$}
	\end{subfigure}\hfil 
	\caption{\small{Time variation of surface averaged Nusselt number ($Nu_{\text{av}}$) for $Re=100$ and (a) $\theta=0^{\degree}$, (b) $\theta=15^{\degree}$, (c) $\theta=30^{\degree}$, (d) $\theta=45^{\degree}$, (e) $\theta=60^{\degree}$, (f) $\theta=75^{\degree}$, (g) $\theta=90^{\degree}$.}}
	\label{Fig:av-Nu-re100}
\end{figure}

The time-varying average Nusselt number can be written as the sum of a mean value $\overline{Nu_{\text{av}}}$ and a fluctuating component $Nu_{\text{av}}^'$, i.e., $Nu_{\text{av}} = \overline{Nu_{\text{av}}} + Nu_{\text{av}}^'$. The value of $Nu_{\text{av}}^'$ is nothing but the amplitude of $Nu_{\text{av}}$ w.r.t the $\overline{Nu_{\text{av}}}$ value. Table \ref{tab:average-Nu-re100} shows the breakup of $Nu_{\text{av}}$ for $Re = 100$ at different values of $\theta$. We see that for $\theta > 0^{\degree}$ there is an increase in the value of $\overline{Nu_{\text{av}}}$ with $\theta$. At $\theta = 0^{\degree}$, value of the fluctuating component $Nu_{\text{av}}^'$ is the lowest. It increases gradually with an increase in the angle of incidence. Thus, in general the value of $Nu_{\text{av}}$ increases with $\theta$. Also, the amplitude of oscillation increases as $\theta$ increases.
\begin{table}[H]
\caption{ Surface averaged Nusselt number for different values of $\theta$ at $Re=100$}
\begin{center}
\begin{tabular}{|c|c|}  \hline
$\theta$    & $Nu_{\text{av}} = \overline{Nu_{\text{av}}} + Nu_{\text{av}}^'$  \\ \hline
 $0^{\degree}$    & $4.709005 \pm 0.000965$ \\
 $15^{\degree}$   & $4.232925 \pm 0.002285$ \\
 $30^{\degree}$   & $4.553334 \pm 0.002459$ \\
 $45^{\degree}$   & $4.625938 \pm 0.008026$ \\ 
 $60^{\degree}$   & $4.813349 \pm 0.012608$ \\
 $75^{\degree}$   & $4.807956 \pm 0.025163$\\ 
 $90^{\degree}$   & $4.892274 \pm 0.025311$\\ \hline
 \end{tabular}\label{tab:average-Nu-re100}
\end{center}
\end{table}

\section{Conclusions} \label{sec:conclusions}
In this paper, we have made a comprehensive investigation of the phenomena of forced convection heat transfer over a heated elliptical cylinder inclined to a uniform free stream of incompressible viscous flows . A recently developed HOC finite difference Immersed Interface Method for 2D transient problems involving bluff bodies immersed in fluid flows on Cartesian mesh has been employed to simulate the flow. Numerical simulations were carried out for the range of Reynolds number  $10 \leq Re \leq 120$,  inclination angle $0^{\degree} \leq \theta \leq 180^{\degree}$, with air as the working fluid ($Pr = 0.71$) and the aspect ratio is taken $2/3$. In the process we also proposed a novel way to calculate the Nusselt number. To the best of our knowledge, no other comprehensive study exists for forced convection heat transfer over an elliptical cylinder where such wide variation of angles of inclination and Reynolds numbers are considered. Hence, code validation is carried out by simulating forced convection over a horizontal circular cylinder at low Reynolds numbers, and excellent match is obtained with well established results in the literature.

Results  for both steady and unsteady regimes have been reported in terms of streamlines, vorticity contours, isotherms, drag and lift coefficients, Strouhal number, and Nusselt number. In the process, we have also proposed a novel method of estimating the Nusselt number by showing how the flow variables could be computed along the normal at a point to the ellipse boundary.  The flow field for $180^{\degree}-\theta$ was found out to be a mirror image of flow for  $\theta$ ($0^{\degree} \leq \theta \leq 90^{\degree}$). 

For the steady regime, flow in the wake of the cylinder exhibited a symmetry about the $x$-axis for $\theta = 0^{\degree}$, $90^{\degree}$. Thus the streamlines as well as isotherms are symmetric for these two angles of incidence. As the angle of incidence increases, flow separation and formation of recirculation bubble were found to occur at a lower value of $Re$. Also, the value of the critical Reynolds number $Re_c$ decreases with $\theta$. For $0^{\degree} < \theta < 90^{\degree}$, it was observed that the size and strength of the upper vortex was greater than the lower one. This difference in size and strength was pronounced for lower values of $\theta$, and it was seen to decrease gradually as $\theta \rightarrow 90^{\degree}$. As $\theta$ increased further, this trend was reversed. Heat transfer phenomena was demonstrated via the local and surface averaged Nusselt number. The variation in the local Nusselt number was plotted along the surface of the cylinder, and the trends observed could be satisfactorily correlated to the flow field. The surface averaged Nusselt number was observed to increase with $Re$ for a given $\theta$. Further, for a particular $Re$, thee maximum value was seen to attain at $\theta = 0^{\degree}$. On the other hand, the drag force acting on the cylinder decreased with the increase in $Re$ , which however, was seen to increase with $\theta$ for a given $Re$.

Since the unsteady laminar regime is characterized by periodic vortex shedding, results for only a single $Re$ was demonstrated as a representative case. Streamlines, vorticity contours, and isotherms were shown for a vortex shedding cycle at different values of $\theta$. In a shedding cycle, it was seen that the growth of the upper vortex is accompanied by the formation of a lower vortex in the flow field. While the upper vortex begins to decay, the lower vortex grows and attaches itself to the trailing edge. Subsequently, the upper vortex reappears around the leading edge and grows in such a way that it suppresses the lower vortex, which starts to get smaller. As $\theta$ increases, the undulations in the streamlines were seen to grow more complicated and vortex shedding occurring at a shorter distance from the trailing edge of the cylinder, becoming much wider as $\theta$ is increased. On account of the shed vortices carrying away the heat from the cylinder, the isotherms were also seen to depict vortex shedding as they are structurally similar. The core of the vortex contained most of the heat and it got diffused into the free stream. This diffusion process is demonstrated by the contour plots of temperature and vorticity, as well as a FFT of the $y$- component of the velocity at different locations in the domain. A plot of the Strouhal number showed that vortex shedding frequency increases with $Re$, and decreases with $\theta$ for a given $Re$. The surface averaged Nusselt number showed a periodic variation with time, its time period being half the time period of vortex shedding. The mean value of $Nu_{\text{av}}$ as well as the amplitude of oscillations were also observed to increase with $\theta$.

\bibliographystyle{apalike}
\bibliography{ellipse}
\end{document}